\newif\ifmaybe
\newcounter{EditBlock}
\newcounter{CondEditBlock}
\newif\ifeditmark\editmarktrue
\newcommand{\editstart}{\renewcommand{\linenumberfont}{\normalfont\sffamily\footnotesize\bf\color{blue}}\stepcounter{EditBlock}\linelabel{L:editstart_\theEditBlock}}
\newcommand{\editfinish}{\renewcommand{\linenumberfont}{\normalfont\sffamily\tiny\color{black}}\linelabel{L:editfinish_\theEditBlock}}
\newcommand{\mathstart}{\renewcommand{\linenumberfont}{\normalfont\sffamily\footnotesize\bf\color{green}}}
\newcommand{\mathfinish}{\renewcommand{\linenumberfont}{\normalfont\sffamily\tiny\color{black}}}
\newcommand{\editstart}{}
\newcommand{\editfinish}{}
\newcommand{\mathstart}{}
\newcommand{\mathfinish}{}
\algnewcommand\algorithmicinput{\textbf{Input:}}
\algnewcommand\INPUT{\item[\algorithmicinput]}
\algnewcommand\algorithmicoutput{\textbf{Output:}}
\algnewcommand\OUTPUT{\item[\algorithmicoutput]}
\algrenewcommand\algorithmicrequire{\textbf{Input:}}
\algrenewcommand\algorithmicensure{\textbf{Output:}}
   \def\vhrulefill#1{\leavevmode\leaders\hrule\@height#1\hfill \kern\z@}
\newcounter{alg}
{
  \refstepcounter{alg}
  \noindent \rule{\textwidth}{1pt}
  \textbf{Algorithm \thealg.} #1 \\[-0.3em]
  \noindent  \rule{\textwidth}{0.5pt} %
}%
{%
  \rule{\textwidth}{0.5pt} %
}
\newtcolorbox{mybox}{                       
   enhanced,
   colframe=blue!50,
   boxrule=1pt,
   width=1.05\textwidth,
   arc=1mm,
   breakable,                               
}
\title{A Universal Lossless Compression Method applicable to Sparse Graphs and Heavy--Tailed Sparse Graphs}
\newcommand{\ev}[1]{\mathbb{E} \left [ #1 \right ] }
\newcommand{\evwrt}[2]{\mathbb{E}_{#1} \left [ #2 \right ] }
\newcommand{\pr}[1]{\mathbb{P} \left ( #1 \right ) }
\newcommand{\prwrt}[2]{\mathbb{P}_{#1} \left ( #2 \right ) }
\newcommand{\norm}[1]{\left \Vert #1 \right \Vert}
\newcommand{\snorm}[1]{\Vert #1 \Vert}
\newcommand{\one}[1]{\mathbbm{1} \left [ #1 \right ]}
\newtheorem{lem}{Lemma}
\newtheorem*{lem*}{Lemma}
\newtheorem{thm}{Theorem}
\newtheorem{definition}{Definition}
\newtheorem{prop}{Proposition}
\newtheorem{rem}{Remark}
\newcommand{\mG}{\mathcal{G}}
\newcommand{\an}{a^{(n)}}
\newcommand{\tG}{\widetilde{G}}
\newcommand{\mn}{m^{(n)}}
\newcommand{\ms}{m^{*}}
\newcommand{\eln}{\ell^{(n)}}
\newcommand{\tmn}{\widetilde{m}^{(n)}}
\newcommand{\mA}{\mathcal{A}}
\newcommand{\mP}{\mathcal{P}}
\newcommand{\mS}{\mathcal{S}}
\newcommand{\mF}{\mathcal{F}}
\newcommand{\mT}{\mathcal{T}}
\newcommand{\tf}{\tilde{f}}
\newcommand{\Gn}{G^{(n)}}
\newcommand{\tGn}{\widetilde{G}^{(n)}}
\newcommand{\barmn}{\overline{m}_n}
\newcommand{\barm}{\overline{m}}
\newcommand{\barMn}{\overline{M}_n}
\newcommand{\fn}{f^{(n)}}
\newcommand{\Tn}{T^{(n)}}
\newcommand{\fnDn}{f_n^{\Delta_n}}
\newcommand{\gnDn}{g_n^{\Delta_n}}
\newcommand{\fnlwc}{f_n^{\mathsf{lwc}}}
\newcommand{\gnlwc}{g_n^{\mathsf{lwc}}}
\newcommand{\len}{\mathsf{bits}}
\newcommand{\nat}{\mathsf{nats}}
\newcommand{\tX}{\widetilde{X}}
\newcommand{\reals}{\mathbb{R}}
\newcommand{\integers}{\mathbb{Z}}
\newcommand{\nats}{\mathbb{N}}
\newcommand{\ER}{Erd\H{o}s--R\'{e}nyi }
\newcommand{\LP}{L\'{e}vy--Prokhorov }
\newcommand{\dlp}{d_\text{LP}} 
\newcommand{\bch}{ \Sigma} 
\newcommand{\bchover}{\overline{\Sigma}}
\newcommand{\bchunder}{\underbar{$\Sigma$}}
\DeclareMathOperator*{\argmin}{arg\,min}
\let\oldmarginpar\marginpar
\renewcommand{\marginpar}[2][rectangle,draw,rounded corners,text width = 3cm, scale=0.7]{%
  \oldmarginpar{%
    \tikz \node at (0,0) [#1]{#2};}%
}
\newcommand{\muk}{\mu^{(k)}}
\DeclareMathOperator{\Ent}{Ent} 
\newcommand{\pp}[1]{#1'}
\newcommand{\elln}{\ell^{(n)}}
\newcommand{\hpin}{\hat{\pi}_n} 
\newcommand{\tpin}{\tilde{\pi}_n} 
\newcommand{\hW}{\widehat{W}}
\newcommand{\hWn}{\widehat{W}^{(n)}}
\newcommand{\hWns}{\widehat{W}^{(n)}_*}
\newcommand{\gran}{a_n}
\definecolor{bluenodecolor}{RGB}{62,126,176}
\definecolor{rednodecolor}{RGB}{173,61,58}
\definecolor{blueedgecolor}{RGB}{3,151,255}
\definecolor{orangeedgecolor}{RGB}{255,149,8}
\tikzstyle{nodeB} = [fill=bluenodecolor, circle, inner sep = 1.7pt]
\tikzstyle{nodeR} = [fill=rednodecolor, rectangle, inner sep = 2.3pt]
\tikzstyle{nodeBlack} = [fill=black, circle, inner sep = 1.7pt]
\tikzstyle{edgeB} = [very thick, blueedgecolor]
\tikzstyle{edgeO} = [very thick, orangeedgecolor, decoration = {zigzag,segment length = 0.2cm, amplitude = 0.5mm},decorate]
\newcommand{\nodelabel}[3]{\node at ($(#1)+(#2:5mm)$) {#3};}
\newcommand{\checkmarkpage}[4]
{\@ifundefined{save@pt@#1}{#2}{%
  \edef\markid{\csname save@pt@#1\endcsname}%
  \edef\markpage{\csname save@pg@\markid\endcsname}%
  \ifnum\thepage<\markpage\relax #2%
  \else%
    \ifnum\thepage=\markpage\relax #3%
    \else #4%
    \fi%
  \fi}%
}
\newcounter{outlineid}
\newcounter{outlinedone}
\par\tikzmark{end\theoutlineid}\stepcounter{outlineid}\ignorespacesafterend}%
\newcommand{\drawoutline}{\checkmarkpage{begin\theoutlinedone}{}%
  {\begin{tikzpicture}[remember picture,overlay]
    \path ({pic cs:begin\theoutlinedone}-| current page text area.west)
      ++(0pt,\ht\strutbox) coordinate(A);
    \checkmarkpage{end\theoutlinedone}%
      {\path (current page text area.south west) ++(0pt,-\dp\strutbox)
         coordinate(B);}%
      {\path ({pic cs:end\theoutlinedone}-| current page text area.west)
        ++(0pt,\ht\strutbox) coordinate(B);}%
      {}
      \fill[yellow] ($(A) + (-.333em,0pt)$) rectangle ($(B) + (-1cm,0pt)$);
   \end{tikzpicture}}%
  {\begin{tikzpicture}[remember picture,overlay]
    \coordinate (A) at (current page text area.north west);
    \checkmarkpage{end\theoutlinedone}%
      {\path (current page text area.south west) ++(0pt,-\dp\strutbox)
         coordinate(B);}%
      {\path ({pic cs:end\theoutlinedone}-| current page text area.west)
        ++(0pt,\ht\strutbox) coordinate(B);}%
      {}
      \fill[yellow] ($(A) + (-.333em,0pt)$) rectangle ($(B) + (-1cm,0pt)$);
   \end{tikzpicture}}%
  \checkmarkpage{end\theoutlinedone}{}%
    {\stepcounter{outlinedone}\drawoutline}%
    {}
 }
\newif\ifdraft\draftfalse
\newcommand{\orange}{\color{orange}}
\newcommand{\magenta}{\color{magenta}}
\newcommand{\black}{\color{black}}
\definecolor{pcommentcolor}{RGB}{91,137,245}
\definecolor{pedit}{RGB}{0,153,0}
\newcommand{\pcomment}[1]{{\color{pcommentcolor}[Payam: #1]}}
\author{Payam Delgosha\thanks{Department of Computer Science, University of
    Illinois Urbana-Champaign, \texttt{delgosha@illinois.edu}}
   \, and   Venkat Anantharam\thanks{Department of Electrical Engineering and Computer
    Sciences, University of California, Berkeley, \texttt{ananth@berkeley.edu}}}
\begin{document}

\maketitle

\begin{abstract}
  Graphical data arises naturally in several modern applications, including but
  not limited to internet graphs, social networks, genomics and proteomics. 
  The typically large size of graphical data
  argues  for the importance of designing universal compression methods
  for such data. In most applications, the graphical data is sparse,
  meaning that the number of edges in the graph scales more slowly than $n^2$,
  where
  $n$ denotes the number of vertices. Although in some applications  the number
  of edges scales linearly with $n$, in others 
  the number of edges is much smaller than $n^2$ but appears to scale superlinearly with $n$.
  We call the former \emph{sparse graphs} and the latter  \emph{heavy-tailed sparse graphs}. In
  this paper we introduce a universal lossless compression method which is
  simultaneously applicable to both classes. We do this by employing the local
  weak convergence framework for sparse graphs and the sparse graphon framework for
  heavy-tailed sparse graphs. 
\end{abstract}

\section{Introduction}
\label{sec:intro}

\editstart

The sheer amount of graphical data in modern applications argues for finding efficient and optimal
methods of compressing 
such data for storage and
further data mining tasks.
Graphical data arises in 
social networks,
molecular and systems biology, and web graphs, as well as in several other application areas.
To be concrete, an
instance of graphical data arising in a web graph network would be a snapshot view of the network
at a given time. Each vertex in such a graph represents a web page, and an edge
represents a link between two web pages. 
An instance of graphical data in systems biology would be a protein-protein interaction network. Each vertex corresponds to a protein and an edge to an interaction between proteins.

Largely motivated by such applications,
there 
has 
recently been an increased interest 
in the problem of graphical
data compression. 
In existing works,
typically
assumptions are 
made regarding the 
properties 
of the graphical data of interest. 
One approach is
to design compression schemes for specific data sources such as web graphs or
social networks,
with the model for the
properties of the graphical data derived from a limited set of prior samples.
For instance, 
Boldi and Vigna have proposed the webgraph
framework to address the efficient compression of internet graphs
\cite{boldi2004webgraph},  Boldi et al. have proposed the layer label
propagation (LLP) method to compress social network graphs
\cite{boldi2011layered}, 
and Liakos et al.\ have proposed the BV+
compression method and evaluated its performance on certain datasets such as web
and social graphs \cite{liakos2014pushing}.
In this approach, the compression method is usually
based on some properties of the data which are  extracted based on observing
real-world samples. 
Therefore, such approaches usually do not come with 
information-theoretical guarantees of optimality.

The other approach in the
literature is to assume that the input data is generated through a certain
stochastic model, and the goal is to study the information content and
compression of such models by employing a notion of entropy.
Thus these works are less tied to a specific application.
For instance, Choi and Spankowski have studied the
structural entropy of the \ER model and compressing such graphs \cite{choi2012compression}, Aldous and Ross
have studied the asymptotic behavior of the entropy associated to some models of
sparse random graphs \cite{aldous2014entropy}, and Abbe has studied the
compression of stochastic block models \cite{abbe2016graph}.

In contrast to these prior works, we adopt the perspective of 
\emph{universal compression}. Namely, we 
study the compression of graphical data in a ``pointwise" sense, which is made more precise below.
In particular, we try to make as few assumptions as we can about the properties or statistical characteristics of the graphical data that we are trying to compress.

It is widely believed that
real world graphical data are ``sparse''.
Roughly speaking, 
a graph with $n$ vertices
is said to be sparse 
(in a broad sense)
\black
if its number of edges is \emph{much smaller} than $n^2$. This yields a whole 
spectrum 
of regimes 
under which one can study sparsity. One interesting  sparsity
regime is  where, roughly speaking, the number of edges is a
constant times the number of vertices
(more precisely,
\black
when the number of edges grows linearly
with the number of vertices
in an asymptotic regime).
\black
In recent works
\black
the authors of this paper have studied the problem of universal
lossless compression~\cite{delgosha2020universal} and
distributed compression \cite{delgosha2018distributed} for sparse graphs in  this sparsity regime
(the latter in a model-based framework).
\black%
This was done by
employing the 
notion of ``local weak convergence'', 
an instance of 
\black%
the so-called the
``objective method''
\cite{BenjaminiSchramm01rec,aldous2004objective,aldous2007processes},
which, roughly speaking, allows one to think of the graphical data as a sample from a 
limiting stochastic object derived from the
empirical characteristics of
the given sample (more precisely, this is done in an asymptotic setting, and the limiting stochastic object is a probability distribution on rooted graphs;
details are given in Section \ref{sec:prelim-lwc}).
\black%
Moreover, the authors have built upon the work of Bordenave and Caputo
\cite{bordenave2015large} to introduce a notion of entropy called the
\emph{marked BC entropy} which turns out to be the correct information-theoretic
measure of optimality 
on a per-edge basis
(which is the same as the per-vertex basis in this sparsity regime)
\black
for the purpose of the 
universal compression
of graphical data in this
formulation of 
the compression
problem 
\cite{delgosha2019notion}.
\black
Note that compression to the correct information-theoretic limit on a 
per-edge basis is a significantly deeper guarantee of information-theoretic
optimality than a crude guarantee of matching the growth rate of the overall
entropy of the graphical data, since the leading term in the overall entropy
depends only on the average degree of the graph and is on the scale of 
$n \log n$ where $n$ denotes the number of vertices;
see the details in Section \ref{sec:prelim}.
\black

The idea behind the local weak convergence framework is to study the asymptotic
behavior of the distribution of the neighborhood structure of a typical vertex in
the graph. This allows one to define a limit object associated to a sequence of
sparse graphs, the sparsity regime of interest being
where the number of edges grows linearly with the number of vertices. From the
point of view of the compression problem,
it is desirable to go beyond this sparsity regime and achieve universal
compression for graphs which are still sparse, but with the number of edges 
growing super--linearly with the number of vertices, i.e.\ sparse graphs with
\emph{heavy--tailed} degree distributions. 
Indeed, it is generally believed that 
heavy-tailed degree distributions are more representative of real world networks. 
Achieving universal compression in an information-theoretically optimal sense 
on a per-edge basis
\black
while being able to include heavy-tailed sparse graphical data in the framework 
\black%
is the purpose
\black
of this paper. 
More precisely, we build upon the universal compression scheme of
\cite{delgosha2020universal} to go beyond the local weak
convergence framework, and we design a universal compression scheme which is
capable of encoding graphs which are either consistent with the local weak
convergence framework or come from a specific class of sparse graphs with
 heavy--tailed degree distributions
 (and this has to be done while not knowing which regime the graphical data is from).
 \black%


In order to address graphs
with heavy--tailed degree distributions, we employ a version  of the graphon
theory adapted 
for sparse graphs \cite{borgs2019L,borgs2018L,borgs2015consistent}.
For dense graphs (the graphs where the number of edges scales as $n^2$), the theory of graphons allows one to make sense of a notion of
limit 
and provides 
a comprehensive framework to study the
asymptotic behavior 
(see,
for instance, \cite{lovasz2006limits}, \cite{lovasz2007szemeredi},
\cite{borgs2008convergent}, \cite{borgs2012convergent},
\cite{lovasz2012large}).
\black%
There has
been a recent effort to bridge the gap between the above sparse regime addressed
by local weak convergence, and the dense regime addressed by the
graphon theory (see, for instance, \cite{bollobas2007metrics}, \cite{borgs2019L},
\cite{borgs2018L}).  This 
framework, which we call the \emph{sparse graphon framework},  defines a notion of
convergence for heavy--tailed sparse graphs, similar to
the local weak convergence framework, but in a completely different metric.

Motivated by the above discussion, the local weak convergence framework and the
sparse graphon framework together yield a powerful machinery which is capable of
addressing sparsity in a broad range. In particular, we use this machinery to
address the problem of universal compression of sparse graphical data. More
precisely,  we aim to compress a graph which is either consistent
with the local weak convergence framework or the sparse graphon framework.
However, the universality condition requires that the encoder does not know which of the two frameworks the input graph
is consistent with, neither does it know the limiting object in each of the two
frameworks. However, we want the encoder to be information-theoretically
optimal, in the sense that if we appropriately normalize the codeword length
associated to the input graph, it does not asymptotically exceed the entropy of
the limit object
on a per-edge basis,
\black
with an appropriate notion of the entropy for each of the two frameworks. 
In order to make sense of optimality in the local weak
sense, we employ the notion of BC entropy from \cite{bordenave2015large} which
we discussed above. On the other hand, in
order to make sense of optimality in the sparse graphon sense, we introduce
a notion of entropy for this framework in Section~\ref{sec:graphon-entropy-new}, which can be of independent interest.

The structure of this paper is as follows. In Section~\ref{sec:prelim}, we
review local weak convergence, the BC entropy, sparse graphons,
and the universal lossless compression scheme introduced in
\cite{delgosha2020universal}. Then, in Section~\ref{sec:graphon-entropy-new}, we
introduce our notion of entropy for the sparse graphon framework. We then
rigorously define the problem of finding a universal compression scheme which
addresses both the local weak convergence and the sparse graphon frameworks in
Section~\ref{sec:statement-and-results} and state our main results on the
existence of such schemes. We explain the details of our compression scheme in
Section~\ref{sec:coding-scheme}. Afterwards, we analyze the performance of this
scheme under the local weak convergence and the sparse graphon frameworks in
Sections~\ref{sec:lwc-analysis} and \ref{sec:graphon-analysis} respectively. 

\editfinish

We close this section by introducing some notational conventions.
We write $:= $ and $=: $ for equality by definition.
$\reals$ and $\reals_+$ denote the set of real numbers and nonnegative real
numbers respectively. $\integers$ and $\nats$ denote the set of integers and
the set of positive integers respectively. 
We denote the set of integers $\{1, \dots, n\}$ by $[n]$. 
For $x \in \reals$, $x \geq 1$, we may write $[x]$ as a shorthand for $[\lfloor x \rfloor]$.
All the logarithms are
to the natural base, unless otherwise stated. We write $\{0,1\}^* -
\emptyset$ for the set of finite sequences of  zeros and ones, 
excluding the empty sequence. For
a sequence $x \in \{0,1\}^* - \emptyset$, we denote its length 
in bits
\black%
by
$\len(x)$. Moreover, we denote the length of $x$ in nats by $\nat(x) = \len(x)
\times \log 2$. $\mS^{p \times q}$ denotes the set of $p \times q$ matrices with
values in the set $\mS$.
 For two sequence $(a_n: n \geq
1)$ and $(b_n: n \geq 1)$ of nonnegative real numbers, we write $a_n = O(b_n)$
if there exists a constant $C > 0$ such that $a_n \leq C b_n$ for $n$ large
enough.
Moreover, we write $a_n = o(b_n)$ if $a_n / b_n \rightarrow 0$ as $n
\rightarrow \infty$. Also, we write $a_n = \omega(b_n)$ if $a_n / b_n
\rightarrow \infty$ as $n \rightarrow \infty$.
For a
probability distribution $P$ defined on a finite set, $H(P)$ denotes the Shannon entropy
of $P$. Similarly, for a random variable $X$ with finite support, $H(X)$ denotes
the Shannon entropy associated to $X$.  Moreover, for $\alpha \in [0,1]$, we define $H_b(\alpha) := - \alpha \log \alpha - (1-\alpha)
\log (1-\alpha)$ to be the Shannon entropy 
(to the natural base)
\black%
of a 
Bernoulli 
\black%
random
variable
with parameter $\alpha$.
\black%
We use the abbreviation ``a.s.'' for the phrase ``almost
surely''.

\section{Preliminaries}
\label{sec:prelim}

All graphs 
in this document
are assumed to be 
undirected and 
simple, 
the latter meaning that 
self loops and multiple edges are not
allowed. Hence we may drop the term 
``simple" when referring to graphs. We use the terms ``node'' and ``vertex''
exchangeably. We consider graphs which
may have either a finite or a countably infinite number of vertices. For a graph
$G$, let $V(G)$ denote the set of vertices in $G$. Two nodes $v$ and $w$ in a graph $G$ are said to be adjacent if
they are connected by an edge, and we show this by writing $v \sim_G w$.
We denote the degree of a vertex $v$ in a graph $G$ 
by $\deg_G(v)$.
$\mG_n$ denotes the set of simple graphs on the vertex set $[n]$. 
 A graph $G$ is called locally
finite if the degree of every vertex in the graph is finite.
Given a
graph $G \in \mG_n$, we denote its adjacency matrix by $A(G)$,
which
we recall is the
$n \times n$
matrix whose entry $(i,j)$ is one if nodes $i$ and $j$ are adjacent in $G$, and
zero otherwise. The density of a graph $G$, which is denoted by $\rho(G)$,  is defined to be the density of ones
in its adjacency matrix. More precisely,
\begin{equation}
  \label{eq:rho-G-def}
  \rho(G) := \frac{1}{n^2} \sum_{1 \leq i,j \leq n} (A(G))_{i,j} = \frac{2m}{n^2}.
\end{equation}
Here, $n$ and $m$ denote the number of vertices and edges in $G$ respectively.
For $p \geq 1$, the $L^p$ norm of an $n \times n$ matrix $A$ is defined as
\begin{equation}
  \label{eq:matrix-Lp-norm-def}
 \snorm{A}_p^p := \frac{1}{n^2} \sum_{1 \leq i , j \leq n} |A_{i,j}|^p. 
\end{equation}
Note the normalization. Thus 
$\rho(G) = \snorm{A}_1$.

A path between two vertices $v$ and $w$ in a graph $G$ is a sequence of nodes $v
= v_0, v_1, \dots, v_k = w$ where $v_i \sim_G v_{i+1}$ for $0 \leq i < k$. The
length of such a path is defined to be $k$.  The distance between two nodes $v$
and $w$ in a graph $G$ 
is defined to be the minimum length among the paths connecting them, and is
defined to be $\infty$ if no such path exists.

Two  graphs $G$ and $G'$ 
are said to be isomorphic, and we write $G \equiv G'$, if there is a bijection
$\phi: V(G) \rightarrow V(G')$ such that for all pair of vertices $v, w \in
V(G)$, we have $v \sim_G w$ iff $\phi(v) \sim_{G'} \phi(w)$.

To better understand this notion, let $\mathcal{S}_n$ denote the permutation
group on the set $[n]$. For a permutation $\pi \in \mathcal{S}_n$
and a graph $G$ on the vertex set $[n]$, let $\pi G$ be the
graph on the same vertex set after the permutation $\pi$ is applied on
the vertices. Namely, for each edge $(v,w)$ in $G$, we place an edge between the vertices $\pi(v)$ and $\pi(w)$ 
in $\pi G$.
Then each $\pi G$ is isomorphic to $G$ and every graph
that is isomorphic to $G$ is of the form $\pi G$ for some
$\pi \in \mathcal{S}_n$. 

Given a graph $G$, 
and a subset $S$ of its vertices, the subgraph induced by $S$ is 
the graph comprised of the vertices in $S$ 
and those edges in $G$
that have both their endpoints in $S$.
The {\em connected component} of a vertex $v \in V(G)$ is the subgraph of $G$ induced by 
the vertices that are at a finite distance from $v$. 
We write $G_v$ for the connected component of $v \in V(G)$. Note that $G_v$ is
a connected graph.

The focus on how a graph looks from the point of view of each of its
vertices is the key conceptual ingredient in the theory of local weak 
convergence.
For this, we introduce the notion of a
{\em rooted} 
graph and the notion of isomorphism of rooted  graphs. 
Roughly speaking, a rooted  graph should be thought of as
a graph as seen from a specific vertex in it and the
notion of two rooted 
graphs being isomorphic as capturing the idea that the respective 
graphs as seen from the respective distinguished vertices look the same. 
Notice that it is natural to restrict attention to the connected component
containing the root when making such a definition, because,  
roughly speaking, a vertex of the  graph should only be able to see the component to which it belongs.


For a precise definition, consider a graph $G$ and a distinguished vertex $o \in V(G)$.
The pair $(G,o)$ is called a rooted  graph. 
We call two rooted  graphs $(G, o)$ and $(G',o')$ isomorphic and write $(G, o) \equiv (G',o')$ if $G_o \equiv G'_{o'}$ through a bijection $\phi: V(G_o) \rightarrow V(G'_{o'})$ preserving the root, i.e.\ $\phi(o) = o'$.
This notion of isomorphism defines an equivalence relation 
on rooted graphs. Note that in order to determine
if two rooted graphs are isomorphic (as rooted  graphs)
it is only necessary to 
examine the connected component of the root in each of the graphs.
Let $[G, o]$ denote the equivalence class corresponding to $(G_o, o)$. 
In the sequel, we will only use this notion for locally finite graphs.

For a rooted graph $(G, o)$ and integer $h\geq 1$, let $(G, o)_h$ denote
the subgraph of  $G$ rooted at $o$ induced by  vertices with distance no more
than $h$ from $o$. 
Note that if $h=0$ then $(G, o)_h$ is 
the isolated root $o$.
Moreover, let $[G, o]_h$ be the  equivalence class corresponding to $(G, o)_h$, i.e.\ $[G, o]_h := [(G, o)_h]$.
Note that $[G, o]_h$ depends only on  $[G, o]$.

\subsection{The framework of Local Weak Convergence}
\label{sec:prelim-lwc}

In this section we review the framework of local weak convergence
of graphs, 
which is an instance of the 
so-called objective method.
See \cite{benjamini2011recurrence, aldous2004objective, aldous2007processes} for more details. 
This framework can also take into account \emph{marked} graphs, i.e.\ graphs
where each vertex carries a label
from a set called the vertex mark set and each edge carries a label from a set called the edge mark set.
However, for the purpose of
this work, we only focus on simple graphs without marks.

Let $\mG_*$ be the space of equivalence classes 
$[G, o]$
arising from locally finite rooted  graphs $(G,o)$.
We 
emphasize
again that in defining $[G, o]$ all that matters about
$(G,o)$ is the connected component of the root.
We define the metric $d_*$ on $\mG_*$ as follows: given $[G, o]$ and $[G',o']$, let $\hat{h}$ be the supremum over all integers $h\geq 0$ such that $(G, o)_h \equiv (G', o')_h$, where 
$(G, o)$ and $(G',o')$ are arbitrary members in equivalence classes $[G, o]$ and $[G', o']$ respectively\footnote{As all elements in an equivalence class are isomorphic, the definition is invariant under the choice of the representatives.
}.
With this, $d_*([G,o], [G',o'])$ is defined to be $1/(1+\hat{h})$. One can check that $d_*$ is a metric; in particular, it satisfies the triangle inequality.
Moreover, 
$\mG_*$ together with this metric is a Polish space, i.e. a 
complete separable metric space \cite{aldous2007processes}.
Let $\mT_*$ denote the subset of $\mG_*$ comprised of the 
equivalence classes $[G, o]$ arising from some $(G,o)$ where
the graph underlying $G$ is a tree. 
In the sequel we will 
think of $\mG_*$ as a Polish space with the metric $d_*$
defined above, rather than just a set. Note that $\mT_*$ is a closed subset of $\mG_*$.

For a Polish space 
$\Omega$, let $\mP(\Omega)$ denote the set of Borel probability measures on
$\Omega$. We say that a sequence of measures $\mu_n$ on $\Omega$ converges
weakly to $\mu \in \mP(\Omega)$, and write $\mu_n \Rightarrow \mu$, if for any
bounded continuous function on $\Omega$, we have $\int f d \mu_n \rightarrow
\int f d \mu$.
It can be shown that it suffices to verify this condition only for
uniformly continuous and bounded functions \cite{billingsley2013convergence}.
For a Borel set $B \subset \Omega$, the $\epsilon$--extension of
$B$, denoted by $B^\epsilon$, is defined as the union of the open balls with
radius $\epsilon$ centered around the points in $B$. For two probability
measures $\mu$ and $\nu$ in $\mP(\Omega)$, the \LP distance $\dlp(\mu, \nu)$ is
defined to be the infimum of all $\epsilon>0$ such that for all Borel sets $B
\subset \Omega$ we have $\mu(B) \leq \nu(B^\epsilon) + \epsilon$ and  $\nu(B) \leq \mu(B^\epsilon) + \epsilon$.
It is known that the \LP distance metrizes the topology of weak convergence 
on the space of probability distributions on a Polish space
(see, for instance, \cite{billingsley2013convergence}).
For $x \in \Omega$, let $\delta_x$ be the Dirac measure at $x$.

For a finite graph $G$, define $U(G) \in \mP(\mG_*)$ as 
\begin{equation}
  \label{eq:UG}
  U(G) := \frac{1}{|V(G)|} \sum_{o \in V(G)} \delta_{[G, o]}.
\end{equation}
Note that $U(G) \in \mP(\mG_*)$. In creating $U(G)$
from $G$, we have created a probability distribution on 
rooted  graphs from the given  graph $G$ 
by rooting the graph at a vertex chosen uniformly at random. 
Furthermore, for an integer $h \geq 1$, let 
\begin{equation}
\label{eq:UkG}
  U_h(G) := \frac{1}{|V(G)|} \sum_{o \in V(G)} \delta_{[G, o]_h}.
\end{equation}
We then have $U_h(G) \in \mP(\mG_*)$.
See Figure~\ref{fig:UG} for an example.

We say that a probability distribution $\mu$ on $\mG_*$ is the {\em local weak limit} of a sequence of finite  graphs $\{G_n\}_{n=1}^\infty$ when $U(G_n)$ converges weakly to $\mu$ 
(with respect to the topology 
on $\mP(\mG_*)$ induced by the metric $d_*$ on $\mG_*$).
This turns out to be equivalent to the condition that, for any finite depth $h \ge 0$, the structure of $G_n$ from the point of view of a root chosen uniformly at random and then looking around it only to depth $h$ converges in distribution to $\mu$ truncated up to depth $h$. This description of what is being captured by the definition 
justifies the term ``local'' in local weak convergence.

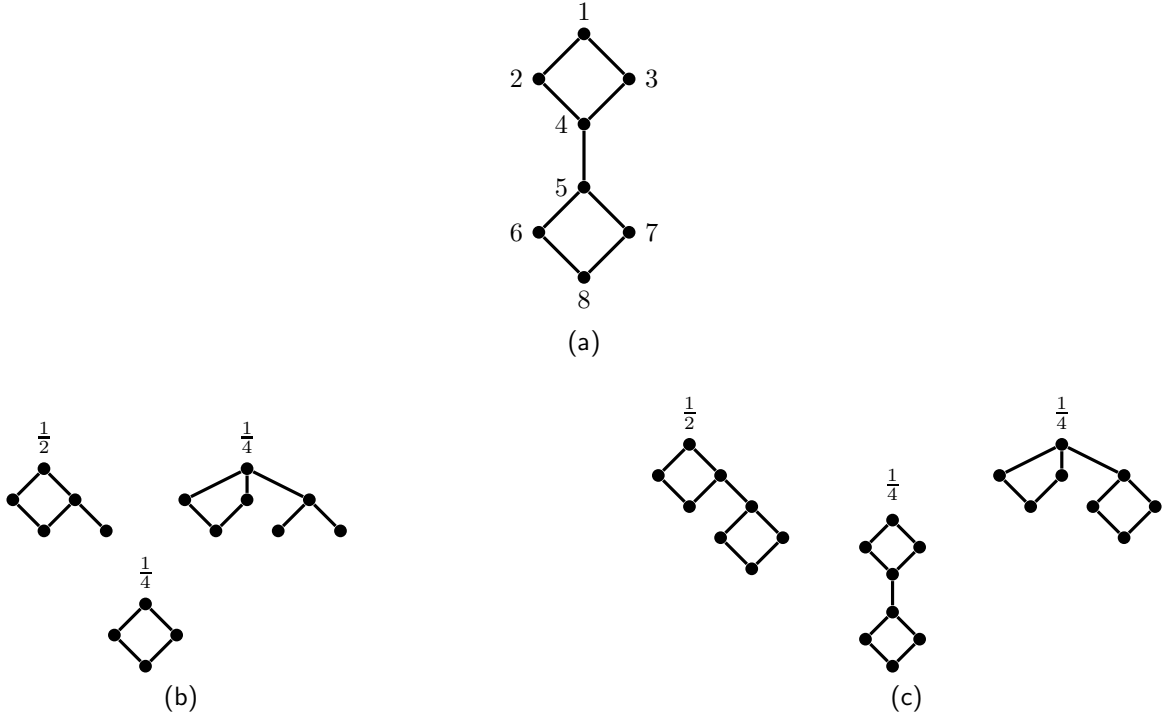
\begin{figure}
  \centering
  \subfloat[]{
      \begin{tikzpicture}[scale=0.6]
    \begin{scope}[yshift=0.7cm]
    \node[nodeBlack] (n1) at (0,2) {};
    \node[nodeBlack] (n2) at (-1,1) {};
    \node[nodeBlack] (n3) at (1,1) {};
    \node[nodeBlack] (n4) at (0,0) {};
  \end{scope}
  \begin{scope}[yshift=-0.7cm]
    \node[nodeBlack] (n5) at (0,0) {};
    \node[nodeBlack] (n6) at (-1,-1) {};
    \node[nodeBlack] (n7) at (1,-1) {};
    \node[nodeBlack] (n8) at (0,-2) {};
  \end{scope}
  \nodelabel{n1}{90}{1};
  \nodelabel{n2}{180}{2};
  \nodelabel{n3}{0}{3};
  \nodelabel{n4}{180}{4};
  \nodelabel{n5}{180}{5};
  \nodelabel{n6}{180}{6};
  \nodelabel{n7}{0}{7};
  \nodelabel{n8}{270}{8};

  \draw[very thick](n1)--(n2);
  \draw[very thick](n1)--(n3);
  \draw[very thick](n8)--(n6);
  \draw[very thick](n8)--(n7);

  \draw[very thick](n2)--(n4);
  \draw[very thick](n3)--(n4);
  \draw[very thick](n5)--(n6);
  \draw[very thick](n5)--(n7);
  \draw[very thick] (n4) -- (n5);
  \end{tikzpicture}
}

\subfloat[]{
\centering
\begin{tikzpicture}[scale=0.9]
  \begin{scope}[yshift=2cm,xshift=1.5cm, scale=0.23]
    \node[nodeBlack] (n1) at (0,0) {};
    \node[nodeBlack] (n2) at (-4,-2) {};
    \node[nodeBlack] (n3) at (0,-2) {};
    \node[nodeBlack] (n4) at (4,-2) {};
    \node[nodeBlack] (n5) at (-2,-4) {};
    \node[nodeBlack] (n6) at (2,-4) {};
    \node[nodeBlack] (n7) at (6,-4) {};

    \draw[very thick] (n1) -- (n2);
    \draw[very thick] (n1) -- (n3);
    \draw[very thick] (n1) -- (n4);
    \draw[very thick] (n2) -- (n5);
    \draw[very thick] (n3) -- (n5);
    \draw[very thick] (n4) -- (n6);
    \draw[very thick] (n4) -- (n7);

    \node at (0,2) {$\frac{1}{4}$};

  \end{scope}

  \begin{scope}[yshift=2cm,xshift=-1.5cm,scale=0.23]
    \node[nodeBlack] (n1) at (0,0) {};
    \node[nodeBlack] (n2) at (-2,-2) {};
    \node[nodeBlack] (n3) at (2,-2) {};
    \node[nodeBlack] (n4) at (0,-4) {};
    \node[nodeBlack] (n5) at (4,-4) {};
    \draw[very thick] (n1) -- (n2);
    \draw[very thick] (n1) -- (n3);
    \draw[very thick] (n2) -- (n4);
    \draw[very thick] (n3) -- (n4);
    \draw[very thick] (n3) -- (n5);
    
    \node at (0,2) {$\frac{1}{2}$};

  \end{scope}

  \begin{scope}[scale=0.23]
    \node[nodeBlack] (n1) at (0,0) {};
    \node[nodeBlack] (n2) at (-2,-2) {};
    \node[nodeBlack] (n3) at (2,-2) {};
    \node[nodeBlack] (n4) at (0,-4) {};
    
    \draw[very thick] (n1)--(n2);
    \draw[very thick] (n1)--(n3);
    \draw[very thick] (n2)--(n4);
    \draw[very thick] (n3)--(n4);
    
    \node at (0,2) {$\frac{1}{4}$};
  \end{scope}
\end{tikzpicture}
\label{fig:U-d2}
}%
\hfill\hfill\hfill
\subfloat[]{
\centering
\begin{tikzpicture}[scale=0.9]
  \begin{scope}[yshift=1cm,xshift=2.5cm,scale=0.23]
        \node[nodeBlack] (n1) at (0,0) {};
    \node[nodeBlack] (n2) at (-4,-2) {};
    \node[nodeBlack] (n3) at (0,-2) {};
    \node[nodeBlack] (n4) at (4,-2) {};
    \node[nodeBlack] (n5) at (-2,-4) {};
    \node[nodeBlack] (n6) at (2,-4) {};
    \node[nodeBlack] (n7) at (6,-4) {};
    \node[nodeBlack] (n8) at (4,-6) {};
    
    \draw[very thick] (n1) -- (n2);
    \draw[very thick] (n1)--(n3);
    \draw[very thick] (n1) -- (n4);
    \draw[very thick] (n2)--(n5);
    \draw[very thick] (n3)--(n5);
    \draw[very thick] (n4)--(n6);
    \draw[very thick] (n4)--(n7);
    \draw[very thick] (n6)--(n8);
    \draw[very thick] (n7)--(n8);
    
    \node at (0,2) {$\frac{1}{4}$};

  \end{scope}

  \begin{scope}[yshift=1cm,xshift=-3cm,scale=0.23]

    \node[nodeBlack] (n1) at (0,0) {};
    \node[nodeBlack] (n2) at (-2,-2) {};
    \node[nodeBlack] (n3) at (2,-2) {};
    \node[nodeBlack] (n4) at (0,-4) {};
    \node[nodeBlack] (n5) at (4,-4) {};
    \node[nodeBlack] (n6) at (2,-6) {};
    \node[nodeBlack] (n7) at (6,-6) {};
    \node[nodeBlack] (n8) at (4,-8) {};
    
    \draw[very thick](n1)--(n2);
    \draw[very thick](n1)--(n3);
    \draw[very thick](n2)--(n4);
    \draw[very thick](n3)--(n4);
    \draw[very thick](n5)--(n6);
    \draw[very thick](n5)--(n7);
    \draw[very thick](n6)--(n8);
    \draw[very thick](n7)--(n8);
    \draw[very thick] (n3) -- (n5);

    \node at (0,2) {$\frac{1}{2}$};

  \end{scope}

  \begin{scope}[scale=0.4,yshift=-3cm]
        \begin{scope}[yshift=0.7cm]
    \node[nodeBlack] (n1) at (0,2) {};
    \node[nodeBlack] (n2) at (-1,1) {};
    \node[nodeBlack] (n3) at (1,1) {};
    \node[nodeBlack] (n4) at (0,0) {};
  \end{scope}
  \begin{scope}[yshift=-0.7cm]
    \node[nodeBlack] (n5) at (0,0) {};
    \node[nodeBlack] (n6) at (-1,-1) {};
    \node[nodeBlack] (n7) at (1,-1) {};
    \node[nodeBlack] (n8) at (0,-2) {};
  \end{scope}

  \draw[very thick](n1)--(n2);
  \draw[very thick](n1)--(n3);
  \draw[very thick](n8)--(n6);
  \draw[very thick](n8)--(n7);

  \draw[very thick](n2)--(n4);
  \draw[very thick](n3)--(n4);
  \draw[very thick](n5)--(n6);
  \draw[very thick](n5)--(n7);
  \draw[very thick] (n4) -- (n5);

    \node at (0,4) {$\frac{1}{4}$};
  \end{scope}
\end{tikzpicture}
\label{fig:U-U}
}
\caption{\label{fig:UG} 
With $G$ being the graph in (a),
(b) illustrates $U_2(G)$, which is a probability distribution on rooted graphs of depth at most 2 and
(c) depicts $U(G)$, which is a probability distribution on $\mG_*$. 
In each of the figures in (b) and (c) the root is the vertex at the top. 
\black
}
\end{figure}

In fact, $U_h(G)$ could be thought of as the ``depth $h$ empirical
distribution'' of the  graph $G$. On the other hand, a probability distribution
$\mu \in \mP(\mG_*)$ that arises as a local weak limit plays the role of a
stochastic process on graphical data, and a sequence of graphs
$\{G_n\}_{n=1}^\infty$ could be  thought of as being 
asymptotically distributed like
this process when $\mu$ is the local weak limit of the sequence.

The degree of a probability measure $\mu \in \mP(\mG_*)$,
denoted by $\deg(\mu)$, is defined as
\begin{equation*}
  \deg(\mu) := \evwrt{\mu}{ \deg_{G}(o)} = \int \deg_{G}(o) d \mu([G, o]),
\end{equation*}
which is the expected degree of the root.



We next present some examples to illustrate the concepts defined so far.
\begin{enumerate}
\item Let $G_n$ be the finite lattice $\{-n, \dots n \} \times \{-n, \dots, n
  \}$ in $\mathbb{Z}^2$. As $n$ goes to infinity,  the local weak limit of this sequence is the distribution that gives probability one to the lattice $\mathbb{Z}^2$ rooted at the origin. The reason is that if we fix a depth $h \ge 0$ then for $n$ large almost all of the vertices in $G_n$ cannot see the borders of the lattice when they look at the graph around them up to depth $h$, so
these vertices cannot locally distinguish the graph on which they live from the infinite lattice $\mathbb{Z}^2$.
\item Suppose $G_n$ is a cycle of length $n$. The local weak limit 
of this sequence of graphs gives probability one to an infinite $2$--regular tree
rooted at one of its vertices. The intuitive explanation for this is 
essentially identical to that for the preceding example.

\item Let $G_n$ be a realization of the sparse \ER graph $\mG(n, \alpha / n)$
where $\alpha > 0$, i.e.\  $G_n$ has $n$ vertices and each edge is independently present with
  probability $\alpha / n$
 (here $n$ is assumed to be sufficiently large).
 One can show that if all the $G_n$ are defined on
  a common probability space then, almost surely, the local weak limit of the
  sequence is the Poisson Galton--Watson tree with mean  $\alpha$, 
rooted at the initial vertex. To justify why this should be true without going
through the details, note that the degree of a vertex in $G_n$ is the sum of
$n-1$ independent Bernoulli random variables, each with parameter $\alpha /n$.
For $n$ large, this  approximately has a Poisson distribution with mean
$\alpha$. This argument could be repeated for any of the vertices to which the
chosen vertex is connected, which play the role of the offspring of the initial
vertex in the limit. The essential point is that the probability of having loops
in the neighborhood of a typical vertex up to a depth $h$ is  negligible
whenever $h$ is fixed and $n$ goes to infinity. 

\end{enumerate}

\subsection{Unimodularity}
\label{sec:unimodularity}

In order to get a better understanding of the nature of the 
results proved in this paper, it is 
helpful
to understand what is
meant by a
{\em unimodular} probability distribution $\mu \in \mP(\mG_*)$.
We give the relevant definitions and context in this section.

Since each vertex in $G_n$ has the same chance of being chosen as 
the root in the definition of $U(G_n)$, this should manifest itself as some
kind of stationarity property of the limit $\mu$ with respect
to changes of the root.
A probability distribution $\mu \in \mP(\mG_*)$ is called {\em sofic} 
if there exists a sequence of finite graphs $G_n$ with local weak limit $\mu$. 
The definition of unimodularity is made in an attempt to understand
what it means for a Borel probability distribution on $\mG_*$
to be sofic.

To define unimodularity, let $\mG_{**}$ be the set of isomorphism classes
$[G,o,v]$ where $G$ is a  connected graph with two distinguished vertices
$o$ and $v$ in $V(G)$ (ordered, but not necessarily distinct). Here, isomorphism
is defined by an adjacency-preserving vertex bijection which also maps the two distinguished vertices of one
object to the respective ones of the other. 
$\mG_{**}$ can be metrized as a Polish space in a manner similar to that used to metrize $\mG_{*}$.
A measure $\mu \in \mP(\mG_*)$ is
said to be unimodular if, for all measurable functions $f:
\mG_{**} \rightarrow \reals_+$, we have
\begin{equation}
  \label{eq:unim-integral}
  \int \sum_{v \in V(G)} f([G,o,v]) d\mu([G,o]) = \int \sum_{v \in V(G)} f([G,v,o]) d\mu([G,o]).
\end{equation}
Here, the summation is taken over all vertices $v$ which are in the same
connected component of $G$ as $o$. 
(Note that the integrand on the left hand side is $f([G,o,v])$,
  while the integrand on the right hand side is $f([G,v,o])$.) Roughly speaking, this condition ensures that the distribution
  $\mu$ is invariant under switching the root, and it can be considered as a
   stationarity condition.
It can be seen that it suffices to check the
above condition for a function $f$ such that $f([G,o,v]) = 0$ unless $v \sim_G
o$. This is called \emph{involution invariance} \cite{aldous2007processes}.
Let $\mP_u(\mG_*)$ denote the set of unimodular probability measures on
$\mG_*$. Also, since $\mT_* \subset \mG_*$, we can define the set of
unimodular probability measures on $\mT_*$ and denote it by $\mP_u(\mT_*)$. 
A sofic probability measure is unimodular. Whether the other direction also
holds is unknown.

\subsection{The BC Entropy}
\label{sec:bc-ent}

In this section we review the notion of entropy introduced by Bordenave and
Caputo  for probability distributions on the space
$\mG_*$ \cite{bordenave2015large}. We call this notion  the \emph{BC
  entropy}. The authors of this paper have generalized this entropy to
the regime where the vertices and edges in the graph 
also carry marks,
but we omit that discussion here since
we focus on unmarked graphs throughout this work \cite{delgosha2019notion}.

For integers $n, m \in \nats$, let $\mG_{n, m}$ denote the set of graphs on the
vertex set $[n]$ with precisely $m$ edges. 
An application of Stirling's formula implies that if $d > 0$ and the
sequence $m_n$ is such that $m_n / n \rightarrow d/2$, then we have
\begin{equation*}
  \log |\mG_{n, m_n}| = m_n \log n + s(d) n + o(n),
\end{equation*}
where $s(d) := \frac{d}{2} - \frac{d}{2} \log d$.

The key idea to define the  BC entropy is to count the number of ``typical''
graphs.
More precisely, given $\mu \in
\mP(\mG_*)$ and $\epsilon > 0$, let $\mG_{n, m}(\mu, \epsilon)$ denote the set
of graphs $G \in \mG_{n, m}$ such that $\dlp(U(G), \mu) < \epsilon$, where
$\dlp$ refers to the \LP metric on $\mP(\mG_*)$
\cite{billingsley2013convergence}. In fact, one can interpret $\mG_{n, m}(\mu,
\epsilon)$ as the set of $\epsilon$--typical graphs with respect to $\mu$.
It turns out that, roughly speaking,  the number of
$\epsilon$--typical graphs scales as follows:
\begin{equation*}
  |\mG_{n,m}(\mu, \epsilon)| = \exp(m \log n + n \bch(\mu) + o(n)), 
\end{equation*}
where $\bch(\mu)$ is the  BC entropy of $\mu$ which will be defined below.
In order to make this precise, we make the following definition.

\begin{definition}
  \label{def:unmarked-bc-upper-lower}
  Assume $\mu \in \mP(\mG_*)$ is given, with $0 < \deg(\mu) < \infty$. Assume
  that $d > 0$ is fixed and a sequence $m_n$ of integers is given such that $m_n
  / n \rightarrow d / 2$ as $n \rightarrow \infty$. With these, for $\epsilon >
  0$, we define
  \begin{equation*}
    \bchover_d(\mu, \epsilon)|_{(m_n)} := \limsup_{n \rightarrow \infty} \frac{\log |\mG_{n, m_n}(\mu, \epsilon)| - m_n \log n}{n},
  \end{equation*}
  which we call the $\epsilon$--upper BC entropy. Since this is increasing in $\epsilon$, we can define the upper BC
  entropy as
  \begin{equation*}
    \bchover_d(\mu)|_{(m_n)} := \lim_{\epsilon \downarrow 0} \bchover_d(\mu, \epsilon)|_{(m_n)}.
  \end{equation*}
  We may similarly define the $\epsilon$--lower BC entropy
  $\bchunder_d(\mu,\epsilon)|_{(m_n)}$ as
  \begin{equation*}
    \bchunder_d(\mu, \epsilon)|_{(m_n)} := \liminf_{n \rightarrow \infty} \frac{\log |\mG_{n, m_n}(\mu, \epsilon)| - m_n \log n}{n}.
  \end{equation*}
  Since this is increasing in $\epsilon$, we can define the lower BC
  entropy as
  \begin{equation*}
    \bchunder_d(\mu)|_{(m_n)} := \lim_{\epsilon \downarrow 0} \bchunder_d(\mu, \epsilon)|_{(m_n)}.
  \end{equation*}
\end{definition}

Theorem 1.2 in \cite{bordenave2015large} summarizes some of the main properties
of the BC entropy. For better readability, we split that theorem as Theorems~\ref{thm:unmarked-bc-entr-infinity}
and \ref{thm:unmarked-bc-ent-well-defined} below. The following Theorem~\ref{thm:unmarked-bc-entr-infinity} shows that certain conditions must be met
for the  BC entropy to be of interest.

\begin{thm}
  \label{thm:unmarked-bc-entr-infinity}
  Fix $d>0$ and assume that 
  $\mu \in \mP(\mG_*)$ with
  $0< \deg(\mu) < \infty$ satisfies any of the following conditions:
  \begin{enumerate}
  \item $\mu$ is not unimodular;
  \item $\mu$ is not supported on $\mT_*$;
  \item $d \neq \deg(\mu)$.
  \end{enumerate}
  Then, for any choice of the sequence $m_n$ such that $m_n / n \rightarrow d/2$
  as $n \rightarrow \infty$, we have $\bchover_{d}(\mu)|_{(m_n)} = -\infty$.
\end{thm}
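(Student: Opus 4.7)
I would handle the three conditions separately, since they split into two qualitatively different regimes. Conditions~1 and the $d < \deg(\mu)$ half of Condition~3 admit \emph{soft} test-function arguments that produce outright emptiness of $\mG_{n,m_n}(\mu,\epsilon)$ for sufficiently small $\epsilon$ and large $n$, which immediately gives $\bchover_d(\mu)|_{(m_n)} = -\infty$. Conditions~2 and the $d > \deg(\mu)$ half of Condition~3 instead require a quantitative counting argument showing that $\log |\mG_{n,m_n}(\mu,\epsilon)| - m_n \log n$ does not grow linearly in $n$.

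For Condition~3 with $d < \deg(\mu)$, I would use the identity $\int \deg_G(o)\, dU(G)([G,o]) = 2m_n/n$ obtained by double counting edge endpoints. For each fixed integer $K$, the truncation $f_K([G,o]) := \min(\deg_G(o), K)$ is bounded and continuous on $\mG_*$ (the root degree is locally constant in $d_*$, since any two rooted graphs at $d_*$-distance less than $1/2$ share the same $1$-neighborhood of the root). Hence $\dlp(U(G_n),\mu)\to 0$ implies $\int f_K\, dU(G_n) \to \int f_K\, d\mu$; combining this with $\int f_K\, dU(G_n) \le 2m_n/n \to d$ and letting $K \uparrow \infty$ via monotone convergence forces $\deg(\mu) \le d$, contradicting the hypothesis. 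For Condition~1 I would use involution invariance: failure of unimodularity, together with a standard truncation in degree and radius, yields a bounded continuous $f: \mG_{**} \to \reals_+$, supported on adjacent pairs in configurations of bounded degree, for which the two sides of \eqref{eq:unim-integral} differ strictly under $\mu$. For any finite graph $G$, relabeling the outer and inner summation variables shows that the two sides of \eqref{eq:unim-integral} are in fact equal when $\mu$ is replaced by $U(G)$; weak convergence then carries this equality into the limit, contradicting the choice of $f$.

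The real work lies in Conditions~2 and 3b, where $\mG_{n,m_n}(\mu,\epsilon)$ is generically nonempty. For Condition~2, if $\mu$ puts mass $\delta > 0$ on rooted graphs having a cycle of length at most some fixed $h$ through the root, then every $G \in \mG_{n,m_n}(\mu,\epsilon)$ with $\epsilon < \delta/2$ contains at least $(\delta/2) n$ vertices lying on such a cycle, hence a positive density of short cycles. I would stratify $\mG_{n,m_n}(\mu,\epsilon)$ by a combinatorial template of these short cycles and count: each short-cycle-closing edge is forced onto a prescribed pair from an $O(1)$ menu of completions rather than roaming over $\binom{n}{2}$ free pairs, costing a factor of order $n$ in the count per forced edge. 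Summing over $\Theta(n)$ forced edges gives a deficit of order $n \log n$ against the baseline $m_n \log n$, which dominates any additive $O(n)$ term and forces $\bchover_d(\mu)|_{(m_n)} = -\infty$. For Condition~3b, the analogous strategy stratifies by the ``heavy'' vertices carrying the excess degree $d - \deg(\mu) > 0$: selecting this subset and estimating the number of edge placements compatible with the local-weak constraint on the bulk and with the degree excess on the heavy vertices yields a count exponentially smaller than $e^{m_n \log n + Cn}$ for every $C$.

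The main obstacle is the quantitative counting step in Conditions~2 and 3b: the constraint $\dlp(U(G),\mu) < \epsilon$ only provides approximate global control of local statistics, yet must be translated into a super-linear loss $-\omega(n)$ in $\log|\mG_{n,m_n}(\mu,\epsilon)| - m_n \log n$. Executing the stratify-and-count step \emph{uniformly} in the template and in $\epsilon$, and then cleanly interchanging $\limsup_n$ with $\epsilon \downarrow 0$ to conclude $\bchover_d(\mu)|_{(m_n)} = -\infty$, is the technically delicate part.
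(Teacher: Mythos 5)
The paper does not actually prove Theorem~\ref{thm:unmarked-bc-entr-infinity}: as it says just above the statement, Theorems~\ref{thm:unmarked-bc-entr-infinity} and~\ref{thm:unmarked-bc-ent-well-defined} are a split, for readability, of Theorem~1.2 of Bordenave and Caputo~\cite{bordenave2015large}, and no proof is given in this paper. So there is no in-paper proof to compare your outline against; any detailed comparison would have to be made directly against the source.

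That said, a few remarks on your outline on its own merits. The ``soft'' half is in good shape. For Condition~3 with $d<\deg(\mu)$, the truncated root-degree functionals $f_K$ are bounded and continuous (as you note, $[G,o]\mapsto\min\{\deg_G(o),K\}$ is determined by the $1$-ball of the root), so $\dlp(U(G_n),\mu)\to 0$ and $\int f_K\,dU(G_n)\le 2m_n/n\to d$ together force $\deg(\mu)\le d$ by monotone convergence; this yields emptiness of $\mG_{n,m_n}(\mu,\epsilon)$ for small $\epsilon$. For Condition~1, the key fact is indeed that $U(G)$ is automatically unimodular for every finite graph $G$ (your relabeling argument), so weak convergence transmits the unimodularity identity against any bounded continuous, locally supported $f:\mG_{**}\to\reals_+$, again giving emptiness. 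On the ``hard'' half I would adjust the diagnosis. For Condition~2 you do not in fact need to interchange $\limsup_n$ with $\epsilon\downarrow 0$: once $\epsilon<\min\{\delta/2,1/(h+1)\}$ (the latter so that the $\epsilon$-blowup of the ``short cycle within the $h$-ball of the root'' event is still contained in such an event), each of the $\Theta(n)$ cycle-closing edges costs a factor of $n$ in placement, and the deficit in $\log|\mG_{n,m_n}(\mu,\epsilon)|-m_n\log n$ is already $-\Theta(n\log n)=-\omega(n)$ at that fixed $\epsilon$. The genuinely $\epsilon$-sensitive case is Condition~3 with $d>\deg(\mu)$: there the $\sim(d-\deg(\mu))n/2$ excess edges must land in an $\le\epsilon n$-vertex heavy set, and a counting of $\le\binom{\epsilon n^2}{(d-\deg(\mu))n/2}$ gives a deficit of order $\frac{d-\deg(\mu)}{2}\,n\log\epsilon+O(n)$, which is only $\Theta(n)$ for fixed $\epsilon$ and diverges solely after $\epsilon\downarrow 0$. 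Your sketch lumps these two cases together; the distinction matters because it determines whether the $\epsilon\downarrow 0$ step does any work. Neither counting argument is carried out, so this remains an outline rather than a proof, but the strategy is consistent with what one expects and is plausibly close to what Bordenave and Caputo do.
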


  A consequence of Theorem~\ref{thm:unmarked-bc-entr-infinity}
is that the only case of interest in the discussion of 
BC entropy is when $\mu \in \mP_u(\mT_*)$,
$d = \deg(\mu)$, 
and the
sequence $m_n$  is such that $m_n / n \rightarrow \deg(\mu) / 2$.
Namely, the only upper and lower  BC entropies of interest are 
$\bchover_{\deg(\mu)}(\mu)|_{(m_n)}$ and $\bchunder_{\deg(\mu)}(\mu)|_{(m_n)}$ respectively.

The following Theorem~\ref{thm:unmarked-bc-ent-well-defined} establishes that the upper and lower
 BC entropies do not depend on the 
choice of the defining  sequence
$m_n$. Further, 
this theorem establishes that
the upper  BC entropy 
is always equal to the lower  BC entropy.

\begin{thm}
  \label{thm:unmarked-bc-ent-well-defined}
  Assume that $d>0$ is given. For any $\mu \in \mP(\mG_*)$ such that $0 <
  \deg(\mu) < \infty$, we have
  \begin{enumerate}
  \item The values of $\bchover_{d}(\mu)|_{(m_n)}$ and
    $\bchunder_d(\mu)|_{(m_n)}$ are invariant under the specific choice of the
    sequence $m_n$ such that $m_n / n \rightarrow d/2$. With this, we may
    simplify the notation and unambiguously write $\bchover_d(\mu)$ and
    $\bchunder_d(\mu)$.
  \item $\bchover_d(\mu) = \bchunder_d(\mu)$. We may therefore unambiguously
    write $\bch_d(\mu)$ for this common value and call it the BC entropy of $\mu
    \in \mP(\mG_*)$ with respect to  $d$. Moreover, $\bch_d(\mu) \in [-\infty, s(d)]$. Here,
    $s(d) := \frac{d}{2} - \frac{d}{2} \log d$.
  \end{enumerate}
\end{thm}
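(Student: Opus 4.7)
\medskip

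\noindent\textbf{Proof plan.} My plan is to treat the two assertions separately, though both rest on the same basic combinatorial moves.

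\medskip

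For the first assertion (independence of the defining sequence), suppose $m_n$ and $m_n'$ both satisfy $m_n/n, m_n'/n \to d/2$, so $k_n := |m_n' - m_n| = o(n)$. I would exhibit a many-to-many relation between $\mG_{n, m_n}(\mu, \epsilon)$ and $\mG_{n, m_n'}(\mu, \epsilon')$: say $m_n' > m_n$; then every $G' \in \mG_{n,m_n'}$ contains $\binom{m_n'}{k_n}$ subgraphs obtained by deleting $k_n$ edges, and each $G \in \mG_{n,m_n}$ is a subgraph of $\binom{\binom{n}{2} - m_n}{k_n}$ graphs in $\mG_{n,m_n'}$. Because adding or removing $k_n = o(n)$ edges perturbs $U(G)$ in \LP distance by $O(k_n/n) = o(1)$, the image under this relation stays inside $\mG_{n,m_n'}(\mu,\epsilon + o(1))$ (respectively $\mG_{n,m_n}(\mu,\epsilon+o(1))$). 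A Stirling estimate of the two binomial coefficients shows that their logs differ from $k_n \log n$ by $O(k_n) = o(n)$, which exactly compensates the difference $(m_n' - m_n) \log n$ in the normalization. After dividing by $n$ and letting $\epsilon \downarrow 0$, this yields $\bchover_d(\mu)|_{(m_n')} = \bchover_d(\mu)|_{(m_n)}$ and the same for the liminf version, justifying the notation $\bchover_d(\mu)$ and $\bchunder_d(\mu)$.

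\medskip

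For the equality $\bchover_d(\mu) = \bchunder_d(\mu)$, the strategy is a disjoint-union / Fekete argument. Given graphs $G_1$ on $[n_1]$ and $G_2$ on $[n_2]$, their labelled disjoint union lives on $[n_1+n_2]$, has $m_1+m_2$ edges, and satisfies $U(G_1 \sqcup G_2) = \tfrac{n_1}{n_1+n_2}U(G_1) + \tfrac{n_2}{n_1+n_2}U(G_2)$; by convexity of \LP distance, if both factors are $\epsilon$-typical for $\mu$ then the union is $\epsilon$-typical. Accounting for the $\binom{n_1+n_2}{n_1}$ choices of labels, one obtains the inequality
\begin{equation*}
|\mG_{n_1+n_2, m_1+m_2}(\mu, \epsilon)| \;\ge\; \binom{n_1+n_2}{n_1}\, |\mG_{n_1,m_1}(\mu,\epsilon)|\, |\mG_{n_2,m_2}(\mu,\epsilon)|.
\end{equation*}
Taking logs, subtracting $(m_1+m_2)\log(n_1+n_2)$, and exploiting that $m_i \approx d n_i/2$ so that the $m_i\log((n_1+n_2)/n_i)$ terms nearly cancel against $\log\binom{n_1+n_2}{n_1}$, leaves an (approximate) superadditive relation for the unnormalized quantity $b_n := \log|\mG_{n,m_n}(\mu,\epsilon)| - m_n \log n$. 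A Fekete-type lemma for superadditive sequences with vanishing error then forces the normalized quantity $b_n/n$ to have a limit, i.e.\ the liminf and limsup coincide.

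\medskip

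The upper bound $\bch_d(\mu) \le s(d)$ is immediate from the Stirling asymptotic $\log|\mG_{n,m_n}| = m_n \log n + s(d) n + o(n)$ recalled just before Definition \ref{def:unmarked-bc-upper-lower}, since $\mG_{n,m_n}(\mu,\epsilon) \subseteq \mG_{n,m_n}$. The main obstacle I foresee lies in making the disjoint-union step fully rigorous: one must verify that adding vertex-disjoint components really does keep the \LP distance small (including at roots near the ``boundary'' where the two components meet in the labelling — there is no actual meeting, but care is needed in bounding the change in $U(\cdot)$ as a probability measure on $\mG_*$), and one must handle the $\epsilon \downarrow 0$ limit after the Fekete argument is applied at each fixed $\epsilon$. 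Once the Fekete limit exists for each $\epsilon$, monotonicity in $\epsilon$ gives the common limit as $\epsilon \downarrow 0$, completing the proof.
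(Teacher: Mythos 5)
The paper does not actually prove Theorem~\ref{thm:unmarked-bc-ent-well-defined}; it imports it verbatim from Theorem~1.2 of \cite{bordenave2015large}, so the honest comparison is with that source. There, both one-sided entropies are computed in terms of a single explicit depth-$h$ expression applied to the truncations $\mu_h$ and then letting $h \to \infty$; the equality of $\liminf$ and $\limsup$ falls out of this shared formula rather than from a Fekete-type superadditivity argument. Your final observation, that $\bch_d(\mu) \leq s(d)$ follows from $\mG_{n,m_n}(\mu,\epsilon) \subseteq \mG_{n,m_n}$ and the Stirling asymptotic, is correct and essentially effortless.

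The two main combinatorial steps in your sketch both have gaps. For the invariance under $m_n$: the claim that adding or removing $k_n = o(n)$ edges perturbs $U(G)$ by $O(k_n/n)$ in \LP\ distance is simply false. Deleting one edge from a star $K_{1,\Theta(n)}$ changes the depth-$2$ neighborhood of every leaf, so $\dlp(U(G),U(G'))$ can jump by $\Theta(1)$ from a single deletion, and such high-degree vertices are not ruled out by $\epsilon$-typicality alone when $\mu$ merely has finite mean degree. Your many-to-many relation pairs each $G'$ with \emph{all} $\binom{m_n'}{k_n}$ edge-deletion subgraphs and each $G$ with \emph{all} $\binom{\binom{n}{2}-m_n}{k_n}$ supergraphs; one would have to show that, starting from a typical graph, a large fraction of those choices lands back in the typical set, which requires control over \emph{which} edges are touched rather than a raw cardinality count. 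For the disjoint-union step: you correctly flagged that something here needs care, but you identify the wrong obstacle. The \LP\ convexity claim is fine; the real problem is that the inequality $|\mG_{n_1+n_2,m_1+m_2}(\mu,\epsilon)| \geq \binom{n_1+n_2}{n_1}\,|\mG_{n_1,m_1}(\mu,\epsilon)|\,|\mG_{n_2,m_2}(\mu,\epsilon)|$ neglects that the map $(S,G_1,G_2) \mapsto G_1 \sqcup G_2$ is highly non-injective. A graph with $c$ connected components can be split across $S$ and $S^c$ in a number of ways that is exponential in $c$, and $\epsilon$-typical graphs for a $\mu$ supported on locally finite trees with finite $\deg(\mu)$ have $\Theta(n)$ components, so the overcounting is of the same exponential order as the factor $\binom{n_1+n_2}{n_1}$ you are inserting. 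After correcting for it, the $\Theta(n)$ error term you need to be negligible for a Fekete argument is no longer negligible; getting it to cancel demands careful bookkeeping of the distribution of component sizes (and its relation to $d$), which is exactly the kind of work that \cite{bordenave2015large} routes around with the explicit depth-$h$ formula.
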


From Theorem~\ref{thm:unmarked-bc-entr-infinity} we conclude that unless 
$d = \deg(\mu)$, and  $\mu$
  is a unimodular measure on $\mT_*$, we have 
  $\bch_{d}(\mu) = -\infty$. 
  In view of this, for $\mu \in \mP(\mG_*)$
  with $0<\deg(\mu) < \infty$, we write $\bch(\mu)$
  for  $\bch_{\deg(\mu)}(\mu)$. Likewise, we may write
  $\bchunder(\mu)$ and $\bchover(\mu)$ for $\bchunder_{\deg(\mu)}(\mu)$ and $\bchover_{\deg(\mu)}(\mu)$, respectively. 
    Note that, unless $\mu \in \mP_u(\mT_*)$, 
    we have $\bchover(\mu) = \bchunder(\mu) = \bch(\mu) = -\infty$.
    
We are now in a position to define the  BC entropy.

\begin{definition}
  \label{def:BC-entropy-new}
  For $\mu \in \mP(\mG_*)$
  with $0 < \deg(\mu) < \infty$, the  BC entropy of $\mu$ is defined to be $\bch(\mu)$.
\end{definition}

The reader is referred to \cite{bordenave2015large} for a detailed discussion of
the BC entropy and 
some of 
its additional properties. For instance, it can be shown that
the BC entropy of a probability distribution $\mu \in \mP_u(\mT_*)$ can be
approximated in terms of the finite depth truncation of $\mu$ \cite[Theorem 1.3]{bordenave2015large}.
The reader is also referred  to
\cite{delgosha2019notion} for the generalization of this notion to the marked
regime. 

An important property of the BC entropy which we will need in our analysis is
\emph{upper semi--continuity}.


orange
\begin{lem}[Lemma 5.3 in \cite{bordenave2015large}]
\label{lem:bc-ent-simple-upp-sem-cont}
  Assume that a sequence $\mu_k \in \mP(\mG_*)$ together with $\mu \in
  \mP(\mG_*)$  are given such
that $\deg(\mu) \in (0,\infty)$, $\deg(\mu_k) \in (0,\infty)$ for $k$
sufficiently large, and $\mu_k \Rightarrow \mu$.
Then
\begin{equation*}
  \bch(\mu) \geq \limsup_{k \rightarrow \infty} \bch(\mu_k).
\end{equation*}
\end{lem}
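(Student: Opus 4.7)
The plan is to exploit two facts: the \LP distance metrizes weak convergence on the Polish space $\mP(\mG_*)$, and the BC entropy counts graphs in an \LP-typicality set. The central observation is a triangle inequality on \LP neighborhoods: whenever $\dlp(\mu_k, \mu) < \delta$, every graph $G$ with $\dlp(U(G), \mu_k) < \epsilon$ also satisfies $\dlp(U(G), \mu) < \epsilon + \delta$, so $\mG_{n,m}(\mu_k, \epsilon) \subseteq \mG_{n,m}(\mu, \epsilon+\delta)$ for every $n$ and $m$. Everything else is bookkeeping around the degree normalizations that appear in the definition of $\bch$.

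First I would reduce to the case where the limsup is finite, passing to a subsequence along which $\bch(\mu_k) \to \limsup_k \bch(\mu_k) > -\infty$. Along this subsequence Theorem~\ref{thm:unmarked-bc-entr-infinity} forces each $\mu_k$ to be a unimodular measure on $\mT_*$, and the upper bound $\bch(\mu_k) \leq s(\deg(\mu_k))$ from Theorem~\ref{thm:unmarked-bc-ent-well-defined} together with $s(d)\to -\infty$ as $d\to\infty$ keeps the degrees $d_k := \deg(\mu_k)$ bounded; passing to a further subsequence, $d_k \to d^* \in [0,\infty)$. Since $\mT_*$ is closed in $\mG_*$ and involution invariance in the form of \eqref{eq:unim-integral} is preserved under weak limits (via monotone approximation of the test functionals by bounded continuous ones supported on bounded-degree configurations), the limit $\mu$ is also supported on $\mT_*$ and is unimodular. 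The delicate remaining point is to identify $d^*$ with $d := \deg(\mu)$: Fatou's lemma gives $d \leq d^*$, and the reverse inequality will come from a uniform integrability estimate on the root degrees under $\mu_k$ that is itself forced by the lower bound $\bch(\mu_k) \geq -C$, since mass concentrating on very high-degree roots would shrink the count of typical graphs and push the BC entropy to $-\infty$.

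Next, I would fix $\epsilon > 0$, choose $k$ large enough that $\dlp(\mu_k,\mu) < \epsilon$, pick a sequence $m_n^{(k)}$ with $m_n^{(k)}/n \to d_k/2$, and apply the core inclusion to obtain $|\mG_{n,m_n^{(k)}}(\mu_k,\epsilon)| \leq |\mG_{n,m_n^{(k)}}(\mu,2\epsilon)|$. Taking logs, subtracting $m_n^{(k)}\log n$, dividing by $n$, and taking $\limsup_n$ yields $\bchover_{d_k}(\mu_k, \epsilon)|_{(m_n^{(k)})} \leq \bchover_{d_k}(\mu, 2\epsilon)|_{(m_n^{(k)})}$. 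Because $d_k \to d$ and replacing the defining edge-count sequence by one differing by $o(n)$ perturbs $U(G)$ by $o(1)$ in \LP distance (so the $\epsilon$-upper BC entropy changes at most by widening the \LP tolerance slightly and incurring a sub-exponential combinatorial factor from the edge-count adjustment), I would absorb the mismatch between $m_n^{(k)}$ and a fixed sequence $m_n$ with $m_n/n\to d/2$ into $\epsilon$. Then first taking $k \to \infty$ gives $\limsup_k \bch(\mu_k) \leq \bchover_d(\mu, 3\epsilon)$, and finally taking $\epsilon \downarrow 0$ and invoking Theorem~\ref{thm:unmarked-bc-ent-well-defined} to identify the upper BC entropy with $\bch(\mu)$ concludes $\limsup_k \bch(\mu_k) \leq \bch(\mu)$.

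The main obstacle will be the uniform integrability step: proving that the expected root degree is continuous, $d_k \to d$, along a sequence whose BC entropies remain bounded below. This is what prevents a one-line reduction via the inclusion argument, because without $d_k \to d$ one is comparing $\epsilon$-typicality sets sitting in different edge-count regimes and the counting inequalities cannot be transported across. The remaining combinatorial bookkeeping in the third step, in which one swaps between $m_n^{(k)}$ and a common $m_n$, is routine; the cost is only a sub-exponential number of ways to add or remove the $o(n)$ discrepancy in edges, which disappears after normalizing by $n$.
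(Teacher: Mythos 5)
The paper does not contain its own proof of this statement: it is cited verbatim as Lemma~5.3 of \cite{bordenave2015large}, so there is no in-paper argument to compare your sketch against. Your opening observation is nonetheless the right engine and is where Bordenave and Caputo begin as well: since $\dlp$ metrizes weak convergence and obeys the triangle inequality, $\dlp(\mu_k,\mu)<\delta$ gives $\mG_{n,m}(\mu_k,\epsilon)\subseteq\mG_{n,m}(\mu,\epsilon+\delta)$ for every $n$ and $m$. The reduction to a subsequence with $\bch(\mu_k)\to L>-\infty$, the consequent requirement that each $\mu_k$ be unimodular and supported on $\mT_*$, and the boundedness of $\deg(\mu_k)$ via $\bch(\mu_k)\leq s(\deg(\mu_k))$ together with $s(d)\to-\infty$ are all correct and routine.

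The gap is exactly the step you flag as the main obstacle, and the shortcut you propose for it does not hold up. You assert that a lower bound $\bch(\mu_k)\geq -C$ forces uniform integrability of the root degree under $\mu_k$ and therefore $\deg(\mu_k)\to\deg(\mu)$; neither implication is justified. A BC-entropy lower bound caps the \emph{means} $\deg(\mu_k)$, as you already noted, but it gives no control of the degree tail: the heuristic that ``mass on high-degree roots shrinks the count of typical graphs'' conflates heavy tails with low combinatorial freedom, and a heavy-tailed unimodular Galton--Watson law can have comfortably finite BC entropy. Weak convergence only makes $\deg(\cdot)$ lower semicontinuous, so nothing in the hypotheses rules out $\deg(\mu_k)\to d^{*}>\deg(\mu)$ along your chosen subsequence. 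When that happens, the defining sequences $m_n^{(k)}/n\to\deg(\mu_k)/2$ and $m_n/n\to\deg(\mu)/2$ differ at \emph{linear} order in $n$, not $o(n)$, so the ``routine'' absorption of the edge-count discrepancy at sub-exponential cost is unavailable. Worse, the chain $\bch_{d'}(\mu_k)\leq\bch_{d'}(\mu)$ that your inclusion delivers is vacuous once $d'\neq\deg(\mu)$, because Theorem~\ref{thm:unmarked-bc-entr-infinity} makes $\bch_{d'}(\mu)=-\infty$; to extract a contradiction with $\bch(\mu_k)\geq -C$ you would need a bound on the $\epsilon$-upper BC entropy of $\mu$ that is uniform in $\epsilon$ for $d'$ near $d^{*}$, which is precisely what you are implicitly assuming. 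Closing this requires either the uniform estimates Bordenave and Caputo carry out directly at the level of the $\epsilon$-parametrized BC entropies with a fixed normalizing degree, or a genuinely combinatorial device that transports a $\mu_k$-typical graph to a $\mu$-typical one at a prescribed larger edge count by inserting a controlled number of edges in locally sparse regions while tracking $U(\cdot)$ in \LP distance. Neither is a one-line uniform integrability claim, and neither is ``routine bookkeeping.''
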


Let $\mu \in \mP_u(\mT_*)$ be given such that $\deg(\mu) \in (0,\infty)$.
Given $[T,o] \in \mT_*$ and $k > 0$, we define $[T^k,o] \in \mT_*$ to be obtained from $[T,o]$
by removing all the edges in $T$ where the degree of at least one of their
endpoints is strictly bigger than $k$, followed by taking the connected component of
the root. Now, let $\muk \in \mP(\mT_*)$ be the law of $[T^k, o]$ when $[T,o]$ has law $\mu$.
It is easy to see that $\muk$ is unimodular. 

The following proposition is then an immediate consequence of 
Lemma~\ref{lem:bc-ent-simple-upp-sem-cont}.

\begin{prop}
  \label{prop:bc-ent-trunc-upper-sem-cont}
  Assume that $\mu \in \mP_u(\mT_*)$ is given such that $\deg(\mu) \in
  (0,\infty)$. Then, we have
  \begin{equation*}
    \limsup_{k \rightarrow \infty} \bch(\muk) \leq \bch(\mu). 
  \end{equation*}
\end{prop}

\black

\subsection{Graphons}
\label{sec:prelim-graphon}

\editstart

The theory of graphons provides a comprehensive framework to study the
asymptotics of dense graphs by introducing a limit theory for such graphs (see,
for instance, \cite{lovasz2006limits}, \cite{lovasz2007szemeredi},
\cite{borgs2008convergent}, \cite{borgs2012convergent}, \cite{lovasz2012large}). There has been some effort in adapting this theory for sparse
graphs (see, for instance, \cite{bollobas2007metrics}, \cite{borgs2019L}, \cite{borgs2018L}). Also, the problem of graphon estimation given
random graph samples has been extensively studied both in the dense regime and
in the sparse regime (see, for instance, \cite{bickel2009nonparametric},
\cite{wolfe2013nonparametric}, \cite{borgs2015private},
\cite{chatterjee2015matrix}, \cite{gao2015rate}). In this section, we review the notion of graphons in the
sparse regime. Furthermore, we review the result from \cite{borgs2015consistent} on graphon
estimation in this regime.  Here, we mainly stick to the setup and notation introduced in
\cite{borgs2015consistent}.

Assume that a probability space $(\Omega, \mF, \pi)$ is given. A graphon on this
probability space is
defined to be a
measurable function $W: \Omega \times \Omega \rightarrow \reals_+$ which is
symmetric, i.e.\ $W(x,y) = W(y,x)$ for all $x, y \in \Omega$, and is $L^1$,
i.e.\ $\snorm{W}_1 < \infty$. Here, the $L^p$ norm of a function $f : \Omega
\times \Omega \rightarrow \reals$ for $p \geq 1$ is defined as $\snorm{f}_p^p = \int_{\Omega
  \times \Omega} |f(x,y)|^p d\pi(x) d\pi(y)$. A graphon $W$ is said to be $L^p$
if $\snorm{W}_p < \infty$.
Moreover, $\snorm{W}_\infty$ is defined to be the essential supremum of $W$
with respect to the product measure $\pi \times \pi$.
We may simply say that $W$ is a
graphon on $\Omega$ when the $\sigma$-algebra $\mF$ and the probability measure
$\pi$ are clear from the context. In particular, when we refer to  a graphon $W$ as
being defined over 
$[0,1]$, it refers to a 
graphon over the probability space $[0,1]$ equipped with the standard Borel
$\sigma$-algebra and the uniform distribution, unless otherwise stated.
A simple graph $G$ on a finite vertex set $V$ naturally defines a
graphon $W$ over the probability space $V$ equipped with the uniform distribution,
defined as $W(v,w) = (A(G))_{v,w}$ for $v,w \in V$. 
Note that for each $p \ge 1$ the 
$L^p$ norm of this graphon is the same
as that of the adjacency matrix of the underlying graph $G$, as defined 
in~\eqref{eq:matrix-Lp-norm-def}.

Assume that a symmetric $n \times n$ matrix $B$ with nonnegative entries is given together with a probability  vector $\vec{p} = (p_1, \dots, p_n)$  such that $p_i \geq 0$,
$1 \leq i \leq n$, and $\sum_{i=1}^n p_i = 1$. We define the \emph{block
  graphon} $(\vec{p}, B)$ to be a graphon $W$ over the finite probability space
$[n]$ equipped with the probability distribution $\vec{p}$ such that for
$1 \leq i , j\leq n$, we have $W(i,j) = B_{i,j}$.
This generalizes the notion in the preceding paragraph of a graphon associated to a simple graph.

Now, we state the notion of equivalence for graphons (see Definition~2.5 in  \cite{borgs2015consistent}).
Given two graphons $W$ and $W'$ on probability spaces $(\Omega, \mF, \pi)$ and
$(\Omega', \mF', \pi')$, respectively, we say that $W$ and $W'$ are equivalent if there
exists a third probability space $(\Omega'', \mF'', \pi'')$ and two measure
preserving maps $\phi: \Omega \rightarrow \Omega''$ and $\phi': \Omega'
\rightarrow \Omega''$ together with a graphon $U$ on $(\Omega'', \mF'', \pi'')$,
such that for almost all $(x, y) \in \Omega \times \Omega$, with respect to the product
measure $\pi \times \pi$, we have $W(x,y) = U(\phi(x), \phi(y))$,
and similarly for almost all $(x', y') \in \Omega'$, with respect to the product
measure $\pi' \times \pi'$, we have $W'(x',y') = U(\phi'(x'), \phi'(y'))$.

For two $L^2$ graphons $W$ and $W'$, defined on  probability spaces $(\Omega,
  \mF, \pi)$ and $(\Omega', \mF', \pi')$ respectively, we define 
  \begin{equation}
    \label{eq:delta-2-def}
    \delta_2(W, W') := \inf_\nu \left( \int |W(x,y) - W'(x',y')|^2 d \nu(x,x') d\nu(y,y') \right)^{1/2},
  \end{equation}
  where the infimum is taken over all couplings $\nu$ of $\pi$ and $\pi'$, i.e.\
  $\nu$ is a probability measure over $\Omega \times \Omega'$ with marginals
  $\pi$ and $\pi'$, respectively.  In fact, $\delta_2$
    yields  a metric on the space of equivalence classes of $L^2$ graphons, with
    reference to the notion of equivalence described above (see \cite[Theorem 2.11
    and Appendix A]{borgs2015consistent} and \cite{janson2010graphons}).
Moreover, for two graphons $W$ and $W'$ on two probability spaces $(\Omega, \mF,
\pi)$ and $(\Omega', \mF', \pi')$ respectively, we define the cut norm as
\begin{equation}
  \label{eq:graphon-cut-norm-def}
  \delta_\square(W, W') := \inf_{\nu} \sup_{S, T \subseteq \Omega \times \pp{\Omega}} \left| \int_{(x,\pp{x}) \in S, (y,\pp{y})\in T} \left( W(x,y) - W'(x',y') \right) d\nu(x,x') d \nu(y,y')\right|,
\end{equation}
where the infimum is taken over all couplings $\nu$  of $\pi$ and $\pi'$, similar
to the above, and the supremum is over measurable subsets $S$ and $T$ of
$\Omega \times \pp{\Omega}$.
Note that every graphon is by definition an $L^1$ function
with respect to its underlying product measure,
hence the
cut norm is well defined. In fact,  $\delta_\square$
    yields a metric on the space of equivalence classes of graphons with
    reference to the notion of equivalence described above (see \cite[Theorem 2.11
    and Appendix A]{borgs2015consistent} and \cite{janson2010graphons}).

A graphon $W$ is said to be normalized if $\snorm{W}_1 = 1$.
Given a normalized graphon $W$ on a probability space $(\Omega, \mF, \pi)$ and
a  sequence of \emph{target densities} $\rho_n$,
i.e. strictly positive real numbers,
we define the sequence of $W$--random graphs with target density $\rho_n$ as a
sequence of random graphs $\Gn$, where $\Gn$ is defined
on the vertex set $[n]$, 
as follows.
We first generate random variables
$(X_i: i \geq 1)$ i.i.d.\ from $(\Omega, \mF, \pi)$. Then, for each $n$ and
each  pair of vertices
$1 \leq v, w \leq n$, we independently place an edge between $v$ and $w$ in $\Gn$ with
probability $\min\{1, \rho_n W(X_v, X_w)\}$. Note that the distribution of $\Gn$
is dependent on the random variables $X_1, \dots, X_n$, and the sequence $(X_i:
i \geq 1)$ is generated prior to generating $\Gn$. Consequently, the random graphs $\Gn$
defined in this procedure are 
dependent.
We denote the law of $\Gn$ in this
procedure by $\mG(n; \rho_n W)$. 
The following theorem from 
\cite{borgs2015consistent} shows that
equivalent graphons
generate identical distributions, given some conditions on the sequence
$\rho_n$.

\begin{thm}[Theorem~2.6 in \cite{borgs2015consistent}]
\label{thm:borgs-equivalent-graphons-same-graph-dist}
Let $W$ and $W'$ be graphons over $(\Omega, \mF, \pi)$ and $(\Omega', \mF',
\pi')$, respectively. Assume that $n \rho_n \rightarrow \infty$ and either
$\rho_n \max\{\snorm{W}_\infty, \snorm{W'}_\infty\} \leq 1$ or $\rho_n
\rightarrow 0$. Then
the sequences $(\mG(n; \rho_n W))_{n \geq 0}$ and $(\mG(n; \rho_n
  W))_{n \geq 0}$ are identically distributed
if and only if $W$ and $W'$ are equivalent. 
\end{thm}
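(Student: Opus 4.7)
My plan is to treat the two directions separately. For the ``if'' direction, I would use the common-space representation of equivalence: there exist $(\Omega'', \mF'', \pi'')$, a graphon $U$ on that space, and measure-preserving maps $\phi: \Omega \rightarrow \Omega''$ and $\phi': \Omega' \rightarrow \Omega''$ with $W(x,y) = U(\phi(x), \phi(y))$ for $(\pi \times \pi)$-almost every $(x,y)$, and the analogous identity for $W'$. If $(X_i)_{i \ge 1}$ are i.i.d.\ $\pi$-samples then $(\phi(X_i))_{i \ge 1}$ are i.i.d.\ $\pi''$-samples, and the edge-probability $\min\{1, \rho_n W(X_i, X_j)\}$ agrees almost surely with $\min\{1, \rho_n U(\phi(X_i), \phi(X_j))\}$. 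Since $\mG(n; \rho_n W)$ is constructed by placing conditionally independent Bernoullis with these probabilities, this identification shows that $\mG(n; \rho_n W)$ and $\mG(n; \rho_n U)$ have the same law for every $n$. The same argument applied to $W'$ via $\phi'$ identifies both sequences with the common $U$-law.

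For the converse, my plan is to recover each graphon from the law of the random graph sequence through normalized subgraph counts. Fix a finite simple graph $F$ with vertex set $[k]$ and $e = e(F)$ edges, and let $N(F, G)$ denote the number of injective graph homomorphisms from $F$ into a graph $G$. Conditioning on the latent variables and using edge independence yields
\begin{equation*}
\ev{N(F, \mG(n; \rho_n W))} = \frac{n!}{(n-k)!} \int \prod_{\{i,j\} \in E(F)} \min\{1, \rho_n W(x_i, x_j)\} \, \prod_{i=1}^k d\pi(x_i).
\end{equation*}
Dividing by $n^k \rho_n^e$ and letting $n \rightarrow \infty$, I would show the right-hand side tends to the homomorphism density $t(F, W) := \int \prod_{\{i,j\} \in E(F)} W(x_i, x_j) \prod_i d\pi(x_i)$: if $\rho_n \snorm{W}_\infty \le 1$ the truncation is vacuous and the limit is immediate from dominated convergence, while if $\rho_n \to 0$ one uses a truncation-at-level-$M$ argument (see below). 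Because the random graph laws coincide at every $n$, so do the expectations of the integer-valued $N(F, \cdot)$, forcing $t(F, W) = t(F, W')$ for every finite simple $F$. At this point I would invoke the standard characterization of graphon equivalence by matching homomorphism densities (Appendix~A of \cite{borgs2015consistent}, building on \cite{janson2010graphons}) to conclude that $W$ and $W'$ are equivalent.

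The main obstacle is handling the truncation $\min\{1, \rho_n W\}$ in the regime $\rho_n \to 0$ with $W$ merely $L^1$. For motifs with $e(F) \ge 2$ the limiting integrand $\prod W(x_i, x_j)$ need not be integrable, so one cannot push the limit inside the integral naively. The standard remedy is to introduce truncated graphons $W \wedge M$ and $W' \wedge M$, match their motif densities for each fixed $M$ by dominated convergence in $n$, then let $M \rightarrow \infty$ and use continuity of $\delta_2$ (or $\delta_\square$) under bounded truncation to lift equivalence of the truncations to equivalence of $W$ and $W'$ themselves. The hypothesis $n \rho_n \to \infty$ is exactly what makes the non-injective contributions to $N(F, \cdot)$ of order $o(n^k \rho_n^e)$, and the disjunction $\rho_n \snorm{W}_\infty \le 1$ or $\rho_n \to 0$ ensures that the truncation in the edge probability does not distort the construction in a way that could spoil the identification.
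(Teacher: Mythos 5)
The paper does not prove Theorem~\ref{thm:borgs-equivalent-graphons-same-graph-dist}; it quotes it verbatim from \cite{borgs2015consistent} with no argument, so there is no in-paper proof to compare against and I will judge your sketch on its own merits. (Incidentally, the statement as reproduced in the paper has a typo: both displayed sequences use $W$; the second should of course be $\mG(n;\rho_n W')$.)

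Your ``if'' direction is correct: pushing the i.i.d.\ latent variables through the measure-preserving common-space maps and using the a.e.\ identity $W(x,y)=U(\phi(x),\phi(y))$ identifies the joint laws of the full sequences.

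The ``only if'' direction has the right ingredients but a real gap in the unbounded case. You propose dividing $\ev{N(F,\Gn)}$ by $n^k\rho_n^{e(F)}$, sending $n\to\infty$ to produce $t(F,W)$, and then repairing potential non-integrability by truncating at a free level $M$. But matching the laws of $\Gn$ never gives you access to a free truncation level: it is dictated by the construction. For every fixed $n\ge k$ one has the exact identity
\[
\ev{N(F,\Gn)} \;=\; \frac{n!}{(n-k)!}\,\rho_n^{\,e(F)}\; t\!\left(F,\, W\wedge\tfrac{1}{\rho_n}\right),
\]
because $\min\{1,\rho_n W\}=\rho_n(W\wedge\rho_n^{-1})$, the edges are conditionally independent given the latents, and the latents are i.i.d.\ across the distinct indices $\psi(1),\dots,\psi(k)$. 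So coincidence of the two random-graph laws only yields $t(F, W\wedge\rho_n^{-1})=t(F, W'\wedge\rho_n^{-1})$ for all $F$ and all $n$ --- truncation along the particular sequence $M_n=\rho_n^{-1}$, not at arbitrary $M$. The clean fix is to use this directly: for each fixed $n$, $W\wedge M_n$ and $W'\wedge M_n$ are bounded graphons with identical injective homomorphism densities for every $F$, hence equivalent; if $\rho_n\to 0$, dominated convergence gives $\snorm{W-W\wedge M_n}_1\to 0$ and likewise for $W'$, and since $\delta_\square$ is dominated by $L^1$ distance while equivalence of $L^1$ graphons is exactly $\delta_\square=0$, the triangle inequality gives $\delta_\square(W,W')=0$. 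In the other admissible case, $\rho_n\max\{\snorm{W}_\infty,\snorm{W'}_\infty\}\le 1$, the truncation is vacuous and one reads off $t(F,W)=t(F,W')$ directly at each $n$, with both sides finite. Note that the $n\to\infty$ limit in your computation is both where the integrability trouble arises and unnecessary; and your side remark that $n\rho_n\to\infty$ controls ``non-injective contributions to $N(F,\cdot)$'' is off, since $N(F,\cdot)$ as you defined it counts only injective maps.
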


In the dense regime, graphons are defined to take values bounded by 1
(see for instance Section 7.1.\ in \cite{lovasz2012large}).
However,
in the sparse regime discussed above, this condition is relaxed, and the
graphons are allowed to be unbounded.
Instead, the sequence of target densities $\rho_n$ is introduced which
scales the graphon in order to get the desired probability.
In fact, under some conditions on the sequence $\rho_n$, if $W$ is a normalized graphon, then $W$-random graphs
have a density close to $\rho_n$, justifying the term \emph{target density}.  Moreover, under some conditions on the sequence $\rho_n$, the
sequence of $W$--random graphs converges to  $W$ with respect
to the cut
metric defined in~\eqref{eq:graphon-cut-norm-def}. These statements are made precise in the following theorem.

\begin{thm}[Theorem~2.14 in~\cite{borgs2015consistent}]
  \label{thm:ganguli-W-random-graphon-convergence}
  Let $\Gn \sim \mG(n; \rho_n W)$ be a sequence of $W$-random graphons with target density
  $\rho_n$, where $W$ is a normalized graphon over an arbitrary probability
  space, and $\rho_n$ is such that $n \rho_n \rightarrow \infty$ and either
  $\limsup_{n \rightarrow \infty} \rho_n \snorm{W}_\infty \leq 1$  or $\rho_n
  \rightarrow 0$. Then, almost surely, $\rho(\Gn) / \rho_n \rightarrow 1$ and
  \begin{equation*}
    \delta_\square \left( \frac{1}{\rho(\Gn)} \Gn, W \right) \rightarrow 0 \qquad \text{a.s.}.
  \end{equation*}
\end{thm}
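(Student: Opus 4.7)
The plan is to establish the two convergences separately and then combine them. For the density statement $\rho(\Gn)/\rho_n \to 1$ a.s., I would condition on the sample points $X_1,\ldots,X_n$. Writing $A_{ij}$ for the indicator of edge $\{i,j\}$ in $\Gn$, conditional on $(X_i)$ these are independent Bernoullis with parameters $p_{ij} := \min\{1,\rho_n W(X_i,X_j)\}$. I would then split the error into a conditional-mean term and a fluctuation term: first show $n^{-2}\sum_{i,j} p_{ij}/\rho_n \to 1$ a.s., then show $n^{-2}\sum_{i,j}(A_{ij}-p_{ij})/\rho_n \to 0$ a.s.

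For the conditional-mean term, in the case $\limsup \rho_n \snorm{W}_\infty \le 1$ the truncation is eventually vacuous and the claim reduces to the strong law for the $U$-statistic $n^{-2}\sum W(X_i,X_j) \to \snorm{W}_1 = 1$. For the case $\rho_n \to 0$, I would truncate $W$ by $W^M := W \wedge M$, apply the same $U$-statistic argument to $W^M$, and control $\snorm{W-W^M}_1 \to 0$ by dominated convergence using $W \in L^1$. For the fluctuation term, the conditional variance of $n^{-2}\sum A_{ij}$ is of order $\rho_n/n^2$ (since $\sum p_{ij} = O(n^2 \rho_n)$), so a Bernstein/Chernoff bound together with the hypothesis $n^2\rho_n \to \infty$ and the Borel--Cantelli lemma delivers a.s.\ convergence after dividing by $\rho_n$.

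The cut-metric statement is the substantive part. Let $W_n$ denote the empirical step-graphon on $[n]$ (with uniform measure) whose value on block $(i,j)$ is $W(X_i, X_j)$, and use the natural coupling $i \mapsto X_i$ throughout. I would split into (a) $\delta_\square(W_n, W) \to 0$ a.s., and (b) $\delta_\square(\rho_n^{-1}\Gn, W_n) \to 0$ a.s.; combining these with the density convergence yields the theorem. Part (a) is an a.s.\ graphon sampling convergence: after approximating $W$ in $L^1$ by a simple function with finitely many blocks, convergence on each block follows from the strong law, and the $L^1$ remainder is controlled uniformly in $n$ by Markov's inequality (this is the core of the sampling lemmas in~\cite{borgs2015consistent}). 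Part (b) is a concentration bound for the cut norm of a sparse random matrix with independent centered entries; the standard approach is a union bound over the at most $4^n$ pairs of subsets $(S,T)$ defining the cut norm, combined with Hoeffding's inequality applied to the centered indicator sum inside each rectangle. Together with $n\rho_n \to \infty$ and Borel--Cantelli, this upgrades the resulting exponential tail bound to an a.s.\ conclusion.

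The main obstacle is the interplay between the unboundedness of $W$ and the sparsity. In the case $\rho_n \to 0$ one must truncate $W$ by $M = M_n \to \infty$ chosen carefully, so that simultaneously (i) the $L^1$ truncation error $\snorm{W-W^{M_n}}_1 \to 0$, (ii) the conditional-mean graphon $\rho_n^{-1}\mathbb{E}[\Gn \mid X_1,\ldots,X_n]$ stays close in cut norm to $W_n^{M_n}$, and (iii) the concentration bound in part (b), whose rate degrades with $M_n$, remains strong enough given $n\rho_n \to \infty$. Balancing these three error terms --- the Bernoulli variance, the $L^1$ truncation, and the cut-norm concentration rate, which is effectively of order $\sqrt{M_n/(n\rho_n)}$ --- is the technical heart of the argument and accounts for the specific hypotheses of the theorem.
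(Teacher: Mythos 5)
This statement is quoted directly from the cited reference---it is Theorem~2.14 of Borgs, Chayes, Cohn and Ganguly~\cite{borgs2015consistent}---and the present paper reproduces it without proof, so there is no in-paper argument to compare against. Your sketch is consistent with the general strategy used in that source: condition on the latent variables $X_1,\ldots,X_n$; split the error into a latent-variable sampling term (showing the empirical step graphon $W_n$ approximates $W$) and a conditional concentration term (showing $\rho_n^{-1} A(\Gn)$ is close in cut norm to its conditional mean); handle the unboundedness of $W$ by truncation at a level that balances the $L^1$ truncation error against the concentration rate; bound the cut norm via a union over rectangles $(S,T)$ combined with a Bernstein-type tail bound; and upgrade to a.s.\ convergence via Borel--Cantelli using $n\rho_n \to \infty$. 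You correctly identify the two load-bearing hypotheses: $\rho_n \to 0$ (or $\limsup \rho_n \snorm{W}_\infty \le 1$) to make the edge-probability truncation $\min\{1,\rho_n W\}$ asymptotically harmless, and $n\rho_n \to \infty$ to make the exponential concentration rate beat the $4^n$-term union bound. One detail worth being careful about if you were to flesh this out: Hoeffding's inequality alone is not sharp enough in the $\rho_n \to 0$ regime, since the per-entry variance is of order $\rho_n$ rather than constant; the argument in~\cite{borgs2015consistent} relies on a Bernstein/Chernoff-type bound sensitive to this variance, as you parenthetically acknowledge, and the rate $\sqrt{M_n/(n\rho_n)}$ you quote is indeed how the truncation level enters the concentration step.
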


Note that, as we discussed above, $\Gn$ naturally defines a graphon, and $\Gn /
\rho(\Gn)$ refers to the scaled graphon corresponding to $\Gn$. Recall
from~\eqref{eq:rho-G-def} that
$\rho(\Gn)$ is the density of the graph $\Gn$, and is defined to be $2\mn /
n^2$, where $\mn$ denotes the number of edges in $\Gn$.
Theorem~\ref{thm:ganguli-W-random-graphon-convergence} above
  implies that for $\Gn \sim \mG(n; \rho_n W)$, with probability one  $\rho(\Gn) / \rho_n \rightarrow 1$ or equivalently
  $2\mn / (\rho_n n^2) \rightarrow 1$. Recall that since we want to study sparse graphs, we
  want $\mn$ to scale much slower than $n^2$. For this to happen, since $2\mn /
  (\rho_n n^2) \rightarrow 1$ almost surely, from this
  point forward, we assume that $\rho_n \rightarrow 0$. Moreover, motivated by
  Theorem~\ref{thm:ganguli-W-random-graphon-convergence} above, from this
  point forward we assume that $n \rho_n \rightarrow \infty$. 
Roughly speaking, the condition $n \rho_n \rightarrow \infty$ ensures
that the 
graphs in the 
\black
sequence of $W$--random graphs  are not \emph{too sparse}. 
More precisely, 
\black
since $n \rho_n \approx 2\mn / n$, 
the condition $n \rho_n \rightarrow \infty$ roughly means that the average
degree in $\Gn$ goes to infinity.
Therefore, this sparse graphon framework allows us to study 
\emph{heavy-tailed}
sparse graphs, 
as opposed to 
the local weak convergence theory, 
which requires the existence of a well-defined 
\black
limit degree
distribution at the root.


Borgs et al.\ have addressed the problem of estimating the graphon $W$ upon
observing 
a sample, for each vertex size $n$, of 
\black
a sequence of $W$-random graphs \cite{borgs2015consistent}. 
They study three methods for doing
so, namely least squares estimation, cut norm estimation, and degree sorting.
Here, we only review the least square estimation
method, 
\black
and refer the reader to
\cite{borgs2015consistent} for further reading. We will later employ this
estimation method in our universal compression scheme.

\subsubsection{Least Squares Algorithm}
\label{sec:prelim-least-square-algo}

In this section, we explain the least squares algorithm for graphon estimation from
\cite{borgs2015consistent} and state its properties. First, we need to introduce
some notation. Given integers $n$ and $k$, a function $\pi: [n] \rightarrow
[k]$, and a $k \times k$ matrix $B$, we define $B^\pi$ as an $n \times n$ matrix
such that $(B^\pi)_{i,j} = B_{\pi(i), \pi(j)}$ for $1 \leq i, j \leq n$.

\emph{Least Squares Algorithm:} Given a graph $G$ on $n$ vertices, and a parameter $\beta$ such that $1 \leq
\beta \leq n$, let
\begin{equation}
  \label{eq:least-sq-alg-min-1}
  (\hat{\pi}, \hat{B}) \in \argmin_{k, \pi, B \in \reals_+^{k \times k}} \snorm{A(G) - B^\pi}_2,
\end{equation}
where the minimization is over 
natural numbers
$k$, $k \times k$ matrices $B$ with
nonnegative entries, and functions $\pi : [n] \rightarrow
[k]$ such that for all $1 \leq i \leq k$, either $\pi^{-1}(\{i\}) =
\emptyset$ or $|\pi^{-1}(\{i\})| \geq \lceil  n / \beta \rceil$. Therefore, 
it suffices to restrict $k$ to be at most
$\lfloor  n / \lceil n / \beta \rceil \rfloor \leq
\lfloor \beta \rfloor$. In
other words, we may rewrite~\eqref{eq:least-sq-alg-min-1} equivalently as
follows
\begin{equation}
  \label{eq:least-sq-alg-better}
  (\hat{\pi}, \hat{B}) \in \argmin_{\pi: [n] \rightarrow [ \beta ], B \in \reals_+^{[ \beta ] \times [\beta ]}} \snorm{A(G) - B^\pi}_2,
\end{equation}
where the minimization is taken over $[\beta] \times [\beta]$ matrices $B$ and
$\pi: [n] \rightarrow [\beta]$ such that for $1 \leq i \leq \lfloor \beta
\rfloor$, either $\pi^{-1}(\{i\}) = \emptyset$ or $|\pi^{-1}(\{i\})| \geq
\lceil  n / \beta \rceil$. (Recall that $[\beta]$ is a shorthand for $[\lfloor\beta \rfloor]$.) Assume that we have solved the optimization
in~\eqref{eq:least-sq-alg-better}, and $\hat{\pi}$ and $\hat{B}$ 
are 
arbitrary
optimizers.
Then, we define the output of the least squares estimation algorithm
to be the block graphon $(\vec{\hat{p}}, \hat{B})$ where the probability vector $\vec{\hat{p}} = (\hat{p}_1, \dots,
\hat{p}_{\lfloor \beta \rfloor})$ is such that $\hat{p}_i = |{\hat{\pi}}^{-1}(\{i\})| / n$
for $1 \leq i \leq \lfloor \beta \rfloor$.

Note that the discrete optimization problem in~\eqref{eq:least-sq-alg-better}
requires searching over all mappings $\pi : [n] \rightarrow [\beta]$. However,
since the objective is an $L^2$ norm, by fixing $\pi$ the objective is
minimized by choosing $B$ such that for $1 \leq i,j \leq \lfloor \beta \rfloor$
such that $\pi^{-1}(\{i\})$ and $\pi^{-1}(\{j\})$ are not empty we have 
  \begin{equation}
    \label{eq:Bij-optimal-block-average}
    B_{i,j} = \frac{1}{|\pi^{-1}(i)| |\pi^{-1}(j)|} \sum_{u \in \pi^{-1}(i), v \in \pi^{-1}(j)} (A(G))_{u,v}.
  \end{equation}
Note that the choice of $B_{i,j}$ for $i$ and $j$ such that either
$\pi^{-1}(\{i\}) = \emptyset$ or $\pi^{-1}(\{j\}) = \emptyset$ has no effect on
the objective. In other words, 
for fixed $\pi$
the optimizer  $B$ must take the
average of the adjacency matrix over the blocks defined by $\pi$.

It can be shown that with an appropriate choice of the parameter $\beta$, the
above algorithm yields a consistent graphon estimation scheme,  in the following
sense.

\begin{thm}[Theorem~3.1 in \cite{borgs2015consistent}]
  \label{thm:ganguli-consistent-result}
  Let $W$ be an $L^2$ graphon, normalized so that $\norm{W}_1 = 1$, and let
  $\Gn$ be a sequence of $W$-random graphs with target densities $(\rho_n: n
  \geq 1)$.
  Furthermore,  let $\hW
  = (\vec{\hat{p}}, \hat{B})$ be the output of the above least squares
  algorithm for $\Gn$ with parameter $\beta_n$.
  Then, if $\rho_n$ and $\beta_n$ are such that as $n \rightarrow \infty$, we
  have $\rho_n \rightarrow 0$, $n \rho_n \rightarrow \infty$, $\beta_n
  \rightarrow \infty$, and $\beta_n^2 \log \beta_n = o(n \rho_n)$, then, with
  probability one, we have
  \begin{equation*}
    \lim_{n \rightarrow \infty} \delta_2\left( \frac{1}{\rho_n} \hW, W \right) = 0.
  \end{equation*}
\end{thm}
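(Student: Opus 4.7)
The plan is to combine three standard tools: approximation of $W$ by a block graphon with $k \ll \beta_n$ blocks, optimality of the least-squares minimizer evaluated against this oracle block graphon, and concentration of the empirical edge counts. Since $W \in L^2$, for every $\epsilon > 0$ I can find an integer $k$ and an equipartition of $(\Omega, \mF, \pi)$ into parts of measure $1/k$ such that the step-function graphon $W_k$, whose entries equal the averages of $W$ over pairs of parts, satisfies $\snorm{W - W_k}_2 < \epsilon$. Fix such a $k$ and let $\pi^\star: [n] \to [k]$ be the map sending $i$ to the index of the part containing $X_i$; by the strong law of large numbers, the blocks $(\pi^\star)^{-1}(\{j\})$ all have size $n/k\,(1 + o(1))$ almost surely, so for $n$ large $\pi^\star$ is feasible in \eqref{eq:least-sq-alg-better} (since $k \leq \beta_n$ eventually).

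Next, by the optimality of $(\hat\pi, \hat B)$ in \eqref{eq:least-sq-alg-better},
\begin{equation*}
\snorm{A(\Gn) - \hat B^{\hat\pi}}_2^2 \leq \snorm{A(\Gn) - \rho_n (W_k)^{\pi^\star}}_2^2.
\end{equation*}
Writing $A(\Gn) = E + Z$ where $E := \mathbb{E}[A(\Gn) \mid X_1,\ldots,X_n]$ and $Z$ is the mean-zero noise, and noting that $\snorm{E - \rho_n W^{\pi^\star}}_2 = o(\rho_n)$ almost surely (the clipping set $\{\rho_n W > 1\}$ has vanishing $L^2$ contribution as $\rho_n \to 0$, by dominated convergence on $W^2$), both sides can be expanded around $\rho_n W^{\pi^\star}$. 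Bernstein's inequality applied to the $\binom{n}{2}$ independent Bernoulli coordinates of $Z$ gives $\snorm{Z}_2^2 = \rho_n (1 + o(1))$ almost surely. Cancelling the common $\snorm{Z}_2^2$ terms, and using the LLN-type identity $\snorm{(W_k)^{\pi^\star} - W^{\pi^\star}}_2 \to \snorm{W_k - W}_2 < \epsilon$, yields, up to lower-order errors,
\begin{equation*}
\snorm{\hat B^{\hat\pi} - \rho_n W^{\pi^\star}}_2^2 \leq \rho_n^2 \epsilon^2 + 2\langle Z,\, \hat B^{\hat\pi} - \rho_n W^{\pi^\star}\rangle + o(\rho_n^2).
\end{equation*}
Dividing by $\rho_n^2$ and using that $W^{\pi^\star}$ (viewed as a graphon on $[n]$) is an empirical realization of $W$, which converges almost surely to $W$ in $\delta_2$, together with $\delta_2 \leq \snorm{\cdot}_2$ under the natural identity coupling on $[n]$, would give $\delta_2(\hW/\rho_n, W) \leq 2\epsilon + o(1)$ almost surely, provided the cross term is controlled.

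The main obstacle, and where the hypothesis $\beta_n^2 \log \beta_n = o(n\rho_n)$ enters, is a uniform bound on the cross term $\langle Z,\, \hat B^{\hat\pi} - \rho_n W^{\pi^\star}\rangle$ over all admissible $(\hat\pi, \hat B)$. Naively $\hat\pi$ ranges over a set of size at most $\beta_n^n$ and $\hat B$ ranges over a $\beta_n^2$-dimensional cone, so a crude union bound over partitions alone costs $n \log \beta_n$ in the exponent, which is too expensive. The standard remedy, following the argument of Borgs et al.\ \cite{borgs2015consistent}, is to discretize the entries of $\hat B$ on an $\epsilon$-net of size polynomial in $\beta_n$, rewrite the cross term as a sum over the $\beta_n^2$ block pairs of inner products of $Z$ with rescaled block indicators, apply Bernstein blockwise, and then apply a refined union bound that exploits the block-size constraint $|\hat\pi^{-1}(i)| \geq \lceil n/\beta_n\rceil$ to reduce the effective entropy of the partition class. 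The required signal-to-noise ratio works out precisely to $\beta_n^2 \log \beta_n = o(n\rho_n)$. Once this uniform concentration is in hand, letting $\epsilon \downarrow 0$ along a countable sequence and applying Borel--Cantelli promotes the in-probability statement to almost sure convergence, concluding the proof.
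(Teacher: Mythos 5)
The paper does not prove this theorem at all: it is cited directly as Theorem~3.1 of \cite{borgs2015consistent}, and the only work the paper does (in the remark that follows) is to verify that its parametrization of the least squares algorithm by $\beta$ and the reference's parametrization by $\kappa$ are equivalent, via $\kappa_n = \lceil n/\beta_n\rceil / n$, and that the hypothesis $\beta_n\to\infty$, $\beta_n^2\log\beta_n = o(n\rho_n)$ translates to $\kappa_n\to 0$, $\kappa_n^{-2}\log\kappa_n^{-1} = o(n\rho_n)$ as required there. So you have taken a genuinely different route: rather than reduce to the cited result, you are attempting to re-derive it from scratch. That is far more ambitious than what the paper does, and the comparison is really with the proof inside \cite{borgs2015consistent}, not with anything in this paper.

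As a re-derivation, your sketch identifies the right ingredients (oracle block graphon $W_k$, basic inequality for the least-squares estimator, Bernstein for the noise, entropy accounting for the partition and block matrix), but it has a real gap, and a misidentified obstacle, at precisely the step where the hypothesis $\beta_n^2\log\beta_n = o(n\rho_n)$ is supposed to enter. You assert that a crude union bound over the at most $\beta_n^n$ admissible partitions ``costs $n\log\beta_n$ in the exponent, which is too expensive.'' It is not: with the normalization of \eqref{eq:matrix-Lp-norm-def}, the fluctuation scale of $\langle Z, M\rangle$ for a fixed $M$ with $\snorm{M}_2 = \delta$ is of order $\delta\sqrt{\rho_n/n^2}\cdot n = \delta\sqrt{\rho_n/n}$, and paying $n\log\beta_n$ in the union bound inflates this to $\delta\sqrt{\rho_n\log\beta_n/n}$; the resulting self-bounding inequality $\delta^2\lesssim \rho_n^2\epsilon^2 + \delta\sqrt{\rho_n\log\beta_n/n}$ closes as soon as $\log\beta_n = o(n\rho_n)$, which is strictly weaker than the stated hypothesis. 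So the partition enumeration is not the bottleneck, and the ``refined union bound exploiting the block-size constraint'' that you gesture at as the remedy is a remedy for the wrong problem. The discretization of $\hat B$ over a $\beta_n^2$-dimensional grid, along with the truncation arguments needed because $W$ is only $L^2$ and can be unbounded, is where the $\beta_n^2$ actually shows up, and this is exactly the part your sketch leaves unaddressed. As written, the proposal has not located the step that actually requires the hypothesis, so it would not yield a correct proof without substantial further work.
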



\begin{rem}
  In order to simplify the expressions, in the above we have stated a reparametrization of the
  least squares algorithm presented in~\cite{borgs2015consistent}. 
  We can show that Theorem~\ref{thm:ganguli-consistent-result} is a
  consequence of Theorem~3.1 in \cite{borgs2015consistent}.
  \black
  In~\cite{borgs2015consistent}, the least squares algorithm is explained based
  on the optimization problem~\eqref{eq:least-sq-alg-min-1}, with the only
  difference  that the constraint $|\pi^{-1}(\{i\})| \geq \lceil n / \beta
  \rceil$ for nonempty classes 
  is replaced by the constraint $|\pi^{-1}(\{i\})|
  \geq \lfloor n \kappa \rfloor$,
  \black
  where $\kappa$
  is a parameter satisfying $\kappa \in [1/n, 1]$. 
  Given $\beta$ and
  the optimization problem in
  \black
  the above form~\eqref{eq:least-sq-alg-min-1}, 
  one can choose $\kappa$ to be $\lceil n / \beta \rceil / n$ 
  to obtain an optimization problem in the form  presented in
  \cite{borgs2015consistent}. Also, given $\kappa$ and the optimization
  in the form presented in \cite{borgs2015consistent}, one can choose $\beta = n
  / \lfloor  n \kappa \rfloor$ to obtain an optimization of the
  form~\eqref{eq:least-sq-alg-min-1}. Furthermore, in the setup of
  Theorem~\ref{thm:ganguli-consistent-result} above, given the sequence
  $\beta_n$ 
  such that $\beta_n \rightarrow \infty$ 
  \black
  and $\beta_n^2 \log \beta_n = o(n
  \rho_n)$, 
  if we choose
  \black
  $\kappa_n = \lceil n / \beta_n \rceil / n$, 
  since $1 / \beta_n \le \kappa_n \le 1 / \beta_n + 1/n$
  \black
  we have $\kappa_n
  \rightarrow 0$ and $\kappa_n^{-2} \log \kappa_n^{-1} = o(n \rho_n)$, which is
  precisely the assumption required by Theorem~3.1 in
  \cite{borgs2015consistent}.
\end{rem}

\subsection{A Universal Lossless Compression scheme adapted to the Local Weak
  Convergence Framework}
\label{sec:prelim-baby-compression}

In this section we review the compression scheme introduced by the authors
in~\cite{delgosha2020universal}.
This scheme yields a universal lossless compression for a sequence of
sparse graphs converging to a limit in the local weak sense, without knowing
a priori what that limit is. 
The compression scheme in~\cite{delgosha2020universal}
allows for the graphs to be marked, i.e.\ vertices and edges in the graph can
carry additional marks on top of the connectivity structure of the graph.
However, since we do not include marks in our discussion here, we present the
results of \cite{delgosha2020universal} reduced to our unmarked setting.

More precisely, we introduce a compression map $\fnlwc: \mG_n \rightarrow
\{0,1\}^* - \emptyset$ which assigns 
a codeword to each graph on
the vertex set $[n]$ in a prefix-free way.
Here, the superscript $\mathsf{lwc}$ stands for ``local
weak convergence'', and is assigned to distinguish it from the compression map
we will introduce later in Section~\ref{sec:coding-scheme}. This compression
scheme is
lossless, i.e.\ there exists a decompression map $\gnlwc$ such that $\gnlwc
\circ \fnlwc$ is the identity map. Moreover, the compression scheme is
universal in the sense that  given a sequence of graphs $\Gn$ converging to a
limit  $\mu \in \mP_u(\mG_*)$ in the local weak sense where $\deg(\mu) \in (0,\infty)$, without a priori knowledge of $\mu$, we  have
\begin{equation*}
  \limsup_{n \rightarrow \infty} \frac{\nat(\fnlwc(\Gn)) - \mn \log n}{n} \leq \bch(\mu).
\end{equation*}
Here, $\mn$ is the number of the edges in $\Gn$, and normalization is done in a
way consistent with the definition of the BC entropy in  Section~\ref{sec:bc-ent}.

It can be shown that the  compression scheme described below satisfies the above properties.

\begin{enumerate}
\item Given the input graph $\Gn$, define the set
  \begin{equation*}
    Y_n := \{v \in [n]: \deg_{\Gn}(v) > \log \log n \text{ or } \deg_{\Gn}(w) > \log \log n \text{ for some } w \sim_{\Gn} v\}.
  \end{equation*}
  \item Let $\tGn$ be the subgraph of $\Gn$ obtained by removing all the edges
    $(v,w)$ in $\Gn$ such that either $\deg_{\Gn}(v) > \log \log n$, or
    $\deg_{\Gn}(w) > \log \log n$.
  \item We first compress $\tGn$ as follows.
    \begin{enumerate}
    \item Let $h_n := \sqrt{\log \log n}$ and define $\mA_n \subset \mG_*$ to be
      the set of $[G,o] \in \mG_*$ with
      depth at most 
      $h_n$
      where the degree of 
      each vertex in $G$
      is at most
      $\log \log n$. 
      Note that for all $v \in [n]$, we have
      $[\tGn,v]_{h_n} \in \mA_n$.
      \item For each $[G,o] \in \mA_n$, encode the number of vertices $v$ in
        $\tGn$ 
        such that $[\tGn,v]_{h_n} = [G,o]$ 
        using $1 + \lfloor \log_2 n
        \rfloor$ bits. In other words, we encode the appearance frequency   of each
        possible local neighborhood in $\tGn$.
        \item Let $W_n$ denote the set of graphs $G' \in \mG_n$ with degrees
          bounded by $\log \log n$ such that for all
          $[G,o] \in \mA_n$, we have
          \begin{equation*}
            |\{v \in [n]: [G',v]_{h_n} = [G,o]  \}| = |\{v \in [n]: [\tGn,v]_{h_n} = [G,o]  \}|.
          \end{equation*}
          In other words, $W_n$ is the set of graphs with the same 
          appearance frequency 
          of local
          structures as in $\tGn$. Note that $\tGn \in W_n$, and we can encode
          $\tGn$ by specifying it among the elements of $W_n$ using $1 + \lfloor
          \log_2 |W_n| \rfloor$ bits.
        \end{enumerate}
      \item Now, it remains to encode those edges present in $\Gn$ but not in
        $\tGn$, i.e.\ those edges which were removed during the truncation step
        2 above. Let $Z_n$ denote the set of such edges.  Note that, by
        definition, for every edge $(v,w) \in Z_n$ we have $v \in Y_n$ and $w
        \in Y_n$. 
        We first encode the set $Y_n$ 
        by encoding $|Y_n|$  using $1 + \lfloor \log_2 n
        \rfloor$ bits, and then 
        encoding $Y_n$ among the set of all subsets of
        $[n]$ with the same size using $1 + \lfloor \log_2
        \binom{n}{|Y_n|} \rfloor$ bits.
      \item Let $\mn$ and $\tmn$ denote the number of edges in $\Gn$ and $\tGn$
        respectively. Therefore, the set $Z_n$ consists of $\mn - \tmn$ many
        edges, and both of the endpoints of each such edge are in $Y_n$.
        Thereby, having encoded $Y_n$ in the previous steps, we can encode the
        $\mn - \tmn$ remaining edges in $Z_n$ using $1+ \lfloor \log_2
        \binom{\binom{|Y_n|}{2}}{\mn - \tmn} \rfloor$ bits.
\end{enumerate}

It can be shown that the above compression scheme is indeed universal in the
following sense. 

\begin{thm}[Theorem~3 in \cite{delgosha2020universal}]
  \label{thm:prelim-baby-compression}
  Given any unimodular $\mu \in \mP_u(\mT_*)$ such that $\deg(\mu) \in
  (0,\infty)$, and a sequence of graphs $\Gn$ converging to $\mu$ in the local
  weak sense, we have 
  \begin{equation*}
    \limsup_{n \rightarrow \infty} \frac{\nat(\fnlwc(\Gn)) - \mn \log n}{n} \leq \bch(\mu),
  \end{equation*}
where $\mn$ denotes the number of edges in $\Gn$.
\end{thm}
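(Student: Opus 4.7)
The plan is to decompose $\nat(\fnlwc(\Gn))$ into the contributions of the five encoding steps described above, subtract $\mn \log n$, divide by $n$, and show that the dominant term (coming from encoding $\tGn$ among elements of $W_n$) is bounded by $\bch(\mu)$ via the definition of the upper BC entropy, while all other contributions are $o(n)$. Write $\tmn$ for the number of edges of $\tGn$ and $|Z_n| = \mn - \tmn$. First, the four $1+\lfloor \log_2 n \rfloor$ overheads contribute $O(\log n) = o(n)$. The set $\mA_n$ consists of rooted graphs of depth at most $h_n = \sqrt{\log\log n}$ with all degrees bounded by $\log\log n$, so $|\mA_n| \le \exp(O((\log\log n)^{h_n+1}))$ is sub-polynomial in $\log n$, whence $|\mA_n|\log n/n = o(1)$.

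Next I would show $|Y_n|/n,|Z_n|/n \to 0$. By local weak convergence the empirical degree distribution at a uniformly random vertex of $\Gn$ converges weakly to the law of $\deg_G(o)$ under $\mu$; combined with $2\mn/n \to \deg(\mu) < \infty$, this yields uniform integrability of the degrees under $U(\Gn)$. Hence
\[
\frac{|Z_n|}{n} \;\le\; \frac{1}{n}\sum_{v:\,\deg_{\Gn}(v) > \log\log n} \deg_{\Gn}(v) \;\longrightarrow\; 0.
\]
Since each vertex of $Y_n$ either has degree exceeding $\log\log n$ (so contributes more than $\log\log n$ incident edges to $Z_n$) or is adjacent to such a vertex (so is an endpoint of an edge in $Z_n$), we obtain $|Y_n| = O(|Z_n|)$ and in particular $|Y_n|/n \to 0$. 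Therefore $\log\binom{n}{|Y_n|}/n \le H_b(|Y_n|/n) \to 0$. For the $Z_n$-encoding term, the elementary bound $\binom{a}{b} \le (ea/b)^b$ yields
\[
\log\binom{\binom{|Y_n|}{2}}{|Z_n|} - |Z_n|\log n \;\le\; |Z_n|\left[1 - \log 2 + \log\frac{|Y_n|^2}{n|Z_n|}\right],
\]
and since $|Y_n|^2/(n|Z_n|) = O(|Z_n|/n) \to 0$, the right side is eventually nonpositive and in any case $o(n)$.

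For the main term I would argue that every $G' \in W_n$ has exactly $\tmn$ edges and shares its depth-$h_n$ empirical profile with $\tGn$. Using $|Y_n|/n \to 0$ and $h_n \to \infty$, I would show $\dlp(U(\tGn), \mu) \to 0$ by coupling $\tGn$ with $\Gn$ outside the $h_n$-neighborhood of $Y_n$ and invoking the local weak convergence of $\Gn$ to $\mu$ on that complement. Consequently, for any fixed $\epsilon > 0$ the coincidence of depth-$h_n$ profiles forces $W_n \subseteq \mG_{n,\tmn}(\mu,\epsilon)$ for all sufficiently large $n$. Since $\tmn/n = \mn/n - |Z_n|/n \to \deg(\mu)/2$, the definition of the upper BC entropy gives
\[
\limsup_{n \to \infty} \frac{\log|W_n| - \tmn \log n}{n} \;\le\; \bchover(\mu,\epsilon).
\]
Combining this with the auxiliary bounds and using $\mn = \tmn + |Z_n|$ yields
\[
\limsup_{n \to \infty} \frac{\nat(\fnlwc(\Gn)) - \mn \log n}{n} \;\le\; \bchover(\mu,\epsilon);
\]
letting $\epsilon \downarrow 0$ and invoking $\bchover(\mu) = \bch(\mu)$ from Theorem~\ref{thm:unmarked-bc-ent-well-defined} completes the argument.

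The main obstacle is establishing $\dlp(U(\tGn),\mu) \to 0$ under the degree truncation at the varying threshold $\log\log n$. The delicate point is to control, for each fixed auxiliary depth $h$, the fraction of vertices whose depth-$h$ neighborhood in $\Gn$ meets $Y_n$; this requires more than $|Y_n|/n \to 0$, because a single high-degree vertex in $Y_n$ can contaminate many depth-$h$ neighborhoods in $\Gn$. The route I would take is to use uniform integrability to control not only $|Y_n|/n$ but also the total size of depth-$h$ balls around $Y_n$ relative to $n$, and then to let $h$ grow with $n$ at a rate compatible with $h_n$ so that the coupling error and the metric-tail error of $d_*$ vanish simultaneously.
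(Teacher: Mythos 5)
The paper does not actually re-prove this theorem; it is quoted verbatim (Theorem~3 of \cite{delgosha2020universal}) and the present manuscript only describes the scheme $\fnlwc$ and cites Lemmas~6--8 of that reference (see the proof of Theorem~\ref{thm:dual-main-an} and the proof of Lemma~\ref{lem:prelim-baby-bounded-graph-codeword-bound}). Your decomposition into the five encoding steps and the use of $\bchover_d(\mu,\epsilon)$ via $W_n\subseteq\mG_{n,\tmn}(\mu,\epsilon)$ is the right skeleton and matches what those lemmas are for. However, the step where you assert ``combined with $2\mn/n\to\deg(\mu)<\infty$, this yields uniform integrability'' is a genuine gap: local weak convergence only controls bounded continuous functionals of $U(\Gn)$, and the degree functional is unbounded, so $U(\Gn)\Rightarrow\mu$ does \emph{not} imply $\tfrac{2\mn}{n}\to\deg(\mu)$ (only $\liminf\tfrac{2\mn}{n}\ge\deg(\mu)$, by Fatou). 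The theorem as stated carries no such hypothesis, and with it absent your argument for $|Z_n|/n\to 0$, and with it the conclusion $\tmn/n\to\deg(\mu)/2$, collapses.

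The correct route, which is also where the cited lemmas point, avoids uniform integrability entirely. First, $|Y_n|/n\to 0$ follows from local weak convergence alone: for each fixed $\Delta$ the indicator $[G,o]\mapsto\one{\deg_G(o)>\Delta\text{ or some neighbor has degree}>\Delta}$ is bounded continuous, so $U(\Gn)$ of this event tends to the corresponding $\mu$-probability, which vanishes as $\Delta\to\infty$; since $\log\log n\to\infty$ this gives $|Y_n|/n\to 0$ (this is the content of Lemma~8 of the cited source, applied with the $\log\log n$ threshold). Second, the $Z_n$-term does not need $|Z_n|/n\to 0$: since every vertex of $Y_n$ meets at least one edge of $Z_n$, $|Y_n|\le 2|Z_n|$, and if $|Y_n|^2/(n|Z_n|)>2/e$ this combined with $|Y_n|\le 2|Z_n|$ forces $|Y_n|>n/e$, contradicting $|Y_n|/n\to 0$; so the bracketed term in your estimate is eventually nonpositive regardless of $|Z_n|/n$. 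Third, your ``main obstacle'' is simpler than you fear: there is no need to grow $h$ with $n$, because for each fixed auxiliary depth $h$ and threshold $\Delta$ the event ``some vertex at distance $\le h$ of $o$ has degree $>\Delta$'' is depth-$(h{+}1)$-local, hence bounded continuous; this is exactly what the ``Lemma~6'' argument uses to give $\dlp(U(\tGn),\mu)\to 0$. Finally, since $\tmn/n$ may not converge, you should pass to subsequences and note that on any subsequence with $\tmn/n\to d'/2$ you get a bound by $\bchover_{d'}(\mu)$, which equals $\bch(\mu)$ when $d'=\deg(\mu)$ and equals $-\infty\le\bch(\mu)$ otherwise by Theorem~\ref{thm:unmarked-bc-entr-infinity}; your ``letting $\epsilon\downarrow 0$'' step tacitly assumed $\tmn/n\to\deg(\mu)/2$, which again would require the unavailable uniform integrability.
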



Together with the following converse result, this implies that the BC entropy is
indeed the correct information-theoretic limit for compression on a per-edge basis
in the local weak convergence framework.
\black

\begin{thm}[Theorem 4 in\cite{delgosha2020universal}]
  \label{thm:baby-converse}
  Assume that a lossless  compression scheme $((f_n, g_n): n \geq 1)$ is given. Fix
  some unimodular $\mu \in \mP_u(\mT_*)$ such that $\deg(\mu) \in (0,\infty)$
  and $\bch(\mu) > -\infty$. Then there exists a sequence of random graphs
  $(\Gn: n \geq 1)$ defined on a joint probability space such that $\Gn$
 converges a.s.\ to $\mu$ in the local weak sense and 
  \begin{equation*}
    \liminf_{n \rightarrow \infty} \frac{\nat(f_n(\Gn)) - \mn \log n}{n} \geq \bch(\mu) \qquad \text{a.s.},
  \end{equation*}
  where $\mn$ denotes the number of edges in $\Gn$. 
\end{thm}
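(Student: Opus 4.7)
The plan is to build $\Gn$ by sampling uniformly from an $\epsilon_n$-typical set $\mG_{n,\mn}(\mu,\epsilon_n)$, where $\mn$ is a fixed deterministic sequence with $\mn/n \to \deg(\mu)/2$ and $\epsilon_n \downarrow 0$ is chosen carefully. Local weak convergence $U(\Gn) \Rightarrow \mu$ is then automatic, since by construction $\dlp(U(\Gn),\mu) < \epsilon_n$ deterministically for every realization. The codeword lower bound comes from the basic pigeonhole count that, since $f_n$ is injective, there are at most $2 e^{L}$ graphs with $\nat(f_n(G)) \le L$; combined with a lower bound on $|\mG_{n,\mn}(\mu,\epsilon_n)|$, this forces long codewords on a large fraction of the typical set.

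For the diagonal choice of $\epsilon_n$ I use that $\bch(\mu)=\bchunder(\mu)=\lim_{\epsilon\downarrow 0}\bchunder(\mu,\epsilon)$ together with the standing assumption $\bch(\mu)>-\infty$. Pick $\epsilon_k \downarrow 0$ with $\bchunder(\mu,\epsilon_k) > \bch(\mu)-1/k$. Since $\bchunder(\mu,\epsilon_k)$ is a $\liminf$, there exists an increasing sequence $N_k$ such that for every $n \ge N_k$,
\begin{equation*}
|\mG_{n,\mn}(\mu,\epsilon_k)| \;\ge\; \exp\!\Big(\mn \log n + n\big(\bch(\mu) - 2/k\big)\Big).
\end{equation*}
Let $k(n)$ be the largest $k$ with $N_k \le n$ and $k \le \log\log n$; then $k(n) \to \infty$. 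Define $\Gn$ to be uniform on $\mG_{n,\mn}(\mu,\epsilon_{k(n)})$, independently across $n$, on a common probability space.

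For the codeword length bound, set $L_n := \mn \log n + n\big(\bch(\mu) - 3/k(n)\big)$. Injectivity of $f_n$ together with the displayed cardinality bound gives
\begin{equation*}
\Pr\!\big(\nat(f_n(\Gn)) \le L_n\big) \;\le\; \frac{2\, e^{L_n}}{|\mG_{n,\mn}(\mu,\epsilon_{k(n)})|} \;\le\; 2\, e^{-n/k(n)}.
\end{equation*}
Because $k(n) \le \log\log n$, the sequence $e^{-n/k(n)}$ decays faster than any polynomial and is therefore summable in $n$. The Borel--Cantelli lemma then yields $\nat(f_n(\Gn)) > L_n$ for all but finitely many $n$, almost surely. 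Dividing by $n$ and using $k(n) \to \infty$ delivers $\liminf_n (\nat(f_n(\Gn)) - \mn \log n)/n \ge \bch(\mu)$ a.s.

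The main obstacle is the simultaneous balancing of three competing requirements on $k(n)$: it must tend to infinity so the limiting bound matches $\bch(\mu)$, grow slowly enough that $N_{k(n)} \le n$ eventually so the lower bound on the typical-set size kicks in at the chosen level $\epsilon_{k(n)}$, and slowly enough that $n/k(n)$ dominates $\log n$ so the Borel--Cantelli tails are summable. The hypothesis $\bch(\mu)>-\infty$ guarantees that genuine sequences $\epsilon_k$ and $N_k$ satisfying the first two conditions exist, and the mild cap $k(n) \le \log\log n$ handles the third comfortably.
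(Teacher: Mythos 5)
Your proof is correct and follows the standard typical-set converse: sample uniformly from $\mG_{n,\mn}(\mu,\epsilon_{k(n)})$ so that local weak convergence to $\mu$ holds deterministically, use the pigeonhole count that a lossless code has at most $2e^L$ codewords of length at most $L$ nats, and apply a diagonal choice of $\epsilon_{k(n)} \downarrow 0$ together with Borel--Cantelli (with the cap $k(n) \le \log\log n$ ensuring summability). This is essentially the same route taken in the proof of Theorem~4 of \cite{delgosha2020universal}, which this paper cites rather than reproves.
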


The following bound will also be useful for our future analysis.

\begin{lem}
  \label{lem:prelim-baby-bounded-graph-codeword-bound}
  Assume that  
  the sequence $(\Gn)_{n \geq 1}$ is given where,
    for all $n \geq 1$,
    $\Gn \in \mG_n$ 
    and
    $\deg_{\Gn}(v) \leq \log
  \log n$ for all $v \in [n]$. Then we have the following bound on the codeword
  length associated to $\Gn$:
  \begin{equation*}
    \nat(\fnlwc(\Gn)) \leq  \log \binom{\binom{n}{2}}{\mn} + o(n),
  \end{equation*}
  where $\mn$ denotes the number of edges in $\Gn$, and the $o(n)$ term does not
  depend on
    $(\Gn)_{n \geq 1}$.
\end{lem}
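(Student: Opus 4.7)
The plan is to trace each step of the compression scheme $\fnlwc$ described in Section~\ref{sec:prelim-baby-compression} under the hypothesis that every vertex of $\Gn$ has degree at most $\log \log n$, and show that all contributions to the codeword length other than a single combinatorial term are uniformly $o(n)$ in the sequence $(\Gn)_{n \geq 1}$.

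First, the degree hypothesis immediately implies that the set $Y_n$ constructed in step~1 is empty, hence $\tGn = \Gn$ (so $\tmn = \mn$) and the set $Z_n$ of edges handled in step~5 is also empty. Therefore steps~4 and~5 together encode an empty set at a cost of at most $O(\log n)$ bits, with a bound depending only on $n$.

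Next, for step~3(b) I would bound the cardinality of $\mA_n$. Every rooted class $[G,o] \in \mA_n$ has depth at most $h_n = \sqrt{\log \log n}$ and maximum degree at most $\log \log n$, so the underlying graph has at most
\[
V_n := \sum_{i=0}^{h_n} (\log \log n)^i \leq 2 (\log \log n)^{h_n+1}
\]
vertices. Counting rooted labeled graphs on $V_n$ vertices then gives $|\mA_n| \leq V_n \cdot 2^{\binom{V_n}{2}}$, so $\log |\mA_n| = O(V_n^2)$. Since
\[
V_n^2 \leq \exp\bigl(O(\sqrt{\log \log n}\,\log \log \log n)\bigr) = \exp(o(\log \log n)) = (\log n)^{o(1)},
\]
where I use $\log \log \log n = o(\sqrt{\log \log n})$, this yields $|\mA_n| \leq n^{o(1)}$ and therefore $|\mA_n|(1 + \log_2 n) = n^{o(1)} = o(n)$, with the bound depending only on $n$ and not on $\Gn$.

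Finally, for step~3(c) I would observe that since each $G' \in W_n$ has the same depth-one rooted-neighborhood frequencies as $\tGn = \Gn$, it has the same degree sequence, and in particular exactly $\mn$ edges. Thus $W_n \subseteq \mG_{n, \mn}$, so $|W_n| \leq \binom{\binom{n}{2}}{\mn}$ and the contribution of step~3(c) in nats is at most $\log \binom{\binom{n}{2}}{\mn}$ plus an additive constant. Summing the contributions from all the steps of $\fnlwc$ gives
\[
\nat(\fnlwc(\Gn)) \leq \log \binom{\binom{n}{2}}{\mn} + o(n),
\]
with the $o(n)$ term depending only on $n$, as required. The only mildly delicate point in the argument is the asymptotic estimate $V_n^2 = (\log n)^{o(1)}$, which is the reason for the specific choice of $h_n = \sqrt{\log \log n}$; everything else is bookkeeping.
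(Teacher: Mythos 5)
Your proof is correct and follows essentially the same route as the paper's: identify $Y_n = Z_n = \emptyset$, control the $|\mA_n|(1+\lfloor\log_2 n\rfloor)$ header cost in step~3(b), and bound $|W_n|$ by $\binom{\binom{n}{2}}{\mn}$ via the observation that the prescribed local-neighborhood frequencies force every $G' \in W_n$ to have exactly $\mn$ edges. The only difference is that where the paper cites Lemma~7 of \cite{delgosha2020universal} for the estimate $|\mA_n| = o(n/\log n)$, you derive an explicit self-contained bound $|\mA_n| \leq n^{o(1)}$ from the depth and degree caps; this is a welcome but inessential variation, and your informal phrasing that $W_n$-members ``have the same degree sequence'' should be read as having the same empirical degree distribution, which is exactly what the paper's double-sum computation establishes.
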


\begin{proof}
Following the compression scheme that we discussed above, since all degrees in $\Gn$ are bounded by $\log \log n$, we have $Y_n =
\emptyset$ and $\tGn = \Gn$. Therefore, the number of bits required to encode
the set $Y_n$ in step 4 is bounded by $2 + \lfloor \log_2 n \rfloor \leq 2 +
\log_2 n$. Also,  since $Z_n = \emptyset$, step 5 does not contribute any bits to
the output codeword. Now, we find a bound on the number of bits required to
encode $\tGn = \Gn$ in step 3. Note that we use $|\mA_n| (1 + \lfloor \log_2 n
\rfloor)$ bits in part 3b. But from Lemma~7
in \cite{delgosha2020universal}, we have $|\mA_n| = o(n / \log n)$. Thereby, $|\mA_n| (1 + \lfloor \log_2 n
\rfloor) = o(n)$.
Observe that since the sequence of sets $(\mA_n)_{n \ge 1}$ does not depend on $(\Gn)_{n \ge 1}$,
this $o(n)$ term also does not depend on 
$(\Gn)_{n \ge 1}$.
Now, we find a bound on the size
of the set $W_n$ defined in step 3c. 
We first claim that all the graphs in $W_n$ have                                                                           %
precisely $\mn$ edges. To see this, take $\pp{G} \in W_n$ and note that since all the                                %
degrees in $\pp{G}$ are bounded by $\log \log n$, we have $[\pp{G},v]_{h_n} \in \mA_n$                               %
for all $v \in [n]$. Consequently, we have                                                                           %
\begin{align*}                                                                                                       %
\sum_{v=1}^n \deg_{G'}(v) &= \sum_{[G,o] \in \mA_n} \sum_{v: [\pp{G},v]_{h_n} = [G,o]} \deg_{\pp{G}}(v) \\         %
                           &= \sum_{[G,o] \in \mA_n} \sum_{v: [\pp{G},v]_{h_n} = [G,o]} \deg_{G}(o) \\              %
                            &\stackrel{(a)}{=} \sum_{[G,o] \in \mA_n} \sum_{v: [\Gn,v]_{h_n} = [G,o]} \deg_{G}(o) \\ %
                            &=  \sum_{[G,o] \in \mA_n} \sum_{v: [\Gn,v]_{h_n} = [G,o]} \deg_{\Gn}(o) \\              %
&= \sum_{v=1}^n \deg_{\Gn}(v).                                                                                       %
\end{align*}                                                                                                         %
where $(a)$ uses the fact that by definition, $|\{v: [\Gn,v]_{h_n} = [G,o]\}| =                                      %
|\{v: [G',v]_{h_n} = [G,o]\}|$ for all $[G,o] \in \mA_n$. This means that the                                        %
number of edges in $\pp{G}$ is precisely $\mn$.     
\black
As a result, we have  $|W_n| \leq
\binom{\binom{n}{2}}{\mn}$. Hence, the number of bits we use in step 3c in order
to specify $\tGn$ among the graphs in $W_n$ is at most $1 + \log_2 \binom{\binom{n}{2}}{\mn}$.
Putting all the above together and multiplying by $\log 2$ to convert bits to
nats, we get
\begin{equation*}
  \nat(\fnlwc(\Gn)) \leq 2 + \log n + 1 + \log\binom{\binom{n}{2}}{\mn} + o(n) \leq \log\binom{\binom{n}{2}}{\mn} + o(n).
\end{equation*}
This completes the proof.
\end{proof}

\editfinish

\section{A Notion of Entropy for the Sparse Graphon Framework}
\label{sec:graphon-entropy-new}

\editstart

In the dense regime, the asymptotic behavior of the entropy of dense
graph ensembles generated by a graphon has been extensively studied in the
literature (see, for instance, \cite{janson2010graphons}). We first briefly review this notion before
focusing on the sparse regime.

To remain consistent with the notation we defined
in Section~\ref{sec:prelim-graphon}, let $W:[0,1] \times [0,1] \rightarrow
[0,1]$ be a graphon with values bounded by 1, but not necessarily normalized.
Also, consider the sequence of $W$-random graphs $G_n \sim \mG(n; \rho_n W)$
where the target density $\rho_n$ is set to 1 for all $n$. This yields a
sequence of dense graphs. It can be shown that the entropy of this  sequence of random graphs
has the following asymptotic behavior:
\begin{equation}
  \label{eq:dense-graphon-entropy}
  \lim_{n \rightarrow \infty}\frac{H(G_n)}{\binom{n}{2}} = \int_{[0,1] \times [0,1]} H_b(W(x,y)) dx dy,
\end{equation}
where $H_b(x) := -x \log x - (1-x) \log(1-x)$ for $x \in [0,1]$ (see, for
instance, \cite[Theorem D.5]{janson2010graphons}). In the following, we study the
analog of this question in the sparse regime, i.e.\ we study the asymptotic behavior
of the entropy of 
a sequence of 
\black
$W$-random graphs when $W$ is not necessarily bounded, but is
an $L^2$ graphon, and the sequence of target densities $\rho_n $ is such that $\rho_n
\rightarrow 0$ and $n \rho_n \rightarrow \infty$.
To the best of our knowledge, this is the first instance of such 
an
\black
analysis.

As we will see, in the sparse regime the asymptotic behavior of the entropy is
quite different from~\eqref{eq:dense-graphon-entropy}. For one thing, since $W$
no longer necessarily takes values in 
$[0,1]$,
\black
the right hand side of~\eqref{eq:dense-graphon-entropy} is no
longer meaningful. In fact, the following definition turns out to be useful for
our analysis in the sparse regime.

\begin{definition}
  \label{def:ent-graphon}
  For  an $L^2$ graphon $W$ over a probability space $(\Omega, \mF, \pi)$, we
  define $\Ent(W)$ as follows
  \begin{equation}
    \label{eq:ent-w-def}
    \Ent(W) := \int W(x,y) \log W(x,y) d\pi(x) d\pi(y) - \left( \int W(x,y) d\pi(x) d\pi(y) \right) \log \left( \int W(x,y) d\pi(x) d\pi(y) \right).
  \end{equation}
Here, as usual, we have $0 \log 0 = 0$.
\end{definition}

Viewing $W$ as a nonnegative random variable on the space $\Omega \times \Omega$
equipped with the product measure $\pi \times \pi$, we may write
\begin{equation*}
  \Ent(W) = \ev{W \log W} - \ev{W} \log \ev{W}.
\end{equation*}
This is in fact the so called \emph{entropy functional} associated to $W$
(see, for instance,  \cite[page 96]{boucheron2013concentration}).
Note that when $W$ is a normalized graphon, we have $\ev{W} = 1$, and  $\Ent(W) =
\ev{W \log W}$. In fact, every normalized graphon $W$ corresponds to a
probability measure $\nu$  on $\Omega \times \Omega$ which is defined through
the relation $\frac{d\nu}{d (\pi
  \times \pi)}= W$.
With this, for such a normalized graphon we may write
\begin{equation}
  \label{eq:normalized-graphon-ent-KL-div}
  \Ent(W) = D(\nu \Vert \pi \times \pi).
\end{equation}

Before studying the asymptotics of the entropy of 
a sequence of
\black
$W$-random graphs, 
we state  some properties for this entropy in the following
Theorem~\ref{thm:graphon-entropy-props}, whose proof  is given in Appendix~\ref{sec:app-graphon-ent-properties}.

\begin{thm}
  \label{thm:graphon-entropy-props}
  Assume that $W$ is an $L^2$ graphon on a probability space $(\Omega, \mF,
  \pi)$. Then the following hold for the 
  notion of entropy above:
  \black
  \begin{enumerate}
  \item 
  $\Ent(W)$ is well defined and $\Ent(W) <
    \infty$. 
    \black \label{thm-ge-well}
  \item $\Ent(W) \geq 0$. \label{thm-ge-Ent-pos}
  \item For $\alpha>0 $ we have $\Ent(\alpha W) = \alpha \Ent(W)$. \label{thm-ge-scale}
  \item Assume that a sequence $W_n$ of $L^2$ graphons over  
  $(\Omega_n, \mF_n, \pi_n)$ 
  is given
  \black
  such that $\delta_2(W_n, W) \rightarrow
  0$ as $n \rightarrow \infty$. Then we have $\Ent(W_n) \rightarrow \Ent(W)$ as
  $n \rightarrow \infty$. \label{thm-ge-L2-conv}
  \end{enumerate}
\end{thm}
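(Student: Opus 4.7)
For part (1), since $W \ge 0$ the bound $x \log x \ge -1/e$ for $x \ge 0$ gives an integrable lower bound on $W \log W$, while the elementary inequality $\log x \le x$ for $x \ge 1$ yields $|W \log W| \le 1/e + W^2$ pointwise. Since $W \in L^2$ on a probability space, both $\int W\,d(\pi\times\pi)$ and $\int W \log W\,d(\pi \times \pi)$ are finite, and $\Ent(W)$ is well defined. For part (2), I would apply Jensen's inequality to the convex function $\phi(x) = x \log x$ on $[0,\infty)$ with respect to the probability measure $\pi \times \pi$ to obtain $\int \phi(W)\,d(\pi\times\pi) \ge \phi\!\left(\int W\,d(\pi\times\pi)\right)$, which is exactly $\Ent(W) \ge 0$. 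Part (3) is a direct algebraic expansion: writing $\log(\alpha W) = \log \alpha + \log W$, the $\log \alpha$ contributions in the two terms of $\Ent(\alpha W)$ cancel, leaving a factor $\alpha$ multiplying $\Ent(W)$.

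For part (4), the plan is to reduce $\delta_2$-convergence to ordinary $L^2$-convergence on a common probability space via couplings, and then to invoke Vitali's convergence theorem. I first observe that $\Ent$ depends only on the equivalence class of the graphon: if $\phi : \Omega' \to \Omega''$ is measure preserving and $W(x',y') = U(\phi(x'),\phi(y'))$, then change of variables (twice) yields $\int W\,d(\pi'\times\pi') = \int U\,d(\pi''\times\pi'')$ and the same identity with $x \log x$ in place of $x$, so $\Ent(W) = \Ent(U)$. Given couplings $\nu_n$ of $\pi$ and $\pi_n$ achieving $\int |W_n(x,y) - W(x',y')|^2\,d\nu_n(x',x)d\nu_n(y',y) \to 0$, I would let $\hat W_n$ and $\hat W$ denote the pullbacks of $W_n$ and $W$ to $(\Omega_n \times \Omega)^2$ via the two coordinate projections, which are measure preserving with respect to $\nu_n$. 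Then $\Ent(\hat W_n) = \Ent(W_n)$, $\Ent(\hat W) = \Ent(W)$, and $\|\hat W_n - \hat W\|_{L^2(\nu_n \otimes \nu_n)} \to 0$ by construction.

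It therefore suffices to show the following: if $U_n$ and $U$ are nonnegative measurable functions on a common probability space and $U_n \to U$ in $L^2$, then $\Ent(U_n) \to \Ent(U)$. Convergence of the term $(\int U_n)\log \int U_n$ is immediate from $\int U_n \to \int U$ and the continuity of $x \log x$. For $\int U_n \log U_n$, I would pass to an arbitrary subsequence; along a further subsequence $U_n \to U$ almost everywhere, and by Cauchy--Schwarz $\int |U_n^2 - U^2| \le \|U_n - U\|_2\,\|U_n + U\|_2 \to 0$, so $\{U_n^2\}$ is convergent in $L^1$ and in particular uniformly integrable. Combined with the domination $|x \log x| \le 1/e + x^2$, this yields uniform integrability of $\{U_n \log U_n\}$, and Vitali's convergence theorem then delivers $\int U_n \log U_n \to \int U \log U$ along the sub-subsequence. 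Since every subsequence admits such a further subsequence with the same limit, the full sequence converges. The hard part is precisely this uniform integrability step: $x \log x$ grows super-linearly and no $L^\infty$ bound on the $U_n$ is available, so using the $L^2$ hypothesis to control the tails of $U_n \log U_n$ via $U_n^2$ is the essential point.
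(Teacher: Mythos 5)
Parts (1)--(3) of your argument are correct and essentially identical to the paper's: the bound $|x\log x|\le 1/e + x^2$ for well-definedness, Jensen for nonnegativity, and direct algebra for homogeneity.

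Part (4) contains a genuine gap in the reduction step. You define the pullbacks $\hat W_n$ and $\hat W$ on $(\Omega_n\times\Omega)^2$ equipped with $\nu_n\otimes\nu_n$, and then say it ``therefore suffices'' to prove the lemma: if $U_n\to U$ in $L^2$ on a \emph{common} probability space then $\Ent(U_n)\to\Ent(U)$. But the coupling space $(\Omega_n\times\Omega,\,\nu_n)$ changes with $n$, and so does the pullback you call $\hat W$: it is really $\hat W^{(n)} = W\circ\mathrm{pr}_\Omega$ regarded as a function on the $n$-th coupling space. There is no single probability space carrying all of $\hat W_n$ and $\hat W^{(n)}$ simultaneously, so the sequence does not fit the hypotheses of your lemma. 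This in turn invalidates the finishing argument: ``pass to a subsequence along which $U_n\to U$ a.e.'' and Vitali's theorem both presuppose one fixed measure space, and almost-everywhere convergence of a sequence of functions on pairwise different measure spaces is not a meaningful notion. (The fixed-space lemma itself, and your Vitali/uniform-integrability proof of it, are fine; the problem is that nothing in your construction produces such a fixed-space sequence.)

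Two repairs are possible. One could in principle construct a single probability space carrying a simultaneous coupling of $\pi$ with \emph{all} the $\pi_n$ (iterated gluing plus Kolmogorov extension, after first invoking the fact that every $L^2$ graphon is equivalent to one over $[0,1]$), at which point your lemma applies verbatim; but this adds real machinery and you would need to spell it out. The paper avoids this entirely by working directly on the $n$-dependent coupling space and establishing the convergence $\int \hat W_n\log\hat W_n\,d(\nu_n\otimes\nu_n) - \int \hat W^{(n)}\log\hat W^{(n)}\,d(\nu_n\otimes\nu_n)\to 0$ via pointwise estimates: on $\{\max(\hat W_n,\hat W^{(n)})\ge 1\}$ it uses Lemma~\ref{lem:xlogx}, namely $|a\log a - b\log b|\le |a-b|(1+\max\{a,b\})$, together with Cauchy--Schwarz and the boundedness of $\|W_n\|_2$; on the complement it combines a Chebyshev bound (from $\|\hat W_n-\hat W^{(n)}\|_{L^2}\to 0$) with uniform continuity of $x\log x$ on $[0,1]$. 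That argument needs no a.e.\ convergence and no fixed underlying space, which is exactly what makes it work here. Where you would use Vitali, you need this kind of explicit two-region estimate instead.
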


In the following Proposition~\ref{prop:graphon-entropy-asymptotics}, we discuss the relation between the entropy of a normalized
graphon $W$ and the asymptotic behavior of the entropy of 
a sequence of
\black
$W$-random graphs.
The proof of Proposition~\ref{prop:graphon-entropy-asymptotics} will be given in Appendix~\ref{sec:app-graphon-random-graph-asymp-entropy}.

\begin{prop}
  \label{prop:graphon-entropy-asymptotics}
  Assume that $W$ is a normalized $L^2$ graphon over $(\Omega, \mF, \pi)$. Also, assume that $\Gn \sim \mG(n; \rho_n W)$
  is a sequence of $W$-random graphs with target density $\rho_n$ such that 
  $\rho_n \rightarrow 0$ and
  $n \rho_n \rightarrow \infty$.
  \black
  Then,  with $\barmn :=
  \binom{n}{2} \rho_n$,  we have 
  \begin{equation*}
    \lim_{n \rightarrow \infty} \frac{H(\Gn) - \barmn \log \frac{1}{\rho_n}}{\barmn} = 1 - \Ent(W).
  \end{equation*}
\end{prop}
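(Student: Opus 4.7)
The plan is to establish the claim by matching an asymptotic lower and upper bound on $H(\Gn)$.

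For the lower bound, I would use $H(\Gn) \geq H(\Gn \mid X^n)$, where $X^n = (X_1, \ldots, X_n)$ are the latent samples from $\pi$. Conditional on $X^n$ the edge indicators are independent Bernoullis, so
\[
H(\Gn \mid X^n) = \binom{n}{2} \, E\!\left[ H_b\!\left( \min\{1, \rho_n W(X_1, X_2)\} \right) \right].
\]
Setting $Y := W(X_1, X_2)$, which satisfies $E[Y] = \snorm{W}_1 = 1$ and $E[Y^2] = \snorm{W}_2^2 < \infty$, I would expand $H_b(\rho_n Y) = -\rho_n Y \log(\rho_n Y) - (1 - \rho_n Y)\log(1 - \rho_n Y)$ on $\{\rho_n Y \leq 1\}$, then divide by $\rho_n$ and subtract $\log(1/\rho_n)$ to cancel the divergent leading term. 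The residual reduces to showing $E[Y\log Y \, 1_{\rho_n Y \leq 1}] \to E[Y\log Y] = \Ent(W)$ by dominated convergence (controlled by $Y^2$), and that the second logarithmic piece converges to $E[Y] = 1$ via $-(1-x)\log(1-x)/x \to 1$ as $x \to 0$. Markov's inequality $P(\rho_n Y > 1) \leq \rho_n^2 \snorm{W}_2^2$ absorbs both the truncation error and the missing mass from $\{\rho_n Y > 1\}$ (where $H_b(1) = 0$). This yields $\liminf (H(\Gn) - \barmn \log(1/\rho_n))/\barmn \geq 1 - \Ent(W)$.

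For the upper bound the naive chain rule $H(\Gn) \leq H(X^n) + H(\Gn \mid X^n)$ is useless since $H(X^n)$ is typically infinite. Instead I would coarsen the latent variables. For each $K$, fix a finite measurable partition $\mathcal{P}_K$ of $\Omega$ such that the block graphon $W_K := E[W \mid \mathcal{P}_K \otimes \mathcal{P}_K]$ (the partition average of $W$) satisfies $\snorm{W_K - W}_2 \to 0$ as $K \to \infty$; a dyadic or martingale refinement works generally. Each $W_K$ is automatically normalized and $\delta_2(W_K, W) \leq \snorm{W_K - W}_2 \to 0$. Writing $\hat X_v \in [K]$ for the block of $X_v$ and using the bound that joint entropy is at most the sum of marginal entropies,
\[
H(\Gn) \leq H(\hat X^n) + H(\Gn \mid \hat X^n) \leq n \log K + \sum_{v < w} E\!\left[ H_b\!\left( q^{(n)}_{\hat X_v, \hat X_w} \right) \right],
\]
where $q^{(n)}_{a,b} := E[\min\{1,\rho_n W(X_1,X_2)\} \mid \hat X_1 = a, \hat X_2 = b]$ is the marginal edge probability given block labels.

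Since $n \rho_n \to \infty$, we have $n \log K = o(\barmn)$ for every fixed $K$. Because $q^{(n)}_{a,b}/\rho_n \to W_K(a,b)$ as $\rho_n \to 0$ (with $O(\rho_n)$ error by the same Markov estimate), applying the identical $H_b$ expansion block-wise to the finite sum over $(a,b) \in [K]^2$ gives
\[
\limsup_{n \to \infty} \frac{H(\Gn) - \barmn \log(1/\rho_n)}{\barmn} \leq 1 - \Ent(W_K).
\]
Letting $K \to \infty$ and invoking the $\delta_2$-continuity of $\Ent$ from part~4 of Theorem~\ref{thm:graphon-entropy-props} yields $\Ent(W_K) \to \Ent(W)$, matching the lower bound. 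The main technical obstacle is the unboundedness of $W$: controlling the tail event $\{\rho_n W(X_1,X_2) > 1\}$ and passing limits through expectations of $W \log W$ require the $L^2$ hypothesis, which is exactly what makes Markov's inequality and dominated convergence apply uniformly in $n$.
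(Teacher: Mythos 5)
Your proposal is correct and follows essentially the same route as the paper: lower bound via $H(\Gn)\ge H(\Gn\mid X^n)$ with the binary-entropy expansion and $L^2$-controlled tail/truncation, upper bound via coarsening to a finite block partition, $H(\Gn)\le n\log K + H(\Gn\mid \hat X^n)$, sending $n\to\infty$ to get $1-\Ent(W_K)$, then $K\to\infty$ using $\delta_2$-continuity of $\Ent$. The only small deviation is that you work with a martingale/partition refinement directly on $\Omega$ rather than first invoking (as the paper does) equivalence with a graphon on $[0,1]$ and citing an external $\delta_2$-convergence lemma for dyadic step approximations, but this is a cosmetic difference, not a substantive one.
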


Recall from 
Theorem~\ref{thm:ganguli-W-random-graphon-convergence} that we have
\black
$\rho(\Gn) / \rho_n \rightarrow 1$ a.s.. Therefore, if $\mn$ denotes the number
of edges in $\Gn$, this implies that $\mn / \barmn \rightarrow 1$ a.s.. 
We also have $\ev{\mn} / \barmn \rightarrow 1$.
To see this, note that
\begin{equation}
  \label{eq:ev-mn-mbarn}
\begin{aligned}
  \ev{\mn} &= \ev{\ev{\mn|X_{[1:n]}}} \\
           &= \ev{\sum_{1 \leq i < j \leq n} (\rho_nW(X_i,X_j)\wedge 1)}\\
           &= \rho_n \binom{n}{2} \ev{W \wedge \frac{1}{\rho_n}} \\
           &= \barmn \ev{W \wedge \frac{1}{\rho_n}},
         \end{aligned}
       \end{equation}
and, since $\rho_n \rightarrow 0$ as $n \rightarrow \infty$, we have  $\ev{W \wedge
  1/\rho_n} \rightarrow \ev{W} = 1$, which implies that $\ev{\mn} / \barmn
\rightarrow 1$ as $n \rightarrow \infty$.
 Motivated by this, we
can think of $\barmn$ as, roughly speaking, the ``nominal'' number of edges in
$\Gn$.


\begin{rem}
  Note that unlike the asymptotic in the dense regime as 
in~\eqref{eq:dense-graphon-entropy}, in the sparse regime of 
Proposition~\ref{prop:graphon-entropy-asymptotics} above the entropy of $\Gn$
has a leading term which is $\barmn \log 1 /\rho_n$, and $\Ent(W)$  
appears at the second order term. 
\end{rem}

\begin{rem}
  From part~\ref{thm-ge-Ent-pos} in Theorem~\ref{thm:graphon-entropy-props}, we have $1 - \Ent(W) \leq 1$. Also, $1 - \Ent(W) = 1$ when $W =1$
  almost everywhere. This means that, by fixing the sequence of target densities
  $\rho_n$, among the normalized $L^2$ graphons the constant graphon $W=1$, which corresponds to
  the measure $\pi \times \pi$ on $\Omega \times \Omega$, has the maximum
  asymptotic entropy rate. Moreover, comparing 
  to~\eqref{eq:normalized-graphon-ent-KL-div}, for a normalized $L^2$ graphon
  $W$ the amount by which the  asymptotic entropy rate
  deviates from this maximum value is precisely the  divergence
  between the measure corresponding to $W$ and the product measure $\pi \times \pi$ on
  $\Omega \times \Omega$, which corresponds to the constant graphon with value 1.
\end{rem}

\editfinish

\section{Problem statement and main results}
\label{sec:statement-and-results}

\editstart

In this section we formalize the problem of finding 
an optimal
\black
universal compression
scheme which is capable of compressing 
a sequence of 
\black
sparse graphs
which is
\black
either convergent in the local weak sense 
as we discussed in Section~\ref{sec:prelim-lwc},
or is
\black
generated as a sequence of $W$-random graphs as we discussed in Section~\ref{sec:prelim-graphon},
the compression being information-theoretically optimal on a per-edge basis.
\black

More precisely, for each integer $n$, we want to design a compression map $f_n :
\mG_n \rightarrow \{0,1\}^* - \emptyset$ which assigns a prefix--free codeword
to every graph on the vertex set $[n]$, as well as a decompression map $g_n$, such
that $g_n \circ f_n$ is the identity map, i.e.\ we have lossless compression. In
addition to this, we want this compression scheme to be 
information-theoretically
\black
\emph{optimal}.
More precisely, assume that
we have a sequence of random graphs $\Gn$ where either $\Gn$ converges in the
local weak sense to a unimodular measure $\mu \in \mP_u(\mT_*)$ with probability
one, or $\Gn$  is a sequence of $W$-random graphs with target densities
$\rho_n$ for a normalized $L^2$ graphon $W$. However, the encoder does not know which of the two cases holds, nor
does it know the limit objects $\mu$ or $W$ in each case, or even the sequence of
target densities $\rho_n$
in the latter case. 
\black
Nonetheless, we want the compression scheme to be
universally optimal. 
Motivated by our discussion of the BC entropy in
Section~\ref{sec:bc-ent}, this means that if $\Gn$ converges in the local weak sense to $\mu
\in \mP_u(\mT_*)$ with probability one, then we want 
\begin{equation}
  \label{eq:goal-lwc-optimal}
  \limsup_{n \rightarrow \infty} \frac{\nat(f_n(\Gn)) - \mn \log n}{n} \leq \bch(\mu) \qquad \qquad \text{a.s.}.
\end{equation}
    Here, $\mn$ denotes the number of edges in $\Gn$, and the normalization of
    the codeword length is performed in a way consistent with the definition of
    the BC entropy in Section~\ref{sec:bc-ent}.
    Moreover,  motivated by the discussion of the notion of entropy for the
    sparse graphon framework  in
    Section~\ref{sec:graphon-entropy-new}, and in particular
    Proposition~\ref{prop:graphon-entropy-asymptotics} therein, 
if $\Gn \sim \mG(n; \rho_n W)$ for a normalized $L^2$ graphon $W$ and a sequence
of target densities $\rho_n$ with $\rho_n \rightarrow 0$ and $n \rho_n
\rightarrow \infty$, then we want 
  \begin{equation}
    \label{eq:goal-graphon-optimal}
    \limsup_{n \rightarrow \infty} \frac{\nat(f_n(\Gn)) - \barmn \log \frac{1}{\rho_n}}{\barmn} \leq 1 - \Ent(W) \qquad \qquad \text{a.s.},
  \end{equation}
where $\barmn := \binom{n}{2} \rho_n$. 

  Note that in this setup the encoder  only observes the graph
  realization $\Gn$ and not the whole sequence $(\Gn: n \geq 1)$. Moreover, as
  we discussed above, the
  encoder does not a priori know from which of the two ensemble types the
  realization  $\Gn$ comes from, nor does it know the limit objects for each of
  the two sequences of ensembles.

  We address this problem by introducing a compression scheme, and will further
  discuss a converse result. Our compression scheme employs an \emph{splitting}
  method. More precisely, given a graph realization $\Gn \in \mG_n$ as an input to the encoder, we
  choose a splitting parameter $\Delta_n$, and split $\Gn$ into two graphs denoted by
  $\Gn_{\Delta_n}$ and $\Gn_*$. These two graphs are both on the vertex set $[n]$, and
  each edge in $\Gn$ appears in precisely one of them. More precisely,
  $\Gn_{\Delta_n}$ consists of  those edges $(v,w)$ from $\Gn$ such that 
  each of their endpoints has degree 
  at  most $\Delta_n$. 
  \black
  We then define $\Gn_*$ to
  include 
  those edges in $\Gn$ which do not appear in $\Gn_{\Delta_n}$, i.e.\ those
  edges where the degree of at least one of their endpoints is bigger than
  $\Delta_n$. We then encode each of these two graphs separately, where the
  details are provided in  Section~\ref{sec:coding-scheme}. Roughly speaking, the splitting
  parameter is chosen so that when $\Gn$ is
  coming from a local weak convergence  ensemble $\Gn_{\Delta_n}$ contains most
  of the edges in $\Gn$,
  while when  $\Gn$ is coming from a sparse graphon
  ensemble, $\Gn_*$ contains most of the edges in $\Gn$. In order to emphasize the
  dependence of the compression and the decompression maps on the parameter
  $\Delta_n$ we denote these mappings by $\fnDn$ and $\gnDn$, respectively.

We will explain the details of this compression scheme in
Section~\ref{sec:coding-scheme}. However, in the following, we state how the
choice of the parameter $\Delta_n$ affects the asymptotic normalized codeword length
associated to our compression method in each of the 
two (local weak convergence and sparse graphon)
\black
regimes.
We do this in Propositions~\ref{prop:lwc-part} and \ref{prop:graphon-optimality}
below.
Although in the sequel we will mainly fix the splitting parameter $\Delta_n$
prior to observing the realization $\Gn$, the analysis in
Propositions~\ref{prop:lwc-part} and \ref{prop:graphon-optimality} is general in
the sense that $\Delta_n$ can be chosen after observing $\Gn$.
First, in Proposition~\ref{prop:lwc-part} below, we study the local weak
convergence scenario. Note that although in this setting we assume that the
sequence of random graphs $\Gn$ is convergent in the local weak sense, it is
not necessarily the case  that the sequence of truncated graphs $\Gn_{\Delta_n}$ also
converges to the same limit. In fact, the following proposition states that the
asymptotic behavior of the codeword length depends on the local weak limit of
this truncated sequence $\Gn_{\Delta_n}$, if such a limit exists. We define
$R_n$ to be the set of vertices $1 \leq v \leq n$ in $\Gn$ such that either
$\deg_{\Gn}(v) > \Delta_n$ or $\deg_{\Gn}(w) > \Delta_n$ for some $w \sim_{\Gn}
v $. The proof of Proposition~\ref{prop:lwc-part} below is given in Section~\ref{sec:lwc-analysis}.

\begin{prop}
  \label{prop:lwc-part}
  Assume that $\mu \in \mP_u(\mT_*)$ is given such that $\deg(\mu) \in
  (0,\infty)$. Furthermore, assume that  $\Gn$  is a sequence of random graphs converging to $\mu$ in the local weak
  sense, i.e.\ $U(\Gn) \Rightarrow \mu$ a.s..
  If
  \black
  the parameter $\Delta_n$ is
  chosen so that $U(\Gn_{\Delta_n}) \Rightarrow \nu$ a.s.\  for some $\nu \in
  \mP_u(\mT_*)$ with $\deg(\nu) \in (0,\infty)$, and $|R_n| / n \rightarrow
  \eta$ a.s.\ for some $\eta \geq 0$, then, with probability one, we have 
  \begin{equation}
    \label{eq:prop-lwc-part}
    \limsup_{n \rightarrow \infty} \frac{\nat(\fnDn(\Gn)) - \mn \log n}{n} \leq \bch(\nu) +  (25 + e/2) \eta + H_b(\eta),
  \end{equation}
  where $\mn$ denotes the number of edges in $\Gn$, 
  and $\fnDn$
    refers to the compression scheme of Section~\ref{sec:coding-scheme}.
\end{prop}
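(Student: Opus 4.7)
The plan is to analyze $\nat(\fnDn(\Gn))$ as a sum of contributions corresponding to the three natural parts produced by the splitting encoder: the bounded--degree subgraph $\Gndelta$, the ``high--degree set'' $R_n$, and the edges of $\Gnstar$ (whose both endpoints necessarily lie in $R_n$). Each piece is bounded separately, the three bounds are added, $\mn \log n$ is subtracted, and then the limsup is taken using the a.s.\ convergences in the hypothesis.

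For the bounded--degree piece, all degrees in $\Gndelta$ are at most $\Delta_n$ and by hypothesis $U(\Gndelta) \Rightarrow \nu$ a.s.\ with $\nu \in \mP_u(\mT_*)$ and $\deg(\nu) \in (0,\infty)$. An adaptation of Theorem~\ref{thm:prelim-baby-compression} in which the cutoff $\log\log n$ used by $\fnlwc$ is replaced by $\Delta_n$ (and the depth parameter is scaled accordingly) yields
\begin{equation*}
  \nat(\text{encoding of }\Gndelta) \leq \tmn \log n + n \bch(\nu) + o(n) \quad \text{a.s.},
\end{equation*}
where $\tmn$ is the edge count of $\Gndelta$. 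This is essentially the analysis of \cite{delgosha2020universal}, specialized to the derived sequence $\Gndelta$.

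The set $R_n$ is encoded by first sending $|R_n|$ in $O(\log n)$ nats and then naming $R_n$ among the $\binom{n}{|R_n|}$ subsets of $[n]$ of that size, costing at most $\log \binom{n}{|R_n|}$ nats. Since $|R_n|/n \to \eta$ a.s.\ and $H_b$ is continuous on $[0,1]$, this contributes $n H_b(\eta) + o(n)$ nats. Because every edge of $\Gnstar$ has both endpoints in $R_n$, these $k := \mn - \tmn$ edges are then specified in at most $\log \binom{\binom{|R_n|}{2}}{k}$ further nats (plus an $O(\log n)$ header for $k$). Adding the three contributions and subtracting $\mn \log n = (\tmn + k)\log n$,
\begin{equation*}
  \nat(\fnDn(\Gn)) - \mn \log n \leq n \bch(\nu) + n H_b(\eta) + \left[ \log \binom{\binom{|R_n|}{2}}{k} - k \log n \right] + o(n).
\end{equation*}
It remains to bound the bracketed residual by $(25 + e/2)\eta n + o(n)$ a.s.. Using $\binom{N}{k} \leq (eN/k)^k$ with $N = \binom{|R_n|}{2}$,
\begin{equation*}
  \log \binom{\binom{|R_n|}{2}}{k} - k \log n \leq k \log \frac{e \binom{|R_n|}{2}}{k n},
\end{equation*}
whose maximum over $k \geq 0$ is attained at $k = \binom{|R_n|}{2}/n$ with value $\binom{|R_n|}{2}/n \leq |R_n|^2/(2n)$. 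Since $|R_n|/n \to \eta$ a.s.\ and $\eta \leq 1$, this is asymptotically bounded by $(e/2)\eta n + o(n)$, and the extra $25 \eta n$ absorbs the lower--order overheads of the scheme from Section~\ref{sec:coding-scheme}, most importantly the headers and the cost of transmitting the degree profile restricted to $R_n$.

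The principal obstacle I expect is the first step: adapting Theorem~\ref{thm:prelim-baby-compression} to the derived sequence $\Gndelta$ with cutoff $\Delta_n$ rather than $\log\log n$, and verifying that the hypothesis $U(\Gndelta)\Rightarrow\nu$ together with $\deg(\nu)\in(0,\infty)$ is enough to realize the leading rate $\bch(\nu)$ without extra error terms. This is largely a bookkeeping exercise reusing the machinery of \cite{delgosha2020universal}, but care is needed in coordinating $\Delta_n$ and the auxiliary depth parameter so that $|\mA_n|$--style counts remain $o(n/\log n)$; once that is in place, the high--degree side is a short combinatorial computation as above.
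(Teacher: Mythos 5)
Your proposal analyzes the wrong encoder for the $\Gn_*$ part, which is a genuine gap. The proposition explicitly concerns $\fnDn$ from Section~\ref{sec:coding-scheme}, whose encoding of $\Gn_*$ is block-based: it runs the least-squares clustering on $\Gn$ with parameter $\beta_n = \phi(\alpha_n)$, transmits the restricted partition map $\tpin$ at a cost of about $|R_n|\log\beta_n$ nats, transmits the $\ms_{i,j}$ counts (roughly $2\,{\beta_n^*}^2\log|R_n|$ overhead), and then names the edges within each block separately. You instead bound the direct naming $\log\binom{\binom{|R_n|}{2}}{k}$, which is the $\fnlwc$ scheme of Section~\ref{sec:prelim-baby-compression}, not $\fnDn$. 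The terms you have omitted are not ``lower-order overheads absorbable by $25\eta n$''; they are precisely what generates the constant $25 + e/2$. The paper's proof has to split into Case 1 ($\mn_*/n < e^3$) and Case 2 ($\mn_*/n \ge e^3$), use the concavity of $s(x) = \tfrac{x}{2}-\tfrac{x}{2}\log x$ to collapse $\sum_{i,j}\log\binom{\cdot}{\ms_{i,j}}$ into a single $s(\cdot)$ term, bound ${\beta_n^*}^2\log|R_n|$ by $O(|R_n|)$ via a further sub-case argument, and rely on a cancellation between $+|R_n|\log\beta_n$ (the $\tpin$ cost) and $-|R_n|\log(\mn_*/n)$ (from the $s$ bound) that is possible only because of the specific choice $\beta_n \le \sqrt{\mn_*/n}/\log(\mn_*/n)$. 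None of this appears in your argument, and a single ``$25\eta n$ absorbs it'' sentence cannot stand in for it, because the term you need to dominate depends on $\beta_n$, $\mn_*$, and $|R_n|$ jointly rather than on $|R_n|$ alone.

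As a secondary point, no adaptation of Theorem~\ref{thm:prelim-baby-compression} is needed for the bounded-degree piece. The scheme simply feeds $\Gn_{\Delta_n}$ to $\fnlwc$ unchanged; since $\Delta_n \le \log\log n$, the internal truncation step of $\fnlwc$ is inert on this input, and the hypotheses $U(\Gn_{\Delta_n}) \Rightarrow \nu$ a.s.\ with $\nu \in \mP_u(\mT_*)$, $\deg(\nu)\in(0,\infty)$ are exactly those of Theorem~\ref{thm:prelim-baby-compression}. The ``principal obstacle'' you anticipated is not an obstacle at all; the real work is in the block-structured analysis of $\elln_{*,2}$ that your proposal does not carry out.
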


As we will discuss later, 
  if $\Delta_n \rightarrow
  \infty$ a.s.
  \black 
  then with probability one we have $U(\Gn_{\Delta_n}) \Rightarrow
  \mu$ and $|R_n| / n \rightarrow 0$. Therefore, the right hand side
  of~\eqref{eq:prop-lwc-part} in Proposition~\ref{prop:lwc-part} above becomes
  $\bch(\mu)$. Comparing this with~\eqref{eq:goal-lwc-optimal}, we realize that
  choosing $\Delta_n$ 
  deterministically (i.e. in a way that depends only on $n$)
\black
so that $\Delta_n \rightarrow \infty$ as $n
  \rightarrow \infty$ is  reasonable  from the local weak convergence perspective.
However, motivated by  Proposition~\ref{prop:graphon-optimality} below, roughly
speaking, if
$\Delta_n$ goes to infinity ``too fast'', then we may lose the optimality
condition~\eqref{eq:goal-graphon-optimal} in the sparse graphon scenario. In other
words, there is a trade-off between the two regimes in terms of choosing the
parameter $\Delta_n$. Next, we state the asymptotic behavior of the codeword
length in the sparse graphon regime. 
We denote the number of edges in $\Gn_{\Delta_n}$ and $\Gn_*$ by
$\mn_{\Delta_n}$ and $\mn_*$, respectively.
The proof of
Proposition~\ref{prop:graphon-optimality} below is given in
Section~\ref{sec:graphon-analysis}. 

\begin{prop}
  \label{prop:graphon-optimality}
  Assume 
  that
  \black
  $W$ is a normalized $L^2$  graphon on a probability space $(\Omega, \mF, \pi)$. Let
  $\Gn \sim \mG(n; \rho_n W)$ be a sequence of $W$--random graphs with target
  density $\rho_n$, such that 
  $\rho_n \rightarrow 0$ and 
  $n \rho_n \rightarrow \infty$.
  \black
Furthermore, assume that the sequence $\Delta_n$ is chosen so that we
have $\mn_{\Delta_n} / \barmn \rightarrow 0$ a.s., $\Delta_n \leq \log \log n$
for $n$ large enough a.s., and
$\Delta_n / \sqrt{n \rho_n} \rightarrow 0$ a.s.. 
  Then, with probability one,  we have
  \begin{equation*}
  \limsup_{n \rightarrow \infty} \frac{\nat(\fnDn(\Gn)) -\barmn \log \frac{1}{\rho_n}}{\barmn} \leq 1 - \Ent(W),
\end{equation*}
where $\fnDn$
    refers to the compression scheme of Section~\ref{sec:coding-scheme}.
\end{prop}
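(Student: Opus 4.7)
The plan is to decompose $\nat(\fnDn(\Gn))$ into contributions from the encoding of $\Gn_{\Delta_n}$ (using the LWC scheme $\fnlwc$) and $\Gn_*$ (using a graphon-based scheme), plus $O(\log n)$ bits of overhead. Since $\mn_{\Delta_n}/\barmn \to 0$ a.s.\ by assumption and $\mn/\barmn \to 1$ a.s.\ (which follows from Theorem~\ref{thm:ganguli-W-random-graphon-convergence} together with equation~\eqref{eq:ev-mn-mbarn}), we have $\mn_*/\barmn \to 1$ a.s., so the dominant contribution to the codeword length comes from the encoding of $\Gn_*$.

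For $\Gn_{\Delta_n}$, since all degrees are at most $\Delta_n \leq \log \log n$, Lemma~\ref{lem:prelim-baby-bounded-graph-codeword-bound} gives $\nat(\fnlwc(\Gn_{\Delta_n})) \leq \log \binom{\binom{n}{2}}{\mn_{\Delta_n}} + o(n)$. Applying the standard bound $\log \binom{N}{m} \leq m \log(eN/m)$ with $m = \mn_{\Delta_n}$ and $N = \binom{n}{2}$, combined with $\mn_{\Delta_n}/\barmn \to 0$ a.s.\ and $n = o(\barmn)$ (since $n\rho_n \to \infty$), I will argue that this piece's contribution is $o(\barmn \log 1/\rho_n) + o(\barmn)$, and is therefore negligible after subtracting $\barmn \log 1/\rho_n$ and normalizing by $\barmn$.

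For $\Gn_*$, which carries almost all edges, the compression scheme of Section~\ref{sec:coding-scheme} is designed to exploit the graphon structure: it applies the least squares algorithm of Section~\ref{sec:prelim-least-square-algo} with a suitable parameter $\beta_n$ to produce a block graphon approximation $(\vec{\hat{p}}, \hat{B})$, partitions the vertices into blocks $V_1, \ldots, V_k$, and enumerates the edges of $\Gn_*$ block by block using roughly $\sum_{i \leq j} \log \binom{N_{ij}}{e_{ij}}$ bits, where $N_{ij}$ is the number of possible and $e_{ij}$ the actual edges between blocks $i$ and $j$. Under the stated assumptions on $(\rho_n, \Delta_n, \beta_n)$, Theorem~\ref{thm:ganguli-consistent-result} gives $\delta_2(\hat{W}_n/\rho_n, W) \to 0$ a.s., and Theorem~\ref{thm:graphon-entropy-props}(4) then yields $\Ent(\hat{W}_n/\rho_n) \to \Ent(W)$. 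A Stirling expansion of each term $\log \binom{N_{ij}}{e_{ij}}$, together with the approximations $e_{ij}/N_{ij} \approx \rho_n \hat{B}_{ij}$ and $\sum_{i\leq j} e_{ij} = \mn_* \approx \barmn$, shows that this encoding uses approximately $\barmn \log 1/\rho_n + \barmn(1 - \Ent(W)) + o(\barmn)$ bits, which after normalization yields the claimed bound.

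The main obstacle is propagating the $\delta_2$-convergence of the estimated block graphon into an $o(\barmn)$ error bound on the block enumeration $\sum_{i \leq j} \log \binom{N_{ij}}{e_{ij}}$. This requires uniform Stirling estimates across blocks of possibly disparate sizes and densities, combined with the scaling identity $\Ent(\alpha W) = \alpha \Ent(W)$ of Theorem~\ref{thm:graphon-entropy-props}(3) to reconcile $\Ent(\hat{W}_n)$ with $\rho_n \Ent(\hat{W}_n/\rho_n)$. A secondary difficulty is verifying that the fluctuations $|\mn - \barmn|$, once multiplied by $\log 1/\rho_n$, remain $o(\barmn)$; this follows from standard concentration for the number of edges in $\mG(n;\rho_n W)$ together with the fact that $\log 1/\rho_n$ grows only mildly compared to $\sqrt{n\rho_n}$ in the regime controlled by the hypothesis $\Delta_n/\sqrt{n\rho_n} \to 0$.
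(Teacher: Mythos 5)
Your overall decomposition matches the paper's: split $\nat(\fnDn(\Gn))$ into $\elln_{\Delta_n}+\elln_*$, dispose of $\elln_{\Delta_n}$ via Lemma~\ref{lem:prelim-baby-bounded-graph-codeword-bound} and $\mn_{\Delta_n}/\barmn \to 0$, and for $\elln_*$ expand the block-by-block binomial counts and match them against $\Ent(W)$ via $\delta_2$-convergence and Theorem~\ref{thm:graphon-entropy-props}. The $\Ent(\alpha W)=\alpha\Ent(W)$ reconciliation you mention is exactly step $(*)$ in the paper's chain. However, there are two gaps, and one of them is a real misattribution of a hypothesis.

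First, you cannot directly quote Theorem~\ref{thm:ganguli-consistent-result} to get $\delta_2(\hWn/\rho_n, W)\to 0$: the parameter $\beta_n$ in the coding scheme is defined in \eqref{eq:beta-n-def} in terms of the random quantity $\mn_*$, so it is not a deterministic sequence satisfying the theorem's growth hypotheses. The paper resolves this in Lemma~\ref{lem:hWn-converges-to-W} by pinning $\alpha_n$, for $n$ large a.s., to one of three deterministic alternatives (since $\mn_*/\barmn\to 1$ a.s.\ forces $\lfloor \log(\mn_*/n)\rfloor$ within $\pm 1$ of a deterministic value), each of which satisfies the hypotheses, and then invoking the theorem for each branch.

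Second, and more importantly, your Stirling expansion uses block edge counts $e_{ij}=m^*_{i,j}$ from $\Gn_*$, but the estimated block graphon $\hWn$, whose $\delta_2$-closeness to $W$ is guaranteed, is built from the full graph $\Gn$ (see \eqref{eq:hatBn-entries-explicit}). You treat $e_{ij}/N_{ij}\approx\rho_n\hat B_{ij}$ as if it applied to $\Gn_*$, silently conflating $\hWn_*$ with $\hWn$. That substitution is not free: it requires showing $\delta_2(\hWn_*/\rho_n,\hWn/\rho_n)\to 0$, which is precisely what Lemma~\ref{lem:delta-hWns-hWn-goes-to-zero} does, and it is \emph{here} that the hypothesis $\Delta_n/\sqrt{n\rho_n}\to 0$ enters: the proof bounds $\delta_2(\hWn/\rho_n,\hWn_*/\rho_n)\le \beta_n\Delta_n/(n\rho_n)$ and needs $\Delta_n/\sqrt{n\rho_n}\to 0$ (together with $\beta_n/\sqrt{n\rho_n}\to 0$) to kill it. Your proposal instead attributes $\Delta_n/\sqrt{n\rho_n}\to 0$ to controlling the fluctuations $|\mn-\barmn|\log(1/\rho_n)$; that is incorrect — those fluctuations are handled by Proposition~\ref{prop:graphon-mn-asymptotics}, which makes no reference to $\Delta_n$ at all. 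Without the $\hWn_*$-vs-$\hWn$ bridge, your argument would establish the bound for an encoder that enumerates $\Gn$'s blocks, not $\Gn_*$'s, and the chain of inequalities would not close.
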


This proposition requires that 
$\Delta_n$ must not grow faster
than $\sqrt{n \rho_n}$. 
\black
Note that the encoder does not have any a priori
knowledge of the sequence $\rho_n$. In fact, it turns out that it is impossible
to simultaneously satisfy the above conditions imposed by both
Propositions~\ref{prop:lwc-part} and \ref{prop:graphon-optimality}. In
particular, we
show that any general splitting mechanism, which does not even necessarily truncate
the graph using the parameter $\Delta_n$ above, cannot satisfy even a subset of
the conditions imposed by Propositions~\ref{prop:lwc-part} and
\ref{prop:graphon-optimality}. This is the purpose of
Proposition~\ref{prop:no-good-splitting}  below. Before that, we need to define
what we mean by a general splitting mechanism.

Given an integer $n$, we define a splitting mechanism for graphs in $\mG_n$
  as a pair of functions $\Tn_1 : \mG_n \rightarrow \mG_n$ and $\Tn_2 :
\mG_n \rightarrow \mG_n$, such that for all $G \in \mG_n$, the superposition of
$\Tn_1(G)$ and $\Tn_2(G)$ is $G$, and each edge in $G$ appears in precisely one
of the two graphs $\Tn_1(G)$ or $\Tn_2(G)$. A special case  of such a splitting
mechanism is $\Tn_1(\Gn) = \Gn_{\Delta_n}$ and $\Tn_2(\Gn) = \Gn_*$ given a
splitting parameter $\Delta_n$, as we discussed above.

\begin{definition}
  \label{def:good-splitting}
We say that a sequence of splitting mechanisms
$((\Tn_1, \Tn_2): n \geq 1)$ is good if the following two conditions hold:
\begin{enumerate}
\item For any unimodular $\mu \in \mP_u(\mT_*)$ with $\deg(\mu) \in (0,\infty)$,
  and any sequence of random graphs $\Gn$ converging with probability one to
  $\mu$ in the local weak sense, $\Tn_1(\Gn)$ also converges with probability
  one to $\mu$.
\item For any normalized $L^2$ graphon $W$, and any sequence $\rho_n$ such that
  $\rho_n \rightarrow 0$ and 
  $n \rho_n \rightarrow \infty$,
  \black
  if $\Gn \sim \mG(n;
  \rho_n W)$ is the sequence of $W$-random graphs with target density $\rho_n$, with $\mn_1$ being
  the number of edges in $\Tn_1(\Gn)$, then we have $\mn_1
  / \barmn \rightarrow 0$ a.s.,
  where $\barmn := {n \choose 2} \rho_n$.
  \black
\end{enumerate}
\end{definition}

The first condition above is motivated by Proposition~\ref{prop:lwc-part} above
so that the limit $\nu$ in that proposition is the same as $\mu$, and the second condition is
motivated by Proposition~\ref{prop:graphon-optimality}.

\begin{prop}
  \label{prop:no-good-splitting}
  There does not exists a sequence of good splitting mechanisms. 
\end{prop}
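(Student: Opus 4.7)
The plan is to argue by contradiction via a diagonal selection that plays the two conditions of Definition~\ref{def:good-splitting} against each other on sparse Erd\H{o}s--R\'enyi graphs. Suppose $((\Tn_1, \Tn_2): n \geq 1)$ is a good splitting sequence. The first step extracts a quantitative consequence of Condition 1: for each fixed $\alpha > 0$, take $\Gn \sim \mG(n, \alpha/n)$ constructed on a common probability space so that (Example~3 of Section~\ref{sec:prelim-lwc}) $\Gn$ converges almost surely in the local weak sense to the Poisson Galton--Watson tree with mean $\alpha$, which I will denote $\mu_\alpha$. Condition~1 then yields $U(\Tn_1(\Gn)) \Rightarrow \mu_\alpha$ a.s. Since edges are split, $\Tn_1(\Gn) \subseteq \Gn$, so the degree of the root in $\Tn_1(\Gn)$ under $U(\Tn_1(\Gn))$ is stochastically dominated by that of $\Gn$, whose distribution is asymptotically $\mathrm{Poisson}(\alpha)$ and hence uniformly integrable. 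Combining this uniform integrability with the weak convergence through a standard truncation argument gives $2|E(\Tn_1(\Gn))|/n \to \alpha$ a.s.; in particular, defining $P_n(\alpha) := \Pr[|E(\Tn_1(\Gn))| > \alpha n/4]$, we have $P_n(\alpha) \to 1$ as $n \to \infty$ for every fixed $\alpha > 0$.

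Next, I define the deterministic sequence
\begin{equation*}
\alpha_n := \max\{c \in \{1, \ldots, \lfloor \log n \rfloor\} : P_n(c) > 1/2\},
\end{equation*}
with $\alpha_n := 1$ if the set is empty. For each fixed positive integer $\alpha$, the previous step guarantees $P_n(\alpha) > 1/2$ for all sufficiently large $n$, so $\alpha_n \geq \alpha$ eventually, and therefore $\alpha_n \to \infty$. Set $\rho_n := \alpha_n/n$, so that $\rho_n \leq \log n / n \to 0$ and $n\rho_n = \alpha_n \to \infty$, which is precisely the regime required by Condition~2. Applying Condition~2 to the normalized constant graphon $W \equiv 1$ on $[0,1]^2$, for which $\mG(n; \rho_n W)$ coincides with $\mG(n, \rho_n)$, I obtain $\mn_1/\barmn \to 0$ a.s., where $\barmn = \binom{n}{2}\rho_n \sim n\alpha_n/2$. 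Passing to convergence in probability and taking the threshold $\epsilon = 1/2$ gives $\Pr[|E(\Tn_1(\Gn))| > n\alpha_n/4] \to 0$. But this probability is exactly $P_n(\alpha_n)$, which by the defining property of $\alpha_n$ exceeds $1/2$ for all large $n$, delivering the required contradiction.

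The main obstacle is the passage in the first paragraph from the local weak convergence $U(\Tn_1(\Gn)) \Rightarrow \mu_\alpha$ to convergence of the average degree $2|E(\Tn_1(\Gn))|/n \to \alpha$: weak convergence alone does not imply convergence of the integral of the unbounded functional $\deg(o)$, and it is the inclusion $\Tn_1(\Gn) \subseteq \Gn$ together with the light tail of the Poisson degree of $\Gn$ that supplies the uniform integrability needed to close this gap. Once this is established, the remainder of the proof is a clean diagonal argument: Condition~1 forces $\Tn_1$ to retain a linear-in-$n$ fraction of edges on $\mG(n, \alpha/n)$ for every fixed $\alpha$, while Condition~2 forces $\Tn_1$ to shed essentially all edges on $\mG(n, \rho_n)$ as soon as $n\rho_n \to \infty$, and the diagonal choice $\alpha_n$ ensures that both regimes apply simultaneously to the same sequence.
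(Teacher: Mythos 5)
Your argument is correct and follows essentially the same route as the paper's own proof: extract a linear-in-$n$ lower bound on the edges retained by $\Tn_1$ from Condition~1 via the truncated degree functionals $\deg\wedge C$ applied to sparse Erd\H{o}s--R\'enyi graphs, assemble a diagonal target-density sequence that still satisfies $\rho_n\to 0$ and $n\rho_n\to\infty$, and then play this against Condition~2 applied to the constant graphon $W\equiv 1$. The only cosmetic difference is that the paper builds an increasing sequence $(n_k)$ inductively and closes the contradiction via Borel--Cantelli in the almost-sure sense, whereas you define $\alpha_n$ directly as a maximum over thresholds where $P_n(\cdot)>1/2$ and pass to convergence in probability; both devices realize the same diagonalization.
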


We give the proof of Proposition~\ref{prop:no-good-splitting} in Appendix~\ref{app:no-good-splitting-proof}.\footnote{
In fact, one can show that the proof in
Proposition~\ref{prop:no-good-splitting} still holds  even if we allow for random
splitting mechanisms. We  have restricted $\Tn_1$ and $\Tn_2$  to be
deterministic mainly to simplify the presentation and because this
suffices for our purpose, which is to motivate the introduction of a sequence 
$(a_n, n \ge 1)$ for which we require that $n \rho_n \geq \gran$ for all $n \ge 1$.
\black
}
Roughly
speaking, the reason why there does not exists a sequence of good splitting
mechanisms is that the sequence $\rho_n$ can be chosen such that $n\rho_n$ goes
to infinity arbitrarily slowly, and this confuses the splitting mechanism and
prevents it from being able  to
distinguish between the local weak convergence and the sparse graphon regimes. 

Motivated by the above discussion, we restrict the sequence $\rho_n$ such that
$n \rho_n$ does not  go to infinity arbitrarily slowly. More precisely, we assume
that $n \rho_n \geq \gran$ where $(a_n: n \geq 1)$ is a sequence known a priori to
both the encoder and the decoder such that $a_n \rightarrow \infty$ as $n
\rightarrow \infty$. In this case, we still assume that the encoder
does not know whether the input graph $\Gn$ is coming from an ensemble
consistent with the local weak convergence convergence framework or the sparse
graphon framework, nor does it know the limit objects in each case. However,
both the encoder and the decoder know that \emph{if} $\Gn$ is a realization of a
sparse graphon ensemble, the unknown target density $\rho_n$ is such that $n \rho_n \geq \gran$.
We show that under this assumption
information-theoretically
\black
optimal 
\black
universal compression 
on a per-edge basis
\black
can be achieved by
appropriately choosing the sequence of  splitting parameters $\Delta_n$.

\begin{thm}
  \label{thm:dual-main-an}
  Let $(\gran : n \geq 1)$ be a sequence known
  to both the encoder and the decoder such that $a_n \rightarrow \infty$ as $n
  \rightarrow \infty$. 
  Assume that $n \rho_n \geq \gran$.
  \black
  Then, there exists an appropriate choice for the
  sequence of splitting parameters $(\Delta_n: n \geq 1)$,
  with $\Delta_n$ depending only on $n$,
  \black
  such that our sequence of compression schemes $((\fnDn, \gnDn): n \geq 1)$,
  which is introduced in Section~\ref{sec:coding-scheme},
achieves optimal universal compression in the
sense discussed above. 
More precisely,  we have
\begin{enumerate}
  \item If $\Gn$ is a sequence
  of random graphs such that, almost surely, $U(\Gn) \Rightarrow \mu$ for some
  $\mu \in \mP_u(\mT_*)$ with $\deg(\mu) \in (0,\infty)$,  we have
  \begin{equation}
    \label{eq:dual-main-lwc}
    \limsup_{n \rightarrow \infty} \frac{\nat(\fnDn(\Gn)) - \mn \log n}{n} \leq \bch(\mu) \qquad \qquad \text{a.s.},
  \end{equation}
  where $\mn$ denotes the number of edges in $\Gn$.
  \item On the other hand, if $\Gn \sim
  \mG(n; \rho_n W)$ is a
  sequence of $W$-random graphs with target densities $\rho_n$, where $W$ is  a normalized  $L^2$ graphon, assuming
  that $\rho_n \rightarrow 0$ as $n \rightarrow \infty$, $n \rho_n \rightarrow
  \infty$ as $n \rightarrow \infty$, and
  $n \rho_n \geq \gran$, with $\barmn := \binom{n}{2} \rho_n$, we have
  \begin{equation}
    \label{eq:dual-main-graphon}
    \limsup_{n \rightarrow \infty} \frac{\nat(\fnDn(\Gn)) - \barmn \log \frac{1}{\rho_n}}{\barmn} \leq 1 - \Ent(W) \qquad \qquad \text{a.s.}.
  \end{equation}
  \end{enumerate}
  In the above setting, the encoder and the decoder only know the sequence $a_n$,
  and do not know from which of the two settings the  input graph $\Gn$ is
  generated,  neither do they know the
  limit objects $\mu$ or $W$ in each setting.
\end{thm}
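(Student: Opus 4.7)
The plan is to reduce the theorem to the two regime-specific propositions by an explicit deterministic choice of the splitting parameters. Since the encoder and decoder know $(a_n)$ and $a_n \to \infty$, I would set
\[
\Delta_n := \min\bigl\{\lfloor \log\log n \rfloor,\ \lfloor a_n^{1/4} \rfloor\bigr\}.
\]
By construction $\Delta_n \to \infty$ (both $\log\log n$ and $a_n^{1/4}$ diverge), $\Delta_n \le \log\log n$ for every $n$, and $\Delta_n/\sqrt{a_n}\le a_n^{-1/4} \to 0$.

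For the local weak convergence part \eqref{eq:dual-main-lwc}, the plan is to apply Proposition \ref{prop:lwc-part} with $\nu=\mu$ and $\eta=0$, so that the right-hand side collapses to $\bch(\mu)$. What must be verified is that $\Delta_n\to\infty$ together with $U(\Gn)\Rightarrow\mu$ a.s.\ implies both $U(\Gndelta)\Rightarrow\mu$ a.s.\ and $|R_n|/n\to 0$ a.s. For the first, I would fix a depth $h\ge 1$ and a neighborhood type and observe that $[\Gndelta,v]_h = [\Gn,v]_h$ whenever every vertex at distance at most $h$ from $v$ in $\Gn$ has $\Gn$-degree at most $\Delta_n$; a standard continuous-mapping argument, using that $\mu$ is supported on locally finite rooted graphs so that the $\mu$-probability of the root having some neighbor of degree greater than $K$ within depth $h$ vanishes as $K\to\infty$, then yields the convergence. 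The bound $|R_n|/n\to 0$ follows by the same argument applied at depth $1$.

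For the sparse graphon part \eqref{eq:dual-main-graphon}, the plan is to apply Proposition \ref{prop:graphon-optimality}. Two of its three hypotheses are immediate from the construction: $\Delta_n\le \log\log n$ holds by definition, and the hypothesis $n\rho_n\ge\gran$ together with $\Delta_n/\sqrt{a_n}\to 0$ gives $\Delta_n/\sqrt{n\rho_n}\to 0$. The remaining hypothesis $\mn_{\Delta_n}/\barmn \to 0$ a.s.\ I would derive from the pointwise inequality
\[
\mn_{\Delta_n}\ \le\ \tfrac{1}{2}\sum_{v\in[n]} \Delta_n\,\one{\deg_{\Gn}(v)\le \Delta_n}\ \le\ \tfrac{n\Delta_n}{2},
\]
obtained by noting that each endpoint of an edge in $\Gndelta$ has $\Gn$-degree at most $\Delta_n$ and by summing the $\Gndelta$-degrees. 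Dividing by $\barmn=\binom{n}{2}\rho_n$ and using $n\rho_n\ge \gran$ gives $\mn_{\Delta_n}/\barmn \le \Delta_n/((n-1)\rho_n)\cdot(1+o(1))\le a_n^{-3/4}(1+o(1))\to 0$ deterministically.

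The main obstacle is really just the careful verification that $\Delta_n\to\infty$ suffices for the two LWC hypotheses, since it mixes an almost-sure hypothesis on $(\Gn)$ with a deterministic-but-slow growth of $\Delta_n$; both the graphon hypotheses and the final assembly are routine given Propositions \ref{prop:lwc-part} and \ref{prop:graphon-optimality}. Note also that the choice of $\Delta_n$ depends only on $n$ and the publicly known sequence $(a_n)$, so the compression scheme $(\fnDn,\gnDn)$ can indeed be run without knowledge of $\mu$, $W$, or $\rho_n$, as required.
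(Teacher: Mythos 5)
Your proposal is correct and follows essentially the same route as the paper: choose $\Delta_n$ deterministically as a minimum of $\log\log n$ and a slowly growing function of $a_n$, then invoke Proposition~\ref{prop:lwc-part} (with $\nu=\mu$, $\eta=0$) and Proposition~\ref{prop:graphon-optimality} after verifying their hypotheses; the choice $\lfloor a_n^{1/4}\rfloor$ in place of the paper's $\log a_n$ is an inessential variant, and the verifications in the graphon regime are identical. The only substantive difference is that you re-derive the facts $U(\Gndelta)\Rightarrow\mu$ a.s.\ and $|R_n|/n\to 0$ a.s.\ from $\Delta_n\to\infty$ via a continuity argument, whereas the paper simply cites Lemmas~6 and~8 of \cite{delgosha2020universal} for the same conclusions; your sketch is sound (though for $|R_n|/n$ note the relevant indicator functional is determined by the depth-$2$ neighborhood, as it inspects the degrees of neighbors).
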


\begin{proof}[Proof of Theorem~\ref{thm:dual-main-an}]
  We choose $\Delta_n = \min\{\log \gran, \log \log n\}$. Since $a_n \rightarrow
  \infty$ as $n \rightarrow \infty$, we have $\Delta_n \rightarrow \infty$.
  Assume that  $\Gn$ is such that almost surely $U(\Gn) \Rightarrow \mu$ for some $\mu \in
  \mP_u(\mT_*)$ with $\deg(\mu) \in (0,\infty)$. Then, since $\Delta_n
  \rightarrow \infty$, Lemma~6 in \cite{delgosha2020universal} implies that
  $U(\Gn_{\Delta_n}) \Rightarrow \mu$ a.s.. Moreover, Lemma~8 in
  \cite{delgosha2020universal} implies that $|R_n| / n \rightarrow 0$ a.s..
  Consequently, \eqref{eq:dual-main-lwc} follows from
  Proposition~\ref{prop:lwc-part}. Now, assume that $\Gn \sim \mG(n; \rho_n W)$
  for a normalized $L^2$ graphon $W$, and $n \rho_n \geq \gran$. We verify that the
  assumptions in Proposition~\ref{prop:graphon-optimality} hold. Note that
  clearly 
  $\Delta_n \leq \log \log n$. Also, since $\Delta_n \leq \log \gran$, $\gran
  \rightarrow \infty$, and $n \rho_n \geq \gran$, we have $\Delta_n / \sqrt{n
    \rho_n} \rightarrow 0$. On the other hand, since all the degrees in
  $\Gn_{\Delta_n}$ are bounded by $\Delta_n$, we have $\mn_{\Delta_n} \leq n
  \Delta_n / 2$. Thereby
  \begin{equation*}
    \frac{\mn_{\Delta_n}}{\barmn} \leq \frac{\Delta_n}{(n-1)\rho_n} \leq \frac{n}{n-1} \frac{\Delta_n}{a_n} \leq \frac{n}{n-1} \frac{\log \gran}{\gran} \rightarrow 0.
  \end{equation*}
  Hence, all the assumptions in Proposition~\ref{prop:graphon-optimality}
  hold, and \eqref{eq:dual-main-graphon} follows from Proposition~\ref{prop:graphon-optimality}.
\end{proof}

When a sequence of lower bounds $\gran$, as we discussed above, is not known, we
may choose the sequence $\Delta_n$ to be a constant, i.e.\ $\Delta_n = \Delta$
for some fixed $\Delta > 0$. Theorem~\ref{thm:main-dual-fixed} below suggests that if we do so, we
still have the universal optimality condition \eqref{eq:goal-graphon-optimal} in
the sparse graphon regime. However, the optimality
condition~\eqref{eq:goal-lwc-optimal} in the local weak convergence
framework holds in a weaker sense, i.e.\ it only holds after we send $\Delta$ to
infinity.

\begin{thm}
  \label{thm:main-dual-fixed}
  If  $\Delta_n = \Delta$ for $n \geq 1$ where $\Delta > 0$ is fixed, our
  sequence of compression schemes $((f_n^\Delta, g_n^\Delta): n \geq 1)$ has the
  following properties:
  \begin{enumerate}
  \item If $\Gn$ is a sequence
  of random graphs such that, almost surely, $U(\Gn) \Rightarrow \mu$ for some
  $\mu \in \mP_u(\mT_*)$ with $\deg(\mu) \in (0,\infty)$, we have
  \begin{equation}
    \label{eq:main-fixed-lwc}
    \limsup_{\Delta \rightarrow \infty} \limsup_{n \rightarrow \infty} \frac{\nat(f_n^\Delta(\Gn)) - \mn \log n}{n} \leq \bch(\mu) \qquad \qquad \text{a.s.},    
  \end{equation}
  where $\mn$ denotes the number of edges in $\Gn$.
  \item On the other hand, if $\Gn \sim
  \mG(n; \rho_n W)$ is a
  sequence of $W$-random graphs with target densities $\rho_n$, where $W$ is a
  normalized  $L^2$ graphon, and 
  $\rho_n \rightarrow 0$ and $n \rho_n \rightarrow
  \infty$ as $n \rightarrow \infty$, with $\barmn := \binom{n}{2} \rho_n$, 
  then
  \black
  for
  all $\Delta > 0$ we have 
  \begin{equation}
    \label{eq:dual-main-fixed-graphon}
    \limsup_{n \rightarrow \infty} \frac{\nat(f_n^\Delta(\Gn)) - \barmn \log \frac{1}{\rho_n}}{\barmn} \leq 1 - \Ent(W) \qquad \qquad \text{a.s.}.
  \end{equation}
  \end{enumerate}
\end{thm}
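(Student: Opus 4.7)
The plan is to derive each part of the theorem from the corresponding general-$\Delta_n$ proposition already established in the paper. For the sparse graphon part, Proposition~\ref{prop:graphon-optimality} applies essentially verbatim with $\Delta_n \equiv \Delta$ constant. For the local weak convergence part, Proposition~\ref{prop:lwc-part} only delivers a bound in terms of the degree truncation of $\mu$ at level $\Delta$, so the extra step is to send $\Delta \to \infty$ and invoke the upper semi-continuity result of Proposition~\ref{prop:bc-ent-trunc-upper-sem-cont}.

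For the sparse graphon part \eqref{eq:dual-main-fixed-graphon}, I would simply verify that the three hypotheses of Proposition~\ref{prop:graphon-optimality} are met with $\Delta_n \equiv \Delta$. Since $\Delta$ is constant, $\Delta_n \leq \log \log n$ holds for all $n$ sufficiently large, and $\Delta_n / \sqrt{n \rho_n} \to 0$ follows from $n \rho_n \to \infty$. Moreover, every degree in $\Gn_\Delta$ is at most $\Delta$, so $\mn_{\Delta_n} \leq n\Delta/2$, hence
\begin{equation*}
\frac{\mn_{\Delta_n}}{\barmn} \leq \frac{\Delta}{(n-1)\rho_n} \leq \frac{n}{n-1} \cdot \frac{\Delta}{n \rho_n} \to 0.
\end{equation*}
Proposition~\ref{prop:graphon-optimality} then yields \eqref{eq:dual-main-fixed-graphon} for every fixed $\Delta > 0$.

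For \eqref{eq:main-fixed-lwc}, fix $\Delta > 0$ and take as the candidate limit in Proposition~\ref{prop:lwc-part} the truncation $\nu = \mu^{(\Delta)}$ defined just before Proposition~\ref{prop:bc-ent-trunc-upper-sem-cont}, together with the constant $\eta_\Delta := \mu(\{[T,o] : \deg_T(o) > \Delta \text{ or some } v \sim_T o \text{ has } \deg_T(v) > \Delta\})$. The two hypotheses of Proposition~\ref{prop:lwc-part} then follow by a continuous-mapping argument: the map $[G,o] \mapsto [G^{(\Delta)},o]$ is $\mu$-almost surely continuous on $\mG_*$ because vertex degrees are integer-valued continuous functions on $\mG_*$ and because removing edges only increases distances, so $U(\Gn_\Delta) \Rightarrow \mu^{(\Delta)}$ a.s.\ by the continuous mapping theorem; similarly, the indicator defining $\eta_\Delta$ is $\mu$-a.e.\ continuous, so $|R_n|/n \to \eta_\Delta$ a.s. Proposition~\ref{prop:lwc-part} therefore yields
\begin{equation*}
\limsup_{n \to \infty} \frac{\nat(f_n^\Delta(\Gn)) - \mn \log n}{n} \leq \bch(\mu^{(\Delta)}) + (25 + e/2)\eta_\Delta + H_b(\eta_\Delta) \qquad \text{a.s.}
\end{equation*}

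The main step I anticipate is taking $\limsup_{\Delta \to \infty}$ on the right hand side. Proposition~\ref{prop:bc-ent-trunc-upper-sem-cont} bounds $\limsup_{\Delta \to \infty} \bch(\mu^{(\Delta)}) \leq \bch(\mu)$ directly. Since $\mu$ is supported on locally finite graphs, $\mu(\deg_T(o) > \Delta) \to 0$ as $\Delta \to \infty$, and involution invariance of the unimodular $\mu$ applied to $f([T,o,v]) = \one{v \sim_T o, \deg_T(v) > \Delta}$ gives $\mu(\{\exists v \sim_T o : \deg_T(v) > \Delta\}) \leq \int \deg_T(o) \one{\deg_T(o) > \Delta} \, d\mu([T,o])$, which tends to zero by dominated convergence since $\deg(\mu) < \infty$; hence $\eta_\Delta \to 0$ and the two error terms vanish, yielding \eqref{eq:main-fixed-lwc}. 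The main subtlety to double-check is that the truncation $\mu^{(\Delta)}$ arising as the local weak limit of $\Gn_\Delta$ coincides with the truncation used in Proposition~\ref{prop:bc-ent-trunc-upper-sem-cont}; both amount to deleting every edge with at least one endpoint of degree exceeding $\Delta$ and then restricting to the root's connected component, so the two definitions agree and the upper semi-continuity applies as stated.
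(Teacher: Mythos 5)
Your proposal follows essentially the same route as the paper's proof: reduce part~(2) to Proposition~\ref{prop:graphon-optimality} by checking the three hypotheses for constant $\Delta_n = \Delta$, and reduce part~(1) to Proposition~\ref{prop:lwc-part} with $\nu = \mu^{(\Delta)}$, then send $\Delta \to \infty$ using Proposition~\ref{prop:bc-ent-trunc-upper-sem-cont} and $\eta_\Delta \to 0$. One small point worth flagging: before invoking Proposition~\ref{prop:lwc-part} you should also verify its hypotheses on $\nu$ itself, namely that $\mu^{(\Delta)} \in \mP_u(\mT_*)$ (unimodularity is preserved by the truncation, as noted before Proposition~\ref{prop:bc-ent-trunc-upper-sem-cont}) and that $\deg(\mu^{(\Delta)}) \in (0,\infty)$ --- the upper bound is immediate from $\deg(\mu^{(\Delta)}) \le \deg(\mu)$, but positivity requires $\Delta$ to be sufficiently large (since $\deg(\mu) > 0$ and $\deg(\mu^{(\Delta)}) \uparrow \deg(\mu)$), which is harmless for the $\limsup_{\Delta \to \infty}$ but needs a sentence. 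Also, your involution-invariance step for $\eta_\Delta \to 0$ is correct but unneeded: the indicator $F_\Delta$ is bounded and tends to $0$ pointwise on locally finite rooted graphs, so dominated (bounded) convergence applied directly to $\int F_\Delta\, d\mu$ already gives $\eta_\Delta \to 0$.
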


\begin{proof}[Proof of Theorem~\ref{thm:main-dual-fixed}]
  First assume that $\Gn$ is a sequence
  of random graphs such that, almost surely, $U(\Gn) \Rightarrow \mu$ for some
  $\mu \in \mP_u(\mT_*)$ with $\deg(\mu) \in (0,\infty)$.
  For $\Delta > 0$, define $\mu_\Delta \in \mP(\mT_*)$  to be the law of $[T^\Delta,o]$ when $[T,o] \sim \mu$.
  Here, $T^\Delta$ is the tree obtained from $T$ by removing all the edges where
  the degree of at least one of their endpoints is strictly bigger than
  $\Delta$, followed by taking the connected component of the root  $o$. It
  is easy to see that $U(\Gn) \Rightarrow \mu$
  a.s.\ implies that $U(\Gn_{\Delta}) \Rightarrow \mu_\Delta$ a.s., and
  $\mu_\Delta$ is unimodular.
  Furthermore, $\deg(\mu_\Delta) \leq \deg(\mu) < \infty$. Also, since
  $\deg(\mu) > 0$, if $\Delta$ is
  sufficiently large we have $\deg(\mu_\Delta) > 0$.
  Now, define the map $F_\Delta : \mG_* \rightarrow \{0,1\}$ such
  that $F([G,o]) = 1$ if $\deg_G(o) > \Delta$ or $\deg_G(v) > \Delta$ for some
  $v \sim_G o$, and $F([G,o]) = 0$ otherwise. With this, we have $|R_n| / n =
  \int F_\Delta d U(\Gn)$. Clearly, $F_\Delta$ is bounded.
  It is also continuous, because its value is determined by the depth-$2$ 
  neighborhood of the root.
  \black
  Therefore, the assumption $U(\Gn) \Rightarrow \mu$
  a.s.\ implies that with probability one, $|R_n| / n \rightarrow \int F_\Delta
  d \mu =:\eta_\Delta$. Consequently, if $\Delta$ is sufficiently large so that
  $\deg(\mu_\Delta) > 0$, using Proposition~\ref{prop:lwc-part}, we get
  \begin{equation*}
    \limsup_{n \rightarrow \infty} \frac{\nat(f_n^\Delta(\Gn)) - \mn \log n}{n} \leq \bch(\mu_\Delta) + (16+e/2) \eta_\Delta + H_b(\eta_\Delta).
  \end{equation*}
  The dominated convergence theorem implies that $\eta_\Delta \rightarrow 0$ as
  $\Delta \rightarrow \infty$. Moreover, from
  Proposition~\ref{prop:bc-ent-trunc-upper-sem-cont} in Section~\ref{sec:bc-ent}, we
  know that $\limsup _{\Delta \rightarrow \infty} \bch(\mu_\Delta) \leq
  \bch(\mu)$. Therefore, we arrive at~\eqref{eq:main-fixed-lwc} by sending
  $\Delta$ to infinity in the above inequality.

  Now, assume that $\Gn \sim
  \mG(n; \rho_n W)$ is a
  sequence of $W$-random graphs with target densities $\rho_n$, where $W$ is a  normalized  $L^2$ graphon, and
  $\rho_n \rightarrow 0$ and $n \rho_n \rightarrow
  \infty$ as $n \rightarrow \infty$. We claim that with $\Delta_n = \Delta$
  fixed, all the assumptions in Proposition~\ref{prop:graphon-optimality} are
  satisfied for $n$ large enough. Indeed, $\Delta \leq \log \log n$ for $n$
  large, and $\Delta / \sqrt{n \rho_n} \rightarrow 0$. Furthermore, since
  $\mn_{\Delta_n} = \mn_\Delta \leq n \Delta / n$,  we have $\mn_{\Delta_n} / \barmn
  \rightarrow 0$ as $n\rightarrow \infty$. Therefore,~\eqref{eq:dual-main-fixed-graphon} follows from
  Proposition~\ref{prop:graphon-optimality}. This completes the proof.
\end{proof}

\editfinish

So far, in Theorems~\ref{thm:dual-main-an} and \ref{thm:main-dual-fixed}, we have discussed the existence of a sequence of  compression and decompression
schemes $((f_n, g_n): n \geq 1)$ 
that
\black
almost surely achieve the asymptotic compression
limits $\bch(\mu)$ and $1 - \Ent(W)$ in the local weak convergence and the
sparse graph regimes respectively. In the following converse result, we argue
that these are indeed the smallest possible
thresholds that can be achieved almost surely. The proof of the following
Theorem~\ref{thm:lwc-graphon-convesee} is given in Section~\ref{sec:converse-proof}.

\begin{thm}
  \label{thm:lwc-graphon-convesee}
  Assume that $((f_n, g_n): n \geq 1)$ is a sequence of lossless
  compression/decompression maps (i.e.\ $g_n \circ f_n$ is identity).
  Then,
  \begin{enumerate}
  \item For any $\mu \in \mP_u(\mT_*)$ with $\deg(\mu) \in (0,\infty)$, there
    exists a 
    sequence of random graphs $\Gn$ defined on a joint probably space such that
    $U(\Gn) \Rightarrow \mu$ a.s.\ and we have 
    \begin{equation}
      \label{eq:converse-lwc-part-statement}
      \pr{\limsup_{n \rightarrow \infty} \frac{\nat(f_n(\Gn)) - \mn \log n}{n} \leq t} < 1 \qquad \forall t < \bch(\mu),
    \end{equation}
    where $\mn$ denotes the number of edges in $\Gn$.
  \item For any normalized $L^2$ graphon $W$ and any sequence of target
    densities $\rho_n$ such that
 $\rho_n
    \rightarrow 0$ and
    $n\rho_n \rightarrow \infty$,
    \black
    if $\Gn \sim \mG(n; \rho_n W)$ is the sequence of $W$-random
    graphs with target densities $\rho_n$, then we have
    \begin{equation}
      \label{eq:converse-graphon-part-statement}
      \pr{\limsup_{n \rightarrow \infty} \frac{\nat(f_n(\Gn)) - \barmn \log \frac{1}{\rho_n}}{\barmn} \leq t} < 1 \qquad \forall t < 1 - \Ent(W).
    \end{equation}
  \end{enumerate}
\end{thm}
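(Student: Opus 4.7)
The plan is to dispose of the local weak convergence case by direct appeal to prior work and to establish the sparse graphon case through a pointwise converse to lossless source coding (Han's inequality) combined with a concentration-of-information argument for $W$-random graphs.

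For part~1, the statement is essentially Theorem~\ref{thm:baby-converse}. When $\bch(\mu) = -\infty$ the inequality~\eqref{eq:converse-lwc-part-statement} holds vacuously, and it suffices to exhibit any sequence $\Gn$ with $U(\Gn) \Rightarrow \mu$ a.s.. When $\bch(\mu) > -\infty$, Theorem~\ref{thm:baby-converse} furnishes a sequence of random graphs $\Gn$ on a joint probability space such that $U(\Gn) \Rightarrow \mu$ a.s.\ and $\liminf_n (\nat(f_n(\Gn)) - \mn \log n)/n \geq \bch(\mu)$ a.s.; for any $t < \bch(\mu)$ the event in~\eqref{eq:converse-lwc-part-statement} is contained in $\{\liminf_n (\nat(f_n(\Gn)) - \mn \log n)/n < \bch(\mu)\}$, which has probability zero.

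Part~2 is the substantive claim and uses the conditional form of Han's pointwise converse: for any prefix-free code $f_n$ and any $\gamma > 0$,
\[
\pr{\nat(f_n(\Gn)) \leq -\log \pr{\Gn = \Gn | X_{[1:n]}} - \gamma} \leq e^{-\gamma},
\]
where $X_{[1:n]} = (X_1, \ldots, X_n)$ are the latent vertex features used to generate $\Gn \sim \mG(n;\rho_n W)$. This follows by conditioning on $X_{[1:n]} = x$ (under which $\Gn$ has a product-Bernoulli distribution) and applying Kraft's inequality to $f_n$ viewed as depending only on $\Gn$. The auxiliary step is to show
\[
\frac{S_n}{\barmn} - \log(1/\rho_n) \longrightarrow 1 - \Ent(W) \quad \text{in probability},
\]
where $S_n := -\log \pr{\Gn = \Gn | X_{[1:n]}}$. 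Writing $S_n = (S_n - M_n) + (M_n - \ev{M_n}) + \ev{M_n}$ with $M_n := \sum_{1 \le v < w \le n} H_b(\rho_n W(X_v,X_w) \wedge 1)$, I will handle the three pieces as follows. Conditional on $X_{[1:n]}$, $S_n$ is a sum of independent bounded terms with conditional mean $M_n$ and $\ev{\var(S_n|X_{[1:n]})} = O(\barmn (\log 1/\rho_n)^2)$ (the dominant term $\ev{W(\log W)^2}$ being finite since $W \in L^2$), so Chebyshev followed by averaging over $X_{[1:n]}$ gives $(S_n - M_n)/\barmn \to 0$ in probability. A Taylor expansion $H_b(p) = -p\log p + p + O(p^2)$ together with dominated convergence, using $W \in L^2$ both to handle the truncation $\rho_n W \wedge 1$ and to ensure the integrability of $W \log W$, yields $\ev{M_n} = \barmn \log(1/\rho_n) + \barmn(1 - \Ent(W)) + o(\barmn)$.

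The main obstacle is the middle step $(M_n - \ev{M_n})/\barmn \to 0$, since $M_n$ is a U-statistic of order two in the $X_i$'s. I will split $\var(M_n)$ according to the intersection pattern of the index pairs. Disjoint pairs contribute zero; same-pair contributions total $\binom{n}{2}\var(H_b(p_{12})) \le (\log 2)\, \ev{M_n} = O(\barmn \log(1/\rho_n))$; and shared-vertex contributions amount to $O(n^3)\cdot \var(g(X_1))$, where $g(x) := \ev{H_b(\rho_n W(x, X_2) \wedge 1) | x}$ has leading behaviour $\rho_n \log(1/\rho_n)\, d(x)$ with $d(x) := \int W(x,y) d\pi(y)$ satisfying $\snorm{d}_2 \le \snorm{W}_2 < \infty$, giving $\var(g) = O(\rho_n^2 (\log 1/\rho_n)^2)$ and total shared-vertex contribution $O(n^3 \rho_n^2 (\log 1/\rho_n)^2)$. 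Both are $o(\barmn^2)$ whenever $n\rho_n \to \infty$ (since $\barmn \ge c\, n\,(n\rho_n)$ grows faster than $(\log n)^2$), which delivers the concentration by Chebyshev. To conclude, I combine Han's inequality with $\gamma_n := \sqrt{\barmn}$ (so that $e^{-\gamma_n} \to 0$ and $\gamma_n/\barmn \to 0$) and the concentration of $S_n$ to obtain, for every $\epsilon > 0$, $\pr{L_n \leq 1 - \Ent(W) - \epsilon} \to 0$, where $L_n := (\nat(f_n(\Gn)) - \barmn \log(1/\rho_n))/\barmn$. Given any $t < 1-\Ent(W)$, choosing $\epsilon \in (0,(1-\Ent(W)-t)/2)$ yields $\pr{\limsup_n L_n \le t} \le \liminf_n \pr{L_n \le t+\epsilon} = 0 < 1$, proving~\eqref{eq:converse-graphon-part-statement}.
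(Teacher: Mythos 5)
Part~1 matches the paper's argument exactly: both reduce to Theorem~\ref{thm:baby-converse} after disposing of the $\bch(\mu)=-\infty$ case, and the event-containment step you give is correct. For Part~2, your route is genuinely different from the paper's and, if completed, would prove a strictly stronger conclusion than what is stated: you would show that the probability in~\eqref{eq:converse-graphon-part-statement} is $0$, not merely $<1$. The paper instead bounds $\limsup_n\ev{L_n}$ from below using $\ev{\nat(\tf_n(\Gn))}\ge H(\Gn)$ together with Proposition~\ref{prop:graphon-entropy-asymptotics}, and then transfers this to a statement about $\ev{\limsup_n L_n}$ via a Fatou argument against the universal envelope $U_n$; that argument never needs a concentration estimate, and in exchange only yields $\pr{\limsup_n L_n>t}>0$. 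Your approach conditions on $X_{[1:n]}$, invokes a pointwise source-coding converse (Barron's lemma), and then carries out a two-stage concentration: $S_n-M_n$ by Chebyshev conditionally on $X_{[1:n]}$ with the variance estimate $\ev{\var(S_n\mid X_{[1:n]})}=O(\barmn(\log 1/\rho_n)^2)$ (correct, using $W(\log W)^2\le C(1+W^2)$ and $|\log\rho_n|<\log n=o(\barmn)^{1/2}$), and $M_n-\ev{M_n}$ by a Hoeffding-type variance decomposition of the order-two U-statistic, with the shared-vertex contribution controlled through $d(x)=\int W(x,\cdot)\,d\pi$ and $\snorm{d}_2\le\snorm{W}_2$. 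This second concentration is the extra piece of work the paper's proof avoids, and it is the price you pay for the stronger in-probability conclusion. Your asymptotic for $\ev{M_n}$ is correct and is in essence the same computation the paper carries out in the lower-bound half of the proof of Proposition~\ref{prop:graphon-entropy-asymptotics}, with the inequality replaced by an exact expansion.

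There is, however, one genuine gap: you apply Kraft's inequality directly to $f_n$, but the theorem only assumes that $g_n\circ f_n$ is the identity, i.e.\ $f_n$ is injective, not that it is prefix--free. Barron's pointwise converse requires the Kraft sum $\sum_G e^{-\nat(f_n(G))}\le 1$, and for a merely injective code this sum is bounded only by the maximum codeword length, which is not controlled by the hypothesis; nor would a crude cap at $n^2\log 2$ nats be automatically negligible, since $n^2$ can be $\omega(\barmn\log(1/\rho_n))$ whenever $\rho_n\log n\not\to 0$. You need to first replace $f_n$ by the prefix--free map $\tf_n$ that the paper constructs (prepend the codeword length, and fall back to an index-within-$\mG_{n,m}$ code whenever $f_n$'s codeword exceeds $l_{n,m}$), which gives the pointwise bound $\nat(f_n(G))\ge\nat(\tf_n(G))-O(\log n)$ uniformly in $G\in\mG_n$. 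With that insertion, apply Barron's lemma to $\tf_n$ rather than to $f_n$; the $O(\log n)$ overhead is absorbed into the slack $\gamma_n=\sqrt{\barmn}=\omega(\log n)$ that you already carry, and the rest of your argument goes through unchanged.
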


\section{Coding Scheme}
\label{sec:coding-scheme}

\editstart

In this section, we provide the details of our compression scheme by
introducing the compression and decompression maps $\fnDn$ and $\gnDn$. Recall
from Section~\ref{sec:statement-and-results} that $\Delta_n$ is the splitting
parameter
which governs
\black
how we obtain $\Gn_{\Delta_n}$ and
$\Gn_*$ from the input graph  $\Gn$. More precisely, both $\Gn_{\Delta_n}$ and $\Gn_*$ are simple
graphs
on the vertex set $[n]$, and for each edge $(v,w)$ in $\Gn$, if $\deg_{\Gn}(v)
\leq \Delta_n$ and $\deg_{\Gn}(w) \leq \Delta_n$, we put an edge in
$\Gn_{\Delta_n}$ between the nodes $v$ and $w$. Otherwise, if either
$\deg_{\Gn}(v) > \Delta_n$ or $\deg_{\Gn}(w) > \Delta_n$, we place an edge in
$\Gn_*$ between the nodes $v$ and $w$. Let $R_n$ denote the set of vertices $v
\in [n]$ such that either $\deg_{\Gn}(v) > \Delta_n$ or $\deg_{\Gn}(w) >
\Delta_n$ for some $w \sim_{\Gn} v$. Note that for every edge $(v,w)$ in
$\Gn_*$, we have $v \in R_n$ and $w \in R_n$.

We first encode $\Gn_{\Delta_n}$ using the compression method from
\cite{delgosha2020universal} which we reviewed in
Section~\ref{sec:prelim-baby-compression}
\footnote{
Please note that
the splitting parameter $\Delta_n$ here should not be confused with the
truncation threshold $\log \log n$ in Section~\ref{sec:prelim-baby-compression}. The
parameter $\Delta_n$ determines how to split edges between $\Gn_{\Delta_n}$ and
$\Gn_*$, while the $\log \log n$ threshold in
Section~\ref{sec:prelim-baby-compression} determines which edges in
$\Gn_{\Delta_n}$ are separated out and placed in the set $Z_n$. However, as we
 discussed in Section~\ref{sec:statement-and-results}, we have two methods for
 choosing $\Delta_n$: we either set $\Delta_n = \min \{\log a_n, \log \log n\}$ as in
 Theorem~\ref{thm:dual-main-an}, or $\Delta_n = \Delta$ is fixed as in
 Theorem~\ref{thm:main-dual-fixed}. In either case, we have
$\Delta_n \leq \log
\log n$ for $n$ large enough, which means that all the degrees in
 $\Gn_{\Delta_n}$ are automatically bounded by $\log \log n$ are hence no edges
 will be truncated during the compression of $\Gn_{\Delta_n}$ using the method of
 Section~\ref{sec:prelim-baby-compression}.
 }.
Next, we discuss how to encode
$\Gn_*$. 
Overall, $\fn(\Gn)$ will be comprised of  
\black
the compressed form of
$\Gn_{\Delta_n}$ concatenated with the compressed form of $\Gn_*$. In order to
encode $\Gn_*$, we first encode the set $R_n$. For this, we first encode $|R_n|$
using at most  $1 + \log n$ nats, and then we encode the set $R_n$ using at most
$1 + \log
\binom{n}{|R_n|}$ nats by specifying $R_n$ among all the subsets of $[n]$ with
the same size. Recall that all the edges in $\Gn_*$ have both of their endpoints
in $R_n$. Thereby, if $|R_n| =0$,
$\Gn_*$ has no
edges and nothing remains to be done. Hence we assume that $|R_n| \geq  2$ from this point forward. Let
$\mn_*$ denote the number of edges in $\Gn_*$. Since $\mn_* \leq \binom{n}{2}$,
we may encode $\mn_*$ using at most $1 + 2 \log n$ nats.
 Define 
\begin{equation}
  \label{eq:alphan-def}
  \alpha_n := \exp \left( \left \lfloor  \log \frac{\mn_*}{n} \right \rfloor \right).
\end{equation}
Moreover, define the function $\phi:[0,\infty) \rightarrow [1,\infty)$ as
follows
\begin{equation}
  \label{eq:phi-function-def}
  \phi(x) :=
  \begin{cases}
    \frac{\sqrt{x}}{\log x} & x > e^2 \\
    1 & \text{otherwise}.
  \end{cases}
\end{equation}
Additionally, we define
\begin{equation}
  \label{eq:beta-n-def}
  \beta_n := \phi(\alpha_n).
\end{equation}

Note that the decoder knows $\mn_*$ at this point, and can
compute the value of $\beta_n$. Next, we run the least squares algorithm
from \cite{borgs2015consistent} which we discussed in
Section~\ref{sec:prelim-least-square-algo} on the input graph $\Gn$, with
parameter $\beta_n$ defined above.  Let $\hpin$
and $\hat{B}_n$ be the outputs of this algorithm, i.e.\ the optimizers in~\eqref{eq:least-sq-alg-better}.
Recall from~\eqref{eq:Bij-optimal-block-average} that for $1 \leq i ,j \leq
\beta_n$ such that $\hpin^{-1}(\{i\})$ and $\hpin^{-1}(\{j\})$ are not empty, we have
\begin{equation}
\label{eq:hatBn-entries-explicit}
  (\hat{B}_n)_{i,j} = \frac{1}{|\hpin^{-1}(\{i\})| |\hpin^{-1}(\{j\})|} \sum_{u \in \hpin^{-1}(\{i\}), v \in \hpin^{-1}(\{j\})} (A(\Gn))_{u,v}.
\end{equation}
Note that when $\hpin^{-1}(\{i\}) = \emptyset$ or $\hpin^{-1}(\{j\}) =
\emptyset$, the value of $(\hat{B}_n)_{i,j}$ does not affect the objective function
in~\eqref{eq:least-sq-alg-better}. Therefore, without loss of generality, we may
assume that $(\hat{B}_n)_{i,j} = 0$ for such $i,j$.
We emphasize that we run this algorithm  on the input graph $\Gn$, and not on
$\Gn_*$. However, we will use $\hpin$ and $\hat{B}_n$ to compress $\Gn_*$, as we
discuss below.

We first need to define some notation.
Let $\beta'_n$ be the number of $1 \leq i \leq \beta_n$ such that
  $\hpin^{-1}(\{i\}) \neq \emptyset$. Note that
  in~\eqref{eq:least-sq-alg-better}, we may reorder the vertex class labels
  governed by $\pi$ and modify $B$ accordingly without changing the objective.  
  Therefore, without loss of generality, we may assume
  that $\hpin^{-1}(\{i\}) \neq \emptyset$ for $1 \leq i \leq \beta'_n$ and
  $\hpin^{-1}(\{i\}) = \emptyset $ for $i > \beta'_n$.
Let $\tpin$ be the restriction of $\hpin$ on $R_n$. More precisely, with $R_n =
\{r_1, \dots, r_{|R_n|}\}$ such that $r_1 < r_2 < \dots < r_{|R_n|}$, we define $\tpin
  : [|R_n|] \rightarrow [ \beta_n ]$ such that $\tpin(i) =
  \hpin(r_i)$.
For $1 \leq i \leq \beta_n$, let $n_i := |\hpin^{-1}(\{i\})|$. Moreover,
  let $n^*_i := |\tpin^{-1}(\{i\})| = |\hpin^{-1}(\{i\}) \cap R_n|$.
  Let $\beta^*_n$ be the number of $1 \leq i \leq \beta_n$ such that $n^*_i
  \neq 0$. Note that $\beta^*_n \leq \beta'_n$.
  By a reordering argument similar to the above in~\eqref{eq:least-sq-alg-better}, 
 without
  loss of generality, we may assume that $n^*_i > 0$ for $1 \leq i \leq
  \beta^*_n$ and $n^*_i = 0$ for $i > \beta^*_n$.
  For $1 \leq i ,  j \leq \beta^*_n$, let
  \begin{equation*}
    \ms_{i,j} :=
    \begin{cases}
      \sum_{u < v \in \tpin^{-1}(\{i\})} (A(\Gn_*))_{u,v} & \mbox{if } i = j, \\
      \sum_{u \in \tpin^{-1}(\{i\}), v \in \tpin^{-1}(\{j\})} (A(\Gn_*))_{u,v} & \mbox{if } i \neq j.
    \end{cases}
  \end{equation*}
  Also, we define $\ms_{i,j}$ to be zero if $i > \beta^*_n$ or $j > \beta^*_n$.
  In other words, $\ms_{i,j}$ is the number of edges in the $i,j$ block formed
  by $\tpin$ in the adjacency matrix of $\Gn_*$.
  Likewise, for $1 \leq i , j \leq \beta'_n$, we define
  \begin{equation}
    \label{eq:mij-def}
    m_{i,j} :=
    \begin{cases}
      \sum_{u < v \in \hpin^{-1}(\{i\})} (A(\Gn))_{u,v} & \mbox{if } i = j, \\
      \sum_{u \in \hpin^{-1}(\{i\}), v \in \hpin^{-1}(\{j\})} (A(\Gn))_{u,v} & \mbox{if } i \neq j,
    \end{cases}
  \end{equation}
  and we define $m_{i,j}$ to be zero for $i > \beta'_n$ or $j > \beta'_n$.
    In other words, $m_{i,j}$ is the number of edges in the $i,j$ block formed
    by $\hpin$ in the adjacency matrix of  $\Gn$.
Note that $\ms_{i,j} \leq m_{i,j}$
for $1 \leq i, j \leq \beta_n$.
\black

Having defined the above notation, we continue with encoding $\tpin$. Since
$\tpin(i) \leq \beta_n$ 
for $1 \leq i \leq |R_n|$, we may encode $\tpin$ using
at most $|R_n| (1+\log \beta_n)$ nats. 
Next, 
\black
we encode $A(\Gn_*)$. We do this by
separately encoding
each  block in $A(\Gn_*)$ formed by $\hpin$. More precisely,
for $1 \leq i \leq \beta^*_n$, we encode the block
  $\tpin^{-1}(\{i\}) \times \tpin^{-1}(\{i\})$ of $A(\Gn_*)$ as
  follows. We first encode $\ms_{i,i}$ using at most $1 + 2 \log |R_n|$ nats. This is possible
  since $\ms_{i,i} \leq \mn_* \leq |R_n|^2$. Then, we encode the positions
  of the $\ms_{i,i}$ ones in the upper triangular part of the
  $\tpin^{-1}(\{i\}) \times \tpin^{-1}(\{i\})$ block 
  \black
  by at most $1 + \log
  \binom{\binom{n^*_i}{2}}{\ms_{i,i}}$ nats. 
Similarly, for $1 \leq i < j \leq \beta^*_n$, we first encode $\ms_{i,j}$ using
at most $1 + 2 \log |R_n| $ nats, and then we  encode the positions of the $\ms_{i,j}$
  ones in the block $\tpin^{-1}(\{i\}) \times \tpin^{-1}(\{j\})$ of
  $A(\Gn_*)$ using at most  $1 + \log \binom{n^*_i n^*_j}{\ms_{i,j}}$ nats.

At the decoder, we first reconstruct $\Gn_{\Delta_n}$. Next, we decode for the
set $R_n$ and $\mn_*$. We then find $\beta_n$ from~\eqref{eq:beta-n-def}. Then
we decode for $\tpin$, 
and decode each of the blocks of $A(\Gn_*)$ separately.
We then reconstruct $\Gn_*$ by putting the blocks of $A(\Gn_*)$ together.
Finally, we put $\Gn_{\Delta_n}$ and $\Gn_*$ together to reconstruct $\Gn$.

Let $\elln_{\Delta_n}$ and $\elln_{*}$ denote the number of nats we use to
encode $\Gn_{\Delta_n}$ and $\Gn_*$, respectively,
so that $\nat(f_n(\Gn)) =
\elln_{\Delta_n} + \elln_*$. Going over the description of the compression
procedure discussed above, we obtain the following result.
The following Lemma~\ref{lem:elln-*-nat-count} will be used in the proof of
Propositions~\ref{prop:lwc-part} and \ref{prop:graphon-optimality}.

\begin{lem}
  \label{lem:elln-*-nat-count}
  If $R_n \neq \emptyset$, we have $\elln_* \leq \elln_{*,1} + \elln_{*,2}$  where
    \begin{align*}
      \elln_{*,1} &:= 3 + |R_n| + 3\log n + \log \binom{n}{|R_n|}  + |R_n| \log \beta_n, \\
      \elln_{*,2} &:= 2 {\beta_n^*}^2 + \sum_{i=1}^{\beta^*_n} \left( 2 \log |R_n| + \log \binom{\binom{n^*_i}{2}}{\ms_{i,i}} \right) + \sum_{1 \leq i < j \leq \beta^*_n} \left( 2 \log |R_n| + \log \binom{n^*_i n^*_j}{\ms_{i,j}} \right).
    \end{align*}
    Otherwise, if $R_n = \emptyset$, we have $\elln_* \leq 1 + \log n$.
\end{lem}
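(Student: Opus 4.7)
The plan is to verify the claimed bound by simply going through the compression procedure for $\Gn_*$ described just above the lemma statement, tallying the number of nats used at each step, and then collecting the terms into the two groups $\elln_{*,1}$ and $\elln_{*,2}$.

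First I would handle the easy case $R_n = \emptyset$. In that case the procedure only encodes $|R_n|=0$ using at most $1 + \log n$ nats (a prefix-free encoding of an integer in $\{0,1,\dots,n\}$) and then stops, because every edge of $\Gn_*$ has both endpoints in $R_n$, so $\Gn_*$ has no edges. This yields the bound $\elln_* \le 1 + \log n$ stated in the lemma.

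Next, for the main case $R_n \neq \emptyset$, I would itemize the contributions in the order they appear in the encoding procedure. Encoding $|R_n|$ costs at most $1 + \log n$ nats, and specifying $R_n$ among subsets of $[n]$ of that size costs at most $1 + \log\binom{n}{|R_n|}$ nats. Encoding $\mn_*$, which is at most $\binom{n}{2}$, costs at most $1 + 2\log n$ nats. Encoding $\tpin$ as a sequence in $[\beta_n]^{|R_n|}$ costs at most $|R_n|(1 + \log \beta_n)$ nats. Summing these four contributions and using $1+|R_n|(1+\log \beta_n) \le |R_n| + |R_n|\log \beta_n + 1$ (which is valid since $|R_n|\ge 1$), the total is at most
\[
3 + |R_n| + 3\log n + \log\binom{n}{|R_n|} + |R_n|\log \beta_n,
\]
which is exactly $\elln_{*,1}$.

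Finally I would account for the block encoding of $A(\Gn_*)$. For each of the $\beta_n^*$ diagonal blocks, one spends at most $1 + 2\log|R_n|$ nats to encode $\ms_{i,i} \le |R_n|^2$ and then at most $1 + \log\binom{\binom{n_i^*}{2}}{\ms_{i,i}}$ nats to locate the ones; similarly each of the $\binom{\beta_n^*}{2}$ off-diagonal blocks costs at most $1 + 2\log|R_n|$ plus $1 + \log\binom{n_i^* n_j^*}{\ms_{i,j}}$ nats. The pure constant overhead coming from the $1$'s totals $2\beta_n^* + 2\binom{\beta_n^*}{2} = \beta_n^* + (\beta_n^*)^2 \le 2(\beta_n^*)^2$, and the remaining terms group exactly as in the definition of $\elln_{*,2}$. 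Adding $\elln_{*,1}$ and $\elln_{*,2}$ then gives the claimed bound. There is no real obstacle here; the only mildly non-routine step is bounding $\beta_n^* + (\beta_n^*)^2$ by $2(\beta_n^*)^2$, which uses $\beta_n^*\ge 1$ (guaranteed since $R_n\neq\emptyset$ forces at least one nonempty class).
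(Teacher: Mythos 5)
Your proposal is correct and follows essentially the same route as the paper: both simply walk through the encoding procedure for $\Gn_*$ from Section~\ref{sec:coding-scheme} and add up the nats used to encode $R_n$, $\mn_*$, $\tpin$, and each block of $A(\Gn_*)$, then group the terms into $\elln_{*,1}$ and $\elln_{*,2}$. One small remark: the parenthetical ``$1+|R_n|(1+\log\beta_n) \le |R_n| + |R_n|\log\beta_n + 1$ (valid since $|R_n|\ge 1$)'' is actually an identity and needs no hypothesis; the place where $R_n\neq\emptyset$ (hence $\beta_n^*\ge 1$) is genuinely used is, as you note, the final bound $\beta_n^* + (\beta_n^*)^2 \le 2(\beta_n^*)^2$.
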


\begin{proof}
  Following the encoding procedure, we begin with encoding the set $R_n$ using
  at most 
  $2+ \log n + \log \binom{n}{|R_n|}$
  nats.
  \black
  If $R_n = \emptyset$, we encode $R_n$ using
  at most $1 + \log n$ nats, and the encoding procedure stops at this point. Therefore, if
  $R_n = \emptyset$, we have $\elln_* \leq 1 +  \log n$. Now, assume that $R_n \neq
  \emptyset$. In this case, after encoding $R_n$, we encode $\mn_*$ using at
  most $ 1 + 2
  \log n$ nats. Then, we encode $\tpin$ using $|R_n| (1+\log \beta_n)$ nats.
  Moreover, we encode  each diagonal block $1 \leq i \leq \beta^*_n$ in $A(\Gn_*)$
  using at most $2 + 2 \log |R_n| + \log \binom{\binom{n^*_i}{2}}{m^*_{i,j}}$ nats. Also, we
  encode 
  each non--diagonal block $1 \leq i < j \leq \beta^*_n$ using at most $2 + 2 \log |R_n| +
  \log \binom{n^*_i n^*_j}{m^*_{i,j}}$ nats. This completes the proof.
\end{proof}


\editfinish

\editstart

\section{Proof of Proposition~\ref{prop:lwc-part}: Local weak Convergence Analysis}
\label{sec:lwc-analysis}



In this section, we prove Proposition~\ref{prop:lwc-part}.
  Since
  $U(\Gn_{\Delta_n}) \Rightarrow \nu$ a.s.,
  using Theorem~\ref{thm:prelim-baby-compression} in Section~\ref{sec:prelim-baby-compression}, if $\elln_{\Delta_n}$ denotes the number
  of nats we use to encode $\Gn_{\Delta_n}$, we have
  \begin{equation}
    \label{eq:lwc-Gn-Deltan-limsup-BC}
    \limsup_{n \rightarrow \infty} \frac{\elln_{\Delta_n} - \mn_{\Delta_n} \log n}{n} \leq \bch(\nu) \qquad \qquad \text{a.s.,}
  \end{equation}
  where $\mn_{\Delta_n}$ denotes the number of edges in $\Gn_{\Delta_n}$. Now, let
  $\elln_*$ denote the number of nats we use to encode $\Gn_*$, so that
  $\nat(f_n(\Gn)) = \elln_{\Delta_n} + \elln_*$. We claim that
  \begin{equation}
    \label{eq:lwc-claim-elln-star}
    \limsup_{n \rightarrow \infty} \frac{\elln_* - \mn_* \log n}{n} \leq  (25 + e/2) \eta + H_b(\eta) \qquad \qquad \text{a.s.},
  \end{equation}
  where $\mn_*$ denotes the number of edges in $\Gn_*$. Note that this together
  with~\eqref{eq:lwc-Gn-Deltan-limsup-BC} finishes the proof. 
  From Lemma~\ref{lem:elln-*-nat-count}, if $R_n = \emptyset$, we have $\elln_*
  \leq 1 + \log n $ and~\eqref{eq:lwc-claim-elln-star} holds. Therefore, we assume that
  $R_n \neq \emptyset$ for the rest of the proof. 
Note that when $R_n \neq \emptyset$, there must be at least two vertices in
$R_n$ and hence $|R_n| \geq 2$.
In this case, again from
  Lemma~\ref{lem:elln-*-nat-count}, we have  $\elln_* \leq  \elln_{*,1} +
  \elln_{*,2}$ where
    \begin{align*}
      \elln_{*,1} &:= 3 + |R_n| + 3\log n + \log \binom{n}{|R_n|}  + |R_n| \log \beta_n, \\
      \elln_{*,2} &:= 2 {\beta_n^*}^2 + \sum_{i=1}^{\beta^*_n} \left( 2 \log |R_n| + \log \binom{\binom{n^*_i}{2}}{\ms_{i,i}} \right) + \sum_{1 \leq i < j \leq \beta^*_n} \left( 2 \log |R_n| + \log \binom{n^*_i n^*_j}{\ms_{i,j}} \right).
    \end{align*}
Using the bound $\binom{r}{s} \leq (re/s)^s$, 
    we can write
    \begin{equation}
      \label{eq:elln-1-simplification}
      \begin{aligned}
        \elln_{*,2} &\leq 2{\beta_n^*}^2 + \sum_{i=1}^{\beta^*_n} \left( 2 \log |R_n| + \ms_{i,i} \log \frac{{n^*_i}^2 e}{2 \ms_{i,i}} \right) + \sum_{1 \leq i < j \leq \beta^*_n} \left( 2 \log |R_n| + \ms_{i,j} \log \frac{n^*_i n^*_j e}{\ms_{i,j}} \right) \\
        &\leq 2 {\beta^*_n}^2 (1+\log |R_n|) + \sum_{i=1}^{\beta^*_n} \ms_{i,i} \log n^*_i + n^*_i \left( \frac{\ms_{i,i}}{n^*_i} \log \frac{n^*_i e}{2 \ms_{i,i}} \right) \\
        &\qquad + \sum_{1 \leq i < j \leq \beta^*_n} \ms_{i,j} \log \sqrt{n^*_i n^*_j} + \sqrt{n^*_i n^*_j} \left( \frac{\ms_{i,j}}{\sqrt{n^*_i n^*_j}} \log \frac{\sqrt{n^*_i n^*_j}e}{\ms_{i,j}} \right)\\
        &\stackrel{(a)}{\leq} 6{\beta^*_n}^2 \log |R_n| + \sum_{i=1}^{\beta^*_n} \ms_{i,i} \log n + n^*_i \left( \frac{\ms_{i,i}}{n^*_i} - \frac{\ms_{i,i}}{n^*_i} \log \frac{2\ms_{i,i}}{n^*_i}\right) \\
        &\qquad + \sum_{1 \leq i < j \leq \beta^*_n} \ms_{i,j} \log n + \sqrt{n^*_i n^*_j} \left( \frac{\ms_{i,j}}{\sqrt{n^*_i n^*_j}} - \frac{\ms_{i,j}}{\sqrt{n^*_i n^*_j}} \log \frac{\ms_{i,j}}{\sqrt{n^*_i n^*_j}} \right) \\
        &\stackrel{(b)}{=} 6 {\beta^*_n}^2 \log |R_n| + \mn_* \log n + \sum_{i=1}^{\beta^*_n} n^*_i s\left( \frac{2\ms_{i,i}}{n^*_i} \right) + \sum_{1 \leq i < j \leq \beta^*_n} 2 \sqrt{n^*_i n^*_j} s\left( \frac{\ms_{i,j}}{\sqrt{n^*_i n^*_j}} \right) \\
        &\stackrel{(c)}{\leq} 6 {\beta^*_n}^2 \log |R_n| + \mn_* \log n + \left( \sum_{i=1}^{\beta^*_n} \sqrt{n^*_i} \right)^2 s\left( \frac{2\mn_*}{\left( \sum_{i=1}^{\beta^*_n} \sqrt{n^*_i} \right)^2} \right),
      \end{aligned}
    \end{equation}
    where in $(a)$, we have used the fact that since $|R_n| \geq 2$, $1 + \log
    |R_n| \leq 3 \log |R_n|$, 
    in $(b)$, we use $s(x)=\frac{x}{2} - \frac{x}{2} \log x$ for $x > 0$ and $s(0) = 0$,  and we have simplified the expression using $\sum_{i=1}^{\beta_n^*}
    \ms_{i,i} + \sum_{1 \leq i < j \leq \beta_n^*} \ms_{i,j} = \mn_*$.
    Furthermore, $(c)$ uses
    the concavity of the function $s(.)$. 
    Now, we consider two cases to bound this expression.

\underline{Case 1:} Assume that  $\mn_* / n < e^3$, in which case
using~\eqref{eq:alphan-def}, we have $\alpha_n \leq e^2$, and
from~\eqref{eq:beta-n-def}, we have  $\beta_n = 1$.
Since $|R_n| > 0$, this implies that $\beta^*_n
= 1$ 
\black
and $n^*_1 = |R_n|$. Furthermore,
using~\eqref{eq:elln-1-simplification}, we have
\begin{align*}
  \elln_{*,2} &\leq 6 \log |R_n| + \mn_* \log n + |R_n| s\left( \frac{2 \mn_*}{|R_n|} \right) \\
              &\stackrel{(a)}{\leq} 6 \log n + \mn_* \log n + \frac{1}{2} |R_n| \\
\end{align*}
where in $(a)$  we have used the facts that $|R_n| \leq n $ and $s(x) \leq 1/2$ for $x \geq 0$.
Using this together with $\elln_* \leq \elln_{*,1} + \elln_{*,2}$, recalling
that $\elln_{*,1} = 3 + |R_n| + 3\log n + \log \binom{n}{|R_n|}  + |R_n| \log \beta_n$,
and simplifying using
$\beta_n  =1$, we get
\begin{equation}
  \label{eq:eln-star-bound-case-1}
  \elln_* \leq   \mn_* \log n + \log \binom{n}{|R_n|} + \frac{3}{2} |R_n|  + 9 \log n +3.
\end{equation}


\underline{Case 2:} Assume that $\mn_* / n \geq e^3$.
Note that in this case, $\alpha_n
\geq e^3 >e^2$ and $\beta_n = \sqrt{\alpha_n} / \log \alpha_n$. Since $\phi(x)$ is
increasing for $x > e^2$, and $\alpha_n \leq \mn_* / n$,
we have
\begin{equation}
  \label{eq:lwc-proof-case-2-betan-bound}
\beta_n \leq \frac{\sqrt{\frac{\mn_*}{n}} }{\log(\frac{\mn_*}{n})}.
\end{equation}
Thereby, 
using the inequality
$(\sum_{i=1}^{\beta^*_n} \sqrt{n_i^*})^2 \leq \beta^*_n |R_n| \leq n \beta_n $, 
we get
\begin{equation}
  \label{eq:2mn-sum-bigger-than-e}
  \begin{aligned}
    \frac{2\mn_*}{(\sum_{i=1}^{\beta^*_n} \sqrt{n^*_i})^2} &\geq \frac{2\mn_*}{\beta^*_n |R_n|} \\
    &\geq \frac{2\mn_*}{n \beta_n} \\
    &\geq 2 \sqrt{\frac{\mn_*}{n}} \log \frac{\mn_*}{n} \\
    &\geq e,
  \end{aligned}
\end{equation}
where the last inequality uses $\mn_* / n > e^2$.  Note that for $x \geq e$, we have $s(x) \leq 0$.
Moreover, $s(x)$ is decreasing for $x \geq e$. On the other hand, we have 
$(\sum_{i=1}^{\beta^*_n} \sqrt{n^*_i})^2 \geq
\sum_{i=1}^{\beta^*_n} n^*_i = |R_n|$. 
Simple calculus shows 
that   $s(x) = \frac{x}{2} - \frac{x}{2} \log x \leq (e-x)/2$
for $x \geq 0$.
This discussion together with 
inequality~\eqref{eq:2mn-sum-bigger-than-e}  implies that 
\begin{equation}
  \label{eq:s-upper-bound}
  \begin{aligned}
    \left( \sum_{i=1}^{\beta^*_n} \sqrt{n^*_i} \right)^2 s\left( \frac{2\mn_*}{\left( \sum_{i=1}^{\beta^*_n} \sqrt{n^*_i} \right)^2} \right) &\leq |R_n| s\left( \frac{2\mn_*}{\left( \sum_{i=1}^{\beta^*_n} \sqrt{n^*_i} \right)^2} \right) \\
    &\stackrel{(*)}{\leq} |R_n| \frac{e - 2 \sqrt{\frac{\mn_*}{n}} \log \frac{\mn_*}{n}}{2} \\
    &\leq \frac{e}{2} |R_n| - |R_n| \log \frac{\mn_*}{n},
  \end{aligned}
\end{equation}
where $(*)$ uses the third line in~\eqref{eq:2mn-sum-bigger-than-e} and 
 the last inequality uses the fact $\mn_* / n \geq e^3 > 1$.
Substituting this into~\eqref{eq:elln-1-simplification}, we get
\begin{equation}
  \label{eq:case-2-elln-1-upperbound}
  \elln_{*,2} \leq 6 {\beta_n^*}^2 \log |R_n| + \mn_* \log n + \frac{e}{2} |R_n| - |R_n| \log \frac{\mn_*}{n}.
\end{equation}
On the other hand, using~\eqref{eq:lwc-proof-case-2-betan-bound}, we have
\begin{equation}
  \label{eq:2betanstarlogR-n-bound-1}
    6 {\beta_n^*}^2 \log |R_n| \leq 6 \beta_n^2 \log |R_n| \leq 6\frac{\mn_*}{n} \frac{\log |R_n|}{\log^2 \frac{\mn_*}{n}}.
\end{equation}
We find an upper bound for this term in two cases. First, assume that $\mn_*
\leq |R_n|^{3/2}$. Note that $|R_n| \leq n$. Furthermore,  since $\mn_* / n \geq e^3 > e$,
we have 
$\log (\mn_*/ n) > 1$. Using these, we get 
\begin{equation*}
  6\frac{\mn_*}{n} \frac{\log |R_n|}{\log^2 \frac{\mn_*}{n}} \leq 6 \frac{\mn_*}{|R_n|} \log |R_n| \leq 6\sqrt{|R_n|} \log |R_n| \leq 12|R_n|,
\end{equation*}
where the last inequality uses the bound $\log |R_n| = 2 \log \sqrt{|R_n|} \leq 2\sqrt{|R_n|}$.
Now, assume that  $\mn_*  > |R_n|^{3/2}$. Since the function $x \mapsto x /
\log^2 x$ is increasing for $x > e^2$ and $\mn_* / n  > e^2$, we may write 
\begin{equation*}
  6\frac{\mn_*}{n} \frac{\log |R_n|}{\log^2 \frac{\mn_*}{n}} \leq 6\frac{\mn_*}{|R_n|} \frac{\log |R_n|}{\log^2 \frac{\mn_*}{|R_n|}} \leq 6\frac{\mn_*}{|R_n|} \frac{\log |R_n|}{\log^2 \sqrt{|R_n|}} \stackrel{(*)}{\leq} 3|R_n| \frac{\log |R_n|}{\log^2 \sqrt{|R_n|}} = 12 \frac{|R_n|}{\log |R_n|} \leq 24 |R_n|,
\end{equation*}
where in $(*)$, we have used the fact that $\mn_* \leq \binom{|R_n|}{2} \leq
|R_n|^2/2$. Also, in the last step, we have used $\log |R_n| \geq \log 2 \geq 1/2$.
Combining the two cases and substituting
into~\eqref{eq:2betanstarlogR-n-bound-1},
we realize that
\begin{equation*}
  6 {\beta^*_n}^2 \log |R_n| \leq 24 |R_n|. 
 \end{equation*}
  Substituting into~\eqref{eq:case-2-elln-1-upperbound},
  we get
\begin{equation*}
  \elln_{*,2} \leq \mn_* \log n +  (24+e/2) |R_n| - |R_n| \log \frac{\mn_*}{n}.
\end{equation*}
Using this together with $\elln_* \leq \elln_{*,1} + \elln_{*,2}$ and recalling
that $\elln_{*,1} = 3 + |R_n| + 3\log n + \log \binom{n}{|R_n|}  + |R_n| \log \beta_n$,
we get
\begin{equation*}
  \elln_* \leq \mn_* \log n  + \log \binom{n}{|R_n|} + (25+e/2) |R_n| + 3 \log n + |R_n| \log \beta_n - |R_n| \log \frac{\mn_*}{n} + 3.
\end{equation*}
Note that since $\mn_* / n \geq e^3> e$,
using~\eqref{eq:lwc-proof-case-2-betan-bound}, we have
\begin{equation*}
  |R_n| \log \beta_n - |R_n| \log \frac{\mn_*}{n} \leq \frac{1}{2} |R_n| \log \frac{\mn_*}{n} - |R_n| \log \log \frac{\mn_*}{n} - |R_n| \log \frac{\mn_*}{n} < 0.
\end{equation*}
Thereby, we have
\begin{equation}
\label{eq:eln-star-bound-case-2}
  \eln_* \leq \mn_* \log n  + \log \binom{n}{|R_n|} + (25+e/2) |R_n| + 3 \log n+3.
\end{equation}

Combining \eqref{eq:eln-star-bound-case-1} and \eqref{eq:eln-star-bound-case-2},
we get the following bound which holds for both cases:
\begin{equation}
  \label{eq:eln-star-lwc-bound-both-cases}
  \eln_* \leq \mn_* \log n + \log \binom{n}{|R_n|} + (25 + e/2) |R_n| + 9 \log n +3.
\end{equation}
Note that $|R_n| / n \rightarrow \eta$ a.s., which implies that $\frac{1}{n}\log
\binom{n}{|R_n|} \rightarrow H_b(\eta)$ a.s., where $H_b(\eta)$ denotes the binary
entropy of $\eta$. Using this in~\eqref{eq:eln-star-lwc-bound-both-cases}, we
realize that  
\begin{equation*}
  \limsup_{n \rightarrow \infty} \frac{\eln_* - \mn_* \log n}{n} \leq (25 + e/2) \eta + H_b(\eta) \qquad \qquad \text{a.s.},
\end{equation*}
which is precisely~\eqref{eq:lwc-claim-elln-star}. This together
with~\eqref{eq:lwc-Gn-Deltan-limsup-BC} completes the proof.

\section{Proof of Proposition~\ref{prop:graphon-optimality}: Graphon Analysis}
\label{sec:graphon-analysis}

In this section, we prove Proposition~\ref{prop:graphon-optimality}.
Before giving the proof, we introduce some  notation. 
Recall from Section~\ref{sec:coding-scheme} that $\hpin$ and $\hat{B}_n$ are the
optimizers in~\eqref{eq:least-sq-alg-better} associated to $A(\Gn)$  with
parameter $\beta_n$, and $n_i =
|\hpin^{-1}(\{i\})|$ for $1 \leq i \leq \lfloor \beta_n \rfloor$.
Let $\hWn$ be the
block graphon $(\vec{p}, \hat{B}_n)$ where $\vec{p} =(p_1, \dots, p_{\lfloor \beta_n \rfloor})$ with $p_i = n_i
/ n$ for $1 \leq i \leq \lfloor  \beta_n \rfloor$.  More precisely,
using~\eqref{eq:hatBn-entries-explicit} and \eqref{eq:mij-def} in
Section~\ref{sec:coding-scheme}, $\hWn$ is defined
on the finite probability space $\{1, \dots, \lfloor \beta_n \rfloor\}$ equipped
with probabilities $(\frac{n_1}{n}, \dots, \frac{n_{\lfloor \beta_n
    \rfloor}}{n})$ such that $\hWn(i,j) := \lambda_{i,j}$ where
\begin{equation}
  \label{eq:pij-def}
  \lambda_{i,j} :=
  \begin{cases}
    \frac{2m_{i,j}}{n_i^2} & \text{ if } i = j \text{ and } i, j \leq  \beta'_n,  \\
    \frac{m_{i,j}}{n_i n_j} & \text{ if }i \neq j \text{ and } i, j \leq \beta'_n, \\
    0 & \text{otherwise,}
  \end{cases}
\end{equation}
where $\beta'_n$  and $m_{i,j}$ were defined in Section~\ref{sec:coding-scheme}.
Moreover, 
we define the graphon $\hWns$ 
on the same
probability space  
\black
$\{1, \dots, \lfloor \beta_n \rfloor\}$ equipped
with probabilities $(\frac{n_1}{n}, \dots, \frac{n_{\lfloor \beta_n
    \rfloor}}{n})$  such that $\hWns(i,j) = \lambda^*_{i,j}$ where 
\begin{equation}
  \label{eq:pstarij-def}
  \lambda^*_{i,j} :=
  \begin{cases}
    \frac{2m^*_{i,j}}{n_i^2} & \text{ if } i = j \text{ and } i, j \leq  \beta^*_n,  \\
    \frac{m^*_{i,j}}{n_i n_j} & \text{ if } i \neq j \text{ and } i, j \leq \beta^*_n, \\
    0 & \text{otherwise,}
  \end{cases}
\end{equation}
where $\beta^*_n$ and $m^*_{i,j}$ were defined in Section~\ref{sec:coding-scheme}.


The following Proposition~\ref{prop:graphon-mn-asymptotics} discusses some useful facts regarding the asymptotic
behavior of the number of edges in
a sequence of sparse
$W$-random graphs, and will be useful in the proof of
Proposition~\ref{prop:graphon-optimality}. The proof of
Proposition~\ref{prop:graphon-mn-asymptotics} is given in Appendix~\ref{app:proof-prop-graphon-mn-asymptotics}.

\begin{prop}
  \label{prop:graphon-mn-asymptotics}
  Assume that  $W$ is a normalized $L^2$ graphon and $\Gn \sim \mG(n; \rho_n W)$ is a sequence
  of $W$--random graphs with target density $\rho_n$ such that 
  $\rho_n \rightarrow 0$
  and $n \rho_n
  \rightarrow \infty$.
  \black
  Then, with $\mn$ being the
  number of edges in $\Gn$ and $\barmn :=
  \binom{n}{2} \rho_n$, the following hold:
  \begin{enumerate}
  \item \label{item:prop-mn-asymp-}
    \begin{equation*}
      \lim_{n \rightarrow \infty} \frac{\mn}{\barmn} = 1 \qquad \text{a.s.};
    \end{equation*}
  \item
    \begin{equation*}
      \lim_{n \rightarrow \infty} \frac{\mn - \barmn}{\barmn} \log \frac{1}{\rho_n} = 0 \qquad \text{a.s.};
    \end{equation*}
  \item \label{item:prop-mn-asymp-log-n2-1-as}
    \begin{equation*}
      \lim_{n \rightarrow \infty} \frac{\log \binom{\binom{n}{2}}{\mn} - \barmn \log \frac{1}{\rho_n}}{\barmn} = 1 \qquad \text{a.s.};
    \end{equation*}
  \item \label{item:prop-mn-asymp-log-n2-1-ev}
    \begin{equation*}
      \limsup_{n \rightarrow \infty} \ev{\frac{\log \binom{\binom{n}{2}}{\mn} - \barmn \log \frac{1}{\rho_n}}{\barmn}} \leq 1.
    \end{equation*}
  \end{enumerate}
\end{prop}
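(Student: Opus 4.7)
The plan is to tackle the four parts in order, since part 3 uses parts 1 and 2 via a Stirling-type expansion of $\log\binom{N}{k}$ and part 4 is its in-expectation analogue handled by a convexity argument. Part 1 is immediate from Theorem~\ref{thm:ganguli-W-random-graphon-convergence}: with $\rho(\Gn) = 2\mn/n^2$ and $\barmn = \binom{n}{2}\rho_n$, one has $\mn/\barmn = (n/(n-1))\cdot(\rho(\Gn)/\rho_n) \to 1$ almost surely.

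Part 2 is the technical heart. I would split
\begin{equation*}
\mn - \barmn = (\mn - S_n) + (S_n - \ev{S_n}) + (\ev{S_n} - \barmn),
\end{equation*}
where $S_n := \ev{\mn\mid X_1,\ldots,X_n} = \sum_{i<j}(\rho_n W(X_i,X_j)\wedge 1)$. The third piece is deterministic: by~\eqref{eq:ev-mn-mbarn}, $|\ev{S_n}-\barmn|/\barmn = \ev{(W-1/\rho_n)_+}$, which Cauchy--Schwarz with $W\in L^2$ bounds by $\ev{W^2}\rho_n$, so multiplication by $\log(1/\rho_n)$ still produces $o(1)$ because $\rho_n\log(1/\rho_n)\to 0$. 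For $\mn-S_n$, conditional on the $X_i$'s the variable $\mn$ is a sum of independent Bernoullis with conditional variance at most $S_n$; on the high-probability event $\{S_n\leq 2\barmn\}$ a conditional Bernstein bound yields a tail of the form $\exp(-c\,\barmn/\log^2(1/\rho_n))$ for deviations of order $\barmn/\log(1/\rho_n)$. Since $n\rho_n\to\infty$ forces $\rho_n>1/n$ eventually, so that $\log(1/\rho_n)\leq\log n$ and $\barmn\geq n/2$, the exponent eventually dominates any multiple of $\log n$, and Borel--Cantelli supplies the a.s.\ bound. For $S_n-\ev{S_n}$, a Bernstein-type inequality for the order-$2$ $U$-statistic $S_n$ (with kernel bounded by $1$ and variance $O(\rho_n)$), or equivalently a fourth-moment computation combined with Borel--Cantelli, gives the analogous a.s.\ estimate, and summing the three contributions yields $(\mn-\barmn)\log(1/\rho_n)/\barmn\to 0$ a.s.

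For part 3, Stirling's formula provides $\log\binom{N}{k} = k\log(N/k) + k - k^2/(2N) + O(\log N)$; applying this with $N=\binom{n}{2}$ and $k=\mn$, subtracting $\barmn\log(1/\rho_n)$, and dividing by $\barmn$ rewrites the target quantity as
\begin{equation*}
\frac{\mn-\barmn}{\barmn}\log\frac{1}{\rho_n} - \frac{\mn}{\barmn}\log\frac{\mn}{\barmn} + \frac{\mn}{\barmn} + O(\rho_n) + O\!\left(\frac{\log n}{\barmn}\right),
\end{equation*}
and parts 1 and 2, together with the elementary fact $x\log x\to 0$ as $x\to 1$, force this expression to tend to $1$ almost surely. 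For part 4, the function $\phi(k) := k\log(Ne/k)$, with $\phi(0):=0$, is concave on $[0,N]$ and satisfies $\log\binom{N}{k}\leq\phi(k)$; Jensen's inequality therefore gives $\ev{\log\binom{\binom{n}{2}}{\mn}} \leq \ev{\mn}\log(\binom{n}{2}e/\ev{\mn})$, and substituting $\ev{\mn} = \barmn\,\ev{W\wedge 1/\rho_n}$ with $\ev{W\wedge 1/\rho_n}\to 1$, dividing by $\barmn$, and subtracting $\log(1/\rho_n)$ delivers the $\limsup\leq 1$ claim. The main obstacle is part 2: the a.s.\ rate in $\mn/\barmn\to 1$ must be strictly sharper than $1/\log(1/\rho_n)$, finer than what plain Chebyshev gives, so the concentration step has to carefully exploit both the conditional independence of edges given the latent vertex marks and the a priori bound $\log(1/\rho_n)\leq\log n$ coming from $n\rho_n\to\infty$.
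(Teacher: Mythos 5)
Your treatment of Parts 1, 3, and 4 matches the paper's Lemmas closely, and the Stirling-type expansion with $\binom{n}{2}$ and $\mn$, followed by parts 1--2 and the fact $x\log x\to 0$ as $x\to 1$, is essentially how the paper organizes Lemma~\ref{lem:graphon-log-n2-mn-mbn-1}; Part~4 via concavity and $\ev{\mn}\le\barmn$ is also the same. The issue is entirely in Part~2, and it is a genuine gap rather than a detail.

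You decompose exactly as the paper does, with $S_n=\ev{\mn\mid X_{[1:n]}}=\barMn$, and your handling of $\mn-S_n$ (conditional Bernstein on $\binom{n}{2}$ conditionally independent Bernoullis, with the event $S_n\le 2\barmn$) and of $\ev{S_n}-\barmn$ (deterministic Cauchy--Schwarz) both work. But the step ``a Bernstein-type inequality for the order-$2$ $U$-statistic $S_n$, or equivalently a fourth-moment computation combined with Borel--Cantelli'' does not go through in the required generality $n\rho_n\to\infty$. Write the Hoeffding/Hajek projection of $S_n-\ev{S_n}$ as $T_1=(n-1)\sum_i\bigl(g(X_i)-\ev g\bigr)$ with $g(x)=\ev{h_n(x,X')}$, $h_n=\rho_n W\wedge 1$. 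Here $\lVert g\rVert_\infty\le 1$ (so each summand of $T_1$ is bounded only by $2(n-1)$) while $\text{Var}(g)\le\rho_n^2\lVert W\rVert_2^2$ (not $O(\rho_n)$). Bernstein applied to $T_1$ at the scale $t_n=\varepsilon\barmn/\log(1/\rho_n)$ yields an exponent whose dominant term is of order $n\rho_n/\log(1/\rho_n)$, and with $n\rho_n\to\infty$ arbitrarily slowly (say $n\rho_n=\sqrt{\log n}$, so that $\log(1/\rho_n)\asymp\log n$) this exponent tends to $0$; the tail bound is not summable and Borel--Cantelli fails. A fourth-moment (or any fixed-order moment) bound has the same problem: the resulting tail probabilities are of order $\log^{2k}n\,/\,\bigl(n(n\rho_n)^{2k-2}\bigr)$ for the projection term, and since $n\rho_n$ can grow slower than any power of $\log n$, this is not summable either. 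The obstruction is that a small but nonzero fraction of vertices have degree function $\ev{W(X_i,\cdot)}$ close to $1/\rho_n$, pushing $g(X_i)$ up to $O(1)$ and spoiling the exponential bound.

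The paper sidesteps this in Lemma~\ref{lem:mn-concentrates-around-barmn} by inserting a second truncation inside $\barMn$, at the level $\tau_n=\rho_n^{-1/8}$: it splits $\barMn$ via $Y_n\le\barMn\le Y_n+Z_n$ with $Y_n=\sum_{i<j}\one{W(X_i,X_j)\le\tau_n}\,\rho_n W(X_i,X_j)$ and $Z_n$ the complementary tail. After truncation each coordinate change alters $Y_n$ by at most $n\rho_n\tau_n=n\rho_n^{7/8}$, so McDiarmid with $t=n^{13/8}\rho_n^{7/8}$ gives a tail $2\exp(-2n^{1/4})$ that is summable \emph{unconditionally} on $\rho_n$, while $t\log(1/\rho_n)/\barmn=2(n\rho_n)^{-3/8}\rho_n^{1/4}\log(1/\rho_n)\to 0$. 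The tail piece $Z_n$ is not concentrated at all; it is simply bounded by $\rho_n\tau_n^{-1}\sum_{i<j}W^2(X_i,X_j)$, and the strong law of large numbers for $U$-statistics together with $\rho_n^{1/8}\log(1/\rho_n)\to 0$ kills it. Your plan would need to incorporate some analogous preliminary truncation on the conditional-expectation (or latent-mark) side before any Bernstein/McDiarmid step; without it, the Hajek projection of $S_n$ has too heavy a worst case for almost-sure control at the precision $1/\log(1/\rho_n)$.
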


The following lemmas will be useful in the proof of Proposition~\ref{prop:graphon-optimality}. The proofs of
these lemmas are given in Appendix~\ref{app:graphon-analysis-lem-proofs}.

\begin{restatable}{lem}{lemphiproperties}
  \label{lem:phi-properties}
  For the function $\phi(.)$ defined in~\eqref{eq:phi-function-def}, we have
  \begin{subequations}
    \begin{align}
      &\lim_{x \rightarrow \infty} \phi(x) = \infty; \label{eq:phi-prop-lem-1}\\
      &\lim_{x \rightarrow \infty} \frac{\phi^2(x) \log \phi(x)}{x} = 0. \label{eq:phi-prop-lem-2}
    \end{align}
  \end{subequations}
Moreover, the function $\phi(.)$ is nondecreasing on $[0,\infty)$ and strictly
increasing on $(e^2, \infty)$.
\end{restatable}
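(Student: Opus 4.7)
The plan is to verify each of the three claims by direct calculus on the two pieces of the definition of $\phi$, with the only subtle point being the behavior across the joining value $x = e^2$.

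For claim \eqref{eq:phi-prop-lem-1}, on $(e^2,\infty)$ we have $\phi(x) = \sqrt{x}/\log x$, and standard asymptotics (either L'H\^opital or the observation $\log x = o(x^{1/4})$) give $\phi(x) \to \infty$. For claim \eqref{eq:phi-prop-lem-2}, I plan to rewrite, for $x > e^2$,
\begin{equation*}
\frac{\phi^2(x)\log\phi(x)}{x} = \frac{\frac{x}{\log^2 x}\bigl(\tfrac{1}{2}\log x - \log\log x\bigr)}{x} = \frac{1}{2\log x} - \frac{\log\log x}{\log^2 x},
\end{equation*}
and note that both terms tend to $0$ as $x\to\infty$.

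For the monotonicity claim, the first piece ($\phi \equiv 1$ on $[0,e^2]$) is trivially nondecreasing. On $(e^2,\infty)$, differentiating gives
\begin{equation*}
\phi'(x) = \frac{d}{dx}\!\left(\frac{\sqrt{x}}{\log x}\right) = \frac{1}{\sqrt{x}\,\log^2 x}\left(\frac{\log x}{2} - 1\right),
\end{equation*}
which is strictly positive whenever $\log x > 2$, i.e.\ whenever $x > e^2$; this yields strict monotonicity on $(e^2,\infty)$. To conclude global nondecreasingness on $[0,\infty)$ I must check the jump at $x=e^2$: as $x\downarrow e^2$ we have $\phi(x) \to \sqrt{e^2}/\log e^2 = e/2 > 1 = \phi(e^2)$, so the one-sided jump at $e^2$ is upward, and combined with constancy to the left and strict increase to the right, $\phi$ is nondecreasing on all of $[0,\infty)$.

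None of these steps presents a real obstacle; the argument is a short calculus exercise, and the only item requiring a moment of care is observing that even though $\phi$ has an upward jump discontinuity at $x=e^2$, this does not break the nondecreasing property.
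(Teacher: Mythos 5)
Your proof is correct and complete. The paper itself gives no proof, stating only that the result is ``straightforward''; your calculus argument is precisely the routine verification that the authors are implicitly invoking, and you correctly flag and handle the one subtlety (the upward jump discontinuity of $\phi$ at $x=e^2$, which does not interfere with nondecreasingness).
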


\begin{restatable}{lem}{lemhWnconvergestoW}
  \label{lem:hWn-converges-to-W}
Under the assumptions of 
Proposition~\ref{prop:graphon-optimality}
\black
we have
\begin{equation*}
  \lim_{n \rightarrow \infty} \frac{\mn}{\barmn} = 1 \qquad \text{a.s.}.
\end{equation*}
and
\begin{equation*}
  \lim_{n \rightarrow \infty} \frac{\mn_*}{\barmn} = 1 \qquad \text{a.s.}.
\end{equation*}
Furthermore, we have 
  \begin{equation}
    \delta_2\left( \frac{1}{\rho_n} \hWn, W \right) \rightarrow 0 \qquad \text{a.s.}.
  \end{equation}
Moreover, with probability one, we have $\beta_n \rightarrow \infty$ and
$\beta_n^2 \log \beta_n = o(n \rho_n)$.
\end{restatable}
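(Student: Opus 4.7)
The plan is to address the four assertions in sequence. The first, $\lim_n \mn/\barmn = 1$ almost surely, is immediate from part~(1) of Proposition~\ref{prop:graphon-mn-asymptotics}. For the second, the decomposition $\mn = \mn_{\Delta_n} + \mn_*$---which holds because each edge of $\Gn$ lies in exactly one of $\Gn_{\Delta_n}$ or $\Gn_*$---together with the hypothesis $\mn_{\Delta_n}/\barmn \rightarrow 0$ a.s.\ of Proposition~\ref{prop:graphon-optimality} immediately yields $\mn_*/\barmn \rightarrow 1$ a.s.

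Next I would verify the two conditions on $\beta_n$. The key point is that $\mn_*/n$, which governs $\alpha_n$ through~\eqref{eq:alphan-def}, is asymptotically of the same order as $n\rho_n$. Since $\barmn/n = (n-1)\rho_n/2$ and $\mn_*/\barmn \rightarrow 1$ a.s., one has $\mn_*/n \sim n\rho_n/2 \rightarrow \infty$ a.s. Hence $\alpha_n = \exp(\lfloor \log(\mn_*/n) \rfloor) \rightarrow \infty$ a.s., and the bound $\alpha_n \leq \mn_*/n$ gives $\alpha_n = O(n\rho_n)$ a.s. Lemma~\ref{lem:phi-properties} then supplies $\beta_n = \phi(\alpha_n) \rightarrow \infty$ a.s.\ through~\eqref{eq:phi-prop-lem-1}, and through~\eqref{eq:phi-prop-lem-2} applied at $x = \alpha_n$,
\begin{equation*}
\beta_n^2 \log \beta_n = \phi^2(\alpha_n) \log \phi(\alpha_n) = o(\alpha_n) = o(n\rho_n) \qquad \text{a.s.}
\end{equation*}

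With the two conditions on $\beta_n$ established, the final claim $\delta_2(\hWn/\rho_n, W) \rightarrow 0$ a.s.\ will follow by invoking Theorem~\ref{thm:ganguli-consistent-result}: $W$ is a normalized $L^2$ graphon, $\rho_n \rightarrow 0$ and $n\rho_n \rightarrow \infty$ are standing assumptions, and the two conditions on $\beta_n$ have just been verified almost surely. The main delicacy I anticipate is that $\beta_n$ here is random (a measurable function of $\Gn$ via $\mn_*$), whereas Theorem~\ref{thm:ganguli-consistent-result} is stated for a deterministic sequence of parameters. I would resolve this by noting that $\beta_n$ takes values in the countable discrete set $\{\phi(e^k) : k \in \natsz\}$, which permits applying the consistency theorem pathwise on the full-measure event where the rate conditions on $\beta_n$ hold, or alternatively by a union bound over the candidate values of $\beta_n$. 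Apart from this measurability issue, the proof is essentially a bookkeeping combination of the ingredients already assembled.
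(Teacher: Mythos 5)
Your handling of the edge-count claims and of the two rate conditions on $\beta_n$ matches the paper's argument in substance. But the measurability issue you flag at the end is not a minor delicacy---it is the point of the proof, and neither of your two proposed resolutions is valid. You cannot apply Theorem~\ref{thm:ganguli-consistent-result} ``pathwise'': that theorem takes a fixed \emph{deterministic} sequence $(\beta_n)_n$ as input and asserts a.s.\ convergence, where the excluded null set depends on the chosen sequence; substituting $\beta_n(\omega)$ for a realization $\omega$ and hoping the conclusion holds at that same $\omega$ has no justification. And a ``union bound over candidate values of $\beta_n$'' is not well-posed: at each $n$ the candidate values are countable, but the set of candidate \emph{sequences} $(\beta_n)_n$ is uncountable, so one cannot union over them.

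The paper closes this gap with a finite reduction. Since $\mn_*/\barmn \rightarrow 1$ a.s., for $n$ large one has $\tfrac{1}{e} \leq \mn_*/\barmn \leq e$, whence $\lfloor \log(\mn_*/n) \rfloor$ differs from the deterministic integer $\lfloor \log(\barmn/n) \rfloor$ by at most $1$. Because $\alpha_n = \exp(\lfloor \log(\mn_*/n)\rfloor)$, it follows that a.s.\ for $n$ large, $\alpha_n \in \{\alpha^{(1)}_n, \alpha^{(2)}_n, \alpha^{(3)}_n\}$ where $\alpha^{(i)}_n$ are three \emph{deterministic} sequences obtained by shifting the floor of $\log(\barmn/n)$ by $-1, 0, +1$. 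Setting $\beta^{(i)}_n := \phi(\alpha^{(i)}_n)$, each of these three deterministic sequences satisfies $\beta^{(i)}_n \to \infty$ and $(\beta^{(i)}_n)^2 \log \beta^{(i)}_n = o(n\rho_n)$, so Theorem~\ref{thm:ganguli-consistent-result} applies to each, giving three a.s.\ convergence statements $\delta_2(\hWn_i/\rho_n, W) \to 0$. On the intersection of these three full-measure events, $\hWn = \hWn_i$ for some $i$ for all large $n$ (pathwise), which forces $\delta_2(\hWn/\rho_n, W) \to 0$. Your observation that $\alpha_n$ is within a constant factor of $n\rho_n$ is the right raw fact, but you need the sharper statement that a.s.\ eventually $\alpha_n$ lies in a \emph{finite} deterministic set; that is what makes the argument go through.
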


\begin{restatable}{lem}{lemdeltahWnshWngoestozero}
  \label{lem:delta-hWns-hWn-goes-to-zero}
With the assumptions of Proposition~\ref{prop:graphon-optimality}, we have
  \begin{equation*}
    \delta_2\left( \frac{1}{\rho_n} \hWns, \frac{1}{\rho_n} \hWn \right) \rightarrow 0 \qquad \text{a.s.}.
  \end{equation*}
\end{restatable}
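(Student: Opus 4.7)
The plan is to exploit that $\hWn$ and $\hWns$ are defined on the same finite probability space $\{1, \ldots, \lfloor \beta_n \rfloor\}$ equipped with the probability vector $(n_1/n, \ldots, n_{\lfloor \beta_n \rfloor}/n)$. Taking the identity coupling in the infimum defining $\delta_2$ in~\eqref{eq:delta-2-def} gives the bound
\begin{equation*}
\delta_2\!\left(\tfrac{1}{\rho_n}\hWns,\, \tfrac{1}{\rho_n}\hWn\right)^2 \leq \frac{1}{\rho_n^2}\sum_{i,j}\frac{n_i n_j}{n^2}\,(\lambda_{i,j} - \lambda^*_{i,j})^2,
\end{equation*}
and I will control the right-hand side block by block.

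The first key step is to identify $m_{i,j} - m^*_{i,j}$, which drives $\lambda_{i,j} - \lambda^*_{i,j}$, with the number $m^{\Delta_n}_{i,j}$ of edges of $\Gn_{\Delta_n}$ inside the block $\hpin^{-1}(\{i\}) \times \hpin^{-1}(\{j\})$. This follows from the edge-disjoint decomposition $\Gn = \Gn_{\Delta_n} \sqcup \Gn_*$ together with a brief case check separating $i,j \leq \beta^*_n$ from $\beta^*_n < i \leq \beta'_n$: in the latter range, $\hpin^{-1}(\{i\}) \cap R_n = \emptyset$ forces every edge touching the class into $\Gn_{\Delta_n}$, so the identity still holds. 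The crucial estimate is then that every vertex has degree at most $\Delta_n$ in $\Gn_{\Delta_n}$ by construction of the splitting, yielding $m^{\Delta_n}_{i,j} \leq \Delta_n \min(n_i, n_j)$ for $i \neq j$ and $m^{\Delta_n}_{i,i} \leq \Delta_n n_i / 2$. Combining this with the least-squares size constraint $n_i \geq \lceil n/\beta_n \rceil$ for every nonempty class, I expect to obtain the uniform per-block inequality
\begin{equation*}
\frac{n_i n_j}{n^2}(\lambda_{i,j} - \lambda^*_{i,j})^2 \leq C\, \frac{\Delta_n \beta_n}{n^3}\, m^{\Delta_n}_{i,j},
\end{equation*}
and summing over $i,j$ while using $\sum_{i,j} m^{\Delta_n}_{i,j} \leq 2\mn_{\Delta_n} \leq n\Delta_n$ will give $\snorm{\hWn - \hWns}_2^2 \leq C'\,\Delta_n^2 \beta_n / n^2$ for absolute constants $C, C'$.

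To close the argument, I would invoke the hypotheses of Proposition~\ref{prop:graphon-optimality} together with Lemma~\ref{lem:hWn-converges-to-W}. Since $\mn_*/\barmn \to 1$ a.s.\ and $\barmn \sim n^2\rho_n/2$, we have $\alpha_n = O(n\rho_n)$ almost surely, so $\beta_n = \phi(\alpha_n) = O(\sqrt{n\rho_n})$ for $n$ large; combined with $\Delta_n^2 = o(n\rho_n)$ from the assumption $\Delta_n/\sqrt{n\rho_n} \to 0$, the bound of the preceding paragraph yields
\begin{equation*}
\delta_2\!\left(\tfrac{1}{\rho_n}\hWns,\, \tfrac{1}{\rho_n}\hWn\right)^2 \leq \frac{C'\, \Delta_n^2 \beta_n}{\rho_n^2\, n^2} = o\!\left(\frac{1}{\sqrt{n\rho_n}}\right) \to 0 \quad \text{a.s.},
\end{equation*}
which proves the claim. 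The main technical obstacle will be the uniform per-block inequality above: a naive application of $(m^{\Delta_n}_{i,j})^2/(n_i n_j) \leq m^{\Delta_n}_{i,j}$ aggregates only to $\mn_{\Delta_n}/n^2$, and after dividing by $\rho_n^2$ this need not vanish, so one really has to exploit the degree cap $\Delta_n$ and the block-size lower bound $n/\beta_n$ in tandem to extract the extra factor $\Delta_n/\max(n_i,n_j) \leq \Delta_n \beta_n/n$ that ultimately beats the $\rho_n^2$ in the denominator.
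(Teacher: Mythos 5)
Your argument is correct and reaches the right conclusion, but it takes a mildly different route from the paper at the one nontrivial step. The paper also starts from the identity coupling bound $\delta_2(\hWn, \hWns)^2 \leq \sum_{i,j} \frac{n_i n_j}{n^2}(\lambda_{i,j}-\lambda^*_{i,j})^2$, but it then drops to the $L^1$ level: since $\lambda_{i,j} \geq \lambda^*_{i,j}$ and $n_i n_j / n^2 \geq 1/\beta_n^2$, it uses $(\sum a_{ij})^2 \geq \sum a_{ij}^2 \geq \frac{1}{\beta_n^2}\sum \frac{n_in_j}{n^2}(\lambda_{i,j}-\lambda^*_{i,j})^2$ with $a_{ij} = \frac{n_in_j}{n^2}(\lambda_{i,j}-\lambda^*_{i,j})$, then evaluates the $L^1$ sum exactly as $\frac{2(\mn - \mn_*)}{n^2} = \frac{2\mn_{\Delta_n}}{n^2}$ and finally invokes the degree cap only once, as $\mn_{\Delta_n} \leq n\Delta_n/2$, arriving at $\delta_2(\hWn/\rho_n, \hWns/\rho_n) \leq \beta_n\Delta_n/(n\rho_n)$. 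You instead keep the squared per-block differences and exploit the degree cap inside each block, via $m_{i,j} - m^*_{i,j} = m^{\Delta_n}_{i,j} \leq \Delta_n \min(n_i,n_j)$ and $n_i \geq n/\beta_n$, which after summing gives the slightly sharper $\delta_2(\hWn/\rho_n, \hWns/\rho_n) \leq C\Delta_n\sqrt{\beta_n}/(n\rho_n)$; both expressions go to $0$ under $\Delta_n = o(\sqrt{n\rho_n})$ and $\beta_n^2 \log\beta_n = o(n\rho_n)$. Your case check for $\beta^*_n < i \leq \beta'_n$ is the right observation (classes disjoint from $R_n$ have all their incident $\Gn$-edges in $\Gn_{\Delta_n}$, so $m^*_{i,j} = 0 = m_{i,j} - m^{\Delta_n}_{i,j}$), and your diagnosis of why the crude bound $(m^{\Delta_n}_{i,j})^2/(n_in_j) \leq m^{\Delta_n}_{i,j}$ is insufficient is also correct: that aggregates to $\Delta_n/(n\rho_n^2)$, which need not vanish. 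The paper's $L^1$ shortcut sidesteps exactly that issue by converting the problem into a statement about total edge mass rather than per-block concentration. Both proofs are sound; the paper's is slightly shorter, yours yields a marginally better constant in $\beta_n$.
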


\editfinish

\editstart
\begin{proof}[Proof of Proposition~\ref{prop:graphon-optimality}]
Using Lemma~\ref{lem:hWn-converges-to-W}, we have 
  \begin{equation}
    \label{eq:graphon-analysis-mn-star-barmn-1}
    \lim_{n \rightarrow \infty} \frac{\mn_*}{\barmn} = 1 \qquad \text{a.s.}.
  \end{equation}
Recall from Section~\ref{sec:coding-scheme} that  we denote by
  $\elln_{\Delta_n}$ and $\elln_*$ the number of nats used to encode
  $\Gn_{\Delta_n}$ and $\Gn_*$ respectively, so that $\nat(f_n(\Gn)) =
  \elln_{\Delta_n} + \elln_*$. Recall that we encode $\Gn_{\Delta_n}$ using the
  compression method discussed in~\cite{delgosha2019universal}, which we
  reviewed in Section~\ref{sec:prelim-baby-compression}.
Therefore, using Lemma~\ref{lem:prelim-baby-bounded-graph-codeword-bound}, we
have
  \begin{equation}
    \label{eq:graphon-optimality-elln-Delta-bound}
    \begin{aligned}
      \elln_{\Delta_n} &\leq \log \binom{\binom{n}{2}}{\mn_{\Delta_n}} +o(n) \\
      &\leq  \mn_{\Delta_n} \log \frac{\binom{n}{2} e}{\mn_{\Delta_n}} +o(n)\\
      &=  \mn_{\Delta_n} + \mn_{\Delta_n} \log \frac{\binom{n}{2}\rho_n}{\mn_{\Delta_n}} + \mn_{\Delta_n} \log \frac{1}{\rho_n}  +o(n) \\
      &= \mn_{\Delta_n} \log \frac{1}{\rho_n} + \mn_{\Delta_n} - \mn_{\Delta_n} \log \frac{\mn_{\Delta_n}}{\barmn}  +  o(n),
    \end{aligned}
  \end{equation}
  where we have used ${r \choose s} \le (\frac{ re}{s})^s$ to get the second inequality
  and have used
  $\barmn = n(n-1)
\rho_n / 2$ in the last step.
\black
Therefore, 
\begin{equation*}
  \limsup_{n \rightarrow \infty} \frac{\elln_{\Delta_n} - \mn_{\Delta_n} \log \frac{1}{\rho_n} }{\barmn} \leq \limsup_{n \rightarrow \infty} \frac{\mn_{\Delta_n}}{\barmn} - \frac{\mn_{\Delta_n}}{\barmn} \log \frac{\mn_{\Delta_n}}{\barmn}. 
\end{equation*}
By assumption, we have $\mn_{\Delta_n} / \barmn \rightarrow 0$ a.s..
Hence
\begin{equation}
  \label{eq:graphon-analysis-elln-Delta-n-zero}
  \limsup_{n \rightarrow \infty} \frac{\elln_{\Delta_n} - \mn_{\Delta_n} \log \frac{1}{\rho_n} }{\barmn} \leq 0 \qquad \text{a.s.}.
\end{equation}

Note that since $n \rho_n \rightarrow \infty$ as $n \rightarrow \infty$ we have
$\barmn \rightarrow \infty$ as $n \rightarrow \infty$, and
from~\eqref{eq:graphon-analysis-mn-star-barmn-1}
we have
$\mn_* \rightarrow \infty$ a.s.
as $n \rightarrow \infty$.
\black
Thereby, with probability one, $R_n \neq
\emptyset$ for $n$ large enough, and from Lemma~\ref{lem:elln-*-nat-count}  we have 
  $\elln_* \leq \elln_{*,1} + \elln_{*,2}$ where
      \begin{align*}
      \elln_{*,1} &:= 3 + |R_n| + 3\log n + \log \binom{n}{|R_n|}  + |R_n| \log \beta_n, \\
      \elln_{*,2} &:= 2 {\beta_n^*}^2 + \sum_{i=1}^{\beta^*_n} \left( 2 \log |R_n| + \log \binom{\binom{n^*_i}{2}}{\ms_{i,i}} \right) + \sum_{1 \leq i < j \leq \beta^*_n} \left( 2 \log |R_n| + \log \binom{n^*_i n^*_j}{\ms_{i,j}} \right).
    \end{align*}
    We claim that
    \begin{equation}
      \label{eq:graphon-analysis-elln-star-1-barmn-zero}
      \limsup_{n \rightarrow \infty} \frac{\elln_{*,1}}{\barmn} = 0 \qquad \text{a.s..}
    \end{equation}
    In order to show this, we consider two cases. If $\alpha_n \leq e^2$,
    then
    \black
    from~\eqref{eq:beta-n-def} we have $\beta_n = 1$. Therefore, using
    $\binom{n}{|R_n|} \leq 2^n$ and $|R_n| \leq n$, we have 
    \begin{equation*}
      \elln_{*,1} \leq 3 + n + 3 \log n + \log 2^n = 3 + n + 3 \log n + n \log 2.
    \end{equation*}
    On the other hand, if $\alpha_n > e^2$, recalling the definition of
    $\alpha_n$ in~\eqref{eq:alphan-def}, since $\alpha_n \leq \mn_* / n$, we
    have
    \begin{equation*}
      \log \beta_n = \log \phi(\alpha_n) = \log \frac{\sqrt{\alpha_n}}{\log \alpha_n} \leq \log \sqrt{\alpha_n} \leq \frac{1}{2} \log \frac{\mn_*}{n}.
    \end{equation*}
    Thereby, in this case, we have
    \begin{align*}
      \elln_{*,1} &\leq 3 + n + 3 \log n + \log 2^n + \frac{|R_n|}{2} \log \frac{\mn_*}{n} \\
                  &\leq 3 + n + 3 \log n + n \log 2 + \frac{n}{2} \log \frac{\mn}{n}.
    \end{align*}
    Combining the two cases, we get the following upper bound for $\elln_{*,1}$,
    which holds in both cases,
    \begin{equation}
      \label{eq:graphon-elln-star-1-bound}
      \elln_{*,1} \leq 3 + n + 3 \log n + n \log 2 + \frac{n}{2} \log \left( \frac{\mn}{n} \vee 1 \right).
    \end{equation}
From~Lemma~\ref{lem:hWn-converges-to-W} we have $\mn / \barmn
\rightarrow 1$ a.s.
as $n \rightarrow \infty$.
\black
Therefore, with probability one, for $n$ large enough, we
have
\begin{equation*}
  1 \vee \frac{\mn}{n} \leq 1 \vee \frac{2 \barmn}{n} \leq 1 \vee n \rho_n \leq n \rho_n,
\end{equation*}
where in the last step we have used $n \rho_n \rightarrow \infty$ as $n
\rightarrow \infty$.
Using this in~\eqref{eq:graphon-elln-star-1-bound}, we realize that  with probability one we have
    \begin{equation*}
      \limsup_{n \rightarrow \infty} \frac{\elln_{*,1}}{\barmn} \leq \limsup_{n \rightarrow \infty} \frac{3 + n + 3 \log n + n \log 2 + \frac{n}{2} \log (n \rho_n)}{(n-1)(n\rho_n) /2} = 0,
    \end{equation*}
    which shows~\eqref{eq:graphon-analysis-elln-star-1-barmn-zero}.

    Now, we
    study $\elln_{*,2}$. 
Using ${r \choose s} \le (\frac{re}{s})^s$, we can write
\black
    \begin{equation}
      \label{eq:graphon-analysis-elln-*-2-bound}
      \begin{aligned}
        \elln_{*,2} &\leq 2 {\beta_n^*}^2 + \sum_{i=1}^{\beta_n^*} 2 \log |R_n| + m^*_{i,i} \log \frac{{(n^*_i)}^2 e}{2 m^*_{i,i}}  \\
        &\qquad \qquad + \sum_{1 \leq i < j \leq \beta^*_n}  2 \log |R_n| + m^*_{i,j} \log \frac{n^*_i n^*_j e}{m^*_{i,j}} \\
        &\stackrel{(a)}{\leq} 2 \beta_n^2 (1+\log |R_n|) + \sum_{i=1}^{\beta_n^*} m^*_{i,i} + m^*_{i,i} \log \frac{n_i^2}{2 m^*_{i,i}} + \sum_{1 \leq i < j \leq \beta^*_n} m^*_{i,j} + m^*_{i,j} \log \frac{n_in_j}{m^*_{i,j}} \\
        &\stackrel{(b)}{\leq} 6 \beta_n^2 \log |R_n| + \sum_{i=1}^{\beta_n^*} m^*_{i,i} + m^*_{i,i} \log \frac{n_i^2}{2 m^*_{i,i}} + \sum_{1 \leq i < j \leq \beta^*_n} m^*_{i,j} + m^*_{i,j} \log \frac{n_in_j}{m^*_{i,j}} \\
        &= 6 \beta_n^2 \log |R_n| + \mn_* + \sum_{i=1}^{\beta_n^*} \frac{n_i^2}{2} \frac{2m^*_{i,i}}{n_i^2} \log \frac{n_i^2}{2 m^*_{i,i}} + \sum_{1 \leq i < j \leq \beta_n^*} n_i n_j \frac{m^*_{i,j}}{n_i n_j} \log \frac{n_i n_j}{m^*_{i,j}} \\
        &= 6 \beta_n^2 \log |R_n| + \mn_* + \frac{n^2}{2} \left( \sum_{i=1}^{\beta_n^*} \left( \frac{n_i}{n} \right)^2 \lambda^*_{i,i} \log \frac{1}{\lambda^*_{i,i}} + 2 \sum_{1 \leq i < j \leq \beta_n^*} \frac{n_i}{n} \frac{n_j}{n} \lambda^*_{i,j} \log \frac{1}{\lambda^*_{i,j}} \right) \\
        &= 6 \beta_n^2 \log |R_n| + \mn_* + \frac{n^2}{2} \sum_{i=1}^{\beta_n^*} \sum_{j=1}^{\beta_n^*} \frac{n_i}{n} \frac{n_j}{n} \lambda^*_{i,j} \log \frac{1}{\lambda^*_{i,j}}, \\
      \end{aligned}
    \end{equation}
where $(a)$ uses $\beta_n^* \leq \beta_n$ and $n_i^* \leq n_i$ and in $(b)$, we have used the fact that since $|R_n| \geq 2$, $1 + \log |R_n|
\leq 3 \log |R_n|$.
    Note that
    \begin{equation}
      \label{eq:Ent-hWn-*-expand}
    \begin{aligned}
      \Ent(\hWn_*) &= \ev{\hWn_* \log \hWn_*} - \ev{\hWn_*} \log \ev{\hWn_*} \\
      &= \sum_{i=1}^{\beta_n^*} \sum_{j=1}^{\beta_n^*} \frac{n_i}{n} \frac{n_j}{n} \lambda^*_{i,j} \log \lambda^*_{i,j} - \ev{\hWn_*} \log \ev{\hWn_*}.
    \end{aligned}
    \end{equation}
    But
    \begin{equation}
      \label{eq:ev-hWn-*}
      \begin{aligned}
        \ev{\hWn_*} = \sum_{i=1}^{\beta_n^*} \sum_{j=1}^{\beta_n^*} \frac{n_i}{n} \frac{n_j}{n} \lambda^*_{i,j} &= \sum_{i=1}^{\beta_n^*} \frac{n_i^2}{n^2} \frac{2m^*_{i,i}}{n_i^2} + 2 \sum_{1 \leq i< j \leq \beta_n^*} \frac{n_i}{n} \frac{n_j}{n} \frac{m^*_{i,j}}{n_i n_j} \\
        &= \frac{2}{n^2} \left( \sum_{i=1}^{\beta_n^*} m^*_{i,i} +  \sum_{1 \leq i < j \leq \beta_n^*} m^*_{i,j}\right) \\
        &= \frac{2 \mn_*}{n^2}.
      \end{aligned}
    \end{equation}
Simplifying the bound in~\eqref{eq:graphon-analysis-elln-*-2-bound} using
\eqref{eq:Ent-hWn-*-expand} and \eqref{eq:ev-hWn-*}, we get
\begin{equation}
\label{eq:graphon-analysis-elln-*-2-bound-2}
  \elln_{*,2} \leq 6 \beta_n^2 \log |R_n| + \mn_* - \frac{n^2}{2} \Ent(\hWn_*) - \mn_* \log \frac{2\mn_*}{n^2}.
\end{equation}
    We claim that
    \begin{equation}
      \label{eq:graphon-analysis-betan-log-Rn-claim}
      \limsup_{n \rightarrow \infty} \frac{\beta_n^2 \log |R_n|}{\barmn} = 0 \qquad \text{a.s.}.
    \end{equation}
To see this, note that if $\alpha_n \leq e^2$, then $\beta_n = \phi(\alpha_n) = 1$ and 
    $\beta_n^2 \log
    |R_n| = \log |R_n| \leq \log n$. 
\black 
On the other hand, if $\alpha_n > e^2$,
    recalling the definition of $\alpha_n$ in~\eqref{eq:alphan-def}, since $\alpha_n \leq \mn_* / n$, we have
    \begin{equation*}
      \beta_n^2 = \phi^2(\alpha_n) = \frac{\alpha_n}{\log^2 \alpha_n} \leq \alpha_n \leq \frac{\mn_*}{n}.
    \end{equation*}
    Thereby, we have $\beta_n^2 \log |R_n| \leq \frac{\mn_*}{n} \log n$.
    Combining the two cases, we get
    \begin{equation*}
      \limsup_{n \rightarrow \infty} \frac{\beta_n^2 \log |R_n|}{\barmn} \leq \limsup_{n \rightarrow \infty} \frac{(1+\frac{\mn_*}{n}) \log n}{\barmn} \leq \limsup_{n \rightarrow \infty} \frac{2\log n}{n(n-1) \rho_n} + \limsup_{n \rightarrow \infty} \frac{\mn_*}{\barmn} \frac{\log n}{n}.
    \end{equation*}
    But $n \rho_n \rightarrow \infty$, and from~\eqref{eq:graphon-analysis-mn-star-barmn-1},
    $\mn_* / \barmn \rightarrow 1$
    a.s..\ Hence, we arrive at~\eqref{eq:graphon-analysis-betan-log-Rn-claim}.
    Using~\eqref{eq:graphon-analysis-betan-log-Rn-claim} back in~\eqref{eq:graphon-analysis-elln-*-2-bound-2}, we get
    \begin{equation}
      \label{eq:graphon-analysis-elln-star-2-bound}
    \begin{aligned}
      \limsup_{n \rightarrow \infty} \frac{\elln_{*,2} - \mn_* \log \frac{1}{\rho_n}}{\barmn} &\leq \limsup_{n \rightarrow \infty} \frac{\mn_* - \frac{n^2}{2} \Ent(\hWn_*) - \mn_* \log \frac{2\mn_*}{n^2 \rho_n}}{\barmn} \\
                                                                                              &\stackrel{(*)}{=} \limsup_{n \rightarrow \infty} \frac{\mn_* - \frac{n^2 \rho_n}{2} \Ent\left( \frac{1}{\rho_n} \hWn_* \right) - \mn_* \log\left( \frac{\mn_*}{\barmn} \frac{n-1}{n} \right)}{\barmn} \\
      &= \limsup_{n \rightarrow \infty} \frac{\mn_*}{\barmn} - \frac{n}{n-1} \Ent\left( \frac{1}{\rho_n} \hWn_* \right) - \frac{\mn_*}{\barmn} \log\left( \frac{\mn_*}{\barmn} \frac{n-1}{n} \right),
    \end{aligned}
    \end{equation}
where in $(*)$, we have used part~\ref{thm-ge-scale} of
Theorem~\ref{thm:graphon-entropy-props} to replace $\Ent(\hWn_*)$ by $\rho_n
\Ent(\hWn_* / \rho_n)$.
Note that
from~\eqref{eq:graphon-analysis-mn-star-barmn-1}, $\mn_* / \barmn \rightarrow 1$
a.s.. On the other hand, Lemma~\ref{lem:delta-hWns-hWn-goes-to-zero} implies
that $\delta_2(\hWn_*/ \rho_n , \hWn / \rho_n) \rightarrow 0$ a.s..
Also,
Lemma~\ref{lem:hWn-converges-to-W} implies that $\delta_2(\hWn / \rho_n, W)
\rightarrow 0$ a.s.. These two together imply that as $n\rightarrow \infty$,  $\delta_2(\hWn_* / \rho_n, W) \rightarrow 0$
a.s.. Therefore, part~\ref{thm-ge-L2-conv} of Theorem~\ref{thm:graphon-entropy-props}
implies that as $n\rightarrow \infty$, $\Ent(\hWn_* / \rho_n) \rightarrow \Ent(W)$ a.s.. Using these in
the bound~\eqref{eq:graphon-analysis-elln-star-2-bound} above, we realize that 
\begin{equation}
  \label{eq:graphon-analysis-elln-*-2-mn-*-log-rho-n-Ent-W}
  \limsup_{n \rightarrow \infty} \frac{\elln_{*,2} - \mn_* \log \frac{1}{\rho_n}}{\barmn} \leq 1 - \Ent(W) \qquad \text{a.s.}.
\end{equation}

Combining~\eqref{eq:graphon-analysis-elln-Delta-n-zero},
\eqref{eq:graphon-analysis-elln-star-1-barmn-zero}, and
\eqref{eq:graphon-analysis-elln-*-2-mn-*-log-rho-n-Ent-W}, we conclude that with
probability one
    \begin{align*}
            \limsup_{n \rightarrow \infty} \frac{\elln - \barmn \log \frac{1}{\rho_n}}{\barmn} &\leq \limsup_{n \rightarrow \infty} \frac{\elln - \mn \log \frac{1}{\rho_n}}{\barmn} + \limsup_{n \rightarrow \infty} \frac{\mn - \barmn}{\barmn} \log \frac{1}{\rho_n} \\
                                                                                               &\stackrel{(*)}{\leq} \limsup_{n \rightarrow \infty} \frac{\elln_{\Delta_n} - \mn_{\Delta_n} \log \frac{1}{\rho_n}}{\barmn} + \limsup_{n \rightarrow \infty} \frac{\elln_{*,1}}{\barmn} + \limsup_{n \rightarrow \infty} \frac{\elln_{*,2} - \mn_* \log \frac{1}{\rho_n}}{\barmn} \\
      &\leq 1 - \Ent(W),
    \end{align*}
    where in $(*)$, we have used Proposition~\ref{prop:graphon-mn-asymptotics}.
This completes the proof.
\end{proof}

\editfinish

\section{Proof of Converse (Theorem~\ref{thm:lwc-graphon-convesee})}
\label{sec:converse-proof}

\editstart

In this section, we give the proof of our converse results, i.e.\
the two parts of
\black
Theorem~\ref{thm:lwc-graphon-convesee}. 

The first part directly follows from the converse result of
Theorem~\ref{thm:baby-converse} in Section~\ref{sec:prelim-baby-compression}.
More precisely, first note that if $\bch(\mu) = -\infty$ there is nothing to be
proved. If $\bch(\mu) > -\infty$, let $\Gn$ be the sequence of random graphs
obtained from Theorem~\ref{thm:baby-converse}. Assume
that~\eqref{eq:converse-lwc-part-statement} is violated. This means that there
exists some $t < \bch(\mu)$ such that
\begin{equation*}
  \limsup_{n \rightarrow \infty} \frac{\nat(f_n(\Gn) - \mn \log n )}{n} \leq t < \bch(\mu) \qquad \text{a.s.}. 
\end{equation*}
But this 
is in
\black
contradiction with the result of
Theorem~\ref{thm:baby-converse}. This completes the proof of the first part
of Theorem~\ref{thm:lwc-graphon-convesee}. 
\black

Now we prove the second part. 
Assume that a sequence of lossless/decompression maps $((f_n, g_n): n \geq 1)$
is given.
  Consider the lossless compression map $\pp{f}_n: \mG_n \rightarrow
  \{0,1\}^*$ defined as follows. Given a graph $G \in \mG_n$ with $m$ edges, $\pp{f}_n(G)$ is
  comprised of the binary representation of $m$, followed by the index of $G$ among all the
  graphs in $\mG_{n,m}$ which have the same number of edges $m$. Since $m \leq
  \binom{n}{2} < n^2$, and the number of the graphs with $m$ edges is precisely
  $\binom{\binom{n}{2}}{m}$, we have 
  \begin{equation*}
    \len(\pp{f}_n(G)) \leq 2 + 2 \log_2 n + \log_2 \binom{\binom{n}{2}}{m} =: l_{n,m}.
  \end{equation*}
  Now, we define another compression map $\tf_n:  \mG_n \rightarrow
  \{0,1\}^*$  as follows. Assume that $G \in \mG_n$ is given. If $\len(f_n(G))
  \leq l_{n,m}$, define  $b \in \{0,1\}^*$ to be obtained
  by concatenating the binary representation of $\len(f_n(G))$ using $1 + \lfloor
  \log_2(l_{n,m}) \rfloor$ bits 
  followed by $f_n(G)$. Thereby
  \begin{equation}
    \label{eq:bits-b-bound}
    \len(b) \leq 1 + \log_2 l_{n,m} + \len(f_n(G)).
  \end{equation}
Then, if $\len(b) < \len(\pp{f}_n(G))$, we define
  $\tf_n(G)$ to be a single bit with value zero followed by $b$. Otherwise, if
  either 
  $\len(f_n(G)) > l_{n,m}$,  or $\len(f_n(G)) \leq l_{n,m}$ and $\len(b) \geq
  \len(\pp{f}_n(G))$, we
  define $\tf_n(G)$ to be a single bit with value one followed by $\pp{f}_n(G)$.
  Observe that $\tf_n$ defined above satisfies the prefix condition.
  Additionally, since
  both $f_n$ and $f'_n$ are lossless, $\tilde{f}_n$ is also lossless. Moreover, for all $G
  \in \mG_n$ with $m$ edges, we have
  \begin{equation*}
    \len(\tf_n(G)) \leq 1 + l_{n, m} = 3 + 2 \log_2 n + \log_2 \binom{\binom{n}{2}}{m},
  \end{equation*}
  or equivalently
  \begin{equation}
    \label{eq:nat-tf-upperbound}
    \nat(\tf_n(G)) \leq 3 \log 2 + 2 \log n + \log \binom{\binom{n}{2}}{m}.
  \end{equation}
  In addition to this, we claim that for all $G \in \mG_{n}$ having $m$ edges
  we have 
  \begin{equation*}
    \len(\tf_n(G)) \leq (1 + 1) + \log_2 l_{n,m} + \len(f_n(G)),
  \end{equation*}
  or equivalently
  \begin{equation}
    \label{eq:nat-tfn-G-nat-fnG-bound}
    \nat(\tf_n(G)) \leq (1 + 1)\log 2 + \log l_{n,m} + \nat(f_n(G)).
  \end{equation}
  To see this, observe that if $\len(f_n(G)) \leq l_{n,m}$ and $\len(b) <
  \len(\pp{f}_n(G))$, then using~\eqref{eq:bits-b-bound} 
  we have $\len(\tf_n(G)) = 1 + \len(b) \leq 1 + (1 + \log_2
  l_{n,m}) + \len(f_n(G))$. On the other hand, if $\len(f_n(G)) \leq l_{n,m}$ and
  $\len(b) \geq \len(\pp{f}_n(G))$, we have $\len(\tf_n(G)) = 1 +
  \len(\pp{f}_n(G)) \leq 1 + \len(b) \leq 1 + (1 + \log_2 l_{n,m}) +
  \len(f_n(G))$, where the last step again uses~\eqref{eq:bits-b-bound}.
  Finally, if $\len(f_n(G)) > l_{n,m}$, we have $\len(\tf_n(G)) = 1 +
  \len(\pp{f}_n(G)) \leq 1 + l_{n,m} \leq 1 + \len(f_n(G)) \leq 1 + (1 + \log_2 l_{n,m}) +
  \len(f_n(G))$. Hence, we have verified that the claimed bound 
  in~\eqref{eq:nat-tfn-G-nat-fnG-bound} holds in all the three cases.

  Now let $W$ and the sequence $\rho_n$ be as in the statement of
  Theorem~\ref{thm:lwc-graphon-convesee} and let $\Gn$ be a sequence  of
  $W$-random graphons with target density $\rho_n$. Let $\mn$ denote the
  number of edges in $\Gn$. Using~\eqref{eq:nat-tfn-G-nat-fnG-bound}, for all $t$,  we have
  \begin{equation}
    \label{eq:pr-limspu-nat-fn-nat-tfn}
\begin{aligned}
  &\pr{\limsup_{n \rightarrow \infty} \frac{\nat(f_n(\Gn)) - \barmn \log \frac{1}{\rho_n}}{\barmn} \leq t} \\
  &\qquad \leq \pr{\limsup_{n \rightarrow \infty} \frac{\nat(\tf_n(\Gn)) - 2 \log 2 - \log l_{n, \mn} - \barmn \log \frac{1}{\rho_n}}{\barmn} \leq t}.
  \end{aligned}
  \end{equation}
  Using a crude upper bound, we have
  \begin{equation*}
    \log l_{n,\mn} \leq \log \left( 2 + \log_2 n + \log_2 2^{\binom{n}{2}} \right) \leq \log(2 +  n + n^2) \leq \log(n+1)^2 = 2 \log (n+1).
  \end{equation*}
  On the other hand, we have $\barmn = \binom{n}{2} \rho_n$ and $n\rho_n
  \rightarrow \infty$ as $n \rightarrow \infty$. Consequently, with probability
  one we have
  \begin{equation*}
     \lim_{n \rightarrow \infty} \frac{2 \log  2 + \log l_{n,\mn}}{\barmn} = 0.
   \end{equation*}
   Comparing this
  with~\eqref{eq:pr-limspu-nat-fn-nat-tfn} above, we realize that in order to
  show~\eqref{eq:converse-graphon-part-statement}, it suffices to show that
  \begin{equation}
    \label{eq:converse-graphon-suffices-nat-tfn}
    \pr{\limsup_{n \rightarrow \infty} \frac{\nat(\tf_n(\Gn)) - \barmn \log \frac{1}{\rho_n}}{\barmn} \leq t}  < 1 \qquad \forall t < 1 - \Ent(W).
  \end{equation}
We fix   $t < 1 - \Ent(W)$ and
define the random variables
  \begin{equation*}
    L_n := \frac{\nat(\tf_n(\Gn)) - \barmn \log \frac{1}{\rho_n}}{\barmn},
  \end{equation*}
  and
  \begin{equation*}
    U_n = \frac{3 \log 2 + 2 \log n + \log \binom{\binom{n}{2}}{\mn} - \barmn \log \frac{1}{\rho_n}}{\barmn}.
  \end{equation*}
  Note that, from~\eqref{eq:nat-tf-upperbound}, with probability one we have
  \begin{equation*}
    L_n \leq U_n \qquad \forall n.
  \end{equation*}
  Thereby, employing Fatou's lemma, we get
  \begin{equation}
    \label{eq:Ln-Un-Fatou}
    \liminf_{n \rightarrow \infty} \ev{U_n - L_n} \geq \ev{\liminf_{n \rightarrow \infty} (U_n - L_n)}.
  \end{equation}
  Note that 
  \begin{align*}
    \liminf_{n \rightarrow \infty} \ev{U_n - L_n} &= \liminf_{n \rightarrow \infty} (\ev{U_n} + \ev{-L_n}) \\
    &\leq \liminf_{n \rightarrow \infty} \ev{-L_n} + \limsup_{n \rightarrow \infty} \ev{U_n} \\ 
    &= - \limsup_{n \rightarrow \infty} \ev{L_n} + \limsup_{n \rightarrow \infty} \ev{U_n}.
  \end{align*}
  Combining this with~\eqref{eq:Ln-Un-Fatou}, we get
  \begin{equation}
    \label{eq:lims-Un-lims-Ln-ev-limi-Un-Ln}
    \limsup_{n \rightarrow \infty} \ev{U_n} - \limsup_{n \rightarrow \infty} \ev{L_n} \geq \ev{\liminf_{n \rightarrow \infty} (U_n - L_n)}.
  \end{equation}
  Note that $\barmn = \binom{n}{2} \rho_n = \frac{n-1}{2} n \rho_n$ and $n
  \rho_n \rightarrow \infty$. Hence, $(3 \log 2 + 2 \log n) / \barmn \rightarrow
  0$ as $n \rightarrow \infty$. Thereby, from
  Part~\ref{item:prop-mn-asymp-log-n2-1-as} of
  Proposition~\ref{prop:graphon-mn-asymptotics} in Section~\ref{sec:graphon-analysis},
  we have
  \begin{equation}
    \label{eq:lim-Un-1-as}
    \lim_{n \rightarrow \infty} U_n = 1 \qquad \text{a.s.}.
  \end{equation}
  Also, using
  Part~\ref{item:prop-mn-asymp-log-n2-1-ev} of Proposition~\ref{prop:graphon-mn-asymptotics},
  we have
  \begin{equation}
    \label{eq:limsup-Ev-Un-1}
    \limsup_{n \rightarrow \infty} \ev{U_n} \leq 1.
  \end{equation}
  Using~\eqref{eq:lim-Un-1-as} and \eqref{eq:limsup-Ev-Un-1}
  in~\eqref{eq:lims-Un-lims-Ln-ev-limi-Un-Ln}, we get
  \begin{equation*}
    1 - \limsup_{n \rightarrow \infty} \ev{L_n} \geq 1 - \ev{\limsup_{n \rightarrow \infty} L_n},
  \end{equation*}
  or equivalently,
  \begin{equation}
    \label{eq:limsup-ev-Ln-ev-limsup-Ln}
    \limsup_{n \rightarrow \infty} \ev{L_n}  \leq \ev{\limsup_{n \rightarrow \infty} L_n}.
  \end{equation}
  Note that $\tf_n$ is lossless and satisfies the  prefix condition, which implies that $\ev{\nat(\tf_n(\Gn))}
  \geq H(\Gn)$. Consequently, using
  Proposition~\ref{prop:graphon-entropy-asymptotics}, we have
  \begin{equation*}
    \limsup_{ n \rightarrow \infty} \ev{L_n} \geq \limsup_{n \rightarrow \infty} \frac{H(\Gn) - \barmn \log \frac{1}{\rho_n}}{\barmn} = 1 - \Ent(W).
  \end{equation*}
  Combining this with~\eqref{eq:limsup-ev-Ln-ev-limsup-Ln}, we get
  \begin{equation*}
    1 - \Ent(W) \leq \ev{\limsup_{n \rightarrow \infty} L_n}.
  \end{equation*}
  This establishes~\eqref{eq:converse-graphon-suffices-nat-tfn} and completes
  the proof of Theorem~\ref{thm:lwc-graphon-convesee}.

\section*{Acknowledgements}

Research of the authors was supported by the NSF grants CNS–1527846,
CCF–1618145, CCF-1901004, CIF-2007965, the NSF Science
\& Technology Center grant CCF–0939370 (Science of
Information), and the William and Flora Hewlett Foundation
supported Center for Long Term Cybersecurity at Berkeley.

 \black
 
 
  \editfinish

\appendix

\section{Proof of Theorem~\ref{thm:graphon-entropy-props}}
\label{sec:app-graphon-ent-properties}

In this section we give the proof of Theorem~\ref{thm:graphon-entropy-props}.
Before that, we state and prove the following lemma.

\begin{lem}
  \label{lem:xlogx}
  Assume that $x, y \geq 0$ and $\max\{x, y\} \geq 1$. Then, we have
  \begin{equation*}
    | x\log x - y \log y| \leq |x-y|(1 + \max\{x,y\}).
  \end{equation*}
  Here $0 \log 0 := 0$. 
\end{lem}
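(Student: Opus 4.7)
My plan is to reduce to the case $y \geq x \geq 0$ with $y = \max\{x,y\} \geq 1$ by symmetry, and then show the sharper bound $y \log y - x \log x \leq y(y - x)$, which immediately implies the stated inequality since $y \leq 1 + y$. Note that under this normalization the quantity $y \log y - x \log x$ is automatically nonnegative, because $y \log y \geq 0$ (as $y \geq 1$) while $x \log x \leq 0$ for $x \in [0,1]$ and $x \log x \leq y \log y$ monotonically for $x \in [1, y]$. Thus the absolute value can be removed.

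For the upper bound, I would handle the boundary case $x = 0$ first by the standard inequality $\log y \leq y - 1$, giving $y \log y \leq y(y-1) \leq y(y+1)$. For $x > 0$, the key is the algebraic identity
\begin{equation*}
y \log y - x \log x = (y - x) \log y + x \log(y/x),
\end{equation*}
which splits the difference into two pieces that can each be controlled by elementary inequalities. The first term is bounded by $(y - x)(y - 1)$ using $\log y \leq y - 1$ (valid because $y \geq 1$ so $\log y \geq 0$). For the second term, applying $\log(1 + u) \leq u$ with $u = (y - x)/x \geq 0$ yields $x \log(y/x) \leq x \cdot (y-x)/x = y - x$. Summing the two bounds gives $y \log y - x \log x \leq (y-x)(y-1) + (y-x) = y(y-x)$, as desired.

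There is no real obstacle here: the proof is essentially a combination of the two standard inequalities $\log t \leq t - 1$ applied at $t = y$ and at $t = y/x$. The only mild subtlety is justifying that the absolute value can be dropped, which requires the observation that $y \geq 1$ forces $x \log x \leq y \log y$ (since $x \log x \leq 0$ when $x \leq 1$, and $t \log t$ is nondecreasing for $t \geq 1/e$, in particular on $[1, \infty)$). With this in hand the argument concludes in a few lines.
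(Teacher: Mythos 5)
Your proof is correct but takes a different route from the paper's. The paper's proof is a one-liner via convexity: assuming WLOG $x > y$ (so $x = \max\{x,y\} \geq 1$), convexity of $t \mapsto t\log t$ gives the tangent-line bound $x\log x - y\log y \leq (x-y)(1+\log x)$, and then the elementary inequality $0 \leq \log x \leq x$ (valid since $x \geq 1$) completes the argument. You instead split $y\log y - x\log x$ (with $y$ the maximum) into $(y-x)\log y + x\log(y/x)$ and control each piece with $\log t \leq t-1$, obtaining the slightly weaker intermediate bound $y(y-x)$ vs.\ the paper's $(1+\log y)(y-x)$, though of course both imply the lemma. Your argument is a bit longer and requires the separate boundary case $x=0$, but it has the virtue of being explicit about the sign (why the absolute value can be dropped), which the paper takes for granted; both proofs are elementary and sound.
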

\begin{proof}
The claim is obvious when $x = y$. 
Hence assume without loss of generality that $x > y$. 
By the convexity of $x \log x$ on $[0, \infty)$
we have 
\[
x\log x - y \log y \leq (x-y)(1 + \log x).
\]
Since $x \ge 1$ by assumption, we have $0 \le \log x \le x$. Substituting this on the right hand side in the preceding inequality completes the proof.
\black
\end{proof}

\begin{proof}[Proof of Theorem~\ref{thm:graphon-entropy-props}]
To see part~\ref{thm-ge-well}, 
note that $|x \log x|
    \leq \frac{1}{e} + x^2$ on $[0,\infty)$.
  Thereby, we have
    \begin{equation}
      \label{eq:W-L2-W-log-W-finite}
      \int |W(x,y) \log W(x,y)| d \pi(x) d \pi(y) \leq \frac{1}{e} + \int |W(x,y)|^2 d \pi(x) d \pi(y) < \infty.
    \end{equation}  
    Since $W$ is $L^2$, this establishes that $\Ent(W)$ is indeed well defined and
    $\Ent(W) < \infty$. 
\black
    To see part~\ref{thm-ge-Ent-pos}, using convexity of $x
    \mapsto x \log x$, we have $\ev{W \log W} \geq \ev{W} \log \ev{W}$, which
    means that $\Ent(W) \geq 0$. Part~\ref{thm-ge-scale} follows directly from
    the definition of $\Ent(W)$.
    Now, we give the proof of part~\ref{thm-ge-L2-conv}.
Since
\black
$\delta_2(W_n, W) \rightarrow 0$, we have $\ev{W_n} \rightarrow
\ev{W}$ and $\ev{W_n} \log \ev{W_n} \rightarrow \ev{W} \log \ev{W}$. Therefore, in order to show that $\Ent(W_n) \rightarrow \Ent(W)$
as $n \rightarrow \infty$,
\black
it
suffices to show that
\begin{equation}
  \label{eq:L2-convergence-xlogx-claim}
  \ev{W_n \log W_n} = \int W_n(x,y) \log W_n(x,y)d\pi_n(x) d\pi_n(y) \rightarrow \int W(x,y) \log W(x,y) d\pi(x) d\pi(y)= \ev{W \log W}.
\end{equation}
 Using the definition of the $\delta_2$ norm in~\eqref{eq:delta-2-def}, 
 we can find for  each $n$
 \black
a coupling $\nu_n$ of $\pi_n$ and $\pi$ such that
\begin{equation}
  \label{eq:nu-n-coupling-property}
  \int |W_n(x,y) - W(x',y')|^2 d \nu_n(x,x') d\nu_n(y,y') \leq \delta_2^2(W_n, W) + \frac{1}{n}.
\end{equation}
Note that we have
\begin{equation}
  \label{eq:Ent-Wn-Ent-W-diff-1}
  \begin{aligned}
    \ev{W_n \log W_n} - \ev{W \log W} 
    &= \int (W_n(x,y) \log W_n(x,y) - W(x',y') \log W(x',y')) d \nu_n(x,x') d \nu_n(y,y').
  \end{aligned}
\end{equation}
\black
To simplify the notation, define $\mu_n:= \nu_n \times \nu_n$ to be the product
measure on $(\Omega_n \times \Omega) \times (\Omega_n \times \Omega)$. Moreover,
define
\begin{equation}
  \label{eq:Bn-def}
  B_n := \{ (x,x', y,y') : W_n(x,y) \vee W(x',y') \geq 1\}.
\end{equation}
Then, using~\eqref{eq:Ent-Wn-Ent-W-diff-1}, we can write
\begin{equation}
  \label{eq:Ent-diff-B-Bc}
  \begin{aligned}
    |\ev{W_n \log W_n} - \ev{W \log W}| &\leq \int_{B_n} |W_n(x,y) \log W_n(x,y) - W(x',y') \log W(x',y')| d \mu_n(x,x',y,y')  \\
    &\quad +\int_{B_n^c} |W_n(x,y) \log W_n(x,y) - W(x',y') \log W(x',y')| d \mu_n(x,x',y,y').  
  \end{aligned}
\end{equation}
We bound each term separately. We start with the integral over $B_n$. Using
Lemma~\ref{lem:xlogx}, we have
\begin{align*}
  \int_{B_n} |W_n\log W_n - W \log W| d\mu_n &\leq \int_{B_n} |W_n - W|(1 + \max\{W_n, W\}) d \mu_n \\
                                             &\leq \int_{B_n} |W_n - W| d \mu_n + \int_{B_n} |W_n - W|(W + W_n) d \mu_n \\
                                             &\leq \int |W_n - W| d\mu_n + \int |W_n - W|(W + W_n) d \mu_n \\
                                             &\leq \left( \int |W_n - W|^2 d \mu_n \right)^{1/2} \left[  1 + \left( \int (W +W_n)^2 d \mu_n \right)^{1/2} \right] \\
                                             &\leq \left (\delta^2_2(W_n, W) + \frac{1}{n}\right )^{1/2} \left( 1 + \left(\int W^2 d\mu_n\right)^{1/2} + \left(\int W_n^2 d \mu_n\right)^{1/2} \right). 
\end{align*}
Note that, by assumption, we have $\delta_2(W_n , W) \rightarrow 0$. Also, $\left(\int
  W^2 d\mu_n\right)^{1/2} < \infty$ since $W$ is a $L^2$
graphon. Furthermore, $W_n$ is $L^2$ and $\delta_2(W_n, W) \rightarrow 0$, hence
$\left(\int W_n^2 d \mu_n\right)^{1/2}$ is a bounded sequence. Therefore, we
have
\begin{equation}
  \label{eq:int-Bn-zero}
  \lim_{n \rightarrow \infty} \int_{B_n} |W_n\log W_n - W \log W| d\mu_n = 0.
\end{equation}

Next, we focus on the second term in~\eqref{eq:Ent-diff-B-Bc}, i.e.\ the integral over $B_n^c$. Fix $\epsilon > 0$ and define
\begin{equation}
  \label{eq:Anepsilon-def}
  A_{n, \epsilon} := \{(x,x',y,y') : |W_n(x,y) - W(x',y')| > \epsilon\}.
\end{equation}
With this, we split the integral over $B_n^c$ as follows:
\begin{align*}
  \int_{B_n^c}  |W_n \log W_n - W \log W| d\mu_n &= \int_{B_n^c \cap A_{n, \epsilon}} |W_n \log W_n - W \log W| d \mu_n \\
  &\qquad + \int_{B_n^c \cap A_{n, \epsilon}^c} |W_n \log W_n - W \log W| d \mu_n.
\end{align*}
Recalling the definition of $B_n$ from~\eqref{eq:Bn-def}, for $(x, x',y, y') \in
B_n^c$, we have $W_n(x,y) < 1$ and $W(x',y') < 1$. Since $|x \log x| \leq
\frac{1}{e}$ for $x \in [0,1]$, we have
\begin{align*}
  \int_{B_n^c \cap A_{n, \epsilon}} |W_n \log W_n - W \log W|d \mu_n &\leq \frac{2}{e} \mu_n(A_{n, \epsilon}) \\
&\leq \frac{2}{e} \frac{\int |W_n - W|^2 d \mu_n}{\epsilon^2} \\
 &\leq \frac{2}{e} \frac{\delta_2(W_n, W)^2 + 1/n}{\epsilon^2}.
\end{align*}
Since $\delta_2(W_n, W) \rightarrow 0$, we have
\begin{equation}
  \label{eq:lim-Bnc-An-0}
  \lim_{n \rightarrow \infty}   \int_{B_n^c \cap A_{n, \epsilon}} |W_n \log W_n - W \log W|d \mu_n = 0.
\end{equation}
Moreover, since the function $x \log x$ is uniformly continuous for $x \in
[0,1]$, for $x, x' \in [0,1]$ such that $|x - x'| \leq \epsilon$, we have $|x
\log x - x' \log x'| \leq \delta(\epsilon)$, where $\delta(\epsilon) \rightarrow
0$ as $\epsilon \rightarrow 0$. Hence, we have 
\begin{equation*}
  \int_{B_n^c \cap A_{n, \epsilon}^c} |W_n \log W_n - W \log W| d \mu_n \leq \delta(\epsilon).
\end{equation*}
Therefore,
\begin{equation*}
  \limsup_{n \rightarrow 0}   \int_{B_n^c \cap A_{n, \epsilon}^c} |W_n \log W_n - W \log W|d \mu_n \leq \delta(\epsilon).
\end{equation*}
This together with~\eqref{eq:lim-Bnc-An-0} implies that
\begin{equation*}
  \limsup_{n \rightarrow \infty} \int_{B_n^c} |W_n \log W_n - W \log W| d \mu_n \leq \delta(\epsilon).
\end{equation*}
Since this holds for all $\epsilon > 0$ and $\delta(\epsilon) \rightarrow 0$ as
$\epsilon \rightarrow 0$, by sending $\epsilon$ to zero
we have
\black
\begin{equation*}
  \lim_{n \rightarrow 0}   \int_{B_n^c} |W_n \log W_n - W \log W|d \mu_n = 0.
\end{equation*}
This together with~\eqref{eq:int-Bn-zero} and~\eqref{eq:Ent-diff-B-Bc} implies
that $|\ev{W_n \log W_n} - \ev{W \log W} | \rightarrow 0$ which is
precisely~\eqref{eq:L2-convergence-xlogx-claim}. This completes the proof of part~\ref{thm-ge-L2-conv}.
\end{proof}

\section{Proof of Proposition~\ref{prop:graphon-entropy-asymptotics}}
\label{sec:app-graphon-random-graph-asymp-entropy}

\begin{proof}[Proof of Proposition~\ref{prop:graphon-entropy-asymptotics}]
  Note that since $\rho_n \rightarrow 0$
  we have $\rho_n < 1$
  for all $n$ large enough.
  \black
  Also, since $n \rho_n \rightarrow \infty$, for $n$ large enough
  we have $\rho_n > 1 / n > 0$. Therefore, throughout the proof, we may
  assume that $n$ is large enough so that $0 < \rho_n < 1$.

  We prove the result in two steps. First, we show that
  \begin{equation}
    \label{eq:graphon-asymp-ent-lowerboun-claim}
    \liminf_{n \rightarrow \infty} \frac{H(\Gn) - \barmn \log \frac{1}{\rho_n}}{\barmn} \geq  1 - \Ent(W).
  \end{equation}
  Recall that in order to generate $\Gn$ we start with an i.i.d.\ sequence
  $(X_i)_{i=1}^\infty$ from distribution $\pi$ 
  and 
  connect 
  \black
  two nodes $1 \leq i,j \leq n$, $i \neq j$, with probability
  $\rho_n W(X_i, X_j) \wedge 1$. Note that, conditioned on $X_{[1:n]}$, the
  placement of edges is performed independently for each pair of vertices.
  Therefore, with $H_b(x):= - x \log x  - (1-x) \log (1-x)$ denoting the binary
  entropy of $x \in [0,1]$
  to the natural base, and 
  \black
  identifying $0 \log 0 \equiv 0$ as usual, we may write
  \begin{equation}
    \label{eq:graphon-ent-asym-lower-bound-1}
    \begin{aligned}
      H(\Gn) &\geq H(\Gn | X_{[1:n]}) \\
      &= \sum_{1 \leq i < j \leq n} H(\one{i \sim_{\Gn} j} | X_{[1:n]}) \\
      &= \evwrt{X_{[1:n]}}{\sum_{1 \leq i < j \leq n} H_b(\rho_n W(X_i, X_j) \wedge 1)} \\
      &= \binom{n}{2} \ev{H_b(\rho_n W(X_1, X_2) \wedge 1)}\\
      &= \binom{n}{2} \ev{H_b(\rho_n W
    \wedge 1)},
    \end{aligned}
  \end{equation}
    where in the last line we view $W$ as a random variable on $\Omega \times \Omega$ with probability distribution $\pi \times \pi$.
    \black
We may write
  \begin{equation}
    \label{eq:Hb-rhon-W-split}
    \ev{H_b(\rho_n W \wedge 1)} = \ev{\one{W \leq \frac{1}{\rho_n}} \rho_n W \log \frac{1}{\rho_n W}} + \ev{\one{W \leq \frac{1}{\rho_n}} (1-\rho_n W) \log \frac{1}{1-\rho_n W}}.
  \end{equation}
  We continue by bounding each term separately.

  For the first term in~\eqref{eq:Hb-rhon-W-split}, we may write
  \begin{align*}
    \ev{\one{W \leq \frac{1}{\rho_n}} \rho_n W \log \frac{1}{\rho_n W}} &= \rho_n \log \frac{1}{\rho_n} \ev{\one{W \leq \frac{1}{\rho_n}} W} + \rho_n \ev{\one{W \leq \frac{1}{\rho_n}} W \log \frac{1}{W}}\\
&= \rho_n \log \frac{1}{\rho_n} \ev{W} - \rho_n \log \frac{1}{\rho_n} \ev{\one{W > \frac{1}{\rho_n}} W}  \\
    &\quad + \rho_n \ev{\one{W \leq \frac{1}{\rho_n}} W \log \frac{1}{W}}.
  \end{align*}
  Since 
  \black
  $W$ is normalized by assumption, we have $\ev{W} = 1$.
  Moreover, we have
  \begin{equation*}
    \one{W > \frac{1}{\rho_n}} W = \one{\rho_n W > 1} W \leq \rho_n W^2.
  \end{equation*}
  Therefore, we have
  \begin{equation*}
    \ev{\one{W \leq \frac{1}{\rho_n}} \rho_n W \log \frac{1}{\rho_n W}} \geq \rho_n \log \frac{1}{\rho_n} - \rho_n^2 \log \frac{1}{\rho_n} \ev{W^2} + \rho_n \ev{\one{W \leq \frac{1}{\rho_n}} W \log \frac{1}{W}}.
  \end{equation*}
  Multiplying both sides by $\binom{n}{2}$, then  dividing by $\barmn$, and
  recalling $\barmn = \binom{n}{2} \rho_n$, we
  realize that
  \begin{equation}
    \label{eq:ev-rho-n-W-log-rho-n-W-bound-1}
\begin{aligned}
    \liminf_{n \rightarrow \infty} \frac{\binom{n}{2}\ev{\one{W \leq 1 / \rho_n} \rho_n W \log \frac{1}{\rho_n W}} - \barmn \log \frac{1}{\rho_n}}{\barmn} &\geq \liminf_{n \rightarrow \infty} \ev{\one{W \leq \frac{1}{\rho_n}} W \log \frac{1}{W}} \\
    &\quad - \limsup_{n \rightarrow \infty} \rho_n \log \frac{1}{\rho_n} \ev{W^2}.
    \end{aligned}
  \end{equation}
  Since $W$ is a $L^2$ graphon
  we have
  \black
  $\ev{W^2} < \infty$. Moreover, since $\rho_n
  \rightarrow 0$
  we have
  \black
  $\rho_n \log 1 / \rho_n \rightarrow 0$. Thereby,
  \begin{equation}
    \label{eq:rho-n-log-rho-n-ev-W2}
    \lim_{n \rightarrow \infty} \rho_n \log \frac{1}{\rho_n} \ev{W^2} = 0.
  \end{equation}
  On the other hand, since $W$ is $L^2$, from~\eqref{eq:W-L2-W-log-W-finite} we
  know that $\ev{|W \log 1 / W|} < \infty$. Therefore, as $1 / \rho_n
  \rightarrow \infty$, the dominated convergence theorem implies that
  \begin{equation}
    \label{eq:W<1-rhon-W-log-W-to-ent}
    \lim_{n \rightarrow \infty} \ev{\one{W \leq \frac{1}{\rho_n}} W \log \frac{1}{W}} = \ev{W \log \frac{1}{W}} = -\Ent(W).
  \end{equation}
  Substituting~\eqref{eq:rho-n-log-rho-n-ev-W2} and \eqref{eq:W<1-rhon-W-log-W-to-ent}
  into~\eqref{eq:ev-rho-n-W-log-rho-n-W-bound-1}, we get
  \begin{equation}
    \label{eq:lower-bound-rhoW-log-rhoW-part-final}
    \liminf_{n \rightarrow \infty} \frac{\binom{n}{2}\ev{\one{W \leq 1 / \rho_n} \rho_n W \log \frac{1}{\rho_n W}} - \barmn \log \frac{1}{\rho_n}}{\barmn}  \geq -\Ent(W).
  \end{equation}

  Now we turn to the second term on the right hand side
  of~\eqref{eq:Hb-rhon-W-split}. Let $\tau_n := 1 / \sqrt{\rho_n} \leq 1/
  \rho_n$  and note that
  \begin{equation}
    \label{eq:1-rhon-taun-W}
    \ev{\one{W \leq \frac{1}{\rho_n}} (1-\rho_n W) \log \frac{1}{1 - \rho_n W}} \geq \ev{\one{W \leq \tau_n} (1 - \rho_n W) \log \frac{1}{1 - \rho_n W}}.
  \end{equation}
  Using 
  the
  \black
  Taylor remainder theorem, for $x \geq 0$, we can write
  \begin{equation*}
    (1 - x) \log \frac{1}{1-x}= x + x \eta(x),
  \end{equation*}
  where $\eta(x) \rightarrow 0$ as $x \rightarrow 0$. Thereby,
  \begin{equation*}
    \ev{\one{W \leq \tau_n} (1 - \rho_n W) \log \frac{1}{1 - \rho_n W}} = \rho_n \ev{\one{W \leq \tau_n} W} + \rho_n \ev{\one{W \leq \tau_n}  W \eta(\rho_n W)}.
  \end{equation*}
  Using this in~\eqref{eq:1-rhon-taun-W}, multiplying both sides by
  $\binom{n}{2}$, and then dividing by $\bar{m}_n$, we get
  \begin{equation}
    \label{eq:1-rho-nW-log-1-rhon-W-bound-1}
    \frac{\binom{n}{2}\ev{\one{W \leq \frac{1}{\rho_n}} (1-\rho_n W) \log \frac{1}{1 - \rho_n W}}}{\barmn}  \geq \ev{\one{W \leq \tau_n} W} + \ev{\one{W \leq \tau_n} W \eta(\rho_n W)}.
  \end{equation}
Note that $\ev{W} = 1$ and $\tau_n \rightarrow \infty$ as $n \rightarrow
  \infty$. Hence,
  \begin{equation}
    \label{eq:W-taun-W-1}
    \lim_{n \rightarrow \infty} \ev{\one{W \leq \tau_n} W} = 1.
  \end{equation}
  On the other hand, when $W \leq \tau_n$
  we have
  \black
  $\rho_n W \leq \rho_n \tau_n =
  \sqrt{\rho_n}$. Recall that $\eta(x) \rightarrow 0$ as $x \rightarrow 0$, and
  $\sqrt{\rho_n} \rightarrow 0$ as $n \rightarrow \infty$. Hence, we can
  conclude that there exists a sequence $\epsilon_n \rightarrow 0$ such that
  $|\eta(\rho_n W)| \leq \epsilon_n$ when $W \leq \tau_n$. Therefore,
  \begin{equation*}
    \ev{|\one{W \leq \tau_n} W \eta(\rho_n W)|} \leq \epsilon_n \ev{\one{W \leq \tau_n} W} \leq \epsilon_n \ev{W} = \epsilon_n.
  \end{equation*}
  Hence,
  \begin{equation}
    \label{eq:W-taun-phi-0}
    \lim_{n \rightarrow \infty} \ev{\one{W \leq \tau_n} W \eta(\rho_n W)} = 0.
  \end{equation}
  Substituting~\eqref{eq:W-taun-W-1} and~\eqref{eq:W-taun-phi-0} back
  into~\eqref{eq:1-rho-nW-log-1-rhon-W-bound-1}, we realize that
  \begin{equation}
    \label{eq:1-rhon-W-log-1-rho-n-W-final-bound}
    \liminf_{n \rightarrow \infty} \frac{\binom{n}{2}\ev{\one{W \leq \frac{1}{\rho_n}} (1-\rho_n W) \log \frac{1}{1 - \rho_n W}}}{\barmn}  \geq 1.
  \end{equation}
  Putting together~\eqref{eq:graphon-ent-asym-lower-bound-1}, \eqref{eq:Hb-rhon-W-split},
  \eqref{eq:lower-bound-rhoW-log-rhoW-part-final},
  and~\eqref{eq:1-rhon-W-log-1-rho-n-W-final-bound}, we arrive at~\eqref{eq:graphon-asymp-ent-lowerboun-claim}.

  Now, we show a matching upper bound
  for~\eqref{eq:graphon-asymp-ent-lowerboun-claim}, i.e.\ we show that
  \begin{equation}
    \label{eq:graphon-asym-upper-claim}
    \limsup_{n \rightarrow \infty} \frac{H(\Gn) - \barmn \log \frac{1}{\rho_n}}{\barmn} \leq 1 - \Ent(W).
  \end{equation}
  From Theorem~2.9 in~\cite{borgs2015consistent}, $W$ is equivalent to a
graphon over $[0,1]$ equipped with the uniform distribution. Therefore, without
loss of generality, we may assume that $W$ is a $L^2$ graphon over the space
$[0,1]$ equipped with uniform distribution, and $(X_i)_{i=1}^\infty$ is an
i.i.d.\ sequence of uniform $[0,1]$ random variables. Fix some $k \geq 1$ and
let $W_k$ 
be 
\black
a graphon over $[0,1]$ defined as follows. Let $Y_1, \dots, Y_k$
be a partition of $[0,1]$ into consecutive intervals of length $1/k$, and for $x
\in Y_i$ and $y \in Y_j$, define
\begin{equation}
  \label{eq:graphon-ent-asym-Wk-def}
  W_k(x,y) := \frac{1}{\left( \frac{1}{k} \right)^2} \int_{Y_i \times Y_j} W(u,v) du dv.
\end{equation}
In other words, $W_k$ is obtained from $W$ by taking the average in each cell
formed by the partition $(Y_i: 1 \leq i \leq k)$. Lemma~5.6 in~\cite{borgs2018L}
implies that since $W$ is $L^2$, we have
\begin{equation}
  \label{eq:graphon-ent-asymp-Wk-d2-0}
  \lim_{k \rightarrow \infty} \delta_2(W, W_k) = 0.
\end{equation}
Now, we fix $k \geq 1$ and find an upper bound for $H(\Gn)$. Define the random
variables $\tX_i, 1 \leq i \leq n$, as follows. For $1 \leq i \leq n$, 
if 
$X_i \in Y_j$
\black
we define $\tX_i := j$.
We may write
\begin{equation}
  \label{eq:graphon-ent-up-1}
  \begin{aligned}
    H(\Gn) &\leq H(\Gn, \tX_{[1:n]}) = H(\tX_{[1:n]}) + H(\Gn | \tX_{[1:n]}) \\
    &= n H(\tX_1) + H(\{\one{i \sim_{\Gn} j : 1 \leq i < j \leq n}\} | \tX_{[1:n]}) \\
    &\stackrel{(a)}{\leq} n \log k + \sum_{1 \leq i < j \leq n} H(\one{i \sim_{\Gn} j} | \tX_{[1:n]}) \\
    &= n \log k + \sum_{1 \leq i < j \leq n} H(\one{i \sim_{\Gn} j} | \tX_i, \tX_j) \\
    &\stackrel{(b)}{=} n \log k + \binom{n}{2} H(\one{1 \sim_{\Gn} 2} | \tX_1, \tX_2),
  \end{aligned}
\end{equation}
where $(a)$ uses the fact that $\tX_1$ is uniformly distributed over $\{1, \dots,
k\}$
and that a joint entropy is bounded above by the sum of the corresponding marginal entropies.
\black
Also, in $(b)$, we have used the symmetry
in $\Gn$. Note that conditioned on $\tX_1 = i$ and $\tX_2 = j$ for some $1 \leq
i, j \leq k$, $X_1$ and $X_2$ are independent and are distributed uniformly over
$Y_i$ and $Y_j$, respectively. Therefore,
\begin{align*}
  \pr{1 \sim_{\Gn} 2 | \tX_1 = i, \tX_2 = j} &= \frac{1}{\left( \frac{1}{k} \right)^2} \int_{Y_i \times Y_j} \rho_n W(u,v) \wedge 1 du dv \\
  &= \rho_n \frac{1}{\left( \frac{1}{k} \right)^2} \int_{Y_i \times Y_j} W(u, v) \wedge \frac{1}{\rho_n} du dv.
\end{align*}
For $1 \leq i,j \leq k$ and $n \geq 1$, define
\begin{equation}
  \label{eq:graphon-ent-asym-anij-def}
  \an_{i,j} := \frac{1}{\left( \frac{1}{k} \right)^2} \int_{Y_i \times Y_j} W(u,v) \wedge \frac{1}{\rho_n} du dv.
\end{equation}
Consequently, we have
\begin{equation}
  \label{eq:graphon-ent-asym-ent-1-2-tX12}
  \begin{aligned}
    H(\one{1 \sim_{\Gn} 2} | \tX_1, \tX_2) &= \sum_{i=1}^k \sum_{j=1}^k \pr{\tX_1 = i} \pr{\tX_2 = j} H_b\left(\pr{1 \sim_{\Gn}2 | \tX_1 = i, \tX_2 = j}\right) \\
    &= \left( \frac{1}{k} \right)^2 \sum_{i=1}^k \sum_{j=1}^k H_b(\rho_n \an_{i,j}) \\
    &= \left( \frac{1}{k} \right)^2 \sum_{i=1}^k \sum_{j=1}^k \rho_n \an_{i,j} \log \frac{1}{\rho_n \an_{i,j}} + (1 - \rho_n \an_{i,j}) \log \frac{1}{1 - \rho_n \an_{i,j}} \\
    &= \rho_n \log \frac{1}{\rho_n} \left(  \sum_{i=1}^k \sum_{j=1}^k \left( \frac{1}{k} \right)^2 \an_{i,j}\right) + \rho_n \left( \sum_{i=1}^k \sum_{j=1}^k \left( \frac{1}{k} \right)^2 \an_{i,j} \log \frac{1}{\an_{i,j}} \right) \\
    &\qquad + \sum_{i=1}^k \sum_{j=1}^k \left( \frac{1}{k} \right)^2 \left(1 - \rho_n \an_{i,j}\right) \log \frac{1}{1 - \rho_n \an_{i,j}}. 
  \end{aligned}
\end{equation}
We now simplify each of these three terms. 
For $1 \leq i \leq k$, let $y_i$ be
an arbitrary point in the interval 
$Y_i$.
With $\an_{i,j}$ as in~\eqref{eq:graphon-ent-asym-anij-def},
we have
\black
\begin{equation}
  \label{eq:anij-Wkyiyj-bound}
  \an_{i,j} =  \frac{1}{\left( \frac{1}{k} \right)^2} \int_{Y_i \times Y_j} W(u,v) \wedge \frac{1}{\rho_n} du dv\leq \frac{1}{\left( \frac{1}{k} \right)^2} \int_{Y_i \times Y_j} W(u,v) du dv = W_k(y_i, y_j).
\end{equation}
Thereby,
\begin{equation}
  \label{eq:sum-anij-bounded-by-1}
  \begin{aligned}
    \sum_{i=1}^k \sum_{j=1}^k \left( \frac{1}{k} \right)^2 \an_{i,j} &\leq     \sum_{i=1}^k \sum_{j=1}^k \left( \frac{1}{k} \right)^2 W_k(y_i, y_j) \\
    &=     \sum_{i=1}^k \sum_{j=1}^k \left( \frac{1}{k} \right)^2 \frac{1}{\left( \frac{1}{k} \right)^2} \int_{Y_i \times Y_j} W(u,v) du dv \\
    &= \int_0^1 \int_0^1 W(u,v) du dv \\
    &= 1,
  \end{aligned}
\end{equation}
where in the last step, we have used the assumption that $W$ is a normalized
graphon. Consequently, when $n$ is 
so large
\black
that $\rho_n < 1$, we have
\begin{equation}
  \label{eq:rhon-log-rhon-sum-anij-1}
  \rho_n \log \frac{1}{\rho_n} \left(     \sum_{i=1}^k \sum_{j=1}^k \left( \frac{1}{k} \right)^2 \an_{i,j} \right) \leq \rho_n \log \frac{1}{\rho_n}.
\end{equation}
From
\black
the definition of $\an_{i,j}$
in~\eqref{eq:graphon-ent-asym-anij-def}, since $1 / \rho_n \rightarrow \infty$,
for all $1 \leq i,j \leq k$, we have
\begin{equation*}
  \lim_{n \rightarrow \infty} \an_{i,j} = \frac{1}{\left( \frac{1}{k} \right)^2} \int_{Y_i \times Y_j} W(u,v) du dv = W_k(y_i, y_j).
\end{equation*}
Hence, since $\snorm{W_k}_1 = \snorm{W}_1 = 1$, we have
\begin{equation}
  \label{eq:lim-sum-anij-log-anij-Ent-Wk}
  \begin{aligned}
    \lim_{n \rightarrow \infty}     \sum_{i=1}^k \sum_{j=1}^k \left( \frac{1}{k} \right)^2 \an_{i,j} \log \frac{1}{\an_{i,j}} &=     \sum_{i=1}^k \sum_{j=1}^k \left( \frac{1}{k} \right)^2 W_k(y_i, y_j) \log \frac{1}{W_k(y_i, y_j)} \\
    &= \int_0^1 \int_0^1 W_k(u,v) \log \frac{1}{W_k(u,v)}du dv \\
    &= -\Ent(W_k).
  \end{aligned}
\end{equation}
On the other hand, using $(1-x) \log \frac{1}{1-x} = x + x \eta(x)$ where $\eta(x)
\rightarrow 0$ as $x \rightarrow 0$, as we discussed before in
proving~\eqref{eq:graphon-asymp-ent-lowerboun-claim}, we may write
\begin{align*}
  \sum_{i=1}^k \sum_{j=1}^k \left( \frac{1}{k} \right)^2 \left( 1 - \rho_n \an_{i,j} \right) \log \frac{1}{1 - \rho_n \an_{i,j}} = \rho_n     \sum_{i=1}^k \sum_{j=1}^k \left( \frac{1}{k} \right)^2 \an_{i,j}(1 + \eta(\rho_n \an_{i,j})).
\end{align*}
From~\eqref{eq:anij-Wkyiyj-bound}, $\an_{i,j} \leq W_k(y_i, y_j) \leq \max_{i,j}
W_k(y_i, y_j)$, and 
we have
\black
$\rho_n \rightarrow 0$ as $n \rightarrow \infty$. Thereby,
there exists a sequence $\epsilon_n \rightarrow 0$ such that $\eta(\rho_n
\an_{i,j}) \leq \epsilon_n$ for $1 \leq i , j \leq k$. Using this together
with~\eqref{eq:sum-anij-bounded-by-1} in the above, we get
\begin{equation}
  \label{eq:sum-1-rhon-anij-log-rhon-epsilon}
  \sum_{i=1}^k \sum_{j=1}^k \left( \frac{1}{k} \right)^2 \left( 1 - \rho_n \an_{i,j} \right) \log \frac{1}{1 - \rho_n \an_{i,j}} \leq \rho_n(1 + \epsilon_n).
\end{equation}
Substituting~\eqref{eq:rhon-log-rhon-sum-anij-1},
\eqref{eq:lim-sum-anij-log-anij-Ent-Wk},
and~\eqref{eq:sum-1-rhon-anij-log-rhon-epsilon} back
into~\eqref{eq:graphon-ent-asym-ent-1-2-tX12}, we realize that
\begin{equation*}
  \limsup_{n \rightarrow \infty} \frac{H(\one{1 \sim_{\Gn}2} | \tX_1, \tX_2) - \rho_n \log \frac{1}{\rho_n}}{\rho_n} \leq 1 - \Ent(W_k).
\end{equation*}
Multiplying the numerator and the denominator on the left hand side by
$\binom{n}{2}$ and recalling $\barmn = \binom{n}{2} \rho_n$, we get
\begin{equation*}
  \limsup_{n \rightarrow \infty} \frac{\binom{n}{2} H(\one{1 \sim_{\Gn} 2} | \tX_1, \tX_2) - \barmn \log \frac{1}{\rho_n}}{\barmn} \leq 1 - \Ent(W_k).
\end{equation*}
Using this in~\eqref{eq:graphon-ent-up-1}, we get
\begin{align*}
  \limsup_{n \rightarrow \infty} \frac{H(\Gn) - \barmn \log \frac{1}{\rho_n}}{\barmn} &\leq 1 + \Ent(W_k) + \limsup_{n \rightarrow \infty} \frac{n \log k}{\barmn} \\
  &= 1 - \Ent(W_k),
\end{align*}
where the last line uses the fact that since $\barmn = \binom{n}{2} \rho_n$ and
$n \rho_n \rightarrow \infty$, $n / \barmn \rightarrow 0$. Note that this bound
holds for all $k \geq 1$. Moreover, from~\eqref{eq:graphon-ent-asymp-Wk-d2-0},
$\delta_2(W_k, W) \rightarrow 0$ as $k \rightarrow \infty$. Therefore,
part~\ref{thm-ge-L2-conv} in Theorem~\ref{thm:graphon-entropy-props}
implies that $\Ent(W_k)
\rightarrow \Ent(W)$. Hence, we arrive at~\eqref{eq:graphon-asym-upper-claim} by
sending $k$ to infinity in the above bound. The proof is complete by
putting~\eqref{eq:graphon-asymp-ent-lowerboun-claim} and
\eqref{eq:graphon-asym-upper-claim} together.
\end{proof}

\section{Proof of Proposition~\ref{prop:no-good-splitting}}
\label{app:no-good-splitting-proof}

We assume  that  $((\Tn_1, \Tn_2): n \geq 1)$ is a sequence of good splitting
mechanisms, and we arrive at a contradiction.

For $n \geq 1$ and $1 \leq k \leq n$, let $\Gn_k$ be an \ER random graph on $n$
vertices 
where each edge is 
\black
independently present 
with
probability
$k/n$.
\black
We can assume that $(\Gn_k: n \geq 1, k \leq n)$
live independently on a joint probability space. From this point forward, all
of our probabilistic statements will refer to this joint probability space. 

We know that with $\mu_k \in \mP(\mT_*)$ being the law of the unimodular
Galton--Watson tree with Poisson degree distribution and average degree $k$
we have
\black
$U(\Gn_k) \Rightarrow \mu_k$ a.s.
for each fixed $k \ge 1$ as
$n \rightarrow \infty$. 
\black
Therefore, the assumption that $((\Tn_1,
\Tn_2): n \geq 1)$ is a sequence of good splitting mechanisms implies that with $\Gn_{k,1} := \Tn_1(\Gn_k)$,
we have $U(\Gn_{k,1}) \Rightarrow \mu_k$ a.s.. Let $\mn_{k,1}$
denote the number of edges in $\Gn_{k,1}$. Moreover, let $\deg : \mG_*
\rightarrow \reals_+$ be such that $\deg([G,o]) = \deg_G(o)$ and note that for
any $\alpha > 0$ the function $(\deg \wedge \alpha): \mG_* \rightarrow \reals_+$,
defined as $(\deg \wedge \alpha) ([G,o]) := \deg([G,o]) \wedge \alpha$,  is bounded
and continuous. Thereby, the fact that $U(\Gn_{k,1}) \Rightarrow \mu_k$ a.s.\
implies that
\begin{equation}
  \label{eq:lim-deg-alpha}
  \lim_{n \rightarrow \infty} \int (\deg \wedge \alpha) d U(\Gn_{k,1}) = \int (\deg \wedge \alpha) d \mu_k = \int \deg_G(o) \wedge \alpha d \mu_k ([G,o]) \qquad \text{a.s.}.
\end{equation}
On the other hand, we have
\begin{equation*}
  \frac{\mn_{k,1}}{n} = \frac{1}{2} \int \deg dU(\Gn_{k,1})  \geq  \frac{1}{2} \int (\deg \wedge \alpha) dU(\Gn_{k,1}).
\end{equation*}
This together with \eqref{eq:lim-deg-alpha} implies that
\begin{equation*}
  \liminf_{n \rightarrow \infty} \frac{\mn_{k,1}}{n} \geq \frac{1}{2} \int \deg_G(o) \wedge \alpha d \mu_k([G,o]) \qquad \text{a.s.}.
\end{equation*}
Since this holds for all $\alpha > 0$, sending $\alpha$ to infinity,
we realize
that
\begin{equation}
  \label{eq:er-n-k-mn-k-1-liminf-k-2-as}
  \liminf_{n \rightarrow \infty} \frac{\mn_{k,1}}{n} \geq \frac{k}{2} \qquad \text{a.s.}.
\end{equation}
This means that for all $\epsilon > 0$, we have
\begin{equation*}
  \lim_{n \rightarrow \infty} \pr{\frac{\mn_{k,1}}{n} < \frac{k}{2} - \epsilon} = 0.
\end{equation*}
In particular, we have 
\begin{equation}
  \label{eq:lim-pr-mn-k1-k-4-zero}
  \lim_{n \rightarrow \infty} \pr{\frac{\mn_{k,1}}{n} < \frac{k}{4}} = 0 \qquad \forall k \geq 1.
\end{equation}

We now
\black
define a sequence of integers $(n_k: k\geq 1)$ inductively as follows.
Let 
$n_0 := 0$
\black
and, for $k \geq 1$, assuming that $n_{k-1}$ is chosen, we
choose $n_k$ large enough such that the following three conditions are satisfied:
\begin{enumerate}
\item $n_k > n_{k-1}$;
\item
  \begin{equation}
    \label{eq:nk-cond-mn-nk-k4-k2}
    \pr{\frac{m^{(n_k)}_{k,1}}{n_k} < \frac{k}{4} } < \frac{1}{k^2};
  \end{equation}
\item
  \begin{equation}
    \label{eq:nk-cond-k-nk-1-k}
    \frac{k+1}{n_k} < \frac{1}{k+1}.
  \end{equation}
\end{enumerate}

Note that condition~\eqref{eq:nk-cond-mn-nk-k4-k2} 
can be
\black
satisfied, due to~\eqref{eq:lim-pr-mn-k1-k-4-zero}.
We next
\black
define the sequence $(\Gn: n \geq 1)$ of random graphs as follows. For
$n \geq 1$, let $k(n)$  be the unique integer  $k \geq 1$ such that $n_{k-1} < n \leq n_k$ and let $\Gn
= \Gn_{k(n)}$.  Note that since $n_{k(n) - 1} < n$,
using~\eqref{eq:nk-cond-k-nk-1-k} we have
\begin{equation}
  \label{eq:kn-n-less-1-kn}
  \frac{k(n)}{n} < \frac{k(n)}{n_{k(n) - 1}} < \frac{1}{k(n)}.
\end{equation}
In particular this means that $k(n) < n$ and
so
\black
the sequence $\Gn$ is well defined. 

Observe that this sequence $\Gn$ can be represented in terms of a sequence of
$W$--random graphs for the graphon $W$ defined on the probability space $[0,1]$
equipped with the uniform distribution such that $W(x,y) = 1$ for all $x, y \in
[0,1]$. To see this, 
let $\rho_n
:= k(n) / n$
for $n \ge 1$.
\black
The
\black
distribution of the sequence $\Gn = \Gn_{k(n)}$ is 
then
\black
identical to the
distribution of the sequence of $W$--random graphs 
with target densities
$\rho_n$.
\black
Further, 
\black
due to~\eqref{eq:kn-n-less-1-kn}, we have
\begin{equation}
  \label{eq:no-good-split-rhon-goes-zero}
  \lim_{n \rightarrow \infty} \rho_n = 0.
\end{equation}
and
\begin{equation}
  \label{eq:no-good-split-lim-n-rhon-infty}
  \lim_{n \rightarrow \infty} n \rho_n = \infty,
\end{equation}
because 
$k(n) \rightarrow
\infty$ as $n \rightarrow \infty$.
\black
As a result, the assumption that $((\Tn_1, \Tn_2): n \geq 1)$ is a sequence of
good splitting mechanisms ensures that
\begin{equation}
  \label{eq:no-good-split-ensure-mn-mbarn-small}
  \lim_{n \rightarrow \infty} \frac{\mn_{k(n),1}}{\barmn} = 0 \qquad \text{a.s.},
\end{equation}
where $\barmn := {n \choose 2} \rho_n$.
But
\black
note that by definition we have $k(n_k) = k$ and thereby $\rho_{n_k} = k(n_k)
/ n_k = k / n_k$. Hence
\begin{equation*}
\barm_{n_k} = \binom{n_k}{2} \rho_{n_k} = \frac{n_k(n_k - 1)}{2} \frac{k}{n_k} = \frac{(n_k - 1)k}{2}.
\end{equation*}
Therefore, using~\eqref{eq:nk-cond-mn-nk-k4-k2}, we have
\begin{align*}
  \pr{\frac{m^{(n_k)}_{k,1}}{\barm_{n_k}} < \frac{1}{2}} &= \pr{\frac{2 m^{(n_k)}_{k,1}}{k(n_k-1)} < \frac{1}{2}} \\
&\leq \pr{\frac{2 m^{(n_k)}_{k,1}}{kn_k} < \frac{1}{2}} \\
&= \pr{\frac{m^{(n_k)}_{k,1}}{n_k} < \frac{k}{4}} \\
&< \frac{1}{k^2}.
\end{align*}
\black
Consequently, using the Borel--Cantelli lemma, we have
\begin{equation*}
  \pr{\frac{m^{(n_k)}_{k,1}}{\barm_{n_k}} < \frac{1}{2} \text{ for infinitely many } k}  = 0.
\end{equation*}
\black
Therefore,
\begin{equation*}
  \liminf_{k \rightarrow \infty} \frac{m^{(n_k)}_{k,1}}{\barm_{n_k}} \geq \frac{1}{2} > 0 \qquad \text{a.s.}.
\end{equation*}
\black
Recall that $m^{(n_k)}_{k,1}$ is the number of edges in
$T^{(n_k)}_1(G^{(n_k)}_k) = T^{(n_k)}_1(G^{(n_k)})$. Since $n_k \rightarrow \infty$ as $k \rightarrow
\infty$, this in particular means that
\begin{equation*}
  \liminf_{n \rightarrow \infty} \frac{\mn_{k(n), 1}}{\barmn}  \geq \frac{1}{2} > 0 \qquad \text{a.s.}.
\end{equation*}
\black
But this is in contradiction
with~\eqref{eq:no-good-split-ensure-mn-mbarn-small}. Therefore no sequence of
good splitting mechanisms exists and the proof is complete.

\section{Proof of Proposition~\ref{prop:graphon-mn-asymptotics}}
\label{app:proof-prop-graphon-mn-asymptotics}

Throughout this section, 
we assume that 
\black
$W$ is a normalized $L^2$ graphon and $\Gn \sim \mG(n; \rho_n W)$ is a sequence
  of $W$--random graphs with target density $\rho_n$ such that 
  $\rho_n \rightarrow 0$ and
  $n \rho_n
  \rightarrow \infty$.
  \black
  Also,  $\mn$ denotes the
  number of edges in $\Gn$ and $\barmn :=
  \binom{n}{2} \rho_n$. For better organization, we prove
  Proposition~\ref{prop:graphon-mn-asymptotics} in separate lemmas. 

\begin{lem}
\label{lem:graphon-mn-asymp-mn-mbarn-1}    
We have $\mn / \barmn \rightarrow 1$ a.s..
\end{lem}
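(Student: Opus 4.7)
The plan is to deduce this lemma directly from Theorem~\ref{thm:ganguli-W-random-graphon-convergence} (Borgs--Chayes--Cohn--Ganguly's Theorem 2.14), which is already quoted in the excerpt and applies under the same hypotheses ($W$ a normalized $L^2$ graphon, $\rho_n \to 0$, $n\rho_n \to \infty$). That theorem asserts that $\rho(\Gn)/\rho_n \to 1$ almost surely. I would simply translate this statement into one about $\mn/\barmn$.

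Concretely, since $\rho(\Gn) = 2\mn/n^2$ by the definition~\eqref{eq:rho-G-def}, and since $\barmn = \binom{n}{2}\rho_n = n(n-1)\rho_n/2$, we have the algebraic identity
\begin{equation*}
\frac{\rho(\Gn)}{\rho_n} = \frac{2\mn/n^2}{2\barmn/(n(n-1))} = \frac{n-1}{n}\cdot\frac{\mn}{\barmn},
\end{equation*}
or equivalently $\mn/\barmn = \frac{n}{n-1}\cdot\rho(\Gn)/\rho_n$. Since $n/(n-1) \to 1$ deterministically and $\rho(\Gn)/\rho_n \to 1$ almost surely, the conclusion $\mn/\barmn \to 1$ a.s.\ is immediate.

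There is essentially no obstacle: the hard analytic content (the a.s.\ convergence of the graph density under the conditions $\rho_n \to 0$ and $n\rho_n \to \infty$) is contained in Theorem~\ref{thm:ganguli-W-random-graphon-convergence}, which the paper imports from~\cite{borgs2015consistent}. If one preferred a self-contained argument, the alternative route would be to first verify that $\ev{\mn}/\barmn = \ev{W \wedge 1/\rho_n} \to 1$ (by monotone convergence, exactly as in~\eqref{eq:ev-mn-mbarn}), then bound $\var(\mn)$ by conditioning on $X_{[1:n]}$, decomposing via the law of total variance into $\ev{\var(\mn \mid X_{[1:n]})} \le \ev{\mn}$ and $\var(\ev{\mn \mid X_{[1:n]}})$, the latter being $O(n^3\rho_n^2)$ because distinct edge indicators' conditional means are correlated only through a shared vertex variable, and finally conclude a.s.\ convergence by a Chebyshev plus Borel--Cantelli argument along a sufficiently fast subsequence followed by a sandwiching step. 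But given the availability of Theorem~\ref{thm:ganguli-W-random-graphon-convergence}, the one-line reduction above is the natural proof.
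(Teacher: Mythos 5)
Your proof is correct and is essentially identical to the paper's own one-line argument: the paper likewise cites Theorem~\ref{thm:ganguli-W-random-graphon-convergence} to get $\rho(\Gn)/\rho_n \to 1$ a.s.\ and then translates via $\rho(\Gn) = 2\mn/n^2$ and $\barmn = \binom{n}{2}\rho_n$. The explicit algebraic identity you display (with the $n/(n-1)$ factor) is a helpful clarification that the paper leaves implicit; the self-contained variance-based alternative you sketch is not used in the paper.
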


\begin{proof}
  From Theorem~\ref{thm:ganguli-W-random-graphon-convergence}, we know that
  $\rho(\Gn) / \rho_n \rightarrow 1$ a.s., where $\rho(\Gn) = 2 \mn / n^2$.
  Comparing this with definition  $\mn = \binom{n}{2} \rho_n$, we realize that
  $\mn / \barmn \rightarrow 1$ a.s..
\end{proof}

\begin{lem}
  \label{lem:mn-concentrates-around-barmn}
We have 
  \begin{equation*}
    \lim_{n \rightarrow \infty} \frac{\mn - \barmn}{\barmn} \log \frac{1}{\rho_n} =  0 \qquad \text{a.s.}
  \end{equation*}
\end{lem}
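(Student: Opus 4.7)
The plan is to decompose
\[
\mn - \barmn = \underbrace{(\mn - \tilde m_n)}_{\text{conditional fluctuation}} + \underbrace{(\tilde m_n - \ev{\tilde m_n})}_{\text{$U$--statistic fluctuation}} + \underbrace{(\ev{\tilde m_n} - \barmn)}_{\text{deterministic bias}},
\]
where $\tilde m_n := \ev{\mn \mid X_{[1:n]}} = \sum_{i<j} (\rho_n W(X_i,X_j) \wedge 1)$, and show that each term, divided by $\barmn$ and multiplied by $\log (1/\rho_n)$, tends to zero almost surely. Throughout we will use $\rho_n \log(1/\rho_n) \to 0$ (since $\rho_n \to 0$) and the bound $\barmn \ge (n-1)(n\rho_n)/2$, together with $\log(1/\rho_n) \le \log n$ whenever $n\rho_n \ge 1$ (true eventually since $n\rho_n \to \infty$).

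For the deterministic bias, from~\eqref{eq:ev-mn-mbarn} we have $\ev{\tilde m_n} = \ev{\mn} = \barmn \ev{W \wedge 1/\rho_n}$. Using the pointwise bound $(W - 1/\rho_n)^+ \le \rho_n W^2 \mathbbm{1}[W > 1/\rho_n]$, one gets $|\ev{W \wedge 1/\rho_n} - 1| \le \rho_n \ev{W^2 \mathbbm{1}[W > 1/\rho_n]}$, and the $L^2$ hypothesis plus dominated convergence force $\ev{W^2 \mathbbm{1}[W > 1/\rho_n]} \to 0$. Combined with $\rho_n \log(1/\rho_n) \to 0$, this makes the deterministic contribution vanish.

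For the $U$--statistic fluctuation, a direct variance calculation using the decomposition into variance of individual terms plus covariance of pairs sharing one index yields
\[
\var(\tilde m_n) \le \binom{n}{2}\rho_n^2 \ev{W^2} + \binom{n}{2}(n-2)\rho_n^2 \ev{h(X_1)^2} = O\!\big(n^3 \rho_n^2 \|W\|_2^2\big),
\]
where $h(x) := \int W(x,y)\,d\pi(y)$ and $\|h\|_2 \le \|W\|_2$ by Jensen. Thus $\var(\tilde m_n)/\barmn^2 = O(1/n)$. Since Chebyshev alone gives a non-summable bound once we multiply by $\log^2(1/\rho_n)$, I will upgrade this to higher moments via a hypercontractivity/Rosenthal-type inequality for $U$--statistics (e.g.\ Arcones--Giné), obtaining $\ev{(\tilde m_n - \ev{\tilde m_n})^{2k}} \le C_k \var(\tilde m_n)^k \cdot \text{polylog}$, which produces a bound of order $(\log(1/\rho_n)/\sqrt{n})^{2k}/\epsilon^{2k}$ and is summable for $k$ large enough. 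Borel--Cantelli then gives the desired a.s.\ decay. This is the main technical obstacle of the proof: routine second-moment bounds are too weak under the minimal hypothesis $n\rho_n \to \infty$, and one needs either the higher-moment route above or a direct invocation of the quantitative concentration in the proof of Theorem~\ref{thm:ganguli-W-random-graphon-convergence}.

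For the conditional fluctuation, given $X_{[1:n]}$ the variable $\mn$ is a sum of independent Bernoullis with conditional mean $\tilde m_n$ and conditional variance at most $\tilde m_n$, so Bernstein's inequality yields
\[
\pr{|\mn - \tilde m_n| > t \,\big|\, X_{[1:n]}} \le 2 \exp\!\left(-\frac{t^2}{2\tilde m_n + 2t/3}\right).
\]
By the previous step, $\tilde m_n \le 2\barmn$ almost surely for all but finitely many $n$; on this event, setting $t_n = \epsilon \barmn/\log(1/\rho_n)$ makes $t_n \ll \barmn$ eventually, and the exponent becomes at most $-c\epsilon^2 \barmn/\log^2(1/\rho_n)$. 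Because $\barmn/\log^2(1/\rho_n) \ge (n-1)(n\rho_n)/(2\log^2 n)$ and $n\rho_n \to \infty$, this quantity diverges faster than $\log n$, so the resulting bound is summable and Borel--Cantelli closes the argument. Putting the three pieces together completes the proof.
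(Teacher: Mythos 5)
Your decomposition is sensible and two of its three pieces are handled correctly: the deterministic bias term matches the paper's computation based on~\eqref{eq:ev-mn-mbarn} and the $L^2$ hypothesis, and the conditional-fluctuation term via Bernstein conditionally on $X_{[1:n]}$ is essentially what the paper does (with Chernoff). The gap is in the middle term, the concentration of $\tilde m_n = \barMn := \ev{\mn\mid X_{[1:n]}}$ around its mean, which you yourself flag as ``the main technical obstacle'' without actually resolving it.

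The specific moment bound you propose, $\ev{(\tilde m_n - \ev{\tilde m_n})^{2k}} \le C_k\,\var(\tilde m_n)^k\cdot\text{polylog}$, is false in the regime of interest. The kernel $k_n(x,y) := (\rho_n W(x,y))\wedge 1$ is only bounded (by $1$) and $L^2$ (with $\snorm{k_n}_2=O(\rho_n)$), and its Hájek projection $h_n(x) = (n-1)\big(\ev{k_n(x,X)}-\ev{k_n}\big)$ has $\ev{h_n^2}=O(n^2\rho_n^2)$ but $|h_n|$ can be as large as $n-1$. Any Rosenthal-type inequality for $\sum_i h_n(X_i)$ has a term of the form $n\,\ev{|h_n|^{2k}}$ in addition to $\var^k$, and a short computation shows
\begin{equation*}
\frac{n\,\ev{|h_n|^{2k}}}{\var(\tilde m_n)^k} \;\gtrsim\; \frac{1}{(n\rho_n^2)^{\,k-1}},
\end{equation*}
which is \emph{polynomially} large in $n$ whenever $n\rho_n^2\to 0$ (e.g.\ $\rho_n = \log\log n/n$, which is allowed here since only $n\rho_n\to\infty$ is assumed). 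So $C_k\,\var^k\cdot\text{polylog}$ cannot hold, and the sub-Gaussian intuition behind ``hypercontractivity'' is not available: this $U$-statistic has markedly heavier tails than its variance would suggest. (A correct Rosenthal bound, keeping \emph{both} terms and using the pointwise bound $|h_n|\le n-1$ to control $\ev{|h_n|^{2k}}$, does eventually produce a summable tail after some calculation, but that is nontrivial work you have not done, and your statement of the inequality hides precisely the dominant term.) Your fallback, a ``direct invocation of the quantitative concentration in the proof of Theorem~\ref{thm:ganguli-W-random-graphon-convergence},'' also does not work off-the-shelf: that theorem as stated gives only $\mn/\barmn \to 1$ a.s., without the extra $\log(1/\rho_n)$ factor you need.

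The paper sidesteps all of this by splitting $\barMn$ at the threshold $\tau_n := \rho_n^{-1/8}$ into a truncated piece $Y_n$ (terms with $W \le \tau_n$) and a tail $Z_n$. The truncated piece has per-coordinate bounded differences of size $n\rho_n\tau_n = n\rho_n^{7/8}$, small enough that McDiarmid yields a summable tail $2\exp(-2n^{1/4})$; the tail $Z_n$ is bounded by $\rho_n\tau_n^{-1}\sum_{i<j}W^2(X_i,X_j)$ and controlled via the SLLN for $U$-statistics. That truncation is exactly the ingredient that converts the pointwise bound $k_n\le 1$ into something sharp enough under only $W\in L^2$; without it, bounded-difference, Bernstein, or naive moment arguments all lose a factor that is fatal once $n\rho_n^2\to 0$.
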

  
\begin{proof}[Proof of Lemma~\ref{lem:mn-concentrates-around-barmn}]
  We pick
  \black
  $(X_i)_{i=1}^\infty$ i.i.d.\ from $\Omega$ and generate $\Gn$ based on
  $X_{[1:n]}$.
  From Theorem~2.9 in~\cite{borgs2015consistent}, $W$ is equivalent to a
graphon over $[0,1]$ equipped with the uniform distribution. Therefore, without
loss of generality, we may assume that $W$ is a $L^2$ graphon over $[0,1]$, and $(X_i)_{i=1}^\infty$ is an i.i.d.\
sequence of random variables distributed uniformly over $[0,1]$.

  Define the random variable
  $\barMn$ as
  \begin{equation*}
    \barMn = \barMn (X_{[1:n]}) := \sum_{1 \leq i < j \leq n} (\rho_n W(X_i,X_j)) \wedge 1.
  \end{equation*}
  Note that we have
  \begin{equation*}
    \barMn = \ev{\mn|X_{[1:n]}}.
  \end{equation*}
  With this definition, we prove the lemma in two steps, namely
  \begin{equation}
    \label{eq:limsup-barMn-barmn-claim}
    \lim_{n \rightarrow \infty} \frac{\barMn - \barmn}{\barmn} \log \frac{1}{\rho_n} = 0 \qquad \text{a.s.},
  \end{equation}
  and
  \begin{equation}
    \label{eq:limsup-mn-barMn-claim}
    \lim_{n \rightarrow \infty} \frac{\mn - \barMn}{\barmn} \log \frac{1}{\rho_n} = 0 \qquad \text{a.s.},
  \end{equation}
  which together complete the proof.

  We first prove~\eqref{eq:limsup-barMn-barmn-claim}. With $\tau_n :=
  \rho_n^{-1/8}$, define
\begin{subequations}
  \begin{align}
    Y_n  = Y_n(X_{[1:n]}) &:= \sum_{1 \leq i < j \leq n} \one{W(X_i,X_j) \leq \tau_n} \rho_n W(X_i,X_j), \label{eq:Yn-def}\\
    Z_n = Z_n(X_{[1:n]})&:=  \sum_{1 \leq i < j \leq n} \one{W(X_i,X_j) > \tau_n} \rho_n W(X_i,X_j). \label{eq:Zn-def}
  \end{align}
  \end{subequations}
 Note that
  \begin{equation}
    \label{eq:barMn-Yn+Zn}
    \barMn \leq \sum_{1 \leq i < j \leq n} \rho_n W(X_i, X_j) = Y_n + Z_n.
  \end{equation}
On the other hand, 
if 
\black
for some $1 \leq i < j \leq n$, we have
$W(X_i,X_j) \leq \tau_n$, then $\rho_n W(X_i, X_j) \leq \rho_n \tau_n =
\rho_n^{7/8}$. But $\rho_n \rightarrow 0$ as $n \rightarrow \infty$. Hence, for
$n$ large enough, we have $\rho_n^{7/8} < 1$. This means that for $n$ large
enough we have
\begin{equation}
  \label{eq:n-large-Yn<Mbarn}
  Y_n = \sum_{1 \leq i < j \leq n} \one{W(X_i, X_j) \leq \tau_n} ((\rho_n W(X_i, X_j))\wedge 1) \leq \barMn.
\end{equation}
Putting this together with~\eqref{eq:barMn-Yn+Zn}, we realize that for $n$ large
enough we have
\begin{equation}
  \label{eq:barMn-Yn-Yn-Zn-sandwich}
  Y_n \leq \barMn \leq Y_n + Z_n.
\end{equation}
Now, we claim that
\begin{equation}
  \label{eq:lim-Yn-mn-log-rhon-0-as-claim}
  \lim_{n \rightarrow \infty} \frac{Y_n - \barmn}{\barmn} \log \frac{1}{\rho_n} = 0 \qquad \text{a.s.},
\end{equation}
and
\begin{equation}
  \label{eq:lim-Zn-log-rhon-0-as-claim}
  \lim_{n \rightarrow \infty} \frac{Z_n}{\barmn} \log \frac{1}{\rho_n} = 0 \qquad \text{a.s.}.
\end{equation}
Note that~\eqref{eq:limsup-barMn-barmn-claim} follows
from~\eqref{eq:barMn-Yn-Yn-Zn-sandwich},
\eqref{eq:lim-Yn-mn-log-rhon-0-as-claim}, and \eqref{eq:lim-Zn-log-rhon-0-as-claim}.

We start with showing~\eqref{eq:lim-Yn-mn-log-rhon-0-as-claim}.
Observe that for $1 \leq i \leq n$, $x_1, \dots, x_n \in [0,1]$, and $x'_i \in
[0,1]$, since $W$ is a symmetric function,  we have
\begin{align*}
  |Y_n(x_1, \dots, x_i, \dots, x_n) - Y_n(x_1, \dots, x'_i, \dots, x_n)| &= \left | \sum_{\stackrel{1 \leq j \leq n}{j \neq i}} \one{W(x_i, x_j) \leq \tau_n} \rho_n W(x_i, x_j) - \one{W(x'_i, x_j) \leq \tau_n} \rho_n W(x'_i, x_j) \right| \\
                                                                         &\leq \rho_n \sum_{\stackrel{1 \leq j \leq n}{j \neq i}} \Big| \one{W(x_i, x_j) \leq \tau_n} W(x_i, x_j) - \one{W(x'_i, x_j) \leq \tau_n} W(x'_i, x_j) \Big | \\
  &\leq n \rho_n \tau_n.
\end{align*}
Therefore, using the bounded difference inequality 
(see, for
  instance, \cite[Theorem 6.2]{boucheron2013concentration})
we have
\begin{align*}
  \pr{|Y_n - \ev{Y_n}| > n^{13/8} \rho_n^{7/8}} &\leq 2\exp\left( - 2 \frac{n^{13/4} \rho_n^{7/4}}{n(n\rho_n \tau_n)^2} \right) \\
                                              &= 2\exp(-2n^{1/4}).
\end{align*}
Since 
$\sum_{n \ge 1} \exp(-2n^{1/4}) < \infty$, the Borel-Cantelli lemma implies that 
\black
with
probability one, for $n$ large enough (where the threshold of $n$ can be
random), we have
\begin{equation}
  \label{eq:Yn-EYn-bound}
  |Y_n -\ev{Y_n}| \leq  n^{13/8} \rho_n^{7/8}.
\end{equation}
This means that with probability one we have
\begin{align*}
  \limsup_{n \rightarrow \infty} \frac{|Y_n - \ev{Y_n}|}{\barmn} \log \frac{1}{\rho_n} &\leq \lim_{n \rightarrow \infty} \frac{n^{13/8} \rho_n^{7/8}}{\binom{n}{2} \rho_n} \log \frac{1}{\rho_n} \\
&= \lim_{n \rightarrow \infty}\frac{2n}{(n-1)} n^{-3/8} \rho_n^{-1/8} \log \frac{1}{\rho_n} \\
&= \lim_{n \rightarrow \infty} \frac{2n}{n-1} (n\rho_n)^{-3/8} \rho_n^{1/4} \log \frac{1}{\rho_n} \\
&= 0,
\end{align*}
where the last equality follows from the facts that as $n \rightarrow \infty$, we
have  $\rho_n \rightarrow 0$ and $n\rho_n \rightarrow \infty$. Consequently, we have 
\begin{equation}
  \label{eq:lim-Yn-EYn-log-rhon-zero}
  \lim_{n \rightarrow \infty} \frac{Y_n - \ev{Y_n}}{\barmn} \log \frac{1}{\rho_n} = 0 \qquad \text{a.s.}.
\end{equation}

We turn to studying $\ev{Y_n}$.
\black
Recalling the definition of $Y_n$ from~\eqref{eq:Yn-def}, 
and writing 
\black
$W$ for $W(X, \pp{X})$ with $X$
and $\pp{X}$ being i.i.d.\ on $\Omega$ with distribution $\pi$, we may write
\begin{align*}
  \ev{Y_n} &= \binom{n}{2} \ev{W \one{W \leq \rho_n^{-1/8}}} \rho_n \\
  &= \barmn \ev{W \one{W \leq \rho_n^{-1/8}}}.
\end{align*}
Therefore,
\begin{equation}
  \label{eq:EYn-mbarn-log-rhon-simplifty}
\begin{aligned}
  \frac{\ev{Y_n} - \barmn}{\barmn} \log \frac{1}{\rho_n} &= \left( \ev{W \one{W \leq \rho_n^{-1/8}}} - 1 \right) \log \frac{1}{\rho_n} \\
  &= - \ev{W \one{W > \rho_n^{-1/8}}} \log \frac{1}{\rho_n},
\end{aligned}
\end{equation}
where the second line employs the fact that $W$ is a normalized graphon and
hence $\ev{W} = 1$. 
Using the Cauchy-Schwartz inequality, we may write
\begin{align*}
  \ev{W \one{W > \rho_n^{-1/8}}} &\leq \sqrt{\ev{W^2}} \sqrt{\ev{\one{W > \rho_n^{-1/8}}^2}} \\
                                 &= \sqrt{\ev{W^2}} \sqrt{\ev{\one{W > \rho_n^{-1/8}}}} \\
                                 &= \snorm{W}_2 \sqrt{\pr{W > \rho_n^{-1/8}}} \\
                                 &\leq \snorm{W}_2 \sqrt{\ev{W} / \rho_n^{-1/8}} \\
                                 &= \snorm{W}_2 \rho_n^{1/16},
\end{align*}
where the last step uses the fact that $\ev{W} = 1$. Note that by assumption $W$
is an $L^2$ graphon and 
so
\black
$\snorm{W}_2 < \infty$. Also, by assumption, $\rho_n
\rightarrow 0$ as $n \rightarrow \infty$. This together
with~\eqref{eq:EYn-mbarn-log-rhon-simplifty} implies that
\begin{equation}
  \label{eq:EYn-mbarn-log-rhon-0}
  \lim_{n \rightarrow \infty} \frac{\ev{Y_n} - \barmn}{\barmn} \log \frac{1}{\rho_n} = 0.
\end{equation}
Putting~\eqref{eq:lim-Yn-EYn-log-rhon-zero} and~\eqref{eq:EYn-mbarn-log-rhon-0}
together, we arrive at~\eqref{eq:lim-Yn-mn-log-rhon-0-as-claim}.


We next
\black
focus on showing~\eqref{eq:lim-Zn-log-rhon-0-as-claim}. Note that we
have 
\begin{align*}
  Z_n &= \sum_{1 \leq i < j \leq n} \one{\tau_n < W(X_i, X_j)} \rho_n W(X_i, X_j) \\
      &= \sum_{1 \leq i < j \leq n} \one{1 < \tau_n^{-1} W(X_i, X_j)} \rho_n W(X_i, X_j) \\
      &\leq \sum_{1 \leq i < j \leq n} \tau_n^{-1} W(X_i, X_j) \rho_n W(X_i,X_j) \\
      &= \rho_n \tau_n^{-1} \sum_{1 \leq i < j \leq n} W^2(X_i, X_j).
\end{align*}
Since $\rho_n \rightarrow 0$ as $n \rightarrow \infty$, we have $\rho_n < 1$ when $n$ is large enough.
Consequently, for $n$ large enough, we have
\black
\begin{equation}
  \label{eq:Zn-bound-1}
  \begin{aligned}
    \frac{1}{\barmn} Z_n \log \frac{1}{\rho_n} &\leq \frac{\rho_n \tau_n^{-1}}{\binom{n}{2} \rho_n} \left( \sum_{1 \leq i < j \leq n} W^2(X_i, X_j) \right) \log \frac{1}{\rho_n} \\
    &= \left[ \frac{1}{\binom{n}{2}} \sum_{1 \leq i < j \leq n} W^2(X_i, X_j) \right] \rho_n^{1/8} \log \frac{1}{\rho_n}.
  \end{aligned}
\end{equation}
Note that $\rho_n^{1/8} \log 1 / \rho_n \rightarrow 0$ as $\rho_n \rightarrow
0$. Additionally, since $W$ is a $L^2$ graphon and $X_i$ are i.i.d.\ and
uniformly distributed over $[0,1]$, the strong law of large numbers for U--statistics
(see \cite{hoeffding1961strong}) implies that with probability one,
\begin{equation*}
  \lim_{n \rightarrow \infty} \frac{1}{\binom{n}{2}} \sum_{1 \leq i < j \leq n} W^2(X_i, X_j) = \int_0^1 \int_0^1 W^2(x,y) dx dy = \snorm{W}_2^2 < \infty.
\end{equation*}
Substituting into~\eqref{eq:Zn-bound-1}, we arrive
at~\eqref{eq:lim-Zn-log-rhon-0-as-claim}. As we discussed earlier,
\eqref{eq:barMn-Yn-Yn-Zn-sandwich} together
with~\eqref{eq:lim-Yn-mn-log-rhon-0-as-claim}
and~\eqref{eq:lim-Zn-log-rhon-0-as-claim} imply \eqref{eq:limsup-barMn-barmn-claim}.

  Next, we focus on showing~\eqref{eq:limsup-mn-barMn-claim}. 
Recall that 
  $\ev{\mn|X_{[1:n]}} = \barMn$. Also,  conditioned on $X_{[1:n]}$, the edges in $\Gn$
  are placed independently from each other.
  Define
  \begin{equation*}
    \delta_n :=
    \begin{cases}
      n^{-1/8} \rho_n^{1/8} \sqrt{\frac{\barmn}{\barMn}} & \text{ if } \frac{\barMn}{\barmn} > n^{-1/4} \rho_n^{1/4}, \\
      n^{-1/4} \rho_n^{1/4} \frac{\barmn}{\barMn} & \text{ if } \frac{\barMn}{\barmn} \leq n^{-1/4} \rho_n^{1/4} \text{ and } \barMn > 0, \\
      0 &  \text{ if } \barMn = 0.
    \end{cases}
  \end{equation*}
  With this, using the Chernoff bound, we have
  \begin{equation*}
    \pr{\mn > (1+\delta_n) \barMn | X_{[1:n]}} \leq \exp \left( - \frac{\delta_n^2 \barMn}{2 + \delta_n} \right).
  \end{equation*}
  Now, we 
  consider the three cases in the definition of $\delta_n$ in turn:
  \black

  \underline{Case 1:} If $\barMn / \barmn > n^{-1/4} \rho_n^{1/4}$, we have
  \begin{equation*}
    \delta_n = \frac{n^{-1/8} \rho_n^{1/8}}{\sqrt{\barMn / \barmn}} \leq 1.
  \end{equation*}
  Therefore, $2 + \delta_n \leq 3$ and
  \begin{equation*}
    \frac{\delta_n^2 \barMn}{2 + \delta_n} \geq \frac{1}{3} \delta_n^2 \barMn = \frac{1}{3} n^{-1/4} \rho_n^{1/4} \barmn.
  \end{equation*}

  \underline{Case 2:} If $\barMn / \barmn \leq n^{-1/4} \rho_n^{1/4}$ and
  $\barMn > 0$, we have
  \begin{equation*}
    \delta_n = \frac{n^{-1/4} \rho_n^{1/4}}{\barMn / \barmn} \geq 1.
  \end{equation*}
Thereby, using the inequality $x^2 / (2+x) \geq x/3$ which holds for $x \geq 1$,
we have
\begin{equation*}
  \frac{\delta_n^2 \barMn}{2 + \delta_n} \geq \frac{1}{3} \delta_n \barMn = \frac{1}{3} n^{-1/4} \rho_n^{1/4} \barmn.
\end{equation*}

\underline{Case 3:} If $\barMn = 0$, recalling the definition of $\barMn$, we
realize that $W(X_i, X_j) = 0$ for all $1 \leq i, j \leq n$. Thereby, there is
no edge in $\Gn$ and $\mn = 0$. In this case, automatically we have $\pr{\mn
  > (1+\delta_n) \barMn | X_{[1:n]}} = 0$.

Combining the above three cases, we realize that in general we have
\begin{equation*}
  \pr{\mn > (1+\delta_n) \barMn | X_{[1:n]}} \leq \exp \left( - \frac{1}{3} n^{-1/4} \rho_n^{1/4} \barmn \right).
\end{equation*}
Since the bound does not depend on $X_{[1:n]}$, we have
\begin{equation*}
  \pr{\mn > (1+\delta_n) \barMn} \leq \exp\left( - \frac{1}{3} n^{-1/4} \rho_n^{1/4} \barmn \right).
\end{equation*}
Recalling the definition of $\barmn$, we have
\begin{equation*}
  n^{-1/4} \rho_n^{1/4} \barmn = \frac{1}{2} \frac{n-1}{n} n^{-1/4} \rho_n^{1/4} n^2 \rho_n = \frac{1}{2} \frac{n-1}{n} n^{1/2} (n\rho_n)^{5/4}.
\end{equation*}
Since $n \rho_n \rightarrow \infty$ as $n \rightarrow \infty$, we have
\begin{equation*}
  \sum_{n} \exp\left( - \frac{1}{3} n^{-1/4} \rho_n^{1/4} \barmn \right) < \infty.
\end{equation*}
Hence, 
the
\black
Borel-Cantelli lemma implies that with probability one, for $n$ large
enough (where the threshold itself can be random), we have $\mn \leq
(1+\delta_n) \barMn$. Consequently, 
since $\rho_n < 1$ when $n$ is large enough,
we have with probability one that
\black
\begin{equation}
  \label{eq:mn-barMn-delta-n-limsup-bound}
  \limsup \frac{\mn - \barMn}{\barmn} \log \frac{1}{\rho_n} \leq \limsup \frac{\barMn}{\barmn} \delta_n \log \frac{1}{\rho_n}.
\end{equation}
If $\barMn / \barmn > n^{-1/4} \rho_n^{1/4}$, we have
\begin{equation*}
  \frac{\barMn}{\barmn} \delta_n \log \frac{1}{\rho_n} = \sqrt{\frac{\barMn}{\barmn}} n^{-1/8} \rho_n^{1/8} \log \frac{1}{\rho_n}.
\end{equation*}
On the other hand, if $\barMn / \barmn \leq n^{-1/4} \rho_n^{1/4}$ and $\barMn
> 0$, then
\begin{equation*}
  \frac{\barMn}{\barmn} \delta_n \log \frac{1}{\rho_n}  = n^{-1/4} \rho_n^{1/4} \log \frac{1}{\rho_n}.
\end{equation*}
Furthermore, if $\barMn = 0$, then $  \frac{\barMn}{\barmn} \delta_n \log
\frac{1}{\rho_n}  = 0$. Combining these cases and comparing
with~\eqref{eq:mn-barMn-delta-n-limsup-bound}, we realize that with probability one,
\begin{equation}
  \label{eq:mn-barMn-limsup-bound-2}
  \limsup \frac{\mn - \barMn}{\barmn} \log \frac{1}{\rho_n} \leq \limsup \left( 1 + \sqrt{\frac{\barMn}{\barmn}} \right) n^{-1/8} \rho_n^{1/8} \log \frac{1}{\rho_n},
\end{equation}
because 
$n^{-1/4} \rho_n^{1/4} \le n^{-1/8} \rho_n^{1/8}$
when $n$ is large enough.
\black
Observe that we have
\begin{equation}
  \label{eq:barMn-wedge-drop-bound}
  \barMn \leq \sum_{1 \leq i < j \leq n} \rho_n W(X_i,X_j).
\end{equation}
Recall that $(X_i)_{i=1}^\infty$ is sequence of i.i.d.\ random variables
uniformly distributed over $[0,1]$.
Hence, using the strong law of large numbers for U-statistics
\cite{hoeffding1961strong}, we have
\begin{equation*}
  \lim_{n \rightarrow \infty} \frac{1}{\binom{n}{2}} \sum_{1 \leq i < j \leq n} W(X_i,X_j) = \int_0^1 \int_0^1 W(x,y) dx dy = 1.
\end{equation*}
This together with~\eqref{eq:barMn-wedge-drop-bound} and $\barmn = n(n-1) \rho_n /2$
leads to
\black
\begin{equation}
  \label{eq:limsup-Mn-mn-1-as}
  \limsup \frac{\barMn}{\barmn} \leq 1 \qquad \text{a.s.}
\end{equation}
Substituting this into~\eqref{eq:mn-barMn-limsup-bound-2} and noting that
$\rho_n^{1/8} \log 1/ \rho_n \rightarrow 0$
as $n \rightarrow \infty$,
\black
we realize that
\begin{equation}
  \label{eq:limsup-mn-barMn-log-rhon}
  \limsup_{n \rightarrow \infty} \frac{\mn - \barMn}{\barmn} \log \frac{1}{\rho_n} \leq 0 \qquad \text{a.s.}.
\end{equation}

In order to obtain a matching lower bound, let $\tilde{\delta}_n := n^{-1/8}
\rho_n^{1/8} \sqrt{\barmn / \barMn}$ and note that another usage of the
Chernoff bound implies that conditioned on $X_{[1:n]}$, if $\tilde{\delta}_n
< 1$ and $\barMn > 0$, we have
\begin{align*}
  \pr{\mn < (1 - \tilde{\delta}_n) \barMn | X_{[1:n]}} &\leq \exp \left(  - \frac{\tilde{\delta}_n^2 \barMn}{2} \right) \\
                                                       &= \exp\left( -\frac{1}{2} n^{-1/4} \rho_n^{1/4} \frac{\barmn}{\barMn} \barMn \right)\\
                                                       &= \exp\left( -\frac{1}{2} n^{-1/4}\rho_n^{1/4} \barmn \right).
\end{align*}
Note that if either $\tilde{\delta}_n \geq 1$ or $\barMn = 0$, the left
hand side becomes  zero and this bound automatically holds. Furthermore, since 
this
\black
upper bound does not depend on $X_{[1:n]}$, we conclude that
\begin{equation*}
  \pr{\mn < (1-\tilde{\delta}_n) \barMn} \leq \exp \left( -\frac{1}{2} n^{-1/4}\rho_n^{1/4} \barmn \right).
\end{equation*}
Thereby, another usage of the Borel-Cantelli lemma implies that, with probability
one, we have $\mn \geq (1-\tilde{\delta}_n) \barMn$ for $n$ large enough.
Therefore, 
since $\rho_n < 1$ when $n$ is large enough,
\black
we realize that with probability one we have
\begin{align*}
  \liminf_{n \rightarrow \infty} \frac{\mn - \barMn}{\barmn} \log \frac{1}{\rho_n} &\geq \liminf_{n \rightarrow \infty} - \frac{\tilde{\delta}_n \barMn}{\barmn} \log \frac{1}{\rho_n} \\
                                                                                   &= \liminf_{n \rightarrow \infty} - n^{-1/8} \rho_n^{1/8} \sqrt{\frac{\barMn}{\barmn}} \log \frac{1}{\rho_n} \\
                                                                                   &= - \limsup_{n \rightarrow \infty} n^{-1/8} \rho_n^{1/8} \sqrt{\frac{\barMn}{\barmn}} \log \frac{1}{\rho_n}.
\end{align*}
Recall from~\eqref{eq:limsup-Mn-mn-1-as} that $\limsup \barMn / \barmn \leq 1$
a.s.. On the other hand, $\rho_n \rightarrow 0$ and hence $\rho_n^{1/8} \log
\frac{1}{\rho_n} \rightarrow 0$. Consequently, we have
\begin{equation}
  \label{eq:liminf-mn-barMn-log-rho-n-zero-as}
  \liminf_{n \rightarrow \infty} \frac{\mn - \barMn}{\barmn} \log \frac{1}{\rho_n} \geq 0 \qquad \text{a.s.}.
\end{equation}
This together with~\eqref{eq:limsup-mn-barMn-log-rhon}
implies~\eqref{eq:limsup-mn-barMn-claim}, which together
with~\eqref{eq:limsup-barMn-barmn-claim} completes the proof.
\end{proof}

\begin{lem}
  \label{lem:graphon-log-n2-mn-mbn-1}
  We have
  \begin{equation*}
    \lim_{n \rightarrow \infty} \frac{\log \binom{\binom{n}{2}}{\mn} - \barmn \log \frac{1}{\rho_n}}{\barmn} = 1 \qquad \text{a.s.}.
  \end{equation*}
\end{lem}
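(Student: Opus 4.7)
The plan is to obtain the result by combining a Stirling-type expansion of $\log \binom{\binom{n}{2}}{\mn}$ with the two preceding lemmas, namely Lemma~\ref{lem:graphon-mn-asymp-mn-mbarn-1} ($\mn/\barmn \to 1$ a.s.) and Lemma~\ref{lem:mn-concentrates-around-barmn} (which bounds $(\mn - \barmn)/\barmn$ on the scale $1/\log(1/\rho_n)$). Setting $N_n := \binom{n}{2}$, I would apply Stirling's formula to $\log \binom{N_n}{\mn}$. Since $\mn/\barmn \to 1$ a.s.\ and $\barmn \to \infty$ (because $\barmn = \binom{n}{2}\rho_n$ and $n\rho_n \to \infty$), we have $\mn \to \infty$ a.s.\ and also $\mn/N_n = (\mn/\barmn)\rho_n \to 0$ a.s., so both $\mn$ and $N_n - \mn$ are eventually large, making Stirling applicable.

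The clean form I would use is
\begin{equation*}
\log \binom{N_n}{\mn} = \mn \log \frac{N_n}{\mn} + (N_n - \mn)\log \frac{N_n}{N_n - \mn} + O(\log n),
\end{equation*}
valid on the event (which holds eventually a.s.) that $1 \le \mn \le N_n/2$. For the second term I would use the expansion $(N_n - \mn)\log(1 + \mn/(N_n - \mn)) = \mn + O(\mn^2/N_n)$, and observe that $\mn^2/(N_n \barmn) \sim \barmn/N_n = \rho_n \to 0$ a.s., so the correction is $o(\barmn)$. The $O(\log n)$ term is $o(\barmn)$ since $\barmn \to \infty$ at least linearly in $n\rho_n$.

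After these reductions, writing $N_n/\mn = (\barmn/\mn)\cdot(1/\rho_n)$ yields
\begin{equation*}
\frac{\log \binom{N_n}{\mn} - \barmn \log \tfrac{1}{\rho_n}}{\barmn} = \frac{\mn - \barmn}{\barmn}\log \frac{1}{\rho_n} + \frac{\mn}{\barmn}\log \frac{\barmn}{\mn} + \frac{\mn}{\barmn} + o(1) \qquad \text{a.s.}
\end{equation*}
The first summand tends to $0$ a.s.\ by Lemma~\ref{lem:mn-concentrates-around-barmn}; the second summand tends to $0$ a.s.\ by Lemma~\ref{lem:graphon-mn-asymp-mn-mbarn-1} together with continuity of $x \mapsto x \log(1/x)$ at $x=1$; and the third summand tends to $1$ a.s.\ again by Lemma~\ref{lem:graphon-mn-asymp-mn-mbarn-1}. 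Summing gives the claimed limit of $1$.

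The main obstacle, such as it is, is verifying that the Stirling error is uniformly $o(\barmn)$ almost surely. This is not a genuine difficulty: once one notes that $\mn \to \infty$ and $N_n - \mn \to \infty$ a.s., Stirling's formula delivers the bound $\tfrac{1}{2}\log\bigl(N_n/(2\pi \mn (N_n-\mn))\bigr) + O(1/\mn)$ for the residual, which is $O(\log n) = o(\barmn)$. The only other small care point is the quadratic correction $O(\mn^2/N_n)$ coming from expanding $(N_n-\mn)\log(N_n/(N_n-\mn))$; crucially, $\mn/\barmn \to 1$ a.s.\ lets us replace $\mn$ by $\barmn$ up to a multiplicative factor tending to $1$, and then $\barmn^2/(N_n \barmn) = \rho_n \to 0$ finishes the estimate. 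No new ingredients beyond the two preceding lemmas and elementary Stirling bounds are needed.
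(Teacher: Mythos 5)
Your proof is correct and follows essentially the same route as the paper's: a Stirling expansion of $\log\binom{\binom{n}{2}}{\mn}$ reduced to the decomposition $\frac{\mn-\barmn}{\barmn}\log\frac{1}{\rho_n} + \frac{\mn}{\barmn}\log\frac{\barmn}{\mn} + \frac{\mn}{\barmn} + o(1)$, with the three terms handled by Lemmas~\ref{lem:mn-concentrates-around-barmn} and \ref{lem:graphon-mn-asymp-mn-mbarn-1}. The only cosmetic difference is bookkeeping of the residual: you apply Stirling to all three factorials and Taylor-expand $(N_n-\mn)\log\frac{N_n}{N_n-\mn}$, whereas the paper replaces the falling product by $\mn\log(\binom{n}{2}-c_n)$ with $0\le c_n\le\mn$ and shows $c_n/\binom{n}{2}\to 0$; both yield the same $o(\barmn)$ control.
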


\begin{proof}
  Note that
  \begin{equation}
    \label{eq:log-n2-mn-cn}
    \begin{aligned}
      \log \binom{\binom{n}{2}}{\mn} &= \log \frac{\binom{n}{2} \left( \binom{n}{2} - 1 \right)\dots \left( \binom{n}{2} - \mn + 1 \right)}{\mn!} \\
      &= \mn \log \left( \binom{n}{2} - c_n \right) - \log \mn!,
    \end{aligned}
  \end{equation}
  where $c_n \in \reals$ and $0 \leq c_n \leq \mn$.
  Using Stirling's approximation, we have $\log \mn! = \mn \log \mn - \mn +
  O(\log \mn)$. Observe that since $\mn \leq \binom{n}{2}$, we have $\log
  \mn = O(\log n)$. But $\barmn= \binom{n}{2} \rho_n = \frac{n-1}{2} n \rho_n =
  \omega(n)$ since $n \rho_n \rightarrow \infty$. Thereby $\log \mn =
  o(\barmn)$ and $\log \mn! = \mn \log \mn - \mn + o(\barmn)$. Using these in~\eqref{eq:log-n2-mn-cn} above, we get
  \begin{align*}
    \log \binom{\binom{n}{2}}{\mn} &= \mn \log \left( \binom{n}{2} - c_n \right) - \mn \log \mn + \mn + o(\barmn) \\
                                   &= \mn \log \binom{n}{2} + \mn \log \left( 1 - \frac{c_n}{\binom{n}{2}} \right) - \mn \log \mn + \mn + o(\barmn) \\
                                   &= \mn \log \frac{\binom{n}{2} \rho_n}{\mn \rho_n} + \mn + \mn \log \left( 1 - \frac{c_n}{\binom{n}{2}} \right) + o(\barmn) \\
                                   &= \mn \log \frac{\barmn}{\mn} + \mn \log \frac{1}{\rho_n} + \mn + \mn \log \left( 1 - \frac{c_n}{\binom{n}{2}} \right) + o(\barmn).
  \end{align*}
Consequently,
\begin{equation}
  \label{eq:n2-mn-mbn-cn-o1}
  \begin{aligned}
    \frac{\log \binom{\binom{n}{2}}{\mn} - \barmn \log \frac{1}{\rho_n}}{\barmn} &= \frac{\mn - \barmn}{\barmn} \log \frac{1}{\rho_n} + \frac{\mn}{\barmn} \log \frac{\barmn}{\mn} + \frac{\mn}{\barmn} \\
    &\qquad + \frac{\mn}{\barmn} \log \left( 1 - \frac{c_n}{\binom{n}{2}} \right) + o(1).
  \end{aligned}
\end{equation}
From Lemma~\ref{lem:mn-concentrates-around-barmn}, we know that
\begin{equation}
  \label{eq:mn-nmb-logrhon-0-restate-converse}
  \lim_{n \rightarrow \infty} \frac{\mn - \barmn}{\barmn} \log \frac{1}{\rho_n} = 0 \qquad \text{a.s.}.
\end{equation}
Also, from Lemma~\ref{lem:graphon-mn-asymp-mn-mbarn-1}, we have
\begin{equation}
  \label{eq:conv-mn-mbn-1}
  \lim_{n \rightarrow \infty} \frac{\mn}{\barmn} = 1 \qquad \text{a.s.},
\end{equation}
and
so
\black
\begin{equation}
  \label{eq:conv-mn-mbn-log-1}
  \lim_{n \rightarrow \infty} \frac{\mn}{\barmn} \log \frac{\barmn}{\mn} + \frac{\mn}{\barmn} = 1 \qquad \text{a.s.}.
\end{equation}
Moreover, recall that $0 \leq c_n \leq \mn$ and $\mn / \barmn \rightarrow 1$
a.s.. But $\barmn / \binom{n}{2} = \rho_n \rightarrow 0$. Hence, $c_n /
\binom{n}{2} \rightarrow 0$ a.s.. Combining this with~\eqref{eq:conv-mn-mbn-1},
we get
\begin{equation}
  \label{eq:conv-lim-mn-mbn-cn-0}
  \lim_{n \rightarrow \infty} \frac{\mn}{\barmn} \log \left( 1 - \frac{c_n}{\binom{n}{2}} \right) = 0 \qquad \text{a.s.}.
\end{equation}
Substituting \eqref{eq:mn-nmb-logrhon-0-restate-converse},
\eqref{eq:conv-mn-mbn-log-1}, and \eqref{eq:conv-lim-mn-mbn-cn-0} back
into~\eqref{eq:n2-mn-mbn-cn-o1} completes the proof.
\end{proof}

\begin{lem}
  \label{lem:limsup-ev-n2-mn-mbn-rhon-1}
  We have
    \begin{equation*}
    \limsup_{n \rightarrow \infty} \ev{\frac{\log \binom{\binom{n}{2}}{\mn} - \barmn \log \frac{1}{\rho_n}}{\barmn}} \leq 1.
  \end{equation*}
\end{lem}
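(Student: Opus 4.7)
The plan is to derive a pointwise upper bound on the normalized log-binomial, integrate it, and then analyze each resulting term separately. Using the elementary inequality $\binom{N}{m} \leq (Ne/m)^m$ with $N = \binom{n}{2}$ and $m = \mn$ (and the convention $0 \log 0 = 0$ to cover the case $\mn = 0$), I would obtain
\begin{equation*}
\log \binom{\binom{n}{2}}{\mn} \;\le\; \mn \;+\; \mn \log \frac{\binom{n}{2}}{\mn} \;=\; \mn + \mn \log \frac{\barmn}{\mn} + \mn \log \frac{1}{\rho_n},
\end{equation*}
where the last equality uses $\binom{n}{2} = \barmn / \rho_n$. Dividing by $\barmn$, subtracting $\log(1/\rho_n)$, and writing $Y_n := \mn/\barmn$, I would arrive at the key pointwise bound
\begin{equation*}
\frac{\log \binom{\binom{n}{2}}{\mn} - \barmn \log \frac{1}{\rho_n}}{\barmn} \;\le\; Y_n \,-\, Y_n \log Y_n \,+\, (Y_n - 1)\log \frac{1}{\rho_n}.
\end{equation*}

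Next I would take expectations and handle the three terms separately. For the first term, recall from~\eqref{eq:ev-mn-mbarn} that $\ev{Y_n} = \ev{W \wedge 1/\rho_n}$; monotone convergence (applied as $1/\rho_n \uparrow \infty$) combined with $\ev{W} = 1$ gives $\ev{Y_n} \to 1$. For the third term, note that $\ev{(Y_n - 1)\log(1/\rho_n)} = (\ev{W \wedge 1/\rho_n} - 1)\log(1/\rho_n) \leq 0$ once $\rho_n < 1$, since both factors have opposite signs; so its $\limsup$ is at most $0$.

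The main work is the middle term $\ev{-Y_n \log Y_n}$, which is the main obstacle since $-x\log x$ is unbounded below for large $x$ and $Y_n$ has range $[0, 1/\rho_n]$, so there is no uniform dominating function. I would handle this by splitting according to whether $Y_n \le 1$ or $Y_n > 1$: on $\{Y_n \le 1\}$ the function $-Y_n \log Y_n$ is bounded by the universal constant $1/e$, and by Lemma~\ref{lem:graphon-mn-asymp-mn-mbarn-1} we have $Y_n \to 1$ a.s., hence $-Y_n \log Y_n \cdot \mathbbm{1}\{Y_n \le 1\} \to 0$ a.s. and the dominated convergence theorem forces the expectation of this piece to $0$; on $\{Y_n > 1\}$ we have $-Y_n \log Y_n \leq 0$, so this piece contributes non-positively to the expectation. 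Combining, $\limsup \ev{-Y_n \log Y_n} \le 0$.

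Finally, using the subadditivity $\limsup(A_n + B_n + C_n) \le \limsup A_n + \limsup B_n + \limsup C_n$ (valid since each summand has finite $\limsup$), I would conclude
\begin{equation*}
\limsup_{n \to \infty} \ev{\frac{\log\binom{\binom{n}{2}}{\mn} - \barmn \log\tfrac{1}{\rho_n}}{\barmn}} \;\le\; 1 + 0 + 0 \;=\; 1,
\end{equation*}
completing the proof. The only genuinely delicate point is the splitting argument for $-Y_n\log Y_n$; everything else is essentially bookkeeping based on $\ev{W} = 1$ and the a.s.\ convergence already established in Lemma~\ref{lem:graphon-mn-asymp-mn-mbarn-1}.
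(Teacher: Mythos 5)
Your proof is correct, and it shares the same opening skeleton as the paper's (the elementary bound $\binom{N}{m} \le (Ne/m)^m$, the rewrite in terms of $\barmn$ and $\rho_n$, and the observation that $\ev{W \wedge 1/\rho_n} \le \ev{W} = 1$), but it departs from the paper's route in handling the term $-\ev{Y_n\log Y_n}$. The paper dispatches that term in one line by Jensen's inequality applied to the concave map $x \mapsto x\log(1/x)$: it bounds $\ev{Y_n\log(1/Y_n)} \le \ev{Y_n}\log(1/\ev{Y_n})$, which tends to $0$ because $\ev{Y_n}\to 1$; no almost-sure convergence is invoked anywhere. Your argument instead splits on $\{Y_n \le 1\}$ versus $\{Y_n>1\}$, uses the uniform bound $|y\log y| \le 1/e$ on $[0,1]$ together with the a.s.\ convergence $Y_n \to 1$ from Lemma~\ref{lem:graphon-mn-asymp-mn-mbarn-1} to push the first piece to zero by dominated convergence, and discards the second piece by its sign. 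This is valid but pulls in a heavier hypothesis (the a.s.\ result) and more machinery than is needed. The Jensen route is shorter, stays entirely at the level of $\ev{Y_n}$, and is arguably the ``intended'' simplification; your route is more hands-on and correctly identifies the genuine obstacle (unboundedness of $-y\log y$ for large $y$) and how the sign structure rescues it. Both are rigorous.
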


\begin{proof}
  Using the inequality $\log \binom{r}{s} \leq s \log \frac{re}{s}$, we have
  \begin{equation}
    \label{eq:ev-n2-mn-upper-bound}
    \begin{aligned}
      \ev{\log \binom{\binom{n}{2}}{\mn}} &\leq \ev{\mn + \mn \log \frac{\binom{n}{2}}{\mn}} \\
      &= \ev{\mn} + \ev{\mn \log \frac{\binom{n}{2} \rho_n}{\mn \rho_n}} \\
      &= \ev{\mn} + \ev{\mn \log \frac{\barmn}{\mn}}  + \ev{\mn \log \frac{1}{\rho_n}} \\
      &= \ev{\mn} + \barmn \ev{\frac{\mn}{\barmn} \log \frac{\barmn}{\mn}}  + \ev{\mn \log \frac{1}{\rho_n}} \\
      &\leq \ev{\mn} + \barmn \ev{\frac{\mn}{\barmn}} \log \frac{1}{\ev{\frac{\mn}{\barmn}}}  + \ev{\mn} \log \frac{1}{\rho_n},
    \end{aligned}
  \end{equation}
  where the last inequality employs the concavity of the map $x \mapsto x \log 1
  / x$. Now, recalling the construction procedure of the $W$-random graph $\Gn$
  with target density $\rho_n$ from Section~\ref{sec:prelim-graphon}, we have
  \begin{equation*}
    \ev{\mn|X_{[1:n]}} = \sum_{1 \leq i < j \leq n} 1 \wedge \rho_n W(X_i, X_j).
  \end{equation*}
  \black
  Thereby,
  \begin{equation}
    \label{eq:ev-mn-mbarn-W-1-rhon}
    \begin{aligned}
      \ev{\mn} &= \binom{n}{2} \ev{1 \wedge \rho_n W(X,X')}\\
      &= \barmn \ev{W(X,X') \wedge \frac{1}{\rho_n}},
    \end{aligned}
  \end{equation}
  where $X$ and $X'$ are independent and have the same distribution as $X_i, 1
  \leq i \leq n$. This means that
  \begin{equation}
    \label{eq:ev-mn-less-mbarn}
    \ev{\mn} \leq \barmn \ev{W(X,\pp{X})} = \barmn,
  \end{equation}
  where the last equality uses the fact that $W$ is a normalized graphon. On the
  other hand, sending $n$ to infinity in~\eqref{eq:ev-mn-mbarn-W-1-rhon} and
  using the fact that $\rho_n \rightarrow 0$ as $n \rightarrow \infty$, we get
  \begin{equation}
    \label{eq:lim-ev-mn-mbarn-1}
    \lim_{n \rightarrow \infty} \ev{\frac{\mn}{\barmn}} = 1.
  \end{equation}
  Now, using~\eqref{eq:ev-mn-less-mbarn} together with~\eqref{eq:ev-n2-mn-upper-bound}, for $n$ large enough so that $\rho_n <
  1$, we have
  \begin{equation*}
    \ev{\frac{\log \binom{\binom{n}{2}}{\mn} - \barmn \log \frac{1}{\rho_n}}{\barmn}} \leq 1 + \ev{\frac{\mn}{\barmn}} \log \frac{1}{\ev{\frac{\mn}{\barmn}}} .
  \end{equation*}
  Finally, using~\eqref{eq:lim-ev-mn-mbarn-1} and sending $n$ to infinity
  completes the proof.
\end{proof}

\section{Proofs for Lemmas in Section~\ref{sec:graphon-analysis}}
\label{app:graphon-analysis-lem-proofs}

The proof of Lemma~\ref{lem:phi-properties}
is straightforward.
\black

\mathstart

\begin{proof}[Proof of Lemma~\ref{lem:hWn-converges-to-W}]
  From Theorem~\ref{thm:ganguli-W-random-graphon-convergence},
  we have $\rho(\Gn) / \rho_n
  \rightarrow \infty$ a.s., where  $\rho(\Gn) = 2\mn / n^2$. Also, recall that  $\rho_n =
  2\barmn / (n(n-1))$. Therefore, we have 
  \begin{equation}
    \label{eq:graphon-analysis-mn-barmn-1}
    \lim_{n \rightarrow \infty} \frac{\mn}{\barmn} = 1 \qquad \text{a.s.}.
  \end{equation}
  On the other hand, by assumption, we have $\mn_{\Delta_n} / \barmn \rightarrow 0$ a.s.. But
  $\mn = \mn_{\Delta_n} + \mn_*$. Comparing this
  with~\eqref{eq:graphon-analysis-mn-barmn-1} above, we realize that
  \begin{equation*}
    \lim_{n \rightarrow \infty} \frac{\mn_*}{\barmn} = 1 \qquad \text{a.s.}.
  \end{equation*}

  Since $\mn_* / \barmn
  \rightarrow 1$ a.s., with probability one, for $n$ large enough
  (for $n > n_0$ where $n_0$ can be random), we
  have $\frac{1}{e} \leq \frac{\mn_*}{\barmn} \leq e$. Using this, we realize that
  with probability one, for $n$ large enough, we have 
  \begin{align*}
    \left \lfloor  \log \frac{\mn_*}{n} \right \rfloor &= \left \lfloor  \log \frac{\barmn}{n} + \log \frac{\mn_*}{\barmn} \right \rfloor \\
                                          &\in \left \{ \left \lfloor  \log \frac{\barmn}{n} + x \right \rfloor: x \in [-1,1] \right \} \\
    &\subseteq \left \{ \left \lfloor \log \frac{\barmn}{n} \right \rfloor - 1, \left \lfloor \log \frac{\barmn}{n} \right \rfloor, \left \lfloor \log \frac{\barmn}{n} \right \rfloor + 1 \right \}.
  \end{align*}
  This means that with probability one, for $n$ large enough, we have $\alpha_n
  \in \{\alpha^{(i)}_n: 1 \leq i \leq 3\}$, where 
  \begin{equation*}
    \alpha^{(1)}_n := \frac{1}{e} \exp \left( \left\lfloor \log \frac{\barmn}{n} \right \rfloor \right)
    \qquad
    \alpha^{(2)}_n := \exp \left( \left\lfloor \log \frac{\barmn}{n} \right \rfloor \right)
    \qquad
    \alpha^{(3)}_n := e. \exp \left( \left\lfloor \log \frac{\barmn}{n} \right \rfloor \right),            
  \end{equation*}
  and $\alpha_n$ is defined 
  in~\eqref{eq:alphan-def}.
  \black
Consequently, if we define $\beta^{(i)}_n := \phi(\alpha^{(i)}_n)$ for $1 \leq i
\leq 3$, 
then
\black
with probability one we have $\beta_n \in \{ \beta^{(i)}_n: 1 \leq i
\leq 3 \}$ for $n$ large enough.

Note that for each $1 \leq i \leq 3$, $(\beta^{(i)}_n)_{n=1}^\infty$ is a
deterministic sequence. Furthermore, we claim that for each $1 \leq i \leq 3$,
we have the following
\begin{equation}
  \label{eq:betain-good}
  \lim_{n \rightarrow \infty} \beta^{(i)}_n = \infty \qquad \text{and} \qquad (\beta^{(i)}_n)^2 \log \beta^{(i)}_n = o( n \rho_n).
\end{equation}
To see this, note that $\barmn / n = (n-1) \rho_n / 2 \rightarrow \infty$  as $n
\rightarrow \infty$. Thereby, $\exp(\lfloor \log \barmn/ n \rfloor) \rightarrow
\infty$. This means that  $\alpha^{(i)}_n \rightarrow \infty$ and hence from
Lemma~\ref{lem:phi-properties}, $\beta^{(i)}_n =\phi(\alpha^{(i)}_n)\rightarrow
\infty$. On the other hand, since $\alpha^{(i)}_n \rightarrow \infty$, another
usage of Lemma~\ref{lem:phi-properties} implies that 
\begin{equation}
  \label{eq:lim-betain-alphain}
  \lim_{n \rightarrow \infty} \frac{(\beta^{(i)}_n)^2 \log \beta^{(i)}_n}{\alpha^{(i)}_n} = \lim_{n \rightarrow \infty} \frac{\phi^2(\alpha^{(i)}_n) \log \phi(\alpha^{(i)}_n)}{\alpha^{(i)}_n} = 0.
\end{equation}
But we have 
\begin{equation*}
  \alpha^{(i)}_n \geq \frac{1}{e} \exp(\lfloor \barmn / n \rfloor) \geq \frac{1}{e^2} \frac{\barmn}{n} = \frac{(n-1) \rho_n}{2e^2}.
\end{equation*}
This together with~\eqref{eq:lim-betain-alphain} implies that $(\beta^{(i)}_n)^2
\log \beta^{(i)}_n = o(n \rho_n)$. Consequently, we have
verified~\eqref{eq:betain-good}. 
As we discussed earlier, with probability one, for $n$ large, we have $\beta_n
\in \{\beta^{(i)}_n: 1 \leq i \leq 3\}$. This together
with~\eqref{eq:betain-good} implies that with probability one, $\beta_n
\rightarrow \infty$ and $\beta_n^2 \log \beta_n = o(n \rho_n)$.
On the other hand, if $\hWn_i$ is defined similar to
$\hWn$ based on solving the estimation
problem~\eqref{eq:least-sq-alg-better} with $\beta_n$ replaced by
$\beta^{(i)}_n$, from Theorem~3.1 in \cite{borgs2015consistent}, with
probability one, we have
 \begin{equation}
   \label{eq:delta2-hWn-i-0}
   \delta_2\left( \frac{1}{\rho_n} \hWn_i, W \right) \rightarrow 0.
 \end{equation}
but we have previously shown that with probability one, we have $\beta_n \in
\{\beta^{(i)}_n: 1 \leq i \leq 3\}$ for $n$ large enough. This means that with
probability one, for $n$ large enough, we have $\hWn \in \{\hWn_i: 1 \leq i \leq
3\}$. This together with~\eqref{eq:delta2-hWn-i-0} implies
that with probability one, $\delta_2 (\hWn / \rho_n, W) \rightarrow 0$ and
completes the proof.
\end{proof}

\mathfinish


\mathstart

\begin{proof}[Proof of Lemma~\ref{lem:delta-hWns-hWn-goes-to-zero}]
Recalling the definition of $\delta_2$ and employing the identity coupling, we
get
\begin{equation}
  \label{eq:delta-2-hWn-hWn*-bound-1}
  \delta_2(\hWn, \hWn_*) \leq \left( \sum_{i=1}^{\beta'_n} \sum_{j=1}^{\beta'_n} \frac{n_i}{n} \frac{n_j}{n} (\lambda_{i,j} - \lambda^*_{i,j})^2 \right)^{1/2}.
\end{equation}
On the other hand, using the facts that for $1 \leq i,j \leq \beta'_n$, we have
$\lambda_{i,j}\geq \lambda^*_{i,j}$ and $n_i \geq n / \beta_n$, we have
\begin{align*}
  \left( \sum_{i=1}^{\beta'_n} \sum_{j=1}^{\beta'_n} \frac{n_i n_j}{n^2} (\lambda_{i,j} - \lambda^*_{i,j})\right)^2 &\geq \sum_{i=1}^{\beta'_n} \sum_{j=1}^{\beta'_n} \left( \frac{n_i}{n} \frac{n_j}{n} \right)^2 (\lambda_{i,j} - \lambda^*_{i,j})^2  \\
  &\geq \frac{1}{\beta_n^2} \sum_{i=1}^{\beta'_n} \sum_{j=1}^{\beta'_n} \frac{n_i}{n} \frac{n_j}{n} (\lambda_{i,j} - \lambda^*_{i,j})^2.
\end{align*}
Comparing this with~\eqref{eq:delta-2-hWn-hWn*-bound-1}, we get
\begin{equation}
  \label{eq:delta-2-hWn-hWn*-bound-2}
  \delta_2(\hWn, \hWn_*) \leq \beta_n \sum_{i=1}^{\beta'_n} \sum_{j=1}^{\beta'_n} \frac{n_i n_j}{n^2} (\lambda_{i,j} - \lambda^*_{i,j}).
\end{equation}
Notice that
\begin{align*}
  \sum_{i=1}^{\beta'_n} \sum_{j=1}^{\beta'_n} \frac{n_in_j}{n^2} \lambda_{i,j} &= \sum_{i=1}^{\beta'_n} \frac{n_i^2}{n^2} \frac{2 m_{i,i}}{n_i^2} + 2 \sum_{1 \leq i < j \leq \beta'_n} \frac{n_in_j}{n^2} \frac{m_{i,j}}{n_i n_j} \\
                                                                         &= \frac{2}{n^2} \left( \sum_{i=1}^{\beta'_n} m_{i,i} + \sum_{1 \leq i < j \leq \beta'_n} m_{i,j} \right) \\
  &= \frac{2\mn}{n^2}.
\end{align*}
Similarly,
\begin{equation*}
  \sum_{i=1}^{\beta'_n} \sum_{j=1}^{\beta'_n} \frac{n_in_j}{n^2} \lambda^*_{i,j} = \frac{2\mn_*}{n^2}.
\end{equation*}
Substituting these into~\eqref{eq:delta-2-hWn-hWn*-bound-2}, we get
\begin{equation*}
  \delta_2(\hWn, \hWn_*) \leq \frac{2\beta_n (\mn - \mn_*)}{n^2} = \frac{2 \beta_n \mn_{\Delta_n}}{n^2} \stackrel{(*)}{\leq} \frac{2\beta_n n \Delta_n / 2}{n^2} = \frac{\beta_n \Delta_n}{n},
\end{equation*}
where in $(*)$, we have used the fact that in $\Gn_{\Delta_n}$, all the degrees
are bounded by $\Delta_n$. Consequently, we have
\begin{equation}
\label{eq:delta-2-hWn-hWn*-bound-3}
  \delta_2 \left( \frac{1}{\rho_n} \hWn, \frac{1}{\rho_n} \hWn_* \right) \leq \frac{\beta_n \Delta_n}{n \rho_n} = \frac{\beta_n}{\sqrt{n\rho_n}} \frac{\Delta_n}{\sqrt{n \rho_n}}.
\end{equation}
From Lemma~\ref{lem:hWn-converges-to-W}, with probability one we have $\beta_n^2
\log \beta_n = o(n\rho_n)$ and $\beta_n \rightarrow \infty$. Thereby, with
probability one, $\beta_n / \sqrt{n \rho_n} \rightarrow 0$. Furthermore,
by assumption, we have $\Delta_n / \sqrt{n \rho_n} \rightarrow 0$ a.s..
Substituting these into~\eqref{eq:delta-2-hWn-hWn*-bound-3} completes the proof.
\end{proof}

\mathfinish

\newcommand{\etalchar}[1]{$^{#1}$}


\begin{thebibliography}{BCCG15}

\bibitem[Abb16]{abbe2016graph}
Emmanuel Abbe.
\newblock Graph compression: The effect of clusters.
\newblock In {\em 2016 54th Annual Allerton Conference on Communication,
  Control, and Computing (Allerton)}, pages 1--8. IEEE, 2016.

\bibitem[AL07]{aldous2007processes}
David Aldous and Russell Lyons.
\newblock Processes on unimodular random networks.
\newblock {\em Electron. J. Probab}, 12(54):1454--1508, 2007.

\bibitem[AR14]{aldous2014entropy}
David~J Aldous and Nathan Ross.
\newblock Entropy of some models of sparse random graphs with vertex-names.
\newblock {\em Probability in the Engineering and Informational Sciences},
  28(02):145--168, 2014.

\bibitem[AS04]{aldous2004objective}
David Aldous and J~Michael Steele.
\newblock The objective method: probabilistic combinatorial optimization and
  local weak convergence.
\newblock In {\em Probability on discrete structures}, pages 1--72. Springer,
  2004.

\bibitem[BC09]{bickel2009nonparametric}
Peter~J Bickel and Aiyou Chen.
\newblock A nonparametric view of network models and newman--girvan and other
  modularities.
\newblock {\em Proceedings of the National Academy of Sciences},
  106(50):21068--21073, 2009.

\bibitem[BC15]{bordenave2015large}
Charles Bordenave and Pietro Caputo.
\newblock Large deviations of empirical neighborhood distribution in sparse
  random graphs.
\newblock {\em Probability Theory and Related Fields}, 163(1-2):149--222, 2015.

\bibitem[BCC{\etalchar{+}}18]{borgs2018L}
Christian Borgs, Jennifer~T Chayes, Henry Cohn, Yufei Zhao, et~al.
\newblock An ${L}^p$ theory of sparse graph convergence {II}: {LD} convergence,
  quotients and right convergence.
\newblock {\em The Annals of Probability}, 46(1):337--396, 2018.

\bibitem[BCCG15]{borgs2015consistent}
Christian Borgs, Jennifer~T Chayes, Henry Cohn, and Shirshendu Ganguly.
\newblock Consistent nonparametric estimation for heavy-tailed sparse graphs.
\newblock {\em arXiv preprint arXiv:1508.06675}, 2015.

\bibitem[BCCZ19]{borgs2019L}
Christian Borgs, Jennifer Chayes, Henry Cohn, and Yufei Zhao.
\newblock An ${L}^p$ theory of sparse graph convergence {I}: Limits, sparse
  random graph models, and power law distributions.
\newblock {\em Transactions of the American Mathematical Society},
  372(5):3019--3062, 2019.

\bibitem[BCL{\etalchar{+}}08]{borgs2008convergent}
Christian Borgs, Jennifer~T Chayes, L{\'a}szl{\'o} Lov{\'a}sz, Vera~T S{\'o}s,
  and Katalin Vesztergombi.
\newblock Convergent sequences of dense graphs i: Subgraph frequencies, metric
  properties and testing.
\newblock {\em Advances in Mathematics}, 219(6):1801--1851, 2008.

\bibitem[BCL{\etalchar{+}}12]{borgs2012convergent}
Christian Borgs, Jennifer~T Chayes, L{\'a}szl{\'o} Lov{\'a}sz, Vera~T S{\'o}s,
  and Katalin Vesztergombi.
\newblock Convergent sequences of dense graphs ii. multiway cuts and
  statistical physics.
\newblock {\em Annals of Mathematics}, pages 151--219, 2012.

\bibitem[BCS15]{borgs2015private}
Christian Borgs, Jennifer Chayes, and Adam Smith.
\newblock Private graphon estimation for sparse graphs.
\newblock In {\em Advances in Neural Information Processing Systems}, pages
  1369--1377, 2015.

\bibitem[Bil13]{billingsley2013convergence}
Patrick Billingsley.
\newblock {\em Convergence of probability measures}.
\newblock John Wiley \& Sons, 2013.

\bibitem[BLM13]{boucheron2013concentration}
St{\'e}phane Boucheron, G{\'a}bor Lugosi, and Pascal Massart.
\newblock {\em Concentration inequalities: A nonasymptotic theory of
  independence}.
\newblock Oxford university press, 2013.

\bibitem[BR07]{bollobas2007metrics}
B{\'e}la Bollob{\'a}s and Oliver Riordan.
\newblock Metrics for sparse graphs.
\newblock {\em arXiv preprint arXiv:0708.1919}, 2007.

\bibitem[BRSV11]{boldi2011layered}
Paolo Boldi, Marco Rosa, Massimo Santini, and Sebastiano Vigna.
\newblock Layered label propagation: A multiresolution coordinate-free ordering
  for compressing social networks.
\newblock In {\em Proceedings of the 20th international conference on World
  wide web}, pages 587--596. ACM, 2011.

\bibitem[BS01]{BenjaminiSchramm01rec}
Itai Benjamini and Oded Schramm.
\newblock Recurrence of distributional limits of finite planar graphs.
\newblock {\em Electron. J. Probab.}, 6:no. 23, 13 pp. (electronic), 2001.

\bibitem[BS11]{benjamini2011recurrence}
Itai Benjamini and Oded Schramm.
\newblock Recurrence of distributional limits of finite planar graphs.
\newblock In {\em Selected Works of Oded Schramm}, pages 533--545. Springer,
  2011.

\bibitem[BV04]{boldi2004webgraph}
Paolo Boldi and Sebastiano Vigna.
\newblock The webgraph framework i: compression techniques.
\newblock In {\em Proceedings of the 13th international conference on World
  Wide Web}, pages 595--602. ACM, 2004.

\bibitem[C{\etalchar{+}}15]{chatterjee2015matrix}
Sourav Chatterjee et~al.
\newblock Matrix estimation by universal singular value thresholding.
\newblock {\em The Annals of Statistics}, 43(1):177--214, 2015.

\bibitem[CS12]{choi2012compression}
Yongwook Choi and Wojciech Szpankowski.
\newblock Compression of graphical structures: Fundamental limits, algorithms,
  and experiments.
\newblock {\em IEEE Transactions on Information Theory}, 58(2):620--638, 2012.

\bibitem[DA18]{delgosha2018distributed}
Payam Delgosha and Venkat Anantharam.
\newblock Distributed compression of graphical data.
\newblock {\em arXiv preprint arXiv:1802.07446}, 2018.

\bibitem[DA19a]{delgosha2019notion}
Payam Delgosha and Venkat Anantharam.
\newblock A notion of entropy for stochastic processes on marked rooted graphs.
\newblock {\em arXiv preprint arXiv:1908.00964}, 2019.

\bibitem[DA19b]{delgosha2019universal}
Payam Delgosha and Venkat Anantharam.
\newblock Universal lossless compression of graphical data.
\newblock {\em arXiv preprint arXiv:1909.09844}, 2019.

\bibitem[DA20]{delgosha2020universal}
Payam Delgosha and Venkat Anantharam.
\newblock Universal lossless compression of graphical data.
\newblock {\em IEEE Transactions on Information Theory}, 2020.

\bibitem[GLZ{\etalchar{+}}15]{gao2015rate}
Chao Gao, Yu~Lu, Harrison~H Zhou, et~al.
\newblock Rate-optimal graphon estimation.
\newblock {\em The Annals of Statistics}, 43(6):2624--2652, 2015.

\bibitem[Hoe61]{hoeffding1961strong}
Wassily Hoeffding.
\newblock The strong law of large numbers for u-statistics.
\newblock Technical report, North Carolina State University. Dept. of
  Statistics, 1961.

\bibitem[Jan10]{janson2010graphons}
Svante Janson.
\newblock Graphons, cut norm and distance, couplings and rearrangements.
\newblock {\em arXiv preprint arXiv:1009.2376}, 2010.

\bibitem[Lov12]{lovasz2012large}
L{\'a}szl{\'o} Lov{\'a}sz.
\newblock {\em Large networks and graph limits}, volume~60.
\newblock American Mathematical Soc., 2012.

\bibitem[LPS14]{liakos2014pushing}
Panagiotis Liakos, Katia Papakonstantinopoulou, and Michael Sioutis.
\newblock Pushing the envelope in graph compression.
\newblock In {\em Proceedings of the 23rd ACM International Conference on
  Conference on Information and Knowledge Management}, pages 1549--1558. ACM,
  2014.

\bibitem[LS06]{lovasz2006limits}
L{\'a}szl{\'o} Lov{\'a}sz and Bal{\'a}zs Szegedy.
\newblock Limits of dense graph sequences.
\newblock {\em Journal of Combinatorial Theory, Series B}, 96(6):933--957,
  2006.

\bibitem[LS07]{lovasz2007szemeredi}
L{\'a}szl{\'o} Lov{\'a}sz and Bal{\'a}zs Szegedy.
\newblock Szemer{\'e}di’s lemma for the analyst.
\newblock {\em GAFA Geometric And Functional Analysis}, 17(1):252--270, 2007.

\bibitem[WO13]{wolfe2013nonparametric}
Patrick~J Wolfe and Sofia~C Olhede.
\newblock Nonparametric graphon estimation.
\newblock {\em arXiv preprint arXiv:1309.5936}, 2013.

\end{thebibliography}
\end{document}
